\newtheorem{Def}{Definition}
\newtheorem{Thm}[Def]{Theorem}
\newtheorem{Cor}[Def]{Corollary}
\newtheorem{Lem}[Def]{Lemma}
\newtheoremstyle{case}{}{}{}{}{}{:}{ }{}
\theoremstyle{case}
\newcommand{\1}{\mathds{1}}
\newcommand{\dket}[1]{\vert {#1} \rangle\!\rangle}
\newcommand{\dketbra}[1]{\vert {#1} \rangle\!\rangle\!\langle\!\langle {#1} \vert}
\renewcommand{\Im}[0]{{\mathrm{Im}}}
\newcommand{\young}[2]{\mathbb{Y}^{#1}_{#2}}
\newcommand{\myoung}[3]{\mathbb{MY}^{#1}_{#2, #3}}
\newcommand{\mcA}{\mathcal{A}}
\newcommand{\mcB}{\mathcal{B}}
\newcommand{\mcC}{\mathcal{C}}
\newcommand{\mcD}{\mathcal{D}}
\newcommand{\mcE}{\mathcal{E}}
\newcommand{\mcF}{\mathcal{F}}
\newcommand{\mcH}{\mathcal{H}}
\newcommand{\mcI}{\mathcal{I}}
\newcommand{\mcL}{\mathcal{L}}
\newcommand{\mcM}{\mathcal{M}}
\newcommand{\mcN}{\mathcal{N}}
\newcommand{\mcO}{\mathcal{O}}
\newcommand{\mcP}{\mathcal{P}}
\newcommand{\mcR}{\mathcal{R}}
\newcommand{\mcS}{\mathcal{S}}
\newcommand{\mcT}{\mathcal{T}}
\newcommand{\mcU}{\mathcal{U}}
\newcommand{\mcW}{\mathcal{W}}
\newcommand{\CC}{\mathbb{C}}
\newcommand{\RR}{\mathbb{R}}
\newcommand{\mfS}{\mathfrak{S}}
\newcommand{\mfD}{\mathfrak{D}}
\newcommand{\isometry}[2]{\mathbb{V}_{\mathrm{iso}}(#1, #2)}
\newcommand{\U}{\mathbb{U}}
\newcommand{\cone}{\mathrm{Cone}}
\newcommand{\doublewidetilde}[1]{{
  \mathpalette\double@widetilde{#1}
}}
\newcommand{\double@widetilde}[2]{
  \sbox\z@{$\m@th#1\widetilde{#2}$}
  \ht\z@=.9\ht\z@
  \widetilde{\box\z@}
}
\newcommand{\map}[1]{\mathcal{#1}}
\newcommand{\supermap}[1]{\mathcal{#1}}
\begin{document}

\title{Universal adjointation of isometry operations using conversion of quantum supermaps}
\author{Satoshi Yoshida}
\email{satoshiyoshida.phys@gmail.com}
\affiliation{Department of Physics, Graduate School of Science, The University of Tokyo, Hongo 7-3-1, Bunkyo-ku, Tokyo 113-0033, Japan}
\orcid{0000-0002-0521-5209}
\author{Akihito Soeda}
\email{soeda@nii.ac.jp}
\orcid{0000-0002-7502-5582}
\affiliation{Principles of Informatics Research Division, National Institute of Informatics, 2-1-2 Hitotsubashi, Chiyoda-ku, Tokyo 101-8430, Japan}
\affiliation{Department of Informatics, School of Multidisciplinary Sciences, SOKENDAI (The Graduate University for Advanced Studies), 2-1-2 Hitotsubashi, Chiyoda-ku, Tokyo 101-8430, Japan}
\affiliation{Department of Physics, Graduate School of Science, The University of Tokyo, Hongo 7-3-1, Bunkyo-ku, Tokyo 113-0033, Japan}
\author{Mio Murao}
\email{murao@phys.s.u-tokyo.ac.jp}
\affiliation{Department of Physics, Graduate School of Science, The University of Tokyo, Hongo 7-3-1, Bunkyo-ku, Tokyo 113-0033, Japan}
\affiliation{Trans-scale Quantum Science Institute, The University of Tokyo, Bunkyo-ku, Tokyo 113-0033, Japan}
\orcid{0000-0001-7861-1774}

\maketitle

\begin{abstract}
Identification of possible transformations of quantum objects including quantum states and quantum operations is indispensable in developing quantum algorithms.
Universal transformations, defined as input-independent transformations, appear in various quantum applications.
Such is the case for universal transformations of unitary operations. 
However, extending these transformations to non-unitary operations is nontrivial and largely unresolved. 
Addressing this, we introduce \emph{isometry adjointation} protocols that transform an input isometry operation into its adjoint operation, which include both unitary operation and quantum state transformations. 
The paper details the construction of parallel and sequential isometry adjointation protocols, derived from unitary inversion protocols using quantum combs and the (dual) Clebsch-Gordan transforms, and achieving optimal approximation error.
This error is shown to be independent of the output dimension of the isometry operation.
In particular, we explicitly obtain an asymptotically optimal parallel protocol achieving an approximation error $\epsilon = \Theta(d^2/n)$, where $d$ is the input dimension of the isometry operation and $n$ is the number of calls of the isometry operation.
The research also extends to isometry inversion and universal error detection, employing semidefinite programming to assess optimal performances.
The findings suggest that the optimal performance of general protocols in isometry adjointation and universal error detection is not dependent on the output dimension, and that indefinite causal order protocols offer advantages over sequential ones in isometry inversion and universal error detection.
\end{abstract}

\section{Introduction}
The possibility and impossibility of universal transformation of unknown quantum states have played a major role in quantum information and foundations (e.g., state cloning \cite{wootters1982single}, universal NOT \cite{buvzek1999optimal}, and swap test \cite{buhrman2001quantum} or programmable projective measurement \cite{chabaud2018optimal}).
Recently, transformations of quantum operations, namely, \emph{quantum supermaps}, have been investigated to aim for \emph{higher-order quantum computataion} \cite{bisio2019theoretical}, which is a candidate for the quantum generalization of higher-order functions \cite{ying2016foundations}, as well as its connections to channel resource theory \cite{chitambar2019quantum}, quantum thermodynamics \cite{pollock2018non}, and causal structure \cite{oreshkov2012quantum}.
Since universal transformation of quantum operations can be utilized as an elementary operation in higher-order quantum computation, the full characterization of possible universal transformations of quantum operations is indispensable.
Although universal transformations of unitary operations have been extensively studied \cite{chiribella2016optimal, chiribella2008optimal, bisio2010optimal, sedlak2019optimal, yang2020optimal, sedlak2020probabilistic, bisio2014optimal, dur2015deterministic,  chiribella2015universal, soleimanifar2016no, miyazaki2019complex, ebler2023optimal, araujo2014quantum, bisio2016quantum, dong2019controlled, dong2021success, sardharwalla2016universal, quintino2019probabilistic, quintino2019reversing, quintino2022deterministic, yoshida2023reversing, navascues2018resetting, trillo2020translating, trillo2023universal, chen2024quantum}, their generalization to non-unitary operations have not been well investigated except for a few examples (Ref.~\cite{bisio2011cloning} for POVM measurements and Ref.~\cite{yoshida2023universal} for isometry operations).
In particular, isometry operations represent encoding of quantum information \cite{nielsen2002quantum} or general quantum channel via the Stinespring dilation \cite{stinespring1955positive}, and universal transformation of them would be useful as an elementary building block in higher-order quantum computation.

This work extends one of the most important task of universal transformation of unitary operations, called \emph{unitary inversion} \cite{chiribella2016optimal, sardharwalla2016universal, quintino2019probabilistic, quintino2019reversing, dong2021success, quintino2022deterministic, yoshida2023reversing, navascues2018resetting, trillo2020translating, trillo2023universal, chen2024quantum}, to isometry operations.
Unitary inversion is a task to transform an unknown unitary operation $\map{U}_\mathrm{in}$ into its inverse operation $\map{U}_\mathrm{in}^{-1}$, where $\map{U}_\mathrm{in}$ is given by $\map{U}_\mathrm{in}(\cdot) \coloneqq U_\mathrm{in} \cdot U_\mathrm{in}^\dagger$ for a unitary operator $U_\mathrm{in}$.
Since the inverse of a unitary operator $U_\mathrm{in}$ can be given by the adjoint operator $U_\mathrm{in}^\dagger$, one natural extension of unitary inversion to isometry operations is given by \emph{isometry adjointation}.
We consider an isometry operation $\map{V}_\mathrm{in}(\cdot)\coloneqq V_\mathrm{in} \cdot V_\mathrm{in}^\dagger$, where $V_\mathrm{in}: \CC^d\to \CC^D$ is an isometry operator satisfying $V_\mathrm{in}^\dagger V_\mathrm{in} = \1_d$ for the identity operator $\1_d$.
We denote the set of isometry operators $V_\mathrm{in}:\CC^d\to\CC^D$ by $\isometry{d}{D}$, and the set of $d$-dimensional unitary operators by $\U(d) = \isometry{d}{d}$.
Isometry adjointation is a task to transform an unknown isometry operation $\map{V}_\mathrm{in}$ to a quantum instrument $\{\Phi_a\}_{a\in\{I,O\}}$\footnote{Measurement outcomes $a\in\{I,O\}$ stand for ``in $\Im V_\mathrm{in}$'' and ``out of $\Im V_\mathrm{in}$.''} such that $\Phi_I$ approximates the adjoint operation $\map{V}_\mathrm{in}^{\dagger}$, where $\map{V}_\mathrm{in}^\dagger$ is given by $\map{V}_\mathrm{in}^\dagger(\cdot)\coloneqq V_\mathrm{in}^\dagger \cdot V_\mathrm{in}$.
The adjoint operation $\map{V}_\mathrm{in}^\dagger(\cdot)$ is given as
\begin{align}
    \map{V}_\mathrm{in}^\dagger(\rho_\mathrm{in}) = \map{V}_\mathrm{in}^{-1}(\Pi_{\Im V_\mathrm{in}} \rho_\mathrm{in} \Pi_{\Im V_\mathrm{in}}),
\end{align}
where $\map{V}_\mathrm{in}^{-1}$ is the inverse operation satisfying $\map{V}_\mathrm{in}^{-1}\circ \map{V}_\mathrm{in}(\rho) = \rho$ for all $\rho\in\mcL(\CC^d)$, and $\Pi_{\Im V_\mathrm{in}}$ is the orthogonal projector onto $\Im V_\mathrm{in}$.
It implements the projective measurement $\{\Pi_{\Im V_\mathrm{in}}, \1-\Pi_{\Im V_\mathrm{in}}\}$ and the inverse operation $\map{V}_\mathrm{in}^{-1}$ at the same time (see Fig.~\ref{fig:task}).
Isometry adjointation can be reduced to two relevant tasks called \emph{isometry inversion} \cite{yoshida2023universal} and \emph{universal error detection}, where the former is the task to implement the inverse operation $\map{V}_\mathrm{in}^{-1}$, and the latter is the task to implement the projective merasurement $\{\Pi_{\Im V_\mathrm{in}}, \1-\Pi_{\Im V_\mathrm{in}}\}$, from an unknown isometry operation $\map{V}_\mathrm{in}$.

\begin{figure}
    \centering
    \includegraphics[width=0.7\linewidth]{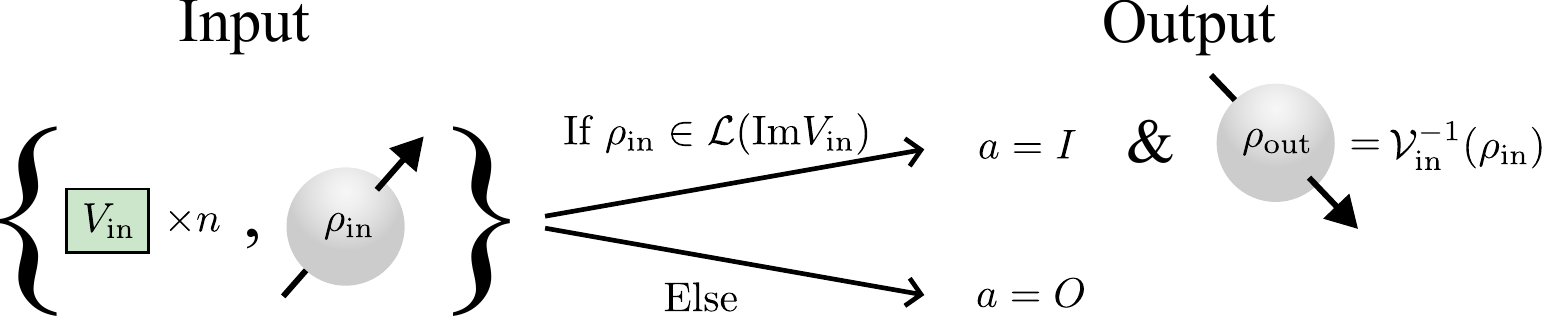}
    \caption{Definition of the task isometry adjointation.}
    \label{fig:task}
\end{figure}

We construct parallel and sequential protocols for isometry adjointation by converting the corresponding unitary inversion protocol (Fig.~\ref{fig:protocol_conversion} and Theorems \ref{thm:parallel_isometry_adjointation} and \ref{thm:sequential_isometry_adjointation}).
Isometry inversion and universal error detection protocols are constructed by discarding the measurement outcome and output quantum state, respectively, from the isometry adjointation protocol (Corollaries \ref{cor:isometry_inversion} and \ref{cor:universal_error_detection}).  We analyze the optimality of the constructed protocols, and we show the following properties:
\begin{itemize}
    \item Our construction of parallel and sequential protocols gives the optimal performances among all parallel or sequential protocols (Theorem \ref{thm:optimal_construction}).
    \item Optimal parallel protocol for universal error detection is explicitly given (Theorem \ref{thm:optimal_parallel_error_detection}).
    \item A parallel protocol for isometry adjointation is given, which achieves an asymptotically optimal approximation error $\epsilon=\Theta(d^2 n^{-1})$  (Theorem \ref{thm:optimal_parallel_isometry_adjointation}).
\end{itemize}
We also give semidefinite programming (SDP) to obtain the optimal performances for these tasks using parallel, sequential, and general protocols including indefinite causal order \cite{hardy2007towards,oreshkov2012quantum,chiribella2013quantum} (Section \ref{subsec:sdp}).

The rest of this work is organized as follows.
Section \ref{sec:problem_setting} defines the tasks isometry adjointation, isometry inversion, and universal error detection, and corresponding figures of merit.
Section \ref{subsec:preliminaries} introduces the technical details to obtain the main result of this paper (Theorem \ref{thm:sequential_isometry_adjointation}) in Section \ref{subsec:construction_isometry_adjointation}, constructing the isometry adjointation protocol.
Section \ref{subsec:reduction} constructs isometry inversion and universal error detection protocols from the isometry adjointation protocol.
Section \ref{sec:unitary_group_symmetry} introduces the Choi operator of the general quantum supermap and shows that the Choi operators corresponding to optimal protocols for isometry adjointation, isometry inversion, and universal error detection have the unitary group symmetry.
Section \ref{subsec:optimal_construction} analyzes the optimal performances using analytical methods based on group theory.
Section \ref{subsec:sdp} numerically investigates the optimal performances using semidefinite programming.
Section \ref{sec:application} discusses potential applications of this work in other fields of quantum information.
Section \ref{sec:conclusion} concludes the work.

\begin{figure}
    \centering
    \includegraphics[width=\linewidth]{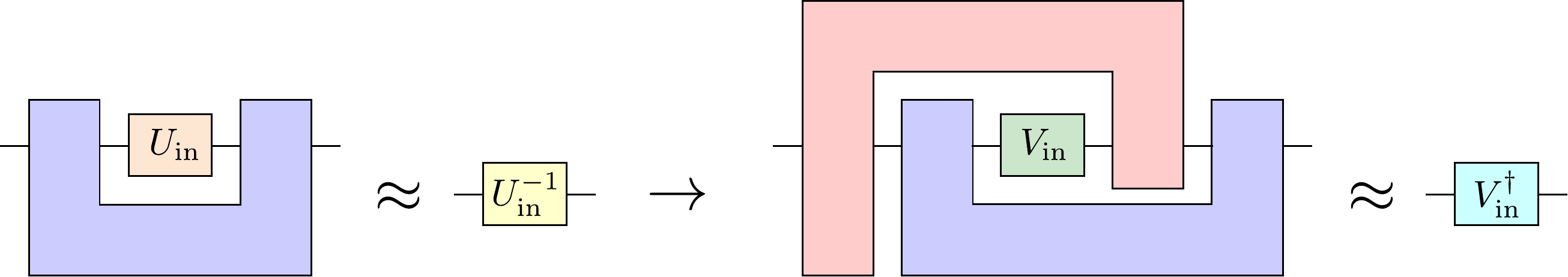}
    \caption{Left panel (unitary inversion): Transformation of a unitary operation $U_\mathrm{in}$ into its inverse operation $U_\mathrm{in}^{-1}$. Right panel (isometry adjointation): Transformation of an isometry operation $V_\mathrm{in}$ into its adjoint operation $V_\mathrm{in}^\dagger$.  The isometry adjointation protocol is constructed by converting the unitary inversion protocol (shown in blue) using a quantum comb shown in red.}
    \label{fig:protocol_conversion}
\end{figure}

\section{Problem setting}
\label{sec:problem_setting}
We consider the following tasks for the given unknown input state $\rho_\mathrm{in}\in\mcL(\CC^D)$ and the unknown isometry operation $\map{V}_\mathrm{in}(\cdot) \coloneqq V_\mathrm{in}\cdot V_\mathrm{in}^\dagger$ for $V_\mathrm{in}\in\isometry{d}{D}$ (see Sections \ref{sec:isometry_inversion_task} -- \ref{sec:isometry_adjointation_task} for the detail):
\begin{itemize}
    \item Probabilistic exact isometry inversion\\
    Promise: The input state $\rho_\mathrm{in}$ is within the code space $\mcL(\Im V_\mathrm{in})$.\\
    Task: Output the quantum state $\map{V}_\mathrm{in}^{-1}(\rho_\mathrm{in})$ with probability $p$ for any input state $\rho_\mathrm{in}\in \mcL(\Im V_\mathrm{in})$ and input operation $\map{V}_\mathrm{in}$.
    \item Deterministic isometry inversion\\
    Promise: The input state $\rho_\mathrm{in}$ is within the code space $\mcL(\Im V)$.\\
    Task: Output a quantum state close to $\map{V}^{-1}(\rho_\mathrm{in})$ for any input state $\rho_\mathrm{in}\in \mcL(\Im V_\mathrm{in})$ and input operation $\map{V}_\mathrm{in}$.
    \item Universal error detection\\
    Task: Output the measurement outcome $a\in\{I, O\}$\footnote{$I$ stands for ``in $\Im V$'' and $O$ stands for ``out of $\Im V$.''} with probability close to $\Tr(\Pi_{\Im V_\mathrm{in}} \rho_\mathrm{in})$ ($a=I$), and $\Tr[(\1_D-\Pi_{\Im V_\mathrm{in}}) \rho_\mathrm{in}]$ ($a=O$) for any input state $\rho_\mathrm{in}\in \mcL(\CC^D)$ and input operation $\map{V}_\mathrm{in}$, where $\Pi_{\Im V_\mathrm{in}}$ is the orthogonal projectors onto the subspace $\Im V_\mathrm{in}$.
    \item Isometry adjointation\\
    Task: Output the quantum state probabilistically with the measurement outcome $a\in\{I, O\}$ with probability close to $\Tr(\Pi_{\Im V_\mathrm{in}} \rho_\mathrm{in})$ ($a=I$), and $\Tr[(\1_D-\Pi_{\Im V_\mathrm{in}}) \rho_\mathrm{in}]$ ($a=O$) for any input state $\rho_\mathrm{in}\in \mcL(\CC^D)$ and input operation $\map{V}_\mathrm{in}$.  When $\rho_\mathrm{in}\in \mcL(\Im V)$, the measurement outcome $a=I$ is obtained with a unit probability with an output quantum state close to $\map{V}^{-1}(\rho_\mathrm{in})$. When the measurement outcome $a=O$ is obtained, an output quantum state is given by a fixed quantum state, e.g., the maximally mixed state.
\end{itemize}

We define the details of the task in the following subsections.
To this, we introduce the notion of quantum supermaps, representing the transformations of quantum channels.  We first consider a deterministic transformation from $n$ quantum channels $\Phi_\mathrm{in}^{(i)}: \mcL(\mcI_i) \to \mcL(\mcO_i)$ for $i\in\{1, \ldots, n\}$ to a quantum channel $\Phi_\mathrm{out}: \mcL(\mcP)\to \mcL(\mcF)$, where $\mcI_i$, $\mcO_i$, $\mcP$, and $\mcF$ are Hilbert spaces\footnote{$\mcP$ stands for ``global past'' and $\mcF$ stands for ``global future.''}.  Such a transformation is represented as a linear supermap
\begin{align}
    \supermap{C}: \bigotimes_{i=1}^{n} [\mcL(\mcI_i)\to \mcL(\mcO_i)] \to [\mcL(\mcP)\to \mcL(\mcF)],
\end{align}
where $[\mcL(\mcH_1)\to\mcL(\mcH_2)]$ is a set of linear maps from $\mcL(\mcH_1)$ to $\mcL(\mcH_2)$.  A probabilistic transformation of quantum channels is represented as a set $\{\supermap{C}_a\}_a$, where $a$ corresponds to the classical outcome.
We consider three classes of transformations, as shown below.
\begin{itemize}
    \item Parallel protocol\\
    We call $\supermap{C}$ is implemented by a parallel protocol when input operations are used in parallel as follows:
    \begin{align}
        \supermap{C}\left[\Phi_\mathrm{in}^{(1)} \otimes \cdots \otimes \Phi_\mathrm{in}^{(n)}\right] = \Lambda^{(2)}\circ \left[\bigotimes_{i=1}^{n} \Phi_\mathrm{in}^{(i)} \otimes \1_{\mcA}\right] \circ \Lambda^{(1)},
    \end{align}
    where $\Lambda^{(1)}: \mcL(\mcP) \to \mcL(\mcI^n \otimes \mcA)$ and $\Lambda^{(2)}: \mcL(\mcO^n \otimes \mcA) \to \mcL(\mcF)$ are quantum channels, $\mcA$ is an auxiliary Hilbert space, and $\mcI^n$ and $\mcO^n$ are joint Hilbert spaces defined by $\mcI^n \coloneqq \bigotimes_{i=1}^{n} \mcI_i$ and $\mcO^n \coloneqq \bigotimes_{i=1}^{n} \mcO_i$, respectively.  Similarly, we consider the implementation of a probabilistic transformation of $\{\supermap{C}_a\}_a$ in a parallel protocol  given by
    \begin{align}
        \supermap{C}_a\left[\Phi_\mathrm{in}^{(1)} \otimes \cdots \otimes \Phi_\mathrm{in}^{(n)}\right] = \Lambda^{(2)}_a\circ \left[\bigotimes_{i=1}^{n} \Phi_\mathrm{in}^{(i)} \otimes \1_{\mcA}\right] \circ \Lambda^{(1)},
    \end{align}
    where $\{\Lambda^{(2)}_a\}_a$ is a quantum instrument\footnote{A quantum instrument is a set of completely-positive maps that sum up to a completely-positive and trace-preserving map \cite{wilde2013quantum}.}.
    We also consider a subclass of parallel protocols called parallel delayed input-state protocols, which are realized as
    \begin{align}
        \supermap{C}\left[\Phi_\mathrm{in}^{(1)} \otimes \cdots \otimes \Phi_\mathrm{in}^{(n)}\right](\rho_\mathrm{in}) = \Lambda \left[\rho_\mathrm{in} \otimes \left(\bigotimes_{i=1}^{n} \Phi_\mathrm{in}^{(i)} \otimes \1_{\mcA}\right)(\phi)\right],\label{eq:def_parallel_delayed_input}
    \end{align}
    where $\rho_\mathrm{in} \in \mcL(\mcP)$ is an input quantum state and $\phi\in\mcL(\mcI^n\otimes \mcA)$ is a quantum state.  Similarly, we consider the parallel delayed input-state protocol of a probabilistic transformation given by
    \begin{align}
        \supermap{C}_a\left[\Phi_\mathrm{in}^{(1)} \otimes \cdots \otimes \Phi_\mathrm{in}^{(n)}\right](\rho_\mathrm{in}) = \Lambda_a \left[\rho_\mathrm{in} \otimes \left(\bigotimes_{i=1}^{n} \Phi_\mathrm{in}^{(i)} \otimes \1_{\mcA}\right)(\phi)\right].
    \end{align}
    \item Sequential protocol (quantum comb \cite{chiribella2008transforming})\\
    We call $\supermap{C}$ is implemented by a sequential protocol or a quantum comb \cite{chiribella2008transforming} when input operations are used sequentially as follows:
    \begin{align}
        \supermap{C}\left[\Phi_\mathrm{in}^{(1)} \otimes \cdots \otimes \Phi_\mathrm{in}^{(n)}\right] = \Lambda^{(n+1)} \circ \left[\Phi_\mathrm{in}^{(n)}\otimes \1_{\mcA_n}\right] \circ \Lambda^{(n)} \circ \cdots \circ \Lambda^{(2)} \circ \left[\Phi_\mathrm{in}^{(1)}\otimes \1_{\mcA_1}\right] \circ \Lambda^{(1)},\label{eq:def_quantum_comb}
    \end{align}
    where $\Lambda^{(1)}: \mcL(\mcP) \to \mcL(\mcI_1\otimes \mcA_1), \Lambda^{(2)}: \mcL(\mcO_1\otimes \mcA_1) \to \mcL(\mcI_2\otimes \mcA_2), \ldots, \Lambda^{(n)}: \mcL(\mcO_{n-1} \otimes \mcA_{n-1}) \to \mcL(\mcI_n \otimes \mcA_n), \Lambda^{(n+1)}: \mcL(\mcO_n \otimes \mcA_n) \to \mcL(\mcF)$ are quantum channels and $\mcA_1, \ldots, \mcA_n$ are auxiliary Hilbert spaces. Similarly, we consider the implementation of a probabilistic transformation $\{\supermap{C}_a\}_a$ in a sequential protocol given by
    \begin{align}
        \supermap{C}_a\left[\Phi_\mathrm{in}^{(1)} \otimes \cdots \otimes \Phi_\mathrm{in}^{(n)}\right] = \Lambda^{(n+1)}_a \circ \left[\Phi_\mathrm{in}^{(n)}\otimes \1_{\mcA_n}\right] \circ \Lambda^{(n)} \circ \cdots \circ \Lambda^{(2)} \circ \left[\Phi_\mathrm{in}^{(1)}\otimes \1_{\mcA_1}\right] \circ \Lambda^{(1)},\label{eq:def_quantum_comb_probabilistic}
    \end{align}
    where $\{\Lambda^{(n+1)}_a\}_a$ is a quantum instrument.
    \item General protocol\\
    We consider the most general case, where $\supermap{C}$ satisfies the following properties:
    \begin{enumerate}
        \item Completely CP preserving: $(\supermap{C} \otimes \1)\left[\Phi_\mathrm{in}^{(1)} \otimes \cdots \otimes \Phi_\mathrm{in}^{(n)}\right]$ is completely positive (CP) for all CP maps $\Phi_\mathrm{in}^{(1)}: \mcL(\mcI_1 \otimes \mcA_1) \to \mcL(\mcO_1 \otimes \mcB_1), \ldots, \Phi_\mathrm{in}^{(n)}: \mcL(\mcI_n \otimes \mcA_n) \to \mcL(\mcO_n \otimes \mcB_n)$, where $\mcA_1, \ldots, \mcA_n$ and $\mcB_1, \ldots, \mcB_n$ are auxiliary Hilbert spaces and $\1$ is the identity supermap defined by $\supermap{\1}(\Phi) = \Phi$ for all $\Phi: \mcL\left[\bigotimes_{i=1}^{n}\mcA_i\right] \to \mcL\left[\bigotimes_{i=1}^{n}\mcB_i\right]$.
        \item TP preserving: $\supermap{C}\left[\Phi_\mathrm{in}^{(1)} \otimes \cdots \otimes \Phi_\mathrm{in}^{(n)}\right]$ is trace preserving (TP) for all TP maps $\Phi_\mathrm{in}^{(1)}: \mcL(\mcI_1) \to \mcL(\mcO_1), \ldots, \Phi_\mathrm{in}^{(n)}: \mcL(\mcI_n) \to \mcL(\mcO_n)$.
    \end{enumerate}
    We say that $\supermap{C}$ is a \emph{quantum superchannel} \cite{chiribella2008transforming} if it satisfies the above conditions. For a probabilistic transformation, we consider the case when $\{\supermap{C}_a\}_a$ is given by a \emph{quantum superinstrument}, namely, $\supermap{C}_a$ is completely CP preserving and $\sum_a \supermap{C}_a$ is TP preserving. The classes of quantum superchannel and superinstrument include indefinite causal order \cite{hardy2007towards,oreshkov2012quantum,chiribella2013quantum}, which is not realizable in a conventional quantum circuit model.
\end{itemize}

\subsection{Isometry inversion}
\label{sec:isometry_inversion_task}
We consider a probabilistic exact or deterministic implementation of isometry inversion.  For a probabilistic exact implementation, we require that the output state is $\map{V}^{-1}(\rho_\mathrm{in})$ for all $\rho_\mathrm{in}\in \mcL(\Im V)$ with probability $p$.  For a deterministic implementation, we require that the output state is obtained deterministically for all $\rho_\mathrm{in}\in \mcL(\Im V)$.  We consider the worst-case channel fidelity defined by
\begin{align}
    F_\mathrm{worst} = \inf_{V\in\isometry{d}{D}} F_\mathrm{ch}[\supermap{C}[\map{V}^{\otimes n}] \circ \map{V}, \1_d],\label{eq:def_average_case_channel_fidelity}
\end{align}
where $F_\text{ch}$ is the channel fidelity \cite{raginsky2001fidelity} and $\1_d$ is the identity operation.
The channel fidelity between a quantum channel $\Lambda$ on a $d$-dimensional system and a $d$-dimensional unitary operation $\map{U}(\cdot) = U \cdot U^\dagger$ is given by
\begin{align}
    F_{\mathrm{ch}}(\Lambda, \map{U}) = {1\over d^2} \Tr[J_{\Lambda}\dketbra{U}],\label{eq:def_channel_fidelity}
\end{align}
where $J_{\Lambda}$ and $\dket{U}$ are the Choi operator of $\Lambda$ and the Choi vector of $U$ \cite{choi1975completely,jamiolkowski1972linear}, respectively, defined by 
\begin{align}
    J_{\Lambda} &\coloneqq \sum_{i,j} \ketbra{i}{j} \otimes \Lambda(\ketbra{i}{j}),\\
    \dket{U} &\coloneqq \sum_i \ket{i}\otimes U\ket{i}
\end{align}
using the computational basis $\{\ket{i}\}$ of the input system (see Section \ref{subsec:choi} for the detail of the Choi representation). Therefore, $F_{\mathrm{ch}}(\Lambda, \map{U})$ is linear with respect to $\Lambda$.  The channel fidelity is invariant under the action of unitary operations, i.e.,
\begin{align}
    F_{\mathrm{ch}}(\map{U}' \circ \Lambda^{(1)} \circ \map{U}, \map{U}' \circ \Lambda^{(2)} \circ \map{U}) = F_{\mathrm{ch}}(\Lambda^{(1)}, \Lambda^{(2)})
\end{align}
holds for any quantum channels $\Lambda^{(1)}$ and $\Lambda^{(2)}$ and unitary operations $\map{U}$ and $\map{U}'$. Therefore, when $D=d$, the definition (\ref{eq:def_average_case_channel_fidelity}) of the figure of merit for deterministic isometry inversion corresponds to the worst-case channel fidelity for deterministic unitary inversion introduced in Ref.~\cite{quintino2022deterministic} given by
\begin{align}
    F_\mathrm{worst} = \inf_{U\in \U(d)} F_{\mathrm{ch}}[\supermap{C}(\map{U}^{\otimes n}), \map{U}^{-1}].\label{eq:unitary_inversion_fidelity}
\end{align}
We denote the optimal success probability of isometry inversion in parallel, sequential, and general protocols by $p_\mathrm{opt}^{(x)}(d,D,n)$ for $x=\mathrm{PAR}$ (parallel), $x=\mathrm{SEQ}$ (sequential), and $x=\mathrm{GEN}$ (general), respectively. Similarly, we denote the optimal fidelity of isometry inversion by $F_\mathrm{opt}^{(x)}(d,D,n)$.

\subsection{Universal error detection}
\label{sec:universal_error_detection_task}

We define the task called universal error detection to implement the POVM $\{\Pi_I, \Pi_O\}$ satisfying the one-sided error condition, i.e.,
\begin{align}
\label{eq:one-sided_error_condition}
    \Tr(\Pi_I \rho_\mathrm{in}) = 1 \quad \forall \rho\in \mcL(\Im V_\mathrm{in}).
\end{align}
We define the figure of merit of universal error detection by the worst-case operational distance given by
\begin{align}
    \delta_\mathrm{worst} \coloneqq \inf_{V_\mathrm{in}\in\isometry{d}{D}} D_\mathrm{op}(\{\Pi_I, \Pi_O\}, \{\Pi_{\Im V_\mathrm{in}}, \1_D-\Pi_{\Im V_\mathrm{in}}\}),
\end{align}
where $D_\mathrm{op}$ is the operational distance \cite{navascues2014energy, puchala2018strategies, puchala2021multiple} of two POVMs defined by
\begin{align}
\label{eq:def_operational_distance}
    D_\mathrm{op}(\{\Pi_i\}_i, \{\Pi'_i\}_i)\coloneqq  \sup_{\rho\geq 0, \Tr\rho = 1} {1\over 2} \sum_i \abs{\Tr(\Pi_i \rho) - \Tr(\Pi'_i \rho)},
\end{align}
which satisfies the convexity and the unitary invariance:
\begin{align}
\label{eq:POVM_distance_convexity}
    &D_\mathrm{op}(\{\sum_j p_j \Pi_i^{(j)}\}_i, \{\sum_j p_j \Pi_i^{\prime (j)}\}_i) \leq \sum_j p_j D_\mathrm{op}(\{ \Pi_i^{(j)}\}_i, \{\Pi_i^{\prime (j)}\}_i) \quad \forall p_j\geq 0,\\
\label{eq:POVM_distance_unitary_invariance}
    &D_\mathrm{op}(\{U \Pi_i U^\dagger\}_i, \{U \Pi_i^{\prime} U^\dagger\}_i) = D_\mathrm{op}(\{\Pi_i\}_i, \{\Pi_i^{\prime}\}_i) \quad \forall U\in\U(D).
\end{align}
For any distance measure satisfying \eqref{eq:POVM_distance_convexity} and \eqref{eq:POVM_distance_unitary_invariance}, we show that the optimal figure of merit is achieved by the protocol outputting the following POVM (see Theorems~\ref{thm:unitary_group_symmetry} and \ref{thm:white_noise_is_enough}):
\begin{align}
\label{eq:universal_error_detection_white_noise}
\begin{split}
    \Pi_I &= \Pi_{\Im V_\mathrm{in}} + \alpha (\1_D-\Pi_{\Im V_\mathrm{in}}),\\
    \Pi_O &= (1-\alpha) (\1_D-\Pi_{\Im V_\mathrm{in}}),
\end{split}
\end{align}
and the worst-case operational distance is given by $\delta_\mathrm{worst} = \alpha$.
We call $\alpha$ the approximation error of the universal error detection, and use $\alpha$ as the figure of merit in the rest of this paper.
We denote the minimal value of $\alpha$ in parallel, sequential, and general protocols as $\alpha_\mathrm{opt}^{(x)}(d,D,n)$ for $x=\mathrm{PAR}$ (parallel), $x=\mathrm{SEQ}$ (sequential), and $x=\mathrm{GEN}$ (general), respectively.

\subsection{Isometry adjointation}
\label{sec:isometry_adjointation_task}
We demand the quantum superinstrument $\{\supermap{C}_I, \supermap{C}_O\}$ satisfies
\begin{align}
\label{eq:one-sided_error_condition_isometry_adjointation}
    \Tr \supermap{C}_I[\map{V}_\mathrm{in}^{\otimes n}](\rho_\mathrm{in}) &= 1 \quad \forall \rho_\mathrm{in} \in \mcD(\Im V),
\end{align}
which corresponds to the one-sided error condition in universal error detection.  We define the quantum channels corresponding to the output quantum instrument and adjoint operation by
\begin{align}
    \supermap{C}[\map{V}_\mathrm{in}^{\otimes n}](\cdot) &\coloneqq \supermap{C}_I [\map{V}_\mathrm{in}^{\otimes n}](\cdot) \otimes \ketbra{0}{0} + \supermap{C}_O [\map{V}_\mathrm{in}^{\otimes n}](\cdot) \otimes \ketbra{1}{1},\label{eq:def_Ctotal}\\
    \map{V}_\mathrm{adjoint}(\cdot)&\coloneqq V_\mathrm{in}^\dagger \cdot V_\mathrm{in} \otimes \ketbra{0}{0} + \Tr[(\1_D-\Pi_{\Im V_\mathrm{in}}) \cdot] {\1_d\over d} \otimes \ketbra{1}{1}, \label{eq:def_Vadjoint}
\end{align}
and define the figure of merit by the worst-case diamond-distance error:
\begin{align}
    \epsilon = \sup_{V_\mathrm{in}\in\isometry{d}{D}} {1\over 2}\Big\| \supermap{C}[\map{V}_\mathrm{in}^{\otimes n}] - \map{V}_\mathrm{adjoint}\Big\|_{\diamond},
\end{align}
where $\|\cdot \|_{\diamond}$ is the diamond norm \cite{kitaev1997quantum}.
We can show that the minimum value of $\epsilon$ is achieved by the protocol outputting the following quantum instrument (see Theorems~\ref{thm:unitary_group_symmetry} and \ref{thm:white_noise_is_enough}):
\begin{align}
\label{eq:isometry_adjointation_detection_decode_error}
\begin{split}
    \supermap{C}_I[\map{V}_\mathrm{in}^{\otimes n}](\cdot) &= (1-\eta) \map{V}_\mathrm{in}^\dagger(\cdot) + \eta {\1_d\over d} \Tr[\Pi_{\Im V_\mathrm{in}} \cdot] + \alpha {\1_d \over d} \Tr[(\1_D-\Pi_{\Im V_\mathrm{in}}) \cdot],\\
    \supermap{C}_O[\map{V}_\mathrm{in}^{\otimes n}](\cdot) &= (1-\alpha){\1_d \over d} \Tr[(\1_D-\Pi_{\Im V_\mathrm{in}}) \cdot],
\end{split}
\end{align}
using $\alpha$ and $\eta$ satisfying $0\leq \alpha, \eta \leq 1$.
We call $\alpha$ and $\eta$ the detection error and the decode error, respectively.
Using the detection error and the decode error, the diamond-distance error is given by (see Appendix~\ref{appendix_subsec:evaluation_diamond-norm})
\begin{align}
    \epsilon = \max\{\alpha, \eta\}.
\end{align}
We denote the minimal value of $\epsilon$ in parallel, sequential, and general protocols by $\epsilon_\mathrm{opt}^{(x)}(d,D,n)$ for $x=\mathrm{PAR}$ (parallel), $x=\mathrm{SEQ}$ (sequential), and $x=\mathrm{GEN}$ (general), respectively.

\section{Construction of isometry adjointation protocols and reduction to isometry inversion and universal error detection}
\label{sec:construction}
In Section \ref{subsec:preliminaries}, we introduce the Choi representation to represent quantum channels and quantum supermaps.  We also introduce the link product to represent their compositions.
Then, we introduce the group theoretic technique called the Schur-Weyl duality.
In Section \ref{subsec:construction_isometry_adjointation}, we construct isometry adjointation protocols by converting unitary inversion protocols using the quantum comb derived from the Schur-Weyl duality.
We also derive isometry inversion and universal error detection protocols from the isometry adjointation protocol.

In the following discussions, we suppose $\mcI_1 = \cdots \mcI_n = \mcF = \mcO'_1 = \cdots = \mcO'^n = \mcP' = \CC^d$, $\mcO_1 = \cdots = \mcO_n = \mcP = \CC^D$, and denote the joint Hilbert spaces by $\mcI^i\coloneqq \bigotimes_{j=1}^{i} \mcI_j$, $\mcO'^i\coloneqq \bigotimes_{j=1}^{i}\mcO'_j$, and $\mcO^i = \bigotimes_{j=1}^{i}\mcO_j$ for $i\in\{1, \ldots, n\}$.
To illustrate the dimensions of the Hilbert spaces corresponding to the wires in the quantum circuits shown in this Section, we utilize the following color code of wires: a red wire corresponds to a $d$-dimensional Hilbert space, a blue wire corresponds to a $D$-dimensional Hilbert space, and a black wire corresponds to a Hilbert space with an arbitrary dimension.
The dual wires in the quantum circuits represent classical information transmissions.

\subsection{Preliminaries}
\label{subsec:preliminaries}
\subsubsection{Choi representation and link product}
\label{subsec:choi}
Any quantum channel $\Lambda: \mcL(\mcI) \to \mcL(\mcO)$ can be represented by the Choi operator $J_{\Lambda} \in \mcL(\mcI\otimes \mcO)$ defined by \cite{choi1975completely,jamiolkowski1972linear}
\begin{align}
    J_{\Lambda} \coloneqq \sum_{i,j} \ketbra{i}{j}_{\mcI} \otimes \Lambda(\ketbra{i}{j})_{\mcO},
\end{align}
where $\{\ket{i}\}$ is the computational basis of $\mcI$. In particular, the Choi operator $J_{\map{V}}$ of an isometry operation $\map{V}(\cdot)\coloneqq V\cdot V^\dagger$ for an isometry operator $V: \mcI \to \mcO$ is represented as
\begin{align}
    J_{\map{V}} = \dketbra{V},
\end{align}
where $\dket{V}\in\mcI\otimes \mcO$ is the Choi vector of $V$ defined by
\begin{align}
    \dket{V} \coloneqq \sum_i \ket{i}_\mcI \otimes (V\ket{i})_\mcO.
\end{align}
A composition of two quantum channels $\Lambda^{(1)}: \mcL(\mcI) \to \mcL(\mcO_1)$ and $\Lambda^{(2)}: \mcL(\mcO_1) \to \mcL(\mcO_2)$ can be represented in terms of the corresponding Choi operators as
\begin{align}
    J_{\Lambda^{(2)}\circ \Lambda^{(1)}} = J_{\Lambda^{(2)}} \star J_{\Lambda^{(1)}},
\end{align}
where $\star$ is the link product \cite{chiribella2009theoretical} defined by
\begin{align}
    A\star B \coloneqq \Tr_{\mcB}[(A^{T_{\mcB}}\otimes \1_{\mcC})(\1_{\mcA} \otimes B)]
\end{align}
for $A\in \mcL(\mcA \otimes \mcB)$ and $B\in \mcL(\mcB\otimes \mcC)$, and $A^{T_{\mcB}}$ is the partial transpose of $A$ with respect to the subsystem $\mcB$. The link product satisfies the commutativity and associativity relations by definition, given as
\begin{align}
    A\star B &= B\star A,\label{eq:commutativity}\\
    (A\star B) \star C &= A\star (B\star C)\label{eq:associativity}
\end{align}
for $A\in \mcL(\mcA \otimes \mcB)$, $B\in \mcL(\mcB\otimes \mcC)$, and $C\in \mcL(\mcC\otimes \mcD)$.  If the two operators $A$ and $B$ does not have an overlap subsystem, i.e., $A\in \mcL(\mcA)$ and $B\in\mcL(\mcB)$ for $\mcA\neq \mcB$, the link product of $A$ and $B$ becomes the tensor product:
\begin{align}
    A\star B = A\otimes B.\label{eq:tensor_product}
\end{align}

Using the Choi operator and the link product, we can represent the quantum combs defined in Eq.~(\ref{eq:def_quantum_comb}).
The Choi operator of the quantum channel shown in the right-hand side of Eq.~(\ref{eq:def_quantum_comb}) is given by
\begin{align}
    J_{\Lambda^{(n+1)}} \star J_{\Phi_\mathrm{in}^{(n)}} \star J_{\Lambda^{(n)}} \star \cdots \star J_{\Lambda^{(2)}} \star J_{\Phi_\mathrm{in}^{(1)}} \star J_{\Lambda^{(1)}}.
\end{align}
Using Eqs.~(\ref{eq:commutativity})-(\ref{eq:tensor_product}), this can be rewritten as
\begin{align}
    C\star \bigotimes_{i=1}^{n}J_{\Phi_\mathrm{in}^{(i)}},
\end{align}
where $C\in \mcL(\mcI^n \otimes \mcO^n \otimes \mcP \otimes \mcF)$ is the Choi operator of the quantum comb $\supermap{C}$ defined by
\begin{align}
    C\coloneqq J_{\Lambda^{(n+1)}} \star J_{\Lambda^{(n)}} \star \cdots \star J_{\Lambda^{(1)}}.\label{eq:choi_comb}
\end{align}
Therefore, the action of a quantum comb $\supermap{C}$ on input quantum operations $\{\Phi_\mathrm{in}^{(1)}, \ldots, \Phi_\mathrm{in}^{(n)}\}$ is given in the Choi representation by
\begin{align}
    J_{\supermap{C}[\Phi_\mathrm{in}^{(1)}\otimes \cdots \otimes \Phi_\mathrm{in}^{(n)}]} = C \star \bigotimes_{i=1}^{n}J_{\Phi_\mathrm{in}^{(i)}}.\label{eq:superchannel_choi}
\end{align}
The following Theorem characterizes the set of Choi operators of quantum combs.
\begin{Thm}
\label{thm:comb_characterization}
\emph{\cite{chiribella2008quantum,chiribella2009theoretical}}
    Suppose $\mcP, \mcF, \mcI_i, \mcO_i$ for $i\in\{1, \ldots, n\}$ are Hilbert spaces and define the joint Hilbert spaces by $\mcI^n \coloneqq \bigotimes_{i=1}^{n} \mcI_i$ and $\mcO^n \coloneqq \bigotimes_{i=1}^{n} \mcO_i$.  The operator $C\in \mcL(\mcI^n \otimes \mcO^n \otimes \mcP \otimes \mcF)$ can be written as Eq.~(\ref{eq:choi_comb}) using quantum channels $\Lambda^{(1)}: \mcL(\mcP) \to \mcL(\mcI_1\otimes \mcA_1), \Lambda^{(2)}: \mcL(\mcO_1\otimes \mcA_1) \to \mcL(\mcI_2\otimes \mcA_2), \ldots, \Lambda^{(n)}: \mcL(\mcO_{n-1} \otimes \mcA_{n-1}) \to \mcL(\mcI_n \otimes \mcA_n), \Lambda^{(n+1)}: \mcL(\mcO_n \otimes \mcA_n) \to \mcL(\mcF)$ and auxiliary Hilbert spaces $\mcA_1, \ldots, \mcA_n$ if and only if $C$ satisfies the following equations:
    \begin{gather}
        C\geq 0,\\
        \Tr_{\mcI_i} C^{(i)} = C^{(i-1)} \otimes \1_{\mcO_{i-1}} \quad \forall i\in \{1, \ldots, n+1\},
    \end{gather}
    where $\mcO_0$ and $\mcI_{n+1}$ are defined by $\mcO_0\coloneqq \mcP$ and $\mcI_{n+1}\coloneqq \mcF$, and $C^{(i)}$ for $i\in \{0, \ldots, n+1\}$ are defined by $C^{(n+1)}\coloneqq C$, $C^{(i-1)} \coloneqq \Tr_{\mcO_{i-1}\mcI_i} C^{(i)}/\dim \mcO_{i-1}$ and $C^{(0)}\coloneqq 1$.
\end{Thm}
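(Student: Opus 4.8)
The plan is to prove both implications, treating the ``only if'' direction as a direct computation and the ``if'' direction by induction on the number $n$ of teeth. For necessity, suppose $C = J_{\Lambda^{(n+1)}} \star \cdots \star J_{\Lambda^{(1)}}$ with each $\Lambda^{(i)}$ a quantum channel. Positivity $C\geq 0$ is immediate once one observes that the link product of two positive operators is positive—it is the Choi operator of the composition of the corresponding completely positive maps, and composition preserves complete positivity—so the claim follows by associativity of $\star$. For the normalization conditions, I would translate trace preservation of each $\Lambda^{(i)}$ into the Choi-level identity $\Tr_{\mcI_i\otimes\mcA_i} J_{\Lambda^{(i)}} = \1_{\mcO_{i-1}\otimes\mcA_{i-1}}$ (with the convention that $\mcA_0,\mcA_{n+1}$ are trivial), and then trace open wires from the top of the network. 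Tracing $\mcI_{n+1}=\mcF$ collapses $\Lambda^{(n+1)}$ into $\1_{\mcO_n\otimes\mcA_n}$, whose $\mcA_n$ factor contracts with the memory of the remaining comb and leaves exactly $C^{(n)}\otimes\1_{\mcO_n}$; iterating this telescoping downward establishes $\Tr_{\mcI_i} C^{(i)} = C^{(i-1)}\otimes\1_{\mcO_{i-1}}$ for every $i$, with the factors $1/\dim\mcO_{i-1}$ in the definition of $C^{(i)}$ absorbing the dimensional constants produced at each step.

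For sufficiency I would induct on $n$. The base case $n=0$ reduces to the statement that a positive $C$ on $\mcP\otimes\mcF$ with $\Tr_\mcF C = \1_\mcP$ is the Choi operator of a channel $\Lambda^{(1)}\colon\mcL(\mcP)\to\mcL(\mcF)$, which is the standard Choi–Jamio\l{}kowski criterion. For the inductive step, given $C$ satisfying the hypotheses, define $C^{(n)} = \Tr_{\mcO_n\mcF} C/\dim\mcO_n$. The crucial observation is that the conditions on $C$ for indices $i=1,\dots,n$ are literally the comb conditions for $C^{(n)}$ regarded as the Choi operator of an $(n-1)$-tooth comb in which $\mcI_n$ plays the role of the global future; hence the induction hypothesis yields channels $\Lambda^{(1)},\dots,\Lambda^{(n)}$ realizing $C^{(n)}$.

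It remains to append a final channel $\Lambda^{(n+1)}\colon\mcL(\mcO_n\otimes\mcA_n)\to\mcL(\mcF)$, and this is where the real work lies. The correct picture is a quantum chain rule: the remaining condition $\Tr_\mcF C = C^{(n)}\otimes\1_{\mcO_n}$ says that the $\mcF$-marginal of $C$ is exactly the trivially $\mcO_n$-extended reduced comb, so $C$ should be recoverable from $C^{(n)}$ by ``conditioning on $\mcF$.'' To realize this at the level of channels, I would first dilate the realization of $C^{(n)}$ produced by the induction hypothesis so that its last channel $\Lambda^{(n)}$ emits a purifying memory $\mcA_n$ alongside $\mcI_n$; the appended $\Lambda^{(n+1)}$ then consumes $\mcO_n$ together with $\mcA_n$ to produce $\mcF$. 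Its Choi operator is constructed by a generalized-inverse conditioning of $C$ on $\sigma\coloneqq C^{(n)}\otimes\1_{\mcO_n}$, extended arbitrarily off the support of $\sigma$: positivity of $C$ makes this operator positive, hence completely positive as a map, while the marginal identity forces the correct partial-trace normalization, hence trace preservation. The link product of $J_{\Lambda^{(n+1)}}$ with the memory-carrying realization of $C^{(n)}$ then reproduces $C$, closing the induction.

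The main obstacle is precisely this last step: arranging the memory system $\mcA_n$ so that the purification of the inductively built reduced comb matches exactly what the appended channel consumes, and controlling the kernel of $\sigma$ so that the conditioning yields a map that is simultaneously completely positive and \emph{exactly} trace preserving. The positivity and trace bookkeeping elsewhere are routine; the technical heart is verifying that conditioning on the marginal reconstructs $C$ itself rather than merely a reduced operator, and that the dilation consumes neither more nor less than the available memory.
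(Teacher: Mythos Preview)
The paper does not prove this theorem; it is stated as a cited result from \cite{chiribella2008quantum,chiribella2009theoretical} and used as a black box, so there is no ``paper's own proof'' to compare against. Your sketch follows the standard Chiribella--D'Ariano--Perinotti argument: necessity by tracing down from the top using trace preservation of each $\Lambda^{(i)}$, and sufficiency by induction on $n$ with the inductive step constructing the final tooth. You have correctly identified the only nontrivial point, namely building $\Lambda^{(n+1)}$ so that it consumes exactly the memory $\mcA_n$ produced by the dilated realization of $C^{(n)}$ and reproduces $C$ under the link product. In the original references this is handled not by generalized-inverse conditioning but by taking an isometric (Stinespring-type) dilation of the reduced comb so that the purifying ancilla \emph{is} $\mcA_n$, and then observing that $\Tr_\mcF C = C^{(n)}\otimes\1_{\mcO_n}$ together with $C\geq 0$ forces $C$ to factor through that purification; your conditioning formulation is equivalent on the support of $\sigma$, but you should be explicit that the arbitrary extension off $\mathrm{supp}\,\sigma$ is harmless precisely because $C$, being bounded above in trace by $\sigma$, has support contained in $\mathrm{supp}\,\sigma$.
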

The probabilistic transformation $\{\supermap{C}_a\}_a$ in a sequential protocol given by Eq.~(\ref{eq:def_quantum_comb_probabilistic}) can be similarly represented by
\begin{align}
    J_{\supermap{C}_a[\Phi_\mathrm{in}^{(1)}\otimes \cdots \otimes \Phi_\mathrm{in}^{(n)}]} = C_a \star \bigotimes_{i=1}^{n}J_{\Phi_\mathrm{in}^{(i)}},\label{eq:superinstrument_choi}
\end{align}
where $C_a$ is the Choi operator of the probabilistic transformation $\{\supermap{C}_a\}_a$ defined by
\begin{align}
    C_a\coloneqq J_{\Lambda^{(n+1)}_a} \star J_{\Lambda^{(n)}} \star \cdots \star J_{\Lambda^{(1)}}.
\end{align}
The following Theorem characterizes the set of Choi operators of probabilistic transformations implemented in sequential protocols.
\begin{Thm}
\label{thm:probabilistic_comb_characterization}
\emph{\cite{chiribella2008quantum,chiribella2009theoretical}}
    Suppose $\mcP, \mcF, \mcI_i, \mcO_i$ for $i\in\{1, \ldots, n\}$ are Hilbert spaces and define the joint Hilbert spaces by $\mcI^n \coloneqq \bigotimes_{i=1}^{n} \mcI_i$ and $\mcO^n \coloneqq \bigotimes_{i=1}^{n} \mcO_i$.  The set of operators $\{C_a\}_a\subset \mcL(\mcI^n \otimes \mcO^n \otimes \mcP \otimes \mcF)$ can be written as Eq.~(\ref{eq:choi_comb}) using quantum channels $\Lambda^{(1)}: \mcL(\mcP) \to \mcL(\mcI_1\otimes \mcA_1), \Lambda^{(2)}: \mcL(\mcO_1\otimes \mcA_1) \to \mcL(\mcI_2\otimes \mcA_2), \ldots, \Lambda^{(n)}: \mcL(\mcO_{n-1} \otimes \mcA_{n-1}) \to \mcL(\mcI_n \otimes \mcA_n)$, a quantum instrument $\{\Lambda^{(n+1)}_a\}_a: \mcL(\mcO_n \otimes \mcA_n) \to \mcL(\mcF)$ and auxiliary Hilbert spaces $\mcA_1, \ldots, \mcA_n$ if and only if $C$ satisfies the following equations:
    \begin{gather}
        C_a\geq 0,\\
        \Tr_{\mcI_i} C^{(i)} = C^{(i-1)} \otimes \1_{\mcO_{i-1}} \quad \forall i\in \{1, \ldots, n+1\},
    \end{gather}
    where $\mcO_0$ and $\mcI_{n+1}$ are defined by $\mcO_0\coloneqq \mcP$ and $\mcI_{n+1}\coloneqq \mcF$, and $C^{(i)}$ for $i\in \{0, \ldots, n+1\}$ are defined by $C^{(n+1)}\coloneqq \sum_a C_a$, $C^{(i-1)} \coloneqq \Tr_{\mcO_{i-1}\mcI_i} C^{(i)}/\dim \mcO_{i-1}$ and $C^{(0)}\coloneqq 1$.
\end{Thm}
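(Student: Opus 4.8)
The plan is to reduce the claim to the deterministic comb characterization of Theorem~\ref{thm:comb_characterization} by encoding the classical outcome $a$ into an auxiliary classical output register $\mcX$ attached to the global future. The key observation is that a family of CP maps $\{\Lambda^{(n+1)}_a\}_a: \mcL(\mcO_n\otimes\mcA_n)\to\mcL(\mcF)$ forms a quantum instrument exactly when the single map $\Lambda^{(n+1)}(\cdot)\coloneqq \sum_a \Lambda^{(n+1)}_a(\cdot)\otimes\ketbra{a}{a}_\mcX$ is a quantum channel, and that on the level of Choi operators this corresponds to $\tilde{C}\coloneqq\sum_a C_a\otimes\ketbra{a}{a}_\mcX$. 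I would therefore transfer all questions about $\{C_a\}_a$ to the single deterministic comb with enlarged global future $\mcF\otimes\mcX$ whose Choi operator is $\tilde{C}$, noting that $\tilde{C}=J_{\Lambda^{(n+1)}}\star J_{\Lambda^{(n)}}\star\cdots\star J_{\Lambda^{(1)}}$ precisely when $C_a=J_{\Lambda^{(n+1)}_a}\star J_{\Lambda^{(n)}}\star\cdots\star J_{\Lambda^{(1)}}$.

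For the ``only if'' direction I would start from the assumed link-product decomposition of $C_a$, form $\Lambda^{(n+1)}$ as above, and apply Theorem~\ref{thm:comb_characterization} to the genuine comb $\tilde{C}$ with $\mcI_{n+1}=\mcF\otimes\mcX$. This yields $\tilde{C}\geq 0$ and the full hierarchy of trace conditions. Projecting onto $\ketbra{a}{a}_\mcX$ gives $C_a\geq 0$, while tracing out $\mcX$ identifies $\Tr_\mcX\tilde{C}=\sum_a C_a=C^{(n+1)}$; hence the marginals $\tilde{C}^{(i)}$ for $i\leq n$ coincide with the $C^{(i)}$ defined in the statement, and the trace conditions for $\tilde{C}$ reduce term by term to those claimed for $\{C_a\}_a$.

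For the ``if'' direction I would run this backwards: given $\{C_a\}_a$ with $C_a\geq 0$ and the stated trace hierarchy, define $\tilde{C}\coloneqq\sum_a C_a\otimes\ketbra{a}{a}_\mcX$ and check the hypotheses of Theorem~\ref{thm:comb_characterization}. Positivity is immediate from the orthogonality of the $\ketbra{a}{a}_\mcX$, and the marginal conditions follow because tracing out $\mcX$ replaces $\sum_a C_a$ by $C^{(n+1)}$, after which the recursion defining the lower marginals is identical in the two statements. Theorem~\ref{thm:comb_characterization} then supplies channels $\Lambda^{(1)},\ldots,\Lambda^{(n+1)}$ with $\Lambda^{(n+1)}:\mcL(\mcO_n\otimes\mcA_n)\to\mcL(\mcF\otimes\mcX)$ that realize $\tilde{C}$.

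The step I expect to require the most care is recovering the instrument structure from this $\Lambda^{(n+1)}$: Theorem~\ref{thm:comb_characterization} does not a priori guarantee that $\Lambda^{(n+1)}$ outputs a state block-diagonal in $\mcX$, so it need not split as $\sum_a\Lambda^{(n+1)}_a\otimes\ketbra{a}{a}_\mcX$. I would resolve this by dephasing the $\mcX$ register, replacing $\Lambda^{(n+1)}$ with $\Delta_\mcX\circ\Lambda^{(n+1)}$, where $\Delta_\mcX$ is the completely dephasing channel on $\mcX$. Since $\tilde{C}$ is by construction invariant under dephasing of its $\mcX$ factor, and $\mcX$ lies in the global future and is therefore untouched by the link products defining the comb, this substitution leaves $\tilde{C}$ unchanged while forcing the block-diagonal form $\Lambda^{(n+1)}(\cdot)=\sum_a\Lambda^{(n+1)}_a(\cdot)\otimes\ketbra{a}{a}_\mcX$ for CP maps $\Lambda^{(n+1)}_a$ whose sum $\Tr_\mcX\circ\Lambda^{(n+1)}$ is TP. This is exactly a quantum instrument, and reading off $C_a=J_{\Lambda^{(n+1)}_a}\star J_{\Lambda^{(n)}}\star\cdots\star J_{\Lambda^{(1)}}$ completes the argument.
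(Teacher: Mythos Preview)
Your reduction is correct and is the standard argument. Note, however, that the paper does not supply its own proof of this theorem: it is stated with a citation to \cite{chiribella2008quantum,chiribella2009theoretical} as an established result, so there is no in-paper proof to compare against. Your approach---encoding the classical outcome into an orthogonal register $\mcX$ appended to the global future, invoking Theorem~\ref{thm:comb_characterization} on the enlarged deterministic comb $\tilde{C}=\sum_a C_a\otimes\ketbra{a}{a}_\mcX$, and then dephasing the last channel to recover the instrument structure---is precisely the canonical reduction used in the cited works, and every step you outline (including the dephasing argument to force block-diagonality without altering $\tilde{C}$) is sound.
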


The link product also represents the composition of two quantum combs.
We consider two quantum combs $\supermap{C}'$ and $\supermap{T}$ defined by
\begin{align}
    \supermap{C}'\left[\Phi_{\text{in}}'^{(1)}\otimes \cdots \otimes \Phi_{\text{in}}'^{(n)}\right]&\coloneqq \Lambda'^{(n+1)} \circ (\Phi_{\text{in}}'^{(n)} \otimes \1_{\mcA_n}) \circ \cdots \circ \Lambda'^{(2)} \circ (\Phi_{\text{in}}'^{(1)} \otimes \1_{\mcA_1}) \circ \Lambda'^{(1)},\\
    \supermap{T}\left[\Phi_{\text{in}}''^{(1)} \otimes \cdots \otimes \Phi_{\text{in}}''^{(n)}\right]&\coloneqq \Gamma^{(n+1)} \circ (\Phi_{\text{in}}''^{(n)} \otimes \1_{\mcB_n}) \circ \cdots \circ \Gamma^{(2)} \circ (\Phi_{\text{in}}''^{(1)} \otimes \1_{\mcB_1}) \circ \Gamma^{(1)},
\end{align}
where $\mcA_i$ and $\mcB_i$ for $i\in\{1, \ldots, n\}$ are auxiliary Hilbert spaces and $\Lambda'^{(1)}: \mcL(\mcP) \to \mcL(\mcI_1\otimes \mcA_1)$, $\Lambda'^{(i)}: \mcL(\mcO'_{i-1}\otimes \mcA_{i-1}) \to \mcL(\mcI_i\otimes \mcA_i)$ for $i\in\{2, \ldots, n\}$, $\Lambda^{(n+1)}: \mcL(\mcO'_n \otimes \mcA_n) \to \mcL(\mcF)$, $\Gamma^{(1)}: \mcL(\mcP) \to \mcL(\mcP'\otimes \mcB_1)$, $\Gamma^{(i)}: \mcL(\mcO_{i-1} \otimes \mcB_{i-1}) \to \mcL(\mcO'_{i-1} \otimes \mcB_i)$ for $i\in\{2, \ldots, n\}$, and $\Gamma^{(n+1)}: \mcL(\mcO_n \otimes \mcB_n) \to \mcL(\mcO'_n)$ are quantum channels.
We define the composed quantum comb $\supermap{C}$ by
\begin{align}
    \supermap{C}\left[\Phi_{\text{in}}^{(1)}, \ldots, \Phi_{\text{in}}^{(n)}\right]&\coloneqq \Lambda^{(n+1)} \circ (\Phi_{\text{in}}^{(n)} \otimes \1_{\mcA_n \mcB_n}) \circ \cdots \circ \Lambda^{(2)} \circ (\Phi_{\text{in}}^{(1)} \otimes \1_{\mcA_1 \mcB_1}) \circ \Lambda^{(1)},\\
    \Lambda^{(i)} &\coloneqq (\Lambda'^{(i)} \otimes \1_{\mcB_i}) \circ (\Gamma^{(i)} \otimes \1_{\mcA_{i-1}})\quad \forall i\in \{1, \ldots, n+1\}.\label{eq:protocol_conversion_sequential}
\end{align}
In terms of the Choi operator, this composition can be written as
\begin{align}
    C = C'\star T,\label{eq:protocol_conversion_sequential_choi}
\end{align}
where $C'$, $T$, and $C$ are Choi operators of $\supermap{C}'$, $\supermap{T}$, and $\supermap{C}$, respectively.

\subsubsection{Schur-Weyl duality}
\label{sec:sw_duality}
We introduce the Schur-Weyl duality as follows.  We consider representations of the special unitary group $\U(d)$ and the permutation group $\mfS_n$ on a $n$-fold Hilbert space $(\CC^d)^{\otimes n}$ defined by
\begin{align}
    \U(d) \ni U &\mapsto U^{\otimes n} \in \mcL(\CC^d)^{\otimes n},\\
    \mfS_n \ni \pi &\mapsto P_\pi \in \mcL(\CC^d)^{\otimes n},
\end{align}
where $P_\pi$ is a permutation operator on $(\CC^d)^{\otimes n}$ defined by
\begin{align}
    P_\pi (\ket{\psi_1}\otimes \cdots \otimes \ket{\psi_n}) \coloneqq \ket{\psi_{\pi^{-1}(1)}}\otimes \cdots \otimes\ket{\psi_{\pi^{-1}(n)}} \quad \forall \ket{\psi_1}, \ldots, \ket{\psi_n}\in \CC^d.
\end{align}
The Schur-Weyl duality asserts a simultaneous irreducible decomposition of the two representations $U^{\otimes n}$ and $P_\pi$ given by
\begin{align}
    (\CC^d)^{\otimes n} &= \bigoplus_{\mu\in\young{d}{n}} \mcU_\mu^{(d)} \otimes \mcS_\mu,\label{eq:schur_weyl_decomposition_space}\\
    U^{\otimes n} &= \bigoplus_{\mu\in\young{d}{n}} (U_\mu)_{\mcU_\mu^{(d)}} \otimes \1_{\mcS_\mu},\\
    P_\pi &= \bigoplus_{\mu\in\young{d}{n}} \1_{\mcU_\mu^{(d)}} \otimes (\pi_\mu)_{\mcS_\mu},\label{eq:def_pi_mu}
\end{align}
where $\mu$ runs in the set of Young diagrams with $n$ boxes whose depth is less than or equal to $d$, denoted by $\young{d}{n}$, and $U_\mu$ and $\pi_\mu$ are irreducible representations of $\U(d)$ and $\mfS_n$ on the linear spaces $\mcU_\mu^{(d)}$ and $\mcS_\mu$, respectively.\footnote{The superscript $d$ is put on $\mcU_\mu^{(d)}$ since the properties of the representation space $\mcU_\mu^{(d)}$, e.g., the dimension, depends on the local dimension $d$.  On the other hand, we do not put the superscript $d$ on $\mcS_\mu$ like $\mcS_\mu^{(d)}$ since the representation space $\mcS_\mu^{(d)}$ is automorphic to $\mcS_\mu^{(d')}$ for an arbitrary local dimension $d'$.}
We denote the dimensions of the irreducible representation spaces $\mcU_\mu^{(d)}$ and $\mcS_\mu$ by $m_\mu^{(d)}$ and $d_\mu$, respectively.\footnote{The dimension of $\mcU_{\mu}^{(d)}$ is denoted by $m_{\mu}^{(d)}$ since $\mcU_{\mu}^{(d)}$ is the multiplicity space corresponding to the irreducible representation space $\mcS_\mu$ [see Eq.~(\ref{eq:schur_weyl_decomposition_space})].} They are given by \cite{ceccherini2010representation}
\begin{align}
    d_\mu &= {n! \over \mathrm{hook}(\mu)},\label{eq:dim_hook}\\
    m_\mu^{(d)} &= {\prod_{(i,j)\in\mu}(d+j-i) \over \text{hook}(\mu)},\label{eq:mult_hook}
\end{align}
where $\mathrm{hook}(\mu)$ for $\mu\in\young{d}{n}$ is defined by
\begin{align}
    \mathrm{hook}(\mu) = \prod_{(i,j)\in\mu} (\mu_i + \mu'_j -i-j+1),\label{eq:def_hook}
\end{align}
and $(i,j)$ is the coordinate of a box in the Young diagram $\mu$ such that $i$ represents the row number running from bottom to top and $j$ represents the column number running from left to right. The numbers $\mu_i$ and $\mu'_j$ are the numbers of boxes in the $i$-th row and the $j$-th column, respectively.
The $n$-fold isometry operator $V^{\otimes n}$ for $V\in\isometry{d}{D}$ can also be decomposed as
\begin{align}
    V^{\otimes n} = \bigoplus_{\mu\in\young{d}{n}} (V_\mu)_{\mcU_\mu^{(d)}\to\mcU_\mu^{(D)}} \otimes \1_{\mcS_\mu},\label{eq:decomposition_isometry}
\end{align}
where $V_\mu: \mcU_\mu^{(d)}\to\mcU_\mu^{(D)}$ is an isometry operator, as shown in Ref.~\cite{yoshida2023universal}.

Due to Schur's lemma, any operator commuting with $U^{\otimes n}$ for all $U\in\U(d)$ can be written as a linear combination of the operators $E^{\mu, d}_{ij}$ defined by 
\begin{align}
    E^{\mu, d}_{ij}\coloneqq \1_{\mcU_\mu^{(d)}} \otimes \ketbra{\mu, i}{\mu, j}_{\mcS_{\mu}}\label{eq:def_E}
\end{align}
for $i,j\in \{1, \ldots, d_{\mu}\}$, where $\{\ket{\mu, i}\}$ is an orthonormal basis of $\mcS_{\mu}$. Then, the following relation holds:
\begin{align}
    \Tr E^{\mu, d}_{ij} = m_\mu^{(d)} \delta_{ij},\;\;\;E^{\mu, d}_{ij}E^{\nu, d}_{kl} = \delta_{\mu\nu}\delta_{jk} E^{\mu, d}_{il},
\end{align}
where $\delta_{ij}$ is Kronecker's delta defined by $\delta_{ii}=1$ and $\delta_{ij}=0$ for $i\neq j$.
In particular, $\{m_\mu^{(d)-1/2} E^{\mu,d}_{ij}\}$ forms an orthonormal basis of the set of operators commuting with $U^{\otimes n}$ for all $U\in\U(d)$ under the Hilbert-Schmidt inner product $\langle X, Y \rangle \coloneqq \Tr(X^\dagger Y)$. Thus, any operator $\rho$ commuting with $U^{\otimes n}$ for all $U\in\U(d)$ can be represented as
\begin{align}
    \rho = \sum_{\mu\in\young{d}{n}}\sum_{i,j=1}^{d_\mu} {\Tr(E^{\mu,d}_{ji} \rho) \over m_{\mu}^{(d)}} E^{\mu, d}_{ij}.\label{eq:decomposition_unitary_group_symmetric_state}
\end{align}
Also, we define the Young projector $\Pi_\mu$ by
\begin{align}
    \Pi_\mu^{(d)}\coloneqq \sum_{i=1}^{d_\mu} E^{\mu, d}_{ii},\label{eq:young_projector}
\end{align}
which is an orthonormal projector onto the subspace $\mcU_\mu^{(d)} \otimes \mcS_\mu$.

In particular, we consider the Schur basis of $(\CC^d)^{\otimes n}$ defined by
\begin{align}
    \ket{\mu, u, i}\coloneqq \ket{\mu,u}_{\mcU_\mu^{(d)}} \otimes \ket{\mu,i}_{\mcS_\mu},
\end{align}
where $\{\ket{\mu, u}\}$ is the Gelfand-Zetlin basis of $\mcU_{\mu}^{(d)}$ and $\{\ket{\mu,i}\}$ is the Young-Yamanouchi basis of $\mcS_{\mu}$.
The change of the basis from the computational basis to the Schur basis is called the quantum Schur transform \cite{bacon2006efficient, bacon2007quantum, krovi2019efficient, kirby2018practical, pearce2022multigraph, wills2023generalised}, denoted by $\mathrm{Sch}_{n,d}$.
In the quantum circuit, the Schur basis is represented in three registers corresponding to Hilbert spaces $\mcT_n$, $\mcU_n^{(d)}$ and $\mcS_n$, as
\begin{align}
    \mathrm{Sch}_{n,d} \ket{\mu,u,i} = \ket{\mu}_{\mcT_n} \otimes \ket{u}_{\mcU_n^{(d)}} \otimes \ket{i}_{\mcS_n},
\end{align}
such that the spaces $\mcU_\mu^{(d)}$ and $\mcS_\mu$ can be embedded into $\mcU_n^{(d)}$ and $\mcS_n$, respectively.
We call $\mcT_n$, $\mcU_n^{(d)}$ and $\mcS_n$ the Young diagram register, the unitary group register, and the symmetric group register, respectively.
The standard tableaux with frame $\mu$ is indexed by $i\in\{1, \ldots, d_{\mu}\}$ and the $i$-th standard tableau is denoted by $s^{\mu}_i$. Each element in the Young-Yamanouchi basis $\{\ket{\mu, i}\}$ is associated with the standard tableaux $s^{\mu}_i$.  We also define the set of operators $\{E^{\lambda, d}_{ab}\}$ on $(\CC^d)^{\otimes n-1}$ for $\lambda \in \young{d}{n-1}$ and $a,b\in\{1, \ldots, d_\lambda\}$.
To express the relation between $\{E^{\mu,d}_{ij}\}$ and $\{E^{\lambda,d}_{ab}\}$, we introduce the following notations on the Young diagrams. We denote the set of Young diagrams obtained by adding (removing) a box to $\lambda$ by $\lambda+\square$ ($\lambda-\square$), and define the set $\lambda+_d \square$ by
\begin{align}
    \lambda+_d \square \coloneqq \{\mu\in\lambda+\square \mid (\text{depth of } \mu) \leq d \}.
\end{align}
We also denote the index of the standard tableau $s^{\mu}_{a_{\mu}^{\lambda}}$ obtained by adding a box \fbox{$n$} to a standard tableau $s^{\lambda}_a$ by $a_{\mu}^{\lambda}$.
For instance, for $\lambda$ given by
\begin{align}
    \lambda = \ydiagram[]{2,1},
\end{align}
the sets $\lambda+\square, \lambda+_2 \square, \lambda-\square$ are given by
\begin{align}
    \lambda+\square = \left\{\ydiagram[]{3,1}, \ydiagram[]{2,2}, \ydiagram[]{2,1,1}\right\}, \quad \lambda+_2\square  = (\lambda+\square)\setminus \left\{\ydiagram[]{2,1,1}\right\}, \quad \lambda-\square = \left\{\ydiagram[]{1,1}, \ydiagram[]{2,0}\right\}.
\end{align}
For $\mu\in\lambda+\square$ and $s_a^\lambda$ given by
\begin{align}
    \mu = \ydiagram[]{3,1}, \quad s_a^\lambda = \ytableaushort{1 3,2},
\end{align}
$s^\mu_{a_\mu^\lambda}$ is given by
\begin{align}
    s^\mu_{a_\mu^\lambda} = \ytableaushort{1 3 4,2}.
\end{align}
Then, the following Lemma holds.

\begin{Lem}
    \label{lem:yy}
    \emph{\cite{ram1992matrix, mozrzymas2018simplified, studzinski2022efficient, ebler2023optimal, yoshida2023reversing}}
    The tensor product $E^{\lambda, d}_{ab} \otimes \1_d$ and the partial trace of $E^{\mu, d}_{ij}$ in the last system for $\lambda\in\young{d}{n-1}$ and $\mu\in\young{d}{n}$ are given by 
    \begin{align}
        E^{\lambda, d}_{ab} \otimes \1_d = \sum_{\mu\in\lambda+_d \square} E^{\mu, d}_{a_\mu^\lambda b_\mu^\lambda},\;\;\;\Tr_{n} E^{\mu, d}_{a_\mu^\lambda b_\mu^\kappa} = \delta_{\lambda \kappa}\frac{m_\mu^{(d)}}{m_\lambda^{(d)}} E^{\lambda, d}_{ab}.
    \end{align}
\end{Lem}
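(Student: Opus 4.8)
The plan is to treat both identities as statements inside the commutant of $\{U^{\otimes n} : U\in\U(d)\}$ in $\mcL((\CC^d)^{\otimes n})$, which by the Schur-Weyl duality is the image of the group algebra $\CC[\mfS_n]$ under $\pi\mapsto P_\pi$. Under this correspondence the $E^{\mu,d}_{ij}$ are exactly the (multiplicity-dressed) matrix units: writing $e^\mu_{ij}$ for the abstract matrix unit of $\mathrm{End}(\mcS_\mu)$ with respect to the Young-Yamanouchi basis, one has $e^\mu_{ij}\mapsto E^{\mu,d}_{ij}$, and $E^{\mu,d}_{ij}=0$ whenever the depth of $\mu$ exceeds $d$ because then $\mcU_\mu^{(d)}=0$. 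I would prove the tensor-product identity first and then deduce the partial-trace identity from it.

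For the first identity, the key observation is that the inclusion $\mfS_{n-1}\hookrightarrow\mfS_n$ fixing the last element is realized on tensors by $P_\pi^{(n-1)}\otimes\1_d=P_\pi^{(n)}$. Since $E^{\lambda,d}_{ab}$ is a linear combination of the $P_\pi^{(n-1)}$ with $\pi\in\mfS_{n-1}$, tensoring with $\1_d$ is precisely the image of the algebra inclusion $\CC[\mfS_{n-1}]\hookrightarrow\CC[\mfS_n]$ applied to $e^\lambda_{ab}$. The defining property of the Young-Yamanouchi basis and of the index convention $a_\mu^\lambda$ (the Bratteli/branching structure, as in the cited references) is exactly
\[
e^\lambda_{ab}=\sum_{\mu\in\lambda+\square}e^\mu_{a^\lambda_\mu b^\lambda_\mu}.
\]
Mapping this back into $\mcL((\CC^d)^{\otimes n})$ and discarding the terms of depth exceeding $d$ (for which $E^{\mu,d}=0$) yields $E^{\lambda,d}_{ab}\otimes\1_d=\sum_{\mu\in\lambda+_d\square}E^{\mu,d}_{a^\lambda_\mu b^\lambda_\mu}$, since a box added to $\lambda\in\young{d}{n-1}$ raises the depth by at most one.

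For the second identity, I would exploit the duality $\Tr[(X\otimes\1_d)Y]=\Tr[X\,\Tr_n Y]$. First, $\Tr_n E^{\mu,d}_{a^\lambda_\mu b^\kappa_\mu}$ commutes with $U^{\otimes(n-1)}$, so by the expansion \eqref{eq:decomposition_unitary_group_symmetric_state} it equals $\sum_{\nu,c,e}m_\nu^{(d)-1}\Tr[E^{\nu,d}_{ec}\,\Tr_n E^{\mu,d}_{a^\lambda_\mu b^\kappa_\mu}]\,E^{\nu,d}_{ce}$. Rewriting each coefficient as $\Tr[(E^{\nu,d}_{ec}\otimes\1_d)E^{\mu,d}_{a^\lambda_\mu b^\kappa_\mu}]$, substituting the first identity, and applying $E^{\mu,d}_{pq}E^{\nu,d}_{rs}=\delta_{\mu\nu}\delta_{qr}E^{\mu,d}_{ps}$ together with $\Tr E^{\mu,d}_{ps}=m_\mu^{(d)}\delta_{ps}$ collapses the coefficient. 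The crucial combinatorial point is that $c^\nu_\mu=a^\lambda_\mu$ forces $\nu=\lambda$ and $c=a$: removing the box labelled $n$ from a standard tableau of $\mu$ determines both the smaller diagram and the smaller tableau uniquely. This produces $\delta_{\nu\lambda}\delta_{\nu\kappa}$, hence the factor $\delta_{\lambda\kappa}$, and leaves $\delta_{\lambda\kappa}\,m_\mu^{(d)}/m_\lambda^{(d)}\,E^{\lambda,d}_{ab}$, as claimed.

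The main obstacle is the branching identity $e^\lambda_{ab}=\sum_{\mu\in\lambda+\square}e^\mu_{a^\lambda_\mu b^\lambda_\mu}$ together with the fact that the index map $a\mapsto a^\lambda_\mu$ is well defined and compatible with the ``remove box $n$'' operation; this is the heart of the matter and is precisely the property characterizing the Young-Yamanouchi basis adapted to the chain $\mfS_1\subset\cdots\subset\mfS_n$. Everything else is bookkeeping with the matrix-unit algebra and the tensor/partial-trace duality.
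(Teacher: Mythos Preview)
Your argument is correct. The paper itself does not prove this lemma; it states it as a known fact with citations to Refs.~\cite{ram1992matrix, mozrzymas2018simplified, studzinski2022efficient, ebler2023optimal, yoshida2023reversing}, so there is no proof in the paper to compare against. Your self-contained derivation---reading the first identity as the image under $\CC[\mfS_{n-1}]\hookrightarrow\CC[\mfS_n]$ of the branching decomposition in the Young--Yamanouchi basis, and then obtaining the second identity by the trace duality $\Tr[(X\otimes\1_d)Y]=\Tr[X\,\Tr_n Y]$ combined with the expansion~\eqref{eq:decomposition_unitary_group_symmetric_state}---is exactly the standard route taken in those references and is carried out cleanly. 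The only step that deserves a word of care is the uniqueness claim ``$c^{\nu}_{\mu}=a^{\lambda}_{\mu}$ forces $\nu=\lambda$ and $c=a$'': this is indeed true because a standard tableau of shape $\mu$ determines, upon removal of the box labelled $n$, both the residual shape and the residual tableau uniquely; you use this twice (once for each index), and the second application is what produces $\delta_{\lambda\kappa}$.
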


\subsection{Construction of isometry adjointation protocol}
\label{subsec:construction_isometry_adjointation}
\subsubsection{Conversion from unitary inversion to isometry adjointation}
\label{subsubsec:protocol_conversion}
In this section, we derive a probabilistic quantum comb $\{\supermap{T}_I, \supermap{T}_O\}$ converting a unitary inversion protocol to an isometry adjointation protocol as
\begin{align}
    &C'\star \dketbra{U_\mathrm{in}}^{\otimes n}_{\mcI^n\mcO'^n} \approx \dketbra{U_\mathrm{in}^{-1}}_{\mcP'\mcF} \quad \forall U_\mathrm{in}\in\U(d)\nonumber\\
    &\Longrightarrow T_I\star C' \star \dketbra{V_\mathrm{in}}^{\otimes n}_{\mcI^n\mcO^n} \approx \dketbra{V_\mathrm{in}^{\dagger}}_{\mcP\mcF} \quad \forall V_\mathrm{in}\in\isometry{d}{D},\label{eq:transformation_unitary_inversion_to_isometry adjointation}
\end{align}
where $C'$ is the Choi operator corresponding to a quantum comb implementing unitary inversion, and $T_I$ is the Choi operator corresponding to $\supermap{T}_I$.
The action of $T_I$ is to ``compress'' the sequentially given quantum states in $\mcP\otimes \mcO^n = (\CC^D)^{\otimes n+1}$ into $\mcP'\otimes \mcO^{\prime n} = (\CC^d)^{\otimes n+1}$ [see also the red quantum comb in Fig.~\ref{fig:deterministic_isometry_adjointation_sequential}~(b)].

\begin{figure}[tbp]
    \centering
    \begin{itembox}[l]{(a) Unitary inversion}
        \begin{minipage}{0.52\linewidth}
            [1] Deterministic protocol
            \begin{center}
            \begin{adjustbox}{width=\linewidth}
                \includegraphics{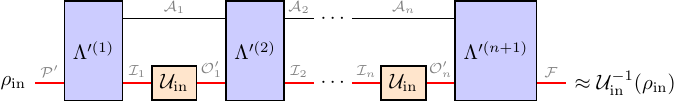}
            \end{adjustbox}
            \end{center}
        \end{minipage}
        \vrule
        \begin{minipage}{0.47\linewidth}
            [2] Probabilistic exact protocol
            \begin{center}
            \begin{adjustbox}{width=\linewidth}
                \includegraphics{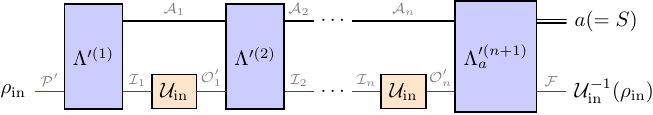}
            \end{adjustbox}
            \end{center}
        \end{minipage}
    \end{itembox}
    \begin{align*}
        \rightarrow \quad
        \begin{minipage}{0.94\linewidth}
            \begin{itembox}[l]{(b) Isometry adjointation}
                \centering
                \begin{adjustbox}{width={\linewidth}}
                    \includegraphics{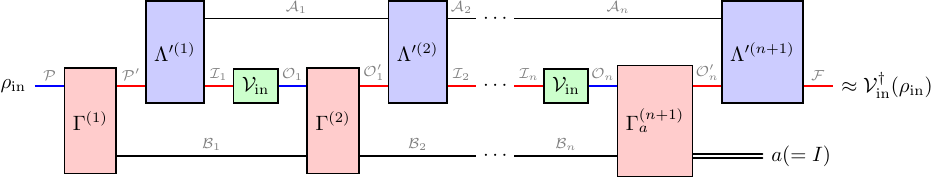}
                \end{adjustbox}
            \end{itembox}
        \end{minipage}
        \end{align*}
        \begin{align*}
        &\rightarrow
        \begin{cases}
            \begin{minipage}{0.94\linewidth}
            \begin{itembox}[l]{(c) Isometry inversion}
                [1] Deterministic protocol\\
                \begin{adjustbox}{width={\linewidth}}
                    \includegraphics{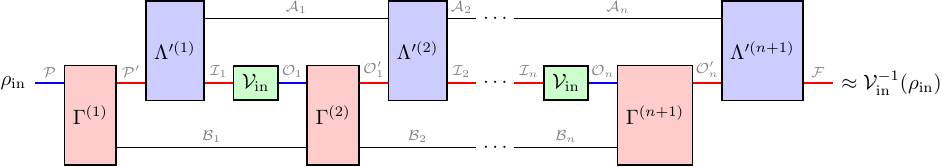}
                \end{adjustbox}\\
                \hrule
                \quad\\
                [2] Probabilistic exact protocol\\
                    \begin{adjustbox}{width=\linewidth}
                        \includegraphics{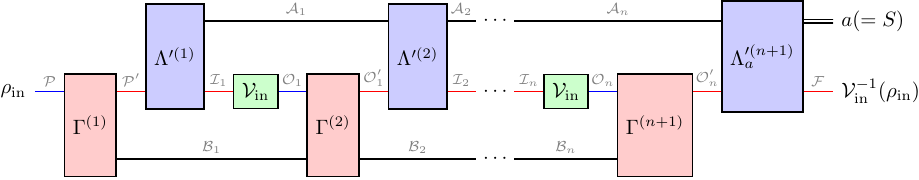}
                    \end{adjustbox}
                \end{itembox}
            \end{minipage}\\
            \quad\\
            \begin{minipage}{0.94\linewidth}
            \begin{itembox}[l]{(d) Universal error detection}
                \centering
                \begin{adjustbox}{width={\linewidth}}
                    \includegraphics{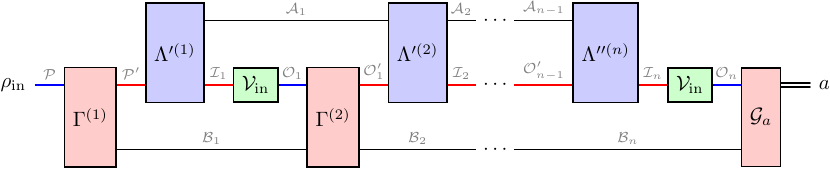}
                \end{adjustbox}
            \end{itembox}
            \end{minipage}
        \end{cases}
    \end{align*}
    \caption{(a) Sequential protocols for deterministic and probabilistic exact unitary inversion. (b) A sequential isometry adjointation protocol is constructed by converting the unitary inversion protocol using a probabilistic quantum comb. (c, d) Reduction to isometry inversion and universal error detection by discarding the measurement outcome and the output state of the isometry adjointation protocol, respectively.}
    \label{fig:deterministic_isometry_adjointation_sequential}
\end{figure}

We define $T_I \in \mcL(\mcP'\otimes \mcO'^n\otimes \mcP\otimes \mcO^n)$ using the operators $\{E^{\mu, d}_{ij}\}$ introduced in Section \ref{sec:sw_duality} by
\begin{align}
    T_I &\coloneqq \sum_{\mu\in \young{d}{n+1}} \sum_{i,j=1}^{d_{\mu}} {(E^{\mu, d}_{ij})_{\mcP'\mcO'^{n}} \otimes (E^{\mu, D}_{ij})_{\mcP\mcO^{n}} \over {m_{\mu}^{(d)}}}.\label{eq:def_TI}
\end{align}
To understand the action of $T_I$, we consider the link product
\begin{align}
    \psi_\mathrm{out}\coloneqq T_I \star \ketbra{\psi_\mathrm{in}}
\end{align}
for $\ket{\psi_\mathrm{in}} \in \mcP\otimes \mcO^n$.
Suppose $\ket{\psi_\mathrm{in}}$ is expressed in the Schur basis as
\begin{align}
    \ket{\psi_\mathrm{in}} = \bigoplus_{\mu\in\young{D}{n+1}} \sum_{u,i} a_{\mu,u,i} \ket{\mu,u}_{\mcU_\mu^{(D)}} \otimes \ket{\mu,i}_{\mcS_\mu}.
\end{align}
Then, $\psi_\mathrm{out}$ is given by
\begin{align}
\label{eq:psi_out}
    \psi_\mathrm{out} = \sum_{\mu\in\young{d}{n+1}} \sum_{u,i,i'} a_{\mu,u,i} a_{\mu,u,i'}^* {\1_{\mcU_\mu^{(d)}} \over m_\mu^{(d)}} \otimes \ketbra{\mu, i}{\mu,i'}_{\mcS_\mu}.
\end{align}
From this observation, the action of $T_I$ is understood to be the following two actions:
\begin{itemize}
    \item Detect whether the state in $\mcP\otimes \mcO^n = (\CC^D)^{\otimes n+1}$ is inside $\bigoplus_{V\in\isometry{d}{D}} \Im V^{\otimes n+1} = \bigoplus_{\mu\in\young{d}{n+1}} \mcU_\mu^{(D)}\otimes \mcS_\mu$ by checking the depth of $\mu$.
    \item Discard the state in $\mcU_\mu^{(D)}$ to compress the state in $\mcP\otimes \mcO^n = (\CC^D)^{\otimes n+1}$ into $\mcP'\otimes \mcO^{\prime n} = (\CC^d)^{\otimes n+1}$.
\end{itemize}
We can use these two properties to implement isometry adjointation, which is discussed more formally in Theorem~\ref{thm:sequential_isometry_adjointation}.

As shown in Fig.~\ref{fig:probabilistic_quantum_comb_construction}, we construct a probabilistic quantum comb $\{\supermap{T}_I, \supermap{T}_O\}$ such that the Choi operator of $\supermap{T}_I$ is given by Eq.~\eqref{eq:def_TI}, i.e., we construct quantum channels $\Gamma^{(1)}: \mcL(\mcP) \to \mcL(\mcI_1\otimes \mcB_1), \Gamma^{(2)}: \mcL(\mcO_1\otimes \mcB_1) \to \mcL(\mcO'_2\otimes \mcB_2), \ldots, \Gamma^{(n)}: \mcL(\mcO_{n-1} \otimes \mcB_{n-1}) \to \mcL(\mcO'_n \otimes \mcB_n)$, a quantum instrument $\{\Gamma^{(n+1)}_I, \Gamma^{(n+1)}_O\}: \mcL(\mcO_n \otimes \mcB_n) \to \mcL(\mcF)$ and auxiliary Hilbert spaces $\mcB_1, \ldots, \mcB_n$ such that
\begin{align}
    T_a = J_{\Gamma^{(n+1)}_a} \star J_{\Gamma^{(n)}} \star \cdots \star J_{\Gamma^{(1)}} \quad \forall a\in\{I, O\},\label{eq:decomp_Ta}
\end{align}
where $J_{\Gamma}$ is the Choi operator of $\Gamma \in \{\Gamma^{(1)}, \ldots, \Gamma^{(n)}, \Gamma_a^{(n+1)}\}$.
The construction of the probabilistic quantum comb $\{\supermap{T}_I, \supermap{T}_O\}$ is based on the (dual) Clebsch-Gordan (CG) transforms (see Appendix~\ref{appendix_subsec:mixed_schur_transform} for the definition).

\begin{landscape}
\begin{figure}[p]
    \centering
    \includegraphics[width=\linewidth]{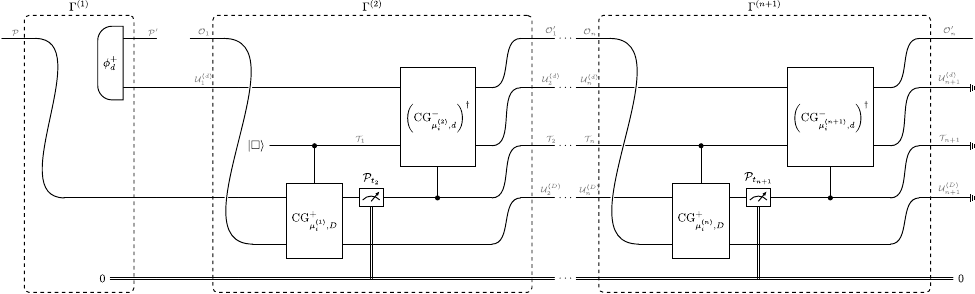}
    \caption{The construction of the probabilistic quantum comb $\{\supermap{T}_I, \supermap{T}_O\}$. The open wires $\mcP, \mcP', \mcO_1, \mcO'_1, \ldots, \mcO_n, \mcO'_n$ correspond to the blue and red wires shown in Fig.~\ref{fig:deterministic_isometry_adjointation_sequential}~(b), and the double wire corresponds to the classical 1-bit register $\mcC$. This figure corresponds to $\supermap{T}_I$. If the measurement outcome of the quantum instrument $\mcP_{t_i}$ is 1 for $i\in\{2,\ldots, n+1\}$, we output the maximally mixed state $\1_{\mcO'_{i-1}}/d$, and the rest of the quantum channels $\Gamma^{(i+1)}, \ldots, \Gamma^{(n+1)}$ becomes trace-and-replace channels. Otherwise, we apply the quantum channel $\Gamma^{(i+1)}$ shown in this figure [see also Eq.~\eqref{eq:Gamma_construction}].}
    \label{fig:probabilistic_quantum_comb_construction}
\end{figure}
\end{landscape}

We define the Hilbert spaces $\mcB_i$ for $i\in\{1,\ldots, n+1\}$ by
\begin{align}
    \mcB_1&\coloneqq \mcU_1^{(d)}\otimes \mcU_1^{(D)}\otimes \mcC,\\
    \mcB_i &\coloneqq \mcU_i^{(d)}\otimes \mcT_i \otimes \mcU_i^{(D)}\otimes \mcC \quad \forall i\geq 2,
\end{align}
where $\mcC$ is a 1-bit classical register, i.e., $\mcC = \mathrm{span}\{\ketbra{0}, \ketbra{1}\}$.
The quantum channel $\Gamma^{(1)}: \mcL(\mcP) \to \mcL(\mcP'\otimes \mcB_1)$ is given by
\begin{align}
    \Gamma^{(1)}(\cdot) \coloneqq (\phi^+_d)_{\mcP'\mcU_1^{(d)}} \otimes (\cdot)_{\mcU_1^{(D)}} \otimes \ketbra{0}_{\mcC}, 
\end{align}
where $\phi^+_d$ is defined by $\phi^+_d = \ketbra{\phi^+_d}$.
The quantum channel $\Gamma^{(i)}: \mcL(\mcO_{i-1} \otimes \mcB_{i-1}) \to \mcL(\mcO'_{i-1}\otimes \mcB_i)$ for $i\in\{2,\ldots, n+1\}$ is given by
\begin{align}
\label{eq:Gamma_construction}
    &\Gamma^{(i)}(\rho_{\mcO_{i-1}\mcU_{i-1}^{(d)}\mcT_{i-1}\mcU_i^{(D)}} \otimes \ketbra{s}_{\mcC})\nonumber\\
    &\coloneqq
    \begin{cases}
        \sum_{t_i=0}^{1}(\mathcal{D}^{(t_i)})_{\mcU_{i-1}^{(d)}\mcT_{i-1}\mcT_i\mcU_i^{(D)} \to \mcO'_{i-1}\mcU_i^{(d)}\mcT_i\mcU_i^{(D)}} \circ [(\mcP_{t_i})_{\mcT_i} \otimes \1_{\mcU_{i-1}^{(d)}\mcT_{i-1}\mcU_i^{(D)}}]\\
        \hspace{30pt}\circ [(\map{CG}^+_{i,D})_{\mcT_{i-1}\mcU_{i-1}^{(D)}\mcO_{i-1} \to \mcT_{i-1}\mcT_i \mcU_i^{(D)}} \otimes \1_{\mcU_{i-1}^{(d)}}](\rho) \otimes \ketbra{t_i}_{\mcC}& (s=0)\\
        {\1_{\mcO'_i} \over D} \Tr(\rho) \otimes \mathrm{garbage}_{\mcU_i^{(d)}\mcT_i\mcU_i^{(D)}} \otimes \ketbra{1}_{\mcC} & (s=1)
    \end{cases},
\end{align}
where $\map{CG}^+_{i,D}$ is the isometry operation defined by
\begin{align}
    \map{CG}^+_{i,D}(\cdot)\coloneqq \mathrm{CG}^+_{i,D}\cdot \left(\mathrm{CG}^+_{i,D}\right)^\dagger
\end{align}
using the CG transform $\mathrm{CG}^+_{i,D}$ defined in Eq.~\eqref{eq:cg_transform}, $\{\mathcal{P}_t\}_{t=0}^{1}$ is the quantum instrument defined by
\begin{align}
    \mathcal{P}_t \coloneqq P_t\cdot P_t, \quad P_0 \coloneqq \sum_{\mu\in \young{d}{i}} \ketbra{\mu}_{\mcT_i}, \quad 
    P_1 \coloneqq \1_{\mcT_i}-P_0,
\end{align}
$\mathcal{D}^{(t)}$ is the quantum channel conditioned on the measurement outcome $t$ of the quantum instrument $\{\mathcal{P}_t\}_{t=0}^{1}$ defined by
\begin{align}
    \mathcal{D}^{(t)}(\cdot)\coloneqq
    \begin{cases}
        [(\map{CG}^-_{i,d})^\dagger_{\mcU_{i-1}^{(d)}\mcT_{i-1}\mcT_i \to \mcO'_{i-1}\mcU_i^{(d)}\mcT_i} \otimes \1_{\mcU_i^{(D)}}](\cdot) & (t=0)\\
        {\1_{\mcO'_i} \over D} \Tr(\cdot) \otimes \mathrm{garbage}_{\mcU_i^{(d)}\mcT_i\mcU_i^{(D)}} & (t=1)
    \end{cases},
\end{align}
$\map{CG}^-_{i,d}$ is the isometry operation defined by
\begin{align}
    \map{CG}^-_{i,d}(\cdot) \coloneqq \mathrm{CG}^-_{i,d} \cdot \left(\mathrm{CG}^-_{i,d}\right)^\dagger
\end{align}
using the dual CG transform $\mathrm{CG}^-_{i,d}$ defined in Eq.~\eqref{eq:dual_cg_transform}, and $\mathrm{garbage}$ is an arbitrary quantum state.
In the quantum circuit shown in Fig.~\ref{fig:probabilistic_quantum_comb_construction}, the (dual) CG transforms are represented as controlled operations since they can be written as controlled isometry operations as shown in Eqs.~\eqref{eq:cg_transform_ctrl} and \eqref{eq:dual_cg_transform_ctrl}.
Note that $\ketbra{\square}_{\mcT_1}$ is omitted in Eq.~\eqref{eq:Gamma_construction} for $i=2$ since $\mcT_1=\CC$ holds.
The quantum instrument $\{\Gamma^{(n+1)}_I, \Gamma^{(n+1)}_O\}: \mcL(\mcO_n\otimes \mcB_n) \to \mcL(\mcO'_n)$ is defined by
\begin{align}
    \Gamma^{(n+1)}_{a_s}(\cdot) \coloneqq \Tr_{\mcU_{n+1}^{(d)}\mcT_{n+1}\mcU_{n+1}^{(D)}\mcC}[\Gamma^{(n+1)}(\cdot) (\1_{\mcO'_n \mcU_{n+1}^{(d)}\mcT_{n+1}\mcU_{n+1}^{(D)}} \otimes \ketbra{s}_{\mcC})] \quad \forall s\in\{0,1\}
\end{align}
for $a_0=I$ and $a_1=O$.
In Appendix~\ref{appendix_subsec:construction_of_red_comb}, we show that this construction gives the Choi operator $T_I$ given in Eq.~\eqref{eq:def_TI}.

The probabilistic quantum comb $\{\supermap{T}_I, \supermap{T}_O\}$ can be efficiently implementable by using efficient implementations of (dual) CG transforms \cite{bacon2006efficient,bacon2007quantum, nguyen2023mixed, grinko2023gelfand, fei2023efficient}.
The bottleneck of the quantum circuit shown in Fig.~\ref{fig:probabilistic_quantum_comb_construction} is a sequence of (dual) CG transforms, which can be implemented with the precision $\epsilon'$ with the circuit complexity $O(\mathrm{poly}(n,D,\log \epsilon'^{-1}))$.

We compose the probabilistic quantum comb $\{\supermap{T}_I, \supermap{T}_O\}$ with the unitary inversion protocol to implement isometry adjointation.
Suppose a quantum comb $\supermap{C}': \bigotimes_{i=1}^{n} [\mcL(\mcI_i) \to \mcL(\mcO'_i)] \to [\mcL(\mcP') \to \mcL(\mcF)]$ given by [see Fig.~\ref{fig:deterministic_isometry_adjointation_sequential} (a-1)]
\begin{align}
    \supermap{C}'\left[\Phi_{\text{in}}^{(1)}, \ldots, \Phi_{\text{in}}^{(n)}\right]&\coloneqq \Lambda'^{(n+1)} \circ (\Phi_{\text{in}}^{(n)} \otimes \1_{\mcA_n}) \circ \cdots \circ \Lambda'^{(2)} \circ (\Phi_{\text{in}}^{(1)} \otimes \1_{\mcA_1}) \circ \Lambda'^{(1)}\label{eq:decomposition_unitary_inversion_comb}
\end{align}
implements deterministic unitary inversion approximately:
\begin{align}
    \supermap{C}'(\map{U}_\mathrm{in}^{\otimes n}) \approx \map{U}_\mathrm{in}^{-1} \quad \forall U_\mathrm{in}\in\U(d).\label{eq:deterministic_sequential_unitary_inversion_protocol}
\end{align}
Reference \cite{quintino2022deterministic} shows that the optimal worst-case channel fidelity of unitary inversion is achieved with the protocol whose Choi operator $C'$ satisfies the $\U(d)\times \U(d)$ symmetry given by
\begin{align}
    [C',U'^{\otimes n+1}_{\mcI^n \mcF} \otimes U''^{\otimes n+1}_{\mcP'\mcO'^n}]=0 \quad \forall U', U''\in\U(d).\label{eq:untiary_inversion_sudsuDsymmetry}
\end{align}
From the sequential unitary inversion protocol satisfying the $\U(d)\times \U(d)$ symmetry (\ref{eq:untiary_inversion_sudsuDsymmetry}), we define a probabilistic transformation as follows [see Fig.~\ref{fig:deterministic_isometry_adjointation_sequential} (b)]:
\begin{align}
    \supermap{C}_a\left[\Phi_{\text{in}}^{(1)}, \ldots, \Phi_{\text{in}}^{(n)}\right]&\coloneqq \Lambda^{(n+1)}_a \circ (\Phi_{\text{in}}^{(n)} \otimes \1_{\mcA_n \mcB_n}) \circ \cdots \circ \Lambda^{(2)} \circ (\Phi_{\text{in}}^{(1)} \otimes \1_{\mcA_1 \mcB_1}) \circ \Lambda^{(1)},\label{eq:concatenation_det}\\
    \Lambda^{(i)} &\coloneqq (\Lambda'^{(i)} \otimes \1_{\mcB_i}) \circ (\Gamma^{(i)} \otimes \1_{\mcA_{i-1}})\quad \forall i\in \{1, \ldots, n\},\\
    \Lambda^{(n+1)}_a &\coloneqq (\Lambda'^{(n+1)} \otimes \1_{\mcB_{n+1}}) \circ (\Gamma^{(n+1)}_a \otimes \1_{\mcA_{n}}),
\end{align}
where $\Gamma^{(1)}, \ldots, \Gamma^{(n)}$ and $\{\Gamma^{(n+1)}_a\}_a$ are quantum channels and a quantum instrument composing the probabilistic quantum comb $\{\supermap{T}_a\}_a$ and $\Lambda'^{(1)}, \ldots, \Lambda'^{(n+1)}$ are quantum channels composing the unitary inversion quantum comb $\supermap{C}'$ [see Eqs.~(\ref{eq:decomp_Ta}) and (\ref{eq:decomposition_unitary_inversion_comb})].
The supermap $\{\supermap{C}_a\}_a$ implements isometry adjointation as shown in the following Theorem.

\begin{Thm}
\label{thm:sequential_isometry_adjointation}
    The sequential protocol shown in Fig.~\ref{fig:deterministic_isometry_adjointation_sequential} (b) implements an isometry adjointation the detection error $\alpha = \alpha_{C'}$ and the decode error $\eta = 1-F_\mathrm{UI}$ in Eq.~\eqref{eq:isometry_adjointation_detection_decode_error}, where $\alpha_{C'}$ is defined by
    \begin{align}
    \alpha_{C'} \coloneqq \Tr[\Tr_\mcF(C') \Sigma],\label{eq:def_alpha_C'}
    \end{align}
    $\Sigma$ is defined by
    \begin{align}
        \Sigma \coloneqq \sum_{\lambda\in\young{d}{n}} \sum_{\mu\in\lambda+_d\square}\sum_{a,b=1}^{d_\lambda}{\mathrm{hook}(\lambda) \over \mathrm{hook}(\mu)}(E^{\lambda, d}_{ab})_{\mcI^n} \otimes {(E^{\mu,d}_{a^\lambda_\mu b^\lambda_\mu})_{\mcO'^n\mcP'} \over m_\mu^{(d)}},\label{eq:def_Sigma}
    \end{align}
    and $F_\mathrm{UI}$ is the worst-case channel fidelity of the unitary inversion.
    Thus, the worst-case diamond-norm error is given by
    \begin{align}
        \epsilon = \max \{ \alpha_{C'}, 1-F_\mathrm{UI}\}.
    \end{align}
\end{Thm}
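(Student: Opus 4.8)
The plan is to compute the Choi operators of $\supermap{C}_I[\map{V}_\mathrm{in}^{\otimes n}]$ and $\supermap{C}_O[\map{V}_\mathrm{in}^{\otimes n}]$ explicitly and match them to the target form \eqref{eq:isometry_adjointation_detection_decode_error}, after which the diamond-norm error follows from the already-established identity $\epsilon=\max\{\alpha,\eta\}$ (Appendix~\ref{appendix_subsec:evaluation_diamond-norm}). Since the composed comb satisfies $C_a = C'\star T_a$ (the probabilistic version of \eqref{eq:protocol_conversion_sequential_choi}), the commutativity and associativity of the link product \eqref{eq:commutativity}--\eqref{eq:associativity} give $J_{\supermap{C}_a[\map{V}_\mathrm{in}^{\otimes n}]} = C'\star\bigl(T_a\star\dketbra{V_\mathrm{in}}^{\otimes n}\bigr)$. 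Thus the computation factors into (i) evaluating the intermediate operator $R_a \coloneqq T_a\star\dketbra{V_\mathrm{in}}^{\otimes n}$, which contracts only the $\mcO^n$ wires, and (ii) feeding $R_a$ into the unitary inversion comb $C'$.

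For step (i) I would substitute the definition \eqref{eq:def_TI} of $T_I$ together with the Schur--Weyl decomposition \eqref{eq:decomposition_isometry} of $V_\mathrm{in}^{\otimes n}$, reducing $R_I$ to contractions of the form $(E^{\mu,D}_{ij})_{\mcP\mcO^n}\star\dketbra{V_\mathrm{in}}^{\otimes n}_{\mcI^n\mcO^n}$ on the extended space $\mcP\otimes\mcO^n=(\CC^D)^{\otimes n+1}$, with the global-past register $\mcP$ playing the role of the added box. Here Lemma~\ref{lem:yy} is the essential tool: its branching identities relate the $(n+1)$-box operators $E^{\mu,D}$ to the $n$-box operators $E^{\lambda,D}$ for $\mu\in\lambda+_d\square$, and produce exactly the hook-length and multiplicity weights appearing in $\Sigma$. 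Using $V_\lambda^\dagger V_\lambda=\1_{\mcU_\lambda^{(d)}}$ on each block, the in-image part of $R_I$ collapses to a multiple of the Choi operator of $n$ copies of the effective unitary that $V_\mathrm{in}$ induces on the $d$-dimensional registers --- precisely the object that \eqref{eq:psi_out} anticipates --- while the out-of-image part (components of $\mcP$ orthogonal to $\Im V_\mathrm{in}$, equivalently the higher-depth Schur sectors) is isolated as a separate term carrying the operator $\Sigma$ of \eqref{eq:def_Sigma}.

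For step (ii) I would contract $R_I$ with $C'$ and invoke the $\U(d)\times\U(d)$ covariance \eqref{eq:untiary_inversion_sudsuDsymmetry}. Covariance forces the inner output channel $\supermap{C}'(\map{U}^{\otimes n})$ to be a convex mixture of $\map{U}^{-1}$ and the depolarizing channel $\tfrac{\1_d}{d}\Tr$, so on the image the composed channel becomes $(1-\eta)\map{V}_\mathrm{in}^\dagger(\cdot)+\eta\tfrac{\1_d}{d}\Tr[\Pi_{\Im V_\mathrm{in}}\cdot]$, with the weight fixed by the worst-case channel fidelity \eqref{eq:unitary_inversion_fidelity}, giving $\eta=1-F_\mathrm{UI}$. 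The out-of-image term, already factored through $\Sigma$ in step (i), contracts with $C'$ to leave only its marginal $\Tr_\mcF(C')$, yielding the detection-error coefficient $\alpha=\Tr[\Tr_\mcF(C')\Sigma]=\alpha_{C'}$ of \eqref{eq:def_alpha_C'} on the $\supermap{C}_I$ branch; the $\supermap{C}_O$ branch is then pinned down by trace preservation of $\sum_a\supermap{C}_a$, reproducing the second line of \eqref{eq:isometry_adjointation_detection_decode_error} and matching $\map{V}_\mathrm{adjoint}$ in \eqref{eq:def_Vadjoint}. Assembling the two branches and invoking $\epsilon=\max\{\alpha,\eta\}$ completes the proof.

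The main obstacle is step (i): the representation-theoretic bookkeeping that separates the in-image and out-of-image contributions of $R_I$ and shows that the latter is captured \emph{exactly} by the operator $\Sigma$ with the stated hook-length weights. This requires carefully tracking how the extra register $\mcP$ attaches as a box to each $\lambda\in\young{d}{n}$, how the isometry blocks $V_\lambda$ and $V_\mu$ interlock under the link product, and which Schur sectors $T_I$ flags as ``in image'' via the depth-$\le d$ condition. The explicit (dual) Clebsch--Gordan construction of $\{\supermap{T}_I,\supermap{T}_O\}$ is what guarantees that these branching coefficients emerge with the correct normalization $1/m_\mu^{(d)}$, so cross-checking the abstract Choi computation against that construction is the delicate part.
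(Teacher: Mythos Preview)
Your plan matches the paper's proof almost exactly: factor through $R_a=T_a\star\dketbra{V_\mathrm{in}}^{\otimes n}$, split $R_I$ into an in-image piece (which $C'$ converts to the depolarized inverse with $\eta=1-F_\mathrm{UI}$ via covariance) and an out-of-image piece $(\1_D-\Pi_{\Im V_\mathrm{in}})^T_\mcP\otimes\Sigma$ (which yields $\alpha_{C'}$), then fix $\supermap{C}_O$ from the remaining constraints. The paper's execution of step~(i) is organized slightly differently---it uses the $\U(D)$ symmetry of $T_I$ to obtain the $\Pi_{\Im V_\mathrm{in}}/(\1_D-\Pi_{\Im V_\mathrm{in}})$ block decomposition on $\mcP$ directly, then identifies the in-image block via the Haar identity $\int\dd U\,\dketbra{U}^{\otimes n+1}=\sum_{\mu,i,j}E^{\mu,d}_{ij}\otimes E^{\mu,d}_{ij}/m_\mu^{(d)}$ rather than by explicit Schur-sector tracking---but this is a tactical difference only. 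One small point to tighten: trace preservation alone fixes only $\Tr\supermap{C}_O[\map{V}_\mathrm{in}^{\otimes n}](\rho_\mathrm{in})$, not that the output state is $\1_d/d$; the paper closes this by noting that $T_O$ (from the explicit construction in Appendix~\ref{appendix_subsec:construction_of_red_comb}) also carries the $\U(d)\times\U(D)$ symmetry, so $C'\star T_O$ is covariant enough to force the maximally mixed output.
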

\begin{proof}[Proof sketch]
We show that the probabilistic quantum comb $\{\supermap{T}_I, \supermap{T}_O\}$ derived in this section satisfies
\begin{align}
    T_I\star \dketbra{V_\mathrm{in}}^{\otimes n}_{\mcI^n\mcO^n} \approx \int_{\U(d)} \dd U \dketbra{UV_\mathrm{in}^\dagger}_{\mcP\mcP'} \otimes \dketbra{U}^{\otimes n}_{\mcI^n\mcO'^n},\label{eq:isometry_to_unitary_approximate}
\end{align}
where $\dd U$ is the Haar measure of $\U(d)$.
Then, Eq.~(\ref{eq:transformation_unitary_inversion_to_isometry adjointation}) holds since
\begin{align}
    T_I\star C' \star \dketbra{V_\mathrm{in}}^{\otimes n}_{\mcI^n\mcO^n}
    &\approx C'\star \int_{\U(d)} \dd U \dketbra{UV_\mathrm{in}^\dagger}_{\mcP\mcP'} \otimes \dketbra{U}^{\otimes n}_{\mcI^n\mcO'^n}\\
    &\approx \int_{\U(d)} \dd U \dketbra{UV_\mathrm{in}^\dagger}_{\mcP\mcP'} \star \dketbra{U^{-1}}_{\mcP'\mcF}\\
    &= \dketbra{V_\mathrm{in}^{\dagger}}_{\mcP\mcF}.
\end{align}
See Appendix \ref{appendix_sec:sequential_isometry_adjointation} for the detail.
\end{proof}

\subsubsection{Asymptotically optimal parallel protocol for isometry adjointation}
\label{subsubsec:parallel_isometry_adjointation_protocol}
We construct an asymptotically optimal parallel protocol for isometry adjointation by converting the parallel protocol for unitary inversion.
The optimal parallel protocol for deterministic unitary inversion is investigated in Ref.~\cite{quintino2022deterministic}, which shows that the estimation-based protocol achieves the optimal worst-case channel fidelity among all parallel protocols.
In the estimation-based unitary inversion protocol, one first estimates the input unitary operation $U_\mathrm{in}$ by applying $U_\mathrm{in}$ in parallel to a quantum state $\phi\in\mcL(\mcI^n\otimes \mcA)$, and measure the output state by a POVM $\{M_i\}\subset \mcL(\mcO'^n\otimes \mcA)$, where $\mcA$ is an auxiliary Hilbert space. We define the measurement channel $\{\map{M}_i\}_i$ corresponding to the POVM $\{M_i\}$ defined by $\map{M}_i(\cdot)\coloneqq \Tr(M_i \cdot)$. Then, one calculates the inverse operation $\map{R}_i$ of the estimated unitary operation and applies $\map{R}_i$ on the input quantum state $\rho_\mathrm{in}$.  This protocol can be expressed as [see Fig.~\ref{fig:deterministic_isometry_adjointation_parallel} (a-1)]
\begin{align}
    \sum_i (\map{R}_i\otimes \map{M}_i)[\rho_\mathrm{in} \otimes  (\map{U}_\mathrm{in}^{\otimes n} \otimes \1_{\mcA})(\phi)] \approx \map{U}_\mathrm{in}^{-1}(\rho_\mathrm{in})\label{eq:deterministic_unitary_inversion_parallel}
\end{align}
for all $U_\mathrm{in}\in\U(d)$ and $\rho_\mathrm{in}\in \mcL(\CC^d)$.
The worst-case channel fidelity of the unitary inversion is the same as the entanglement fidelity of the unitary estimation protocol given by
\begin{align}
    F_\mathrm{est} \coloneqq \inf_{U_\mathrm{in}\in\U(d)} F_{\mathrm{ch}}[\map{E}_{U_\mathrm{in}}, \map{U}_\mathrm{in}].
\end{align}
Here, $\map{E}_{U_\mathrm{in}}$ is the measure-and-prepare channel defined by
\begin{align}
    \map{E}_{U_\mathrm{in}}\coloneqq \sum_i p(\hat{U}_i | U_\mathrm{in}) \hat{\map{U}}_i,\label{eq:def_map_channel}
\end{align}
where $p(\hat{U}_i|U_\mathrm{in})$ is the probability to estimate the input unitary operation $U_\mathrm{in}$ as $\hat{U}_i$.
The optimal estimation is shown to be done with the covariant protocol \cite{chiribella2005optimal}, satisfying 
\begin{align}
    p(U'\hat{U}_i U''|U'U_\mathrm{in}U'') = p(\hat{U}_i|U_\mathrm{in}) \quad \forall U',U''\in\U(d) .
\end{align}

By converting the estimation-based protocol (\ref{eq:deterministic_unitary_inversion_parallel}) using the covariant unitary estimation as shown in Eq.~(\ref{eq:transformation_unitary_inversion_to_isometry adjointation}), we obtain an isometry adjointation protocol given by [see Fig.~\ref{fig:deterministic_isometry_adjointation_parallel} (b)]
\begin{align}
    \supermap{C}_a[\map{V}_\mathrm{in}^{\otimes n}](\rho_\mathrm{in}) \coloneqq \sum_i (\map{R}_i \otimes \map{M}_i) \circ (\Psi_a \otimes \1_{\mcA})[\rho_\mathrm{in} \otimes (\map{V}_\mathrm{in}^{\otimes n} \otimes \1_{\mcA})(\phi)] \label{eq:deterministic_isometry_adjointation_from_unitary_inversion_parallel}
\end{align}
for all $a\in\{I, O\}$, $V_\mathrm{in}\in\isometry{d}{D}$ and $\rho_\mathrm{in}\in \mcL(\CC^D)$, where $\Psi_a: \mcL(\mcO^n\otimes \mcP) \to \mcL(\mcO'^n\otimes \mcP')$ is a quantum instrument implemented by Algorithm \ref{alg:Psi_implementation}.
Note that the quantum instrument $\{\Psi_a\}_a$ is efficiently implementable using an efficient implementation of the quantum Schur transform \cite{bacon2006efficient, bacon2007quantum, krovi2019efficient, kirby2018practical, wills2023generalised}.
The bottleneck of Algorithm \ref{alg:Psi_implementation} is the quantum Schur transform $\mathrm{Sch}_{n+1,D}$, which can be implemented with the precision $\epsilon'$ with the circuit complexity $O(\mathrm{poly}(n, D, \log \epsilon'^{-1}))$ \cite{bacon2006efficient,bacon2007quantum} or $O(\mathrm{poly}(n,\log D, \log \epsilon'^{-1}))$ \cite{krovi2019efficient}\footnote{Note that the result by Ref.~\cite{krovi2019efficient} is challenged by several researchers \cite{qiptalk, grinkoprivate}.}.
The approximation error $\epsilon$ of isometry adjointation is given in the following Theorem.
\begin{Thm}
\label{thm:parallel_isometry_adjointation}
    The parallel protocol shown in Fig.~\ref{fig:deterministic_isometry_adjointation_parallel} (b) implements an isometry adjointation with the detection error $\alpha = \alpha_\phi$ and the decode error $\eta = 1-F_\mathrm{est}$ in Eq.~\eqref{eq:isometry_adjointation_detection_decode_error}, where $\alpha_\phi$ is defined by
    \begin{align}
        \alpha_\phi 
        &\coloneqq
        \sum_{\lambda\in\young{d}{n}} \Tr[\Tr_{\mcA}(\phi) \Pi_\lambda^{(d)}] \sum_{\mu\in\lambda+_d\square} {\mathrm{hook}(\lambda)\over \mathrm{hook}(\mu)}\\
        &=\sum_{\lambda\in\young{d}{n}} \Tr[\Tr_{\mcA}(\phi) \Pi_\lambda^{(d)}] \left[1-\sum_{\mu\in\lambda+\square \setminus \young{d}{n+1}} {\mathrm{hook}(\lambda)\over \mathrm{hook}(\mu)}\right],
    \end{align}
    $\mathrm{hook}(\mu)$ is defined by Eq.~(\ref{eq:def_hook}), and $F_\mathrm{est}$ is the entanglement fidelity of the covariant unitary estimation.
    Thus, the worse-case diamond-norm error is given by
    \begin{align}
        \epsilon
        &= \max \{ \alpha_\phi, 1-F_\mathrm{est}\}.
    \end{align}
\end{Thm}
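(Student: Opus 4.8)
The plan is to read the parallel protocol of Eq.~\eqref{eq:deterministic_isometry_adjointation_from_unitary_inversion_parallel} as a special case of the construction behind Theorem~\ref{thm:sequential_isometry_adjointation}. First I would identify the quantum instrument $\{\Psi_a\}_a$ of Algorithm~\ref{alg:Psi_implementation} as the one-shot (parallel) realization of the compression comb $\{\supermap{T}_a\}_a$: its element $\Psi_I$ is the channel $\mcL(\mcO^n\otimes\mcP)\to\mcL(\mcO'^n\otimes\mcP')$ whose Choi operator is exactly $T_I$ of Eq.~\eqref{eq:def_TI}, while the role of $\supermap{C}'$ is now played by the estimation-based parallel unitary inversion comb built from the probe $\phi$, the covariant POVM $\{M_i\}$, and the recoveries $\{\map{R}_i\}$. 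Because Eq.~\eqref{eq:isometry_to_unitary_approximate} is a statement about the operator $T_I$ alone, it carries over unchanged, and the chain of link products in the proof of Theorem~\ref{thm:sequential_isometry_adjointation} again produces an output of the form Eq.~\eqref{eq:isometry_adjointation_detection_decode_error}, with detection error $\alpha=\alpha_{C'}$ [Eq.~\eqref{eq:def_alpha_C'}] and decode error $\eta=1-F_\mathrm{UI}$. Since the underlying unitary inversion is estimation-based, $F_\mathrm{UI}=F_\mathrm{est}$, which already gives $\eta=1-F_\mathrm{est}$.

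It then remains to evaluate $\alpha_{C'}=\Tr[\Tr_\mcF(C')\,\Sigma]$ for this $C'$ and to show that it equals $\alpha_\phi$. The decisive simplification I would use is that the measurement-and-recovery stage is trace preserving, so tracing out the global future $\mcF$ erases all dependence on the POVM outcomes and on the recovery unitaries: one is left with $\Tr_\mcF(C')=\1_{\mcO'^n}\otimes\1_{\mcP'}\otimes(\Tr_\mcA\phi)_{\mcI^n}$, whose only nontrivial content is the reduced probe state $\Tr_\mcA\phi$ on $\mcI^n$ (a possible transpose on $\mcI^n$ is immaterial, since the Young projectors appearing below are symmetric). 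I expect this to be the main obstacle, as it requires carefully tracking how the Choi operator of the covariant estimation comb factorizes across the $\mcI^n$, $\mcO'^n$, and $\mcP'\otimes\mcF$ registers before the partial trace is taken.

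Substituting this into $\Sigma$ of Eq.~\eqref{eq:def_Sigma}, the remainder is routine algebra in the operators $\{E^{\mu,d}_{ij}\}$. Contracting the $\mcO'^n\mcP'$ factor against $\1$ gives $\Tr E^{\mu,d}_{a_\mu^\lambda b_\mu^\lambda}=m_\mu^{(d)}\delta_{a_\mu^\lambda b_\mu^\lambda}=m_\mu^{(d)}\delta_{ab}$ by injectivity of $a\mapsto a_\mu^\lambda$, cancelling the weight $1/m_\mu^{(d)}$ and forcing $a=b$; the $\mcI^n$ factor then contributes $\sum_a\Tr[(\Tr_\mcA\phi)E^{\lambda,d}_{aa}]=\Tr[\Tr_\mcA(\phi)\,\Pi_\lambda^{(d)}]$ through the definition Eq.~\eqref{eq:young_projector} of the Young projector. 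This yields $\alpha_{C'}=\sum_{\lambda\in\young{d}{n}}\Tr[\Tr_\mcA(\phi)\Pi_\lambda^{(d)}]\sum_{\mu\in\lambda+_d\square}\mathrm{hook}(\lambda)/\mathrm{hook}(\mu)=\alpha_\phi$. The second displayed form of $\alpha_\phi$ then follows from the branching identity $\sum_{\mu\in\lambda+\square}\mathrm{hook}(\lambda)/\mathrm{hook}(\mu)=1$ (equivalently $\sum_{\mu\in\lambda+\square}d_\mu=(n+1)d_\lambda$), which replaces the sum over $\lambda+_d\square$ by $1$ minus the sum over the deleted diagrams $(\lambda+\square)\setminus\young{d}{n+1}$. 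Finally $\epsilon=\max\{\alpha_\phi,1-F_\mathrm{est}\}$ by the identity $\epsilon=\max\{\alpha,\eta\}$ recorded for the instrument Eq.~\eqref{eq:isometry_adjointation_detection_decode_error}.
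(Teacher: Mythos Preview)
Your proposal is correct and follows essentially the same route as the paper's proof: both treat the parallel protocol as the special case of Theorem~\ref{thm:sequential_isometry_adjointation} in which the unitary-inversion comb $C'$ is the estimation-based one, compute $\Tr_\mcF C'=\Tr_\mcA(\phi)\otimes\1_{\mcO'^n\mcP'}$ from the CPTP property of the recoveries and POVM completeness, and then contract against $\Sigma$ using the $E^{\mu,d}_{ij}$ calculus to obtain $\alpha_{C'}=\alpha_\phi$. The second form via the branching identity $\sum_{\mu\in\lambda+\square}\mathrm{hook}(\lambda)/\mathrm{hook}(\mu)=1$ is also exactly what the paper does (there derived from Lemma~\ref{lem:yy} and Eq.~\eqref{eq:mult_hook}).
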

\begin{proof}
See Appendix \ref{appendix_sec:parallel_isometry_adjointation} for the proof.
\end{proof}

\begin{algorithm}[t]
\caption{Implementation of the quantum instrument $\{\Psi_a\}_{a}$ utilized in the parallel protocol (\ref{eq:deterministic_isometry_adjointation_from_unitary_inversion_parallel}) for isometry adjointation.}
\label{alg:Psi_implementation}
\begin{algorithmic}[1]
\renewcommand{\algorithmicrequire}{\textbf{Input:}}
\renewcommand{\algorithmicensure}{\textbf{Output:}}
\REQUIRE Quantum state $\psi_\mathrm{in}\in\mcL(\CC^D)^{\otimes n+1}$
\ENSURE  Quantum state $\psi_\mathrm{out}\in\mcL(\CC^d)^{\otimes n+1}$ with a measurement outcome $a\in\{I, O\}$
\STATE Apply the quantum Schur transform $\mathrm{Sch}_{n+1,D}$ on the input quantum state $\psi_\mathrm{in}$ to obtain the quantum state in $\mcT_{n+1}\otimes \mcU_{n+1}^{(D)}\otimes \mcS_{n+1}$.
\STATE Measure the Young diagram register $\mcT_{n+1}$ to obtain the measurement outcome $\mu$.
\STATE Trace out the unitary group register $\mcU_{n+1}^{(D)}$.
\STATE Let $\tau_\mu \in\mcL(\mcS_\mu)$ be the quantum state in the symmetric group register $\mcS_{n+1}$.
\IF {$\mu\in\young{d}{n+1}$}
\STATE $a\gets I$.
\STATE Prepare a quantum state $\ketbra{\mu}{\mu}\otimes \1_{\mcU_\mu^{(d)}}/m_\mu^{(d)}$.
\STATE Apply the inverse quantum Schur transform $\mathrm{Sch}_{n+1,d}^\dagger$ on the joint state $\ketbra{\mu}{\mu}\otimes \1_{\mcU_\mu^{(d)}}/m_\mu^{(d)} \otimes \tau_\mu$ to obtain the output quantum state $\psi_\mathrm{out}$.
\ELSE
\STATE $a\gets O$.
\STATE Trace out the quantum state $\tau_\mu \in\mcL(\mcS_\mu)$.
\STATE $\psi_\mathrm{out} \gets \1_d^{\otimes n+1}/d^{n+1}$.
\ENDIF
\RETURN $\psi_\mathrm{out}, a$ 
\end{algorithmic} 
\end{algorithm}

For $d=2$, one can utilize the maximum-likelihood qubit-unitary estimation presented in Refs.~\cite{bagan2004entanglement,chiribella2004efficient,chiribella2005optimal} to achieve
\begin{align}
    \epsilon
    &= \frac{6.2287}{n}+O(n^{-2}).\label{eq:isometry_adjoint_q}
\end{align}
For a higher dimension $d>2$, we can utilize the unitary estimation presented in Ref.~\cite{yang2020optimal} to achieve the following scaling:
\begin{align}
    \epsilon &= {3\ln 2 \over 2}{d^2 \over n}+ O(d^4 n^{-2}, dn^{-1})\\
    &= 1.0397 {d^2 \over n} + O(d^4 n^{-2}, dn^{-1}). 
\end{align}
See Appendix \ref{appendix_sec:optimal_parallel_isometry_adjointation} for the details.
As shown later (Theorem \ref{thm:optimal_parallel_isometry_adjointation}), these protocols achieve the asymptotically optimal worse-case diamond-norm error $\epsilon = \Theta(d^2/n)$.

\begin{figure}[tbp]
    \begin{minipage}{0.6\linewidth}
    \begin{itembox}[l]{(a) Unitary inversion}
        \begin{minipage}{0.49\linewidth}
            {\scriptsize [1] Deterministic protocol}\\
            \begin{adjustbox}{width=\linewidth}
                \includegraphics[]{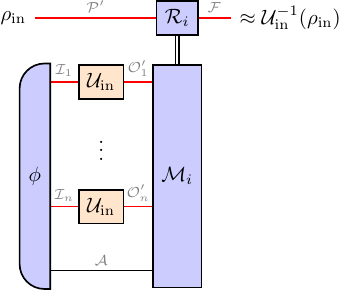}
            \end{adjustbox}
        \end{minipage}
        \vrule
        \begin{minipage}{0.49\linewidth}
            {\scriptsize [2] Probabilistic exact protocol}\\
            \begin{adjustbox}{width=\linewidth}
            \includegraphics{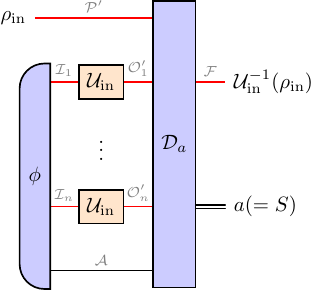}
            \end{adjustbox}
        \end{minipage}
    \end{itembox}
    \end{minipage}
    $\rightarrow$
    \begin{minipage}{0.35\linewidth}
    \begin{itembox}[l]{(b) Isometry adjointation}
    \begin{adjustbox}{width=\linewidth}
    \includegraphics{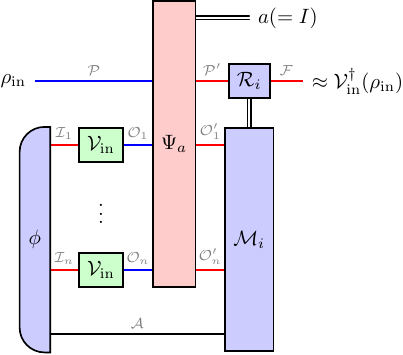}
    \end{adjustbox}
    \end{itembox}
    \end{minipage}
    \begin{align*}
        \rightarrow
        \begin{cases}
            \begin{minipage}{0.94\linewidth}
                \begin{itembox}[l]{(c) Isometry inversion}
                \begin{minipage}{0.49\linewidth}
                    {\scriptsize [1] Deterministic protocol}\\
                    \begin{adjustbox}{width=\linewidth}
                    \includegraphics{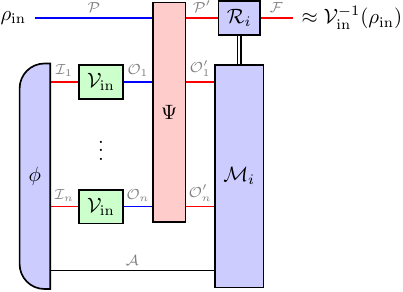}
                    \end{adjustbox}
                \end{minipage}
                \vrule
                \begin{minipage}{0.49\linewidth}
                    {\scriptsize [2] Probabilistic exact protocol}\\
                    \begin{adjustbox}{width=\linewidth}
                    \includegraphics{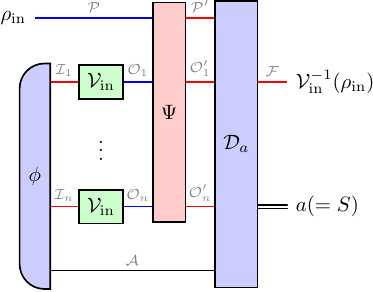}
                    \end{adjustbox}
                \end{minipage}
                \end{itembox}
            \end{minipage}\\
            \quad\\
            \begin{minipage}{0.5\linewidth}
                \begin{itembox}[l]{(d) Universal error detection}
                \centering
                \begin{adjustbox}{width=0.6\linewidth}
                    \includegraphics{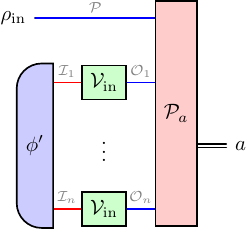}
                \end{adjustbox}
                \end{itembox}
            \end{minipage}
        \end{cases}
    \end{align*}
    \caption{(a) [1] The estimation-based protocol using covariant unitary estimation achieves the optimal protocol for deterministic unitary inversion among all parallel protocols.  [2] The delayed input-state protocol achieves the optimal protocol for probabilistic exact unitary inversion among parallel protocols. (b) A parallel isometry adjointation protocol is constructed by converting the unitary inversion protocol using a quantum instrument. (c, d) Reduction to isometry inversion and universal error detection by discarding the measurement outcome and the output state of the isometry adjointation protocol, respectively.}
    \label{fig:deterministic_isometry_adjointation_parallel}
\end{figure}

\subsection{Reduction to isometry inversion and universal error detection}
\label{subsec:reduction}
By discarding the measurement outcome from the isometry adjointation protocols, we obtain deterministic isometry inversion protocols shown in Figs.~\ref{fig:deterministic_isometry_adjointation_sequential}~(c-1) and \ref{fig:deterministic_isometry_adjointation_parallel}~(c-1), which are given by replacing $\Gamma_a^{(n+1)}$ and $\Psi_a$ in the original isometry adjointation protocols with $\Gamma^{(n+1)}\coloneqq \sum_a \Gamma_a^{(n+1)}$ and $\Psi\coloneqq \sum_a \Psi_a$, respectively.
The worst-case channel fidelity of the derived isometry inversion protocol is shown to be the same as the original unitary inversion protocol (see Appendix \ref{appendix_subsec:reduction_deterministic_isometry_inversion}).
By replacing the original unitary inversion protocol with the probabilistic exact one, we obtain the probabilistic exact isometry inversion protocols as shown in Figs.~\ref{fig:deterministic_isometry_adjointation_sequential}~(c-2), \ref{fig:deterministic_isometry_adjointation_parallel}~(c-2).
The parallel protocol for probabilistic exact isometry inversion is the same as that shown in Ref.~\cite{yoshida2023universal}.
The derived protocols achieve the same success probability as the untiary inversion protocol (see Appendix \ref{appendix_subsec:reduction_probabilistic_isometry_inversion}).
In conclusion, we obtain the following Corollary.

\begin{Cor}
\label{cor:isometry_inversion}
    Suppose there exists a parallel or sequential protocol for probabilistic exact (deterministic) $d$-dimensional unitary inversion achieving success probability $p_\mathrm{UI}$ (average-case channel fidelity $F_\mathrm{UI}$) using $n$ calls of $U_\mathrm{in}\in\U(d)$. Then, we can construct a parallel protocol for probabilistic exact (deterministic) isometry inversion for $V_\mathrm{in}\in\isometry{d}{D}$ achieving success probability $p=p_\mathrm{UI}$ (worst-case channel fidelity $F=F_\mathrm{UI}$) using $n$ calls of $V_\mathrm{in}\in\isometry{d}{D}$.
\end{Cor}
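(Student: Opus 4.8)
The plan is to realize the reduction sketched above and then verify that it transports the figure of merit exactly. Starting from the isometry adjointation protocol obtained from the given unitary inversion protocol through Theorem~\ref{thm:sequential_isometry_adjointation} (sequential case) or Theorem~\ref{thm:parallel_isometry_adjointation} (parallel case), I would discard the classical detection outcome by replacing the final instrument $\{\Gamma^{(n+1)}_a\}_a$ (resp.\ $\{\Psi_a\}_a$) with its sum $\Gamma^{(n+1)}=\sum_a\Gamma^{(n+1)}_a$ (resp.\ $\Psi=\sum_a\Psi_a$). Because the conversion comb acts sequentially in the sequential construction and the estimation-based conversion acts in parallel, this operation preserves the protocol class, so it is enough to analyse both constructions through the single transport identity \eqref{eq:isometry_to_unitary_approximate}.

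The key step is to restrict to the code space. By the one-sided error condition \eqref{eq:one-sided_error_condition_isometry_adjointation}, every input $\rho_\mathrm{in}\in\mcL(\Im V_\mathrm{in})$ produces the outcome $a=I$ with unit probability, so $\supermap{C}_O$ annihilates code-space inputs and the isometry inversion channel agrees with $\supermap{C}_I[\map{V}_\mathrm{in}^{\otimes n}]$ there. Feeding \eqref{eq:isometry_to_unitary_approximate} into the unitary inversion comb $C'$ and composing with $\map{V}_\mathrm{in}$ on the input, the correction operation $\rho\mapsto U V_\mathrm{in}^\dagger\rho V_\mathrm{in}U^\dagger$ collapses against $\map{V}_\mathrm{in}$ through $V_\mathrm{in}^\dagger V_\mathrm{in}=\1_d$ to leave $\map{U}$, yielding the $V_\mathrm{in}$-independent identity
\begin{align}
    \supermap{C}[\map{V}_\mathrm{in}^{\otimes n}]\circ\map{V}_\mathrm{in}=\int_{\U(d)}\dd U\,\supermap{C}'(\map{U}^{\otimes n})\circ\map{U}
\end{align}
on $\mcL(\CC^d)$. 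The Haar twirl appearing here is precisely what turns the worst-case-over-$V_\mathrm{in}$ isometry quantity into a Haar-average-over-$U$ unitary quantity.

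For the deterministic case I would then invoke the linearity of the channel fidelity in its first argument together with its unitary invariance to obtain
\begin{align}
    F_\mathrm{ch}[\supermap{C}[\map{V}_\mathrm{in}^{\otimes n}]\circ\map{V}_\mathrm{in},\1_d]=\int_{\U(d)}\dd U\,F_\mathrm{ch}[\supermap{C}'(\map{U}^{\otimes n}),\map{U}^{-1}]=F_\mathrm{UI},
\end{align}
the last equality being the definition of the average-case unitary inversion fidelity; since the value does not depend on $V_\mathrm{in}$, the worst case over $\isometry{d}{D}$ equals $F_\mathrm{UI}$. For the probabilistic exact case I would instead compose the transport with the heralded exact unitary inversion protocol, whose success branch satisfies $\supermap{C}'_\mathrm{succ}(\map{U}^{\otimes n})=p_\mathrm{UI}\map{U}^{-1}$; discarding the detection outcome (harmless on the code space) while retaining the exact-inversion heralding, the same computation gives $\supermap{C}_\mathrm{succ}[\map{V}_\mathrm{in}^{\otimes n}]\circ\map{V}_\mathrm{in}=p_\mathrm{UI}\,\1_d$, i.e.\ an exact inverse heralded with probability $p=p_\mathrm{UI}$.

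The main obstacle is to establish that \eqref{eq:isometry_to_unitary_approximate} becomes \emph{exact} once both sides are restricted to the code space $\Im V_\mathrm{in}^{\otimes n}$, so that no approximation beyond the one already carried by $C'$ survives into the inversion fidelity. Making this precise requires expanding $T_I\star\dketbra{V_\mathrm{in}}^{\otimes n}$ in the Schur basis using the isometry decomposition \eqref{eq:decomposition_isometry} and the defining form \eqref{eq:def_TI} of $T_I$, and then checking, via Lemma~\ref{lem:yy} and Schur's lemma, that tracing out the $\mcU_\mu^{(D)}$ registers reproduces exactly the Haar twirl $\int\dd U\,\dketbra{U}^{\otimes n}$ tensored with the correction $\dketbra{U V_\mathrm{in}^\dagger}$. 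The remaining verifications — that summing over the outcome preserves the comb (resp.\ superinstrument) normalization and that a parallel input protocol yields a parallel output protocol — are routine and are relegated to Appendices~\ref{appendix_subsec:reduction_deterministic_isometry_inversion} and \ref{appendix_subsec:reduction_probabilistic_isometry_inversion}.
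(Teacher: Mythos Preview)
Your proposal is correct and follows essentially the same route as the paper: discard the detection outcome of the isometry adjointation protocol, observe that on the code space only $\supermap{C}_I$ survives and that the transport identity \eqref{eq:isometry_to_unitary_approximate} becomes exact there (this is precisely Eq.~\eqref{eq:TI_V_composition} with the $(\1_D-\Pi_{\Im V_\mathrm{in}})$ term killed by composition with $\map{V}_\mathrm{in}$), and then read off the fidelity or success probability from the Haar-averaged unitary-inversion side. The paper phrases the deterministic computation via the explicit depolarizing form of $\supermap{C}_I[\map{V}_\mathrm{in}^{\otimes n}]\circ\map{V}_\mathrm{in}$ derived in \eqref{eq:CI_V_composition}, whereas you argue directly through linearity and unitary invariance of $F_\mathrm{ch}$; these are equivalent, and your identification of the code-space exactness of \eqref{eq:isometry_to_unitary_approximate} as the crux is exactly right.
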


Reference \cite{yoshida2023reversing} shows a deterministic exact sequential protocol for qubit-unitary inversion using four calls of the input qubit-unitary operation $U_\mathrm{in}\in\U(2)$.  Combining this protocol with Theorem \ref{cor:isometry_inversion}, we can construct a deterministic exact sequential protocol for qubit-encoding isometry inversion using four calls of the input qubit-encoding isometry operation $V_\mathrm{in}\in\isometry{2}{D}$ for any $D\geq 2$.

By discarding the output state from the isometry adjointation protocols, we obtain universal error detection protocols shown in Figs.~\ref{fig:deterministic_isometry_adjointation_sequential}~(d) and \ref{fig:deterministic_isometry_adjointation_parallel}~(d).
The sequential protocol shown in Fig.~\ref{fig:deterministic_isometry_adjointation_sequential}~(d) is obtained by discarding $\Lambda''^{(n+1)}$ in the original isometry adjointation protocol and replacing $\Gamma_a^{(n+1)}$ and $\Lambda'^{(n)}$ with the POVM measurement $\map{G}_a\coloneqq \Tr\circ \Gamma_a^{(n+1)}$ and the quantum channel $\Lambda''^{(n)}\coloneqq \Tr_{\mcA_n} \circ \Lambda'^{(n)}$, respectively.
The parallel protocol shown in Fig.~\ref{fig:deterministic_isometry_adjointation_parallel}~(d) is obtained by discarding $\map{R}_i$ and $M_i$ in the original isometry adjointation protocol, and replacing $\phi$ and $\Psi_a$ with the quantum state $\phi'\coloneqq \Tr_{\mcA} (\phi)$ and the POVM measurement $\map{P}_a\coloneqq \Tr\circ \Psi_a$, respectively.
The approximation errors of the derived protocols are shown in the following Corollary (see Appendix \ref{appendix_subsec:reduction_universal_error_detection}).

\begin{Cor}
\label{cor:universal_error_detection}
    The sequential and parallel protocols shown in Figs.~\ref{fig:deterministic_isometry_adjointation_sequential} (d) and \ref{fig:deterministic_isometry_adjointation_parallel} (d) implement a universal error detection with the approximation errors $\alpha^{(x)}$ for $x=\mathrm{SEQ}$ (sequential protocol) and $x=\mathrm{PAR}$ (parallel protocol) given by
    \begin{align}
        \alpha^{(\mathrm{SEQ})} &= \Tr(C''\Sigma'),\\
        \alpha^{(\mathrm{PAR})}
        &= \sum_{\lambda\in\young{d}{n}} \Tr(\phi' \Pi_\lambda^{(d)}) \sum_{\mu\in\lambda+_d\square} {\mathrm{hook}(\lambda)\over \mathrm{hook}(\mu)}\\
        &= \sum_{\lambda\in\young{d}{n}} \Tr(\phi' \Pi_\lambda^{(d)}) \left[1-\sum_{\mu\in\lambda+\square \setminus \young{d}{n+1}} {\mathrm{hook}(\lambda)\over \mathrm{hook}(\mu)}\right],
    \end{align}
    where $C''$ is the Choi operator of the quantum comb given by $C''\coloneqq J_{\Lambda'^{(1)}} \star J_{\Lambda'^{(2)}} \star \cdots \star J_{\Lambda''^{(n)}}$, $\Sigma'$ is defined by $\Sigma'\coloneqq \Tr_{\mcO'_n} \Sigma$ using $\Sigma$ defined in Eq.~(\ref{eq:def_Sigma}), and $\phi'$ is a quantum state shown in the protocol \ref{fig:deterministic_isometry_adjointation_parallel} (d).
\end{Cor}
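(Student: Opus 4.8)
The plan is to derive both formulas directly from Theorems \ref{thm:sequential_isometry_adjointation} and \ref{thm:parallel_isometry_adjointation} by observing that discarding the output quantum state turns the isometry adjointation instrument into the universal error detection POVM, with the \emph{detection} error $\alpha$ surviving and the \emph{decode} error $\eta$ dropping out. Concretely, I would trace out the output system in Eq.~\eqref{eq:isometry_adjointation_detection_decode_error}. Using $\Tr \map{V}_\mathrm{in}^\dagger(\cdot) = \Tr[\Pi_{\Im V_\mathrm{in}}\, \cdot\,]$ (since $V_\mathrm{in} V_\mathrm{in}^\dagger = \Pi_{\Im V_\mathrm{in}}$) and $\Tr(\1_d/d) = 1$, the $\eta$-terms cancel and one is left with $\Tr \supermap{C}_I[\map{V}_\mathrm{in}^{\otimes n}](\cdot) = \Tr[(\Pi_{\Im V_\mathrm{in}} + \alpha(\1_D - \Pi_{\Im V_\mathrm{in}}))\,\cdot\,]$, i.e.\ $\Pi_I = \Pi_{\Im V_\mathrm{in}} + \alpha(\1_D - \Pi_{\Im V_\mathrm{in}})$, which is exactly the optimal POVM of Eq.~\eqref{eq:universal_error_detection_white_noise}. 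By the remark following that equation, its worst-case operational distance is $\alpha$, so the universal error detection approximation error equals the detection error of the underlying adjointation protocol.

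For the sequential protocol I would then rewrite the sequential detection error $\alpha = \alpha_{C'} = \Tr[\Tr_\mcF(C')\,\Sigma]$ from Theorem \ref{thm:sequential_isometry_adjointation} in terms of the reduced comb $C''$. The key algebraic step uses the link-product identities \eqref{eq:commutativity}--\eqref{eq:tensor_product} together with the fact that the decode channel $\Lambda'^{(n+1)}: \mcL(\mcO'_n \otimes \mcA_n)\to\mcL(\mcF)$ is trace preserving, so that $\Tr_\mcF J_{\Lambda'^{(n+1)}} = \1_{\mcO'_n \mcA_n}$. Taking the partial trace over $\mcF$ of $C' = J_{\Lambda'^{(n+1)}}\star\cdots\star J_{\Lambda'^{(1)}}$ therefore collapses the last tooth: the $\1_{\mcA_n}$ factor contracts the wire $\mcA_n$ to produce $\Tr_{\mcA_n} J_{\Lambda'^{(n)}} = J_{\Lambda''^{(n)}}$, while the $\1_{\mcO'_n}$ factor has no overlap with the remaining combs and survives as a tensor factor. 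This yields $\Tr_\mcF(C') = \1_{\mcO'_n}\otimes C''$ with $C'' = J_{\Lambda'^{(1)}}\star\cdots\star J_{\Lambda''^{(n)}}$, precisely the comb obtained by discarding $\Lambda'^{(n+1)}$ and replacing $\Lambda'^{(n)}$ by $\Lambda''^{(n)} = \Tr_{\mcA_n}\circ\Lambda'^{(n)}$. Substituting and recognising $\Sigma' = \Tr_{\mcO'_n}\Sigma$ gives $\alpha^{(\mathrm{SEQ})} = \Tr[(\1_{\mcO'_n}\otimes C'')\Sigma] = \Tr(C''\Sigma')$.

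For the parallel protocol the statement is essentially immediate from Theorem \ref{thm:parallel_isometry_adjointation}, whose detection error $\alpha_\phi$ already depends on the probe only through $\Tr_\mcA(\phi)$. Discarding $\map{R}_i$ and $M_i$ and replacing $\Psi_a$ by $\map{P}_a = \Tr\circ\Psi_a$ and $\phi$ by $\phi' = \Tr_\mcA(\phi)$ leaves the $\{I,O\}$ outcome statistics untouched, so $\alpha^{(\mathrm{PAR})} = \alpha_\phi$ with $\Tr_\mcA(\phi)$ replaced by $\phi'$, reproducing the first displayed expression. The second expression follows from the branching identity $\sum_{\mu\in\lambda+\square} \mathrm{hook}(\lambda)/\mathrm{hook}(\mu) = 1$ (equivalent to $\sum_{\mu\in\lambda+\square} d_\mu = (n+1)d_\lambda$ via \eqref{eq:dim_hook}), splitting $\lambda+\square$ into $\lambda+_d\square$ and its complement $\lambda+\square\setminus\young{d}{n+1}$.

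The main obstacle I anticipate is justifying rigorously that the detection outcome statistics are genuinely unchanged when the decode part is discarded, i.e.\ that $\map{G}_a$ (resp.\ $\map{P}_a$) acts on the same reduced state as in the full adjointation protocol. In the sequential case this is exactly the content of the $\Tr_\mcF$/$\Tr_{\mcA_n}$ contraction above, and the cleanest justification is to note that detection is governed entirely by the probabilistic comb $\{\supermap{T}_I,\supermap{T}_O\}$ of Eq.~\eqref{eq:def_TI}, whose classification by the depth of the Young diagram $\mu$ is manifestly independent of the unitary-inversion decode $\Lambda'^{(\bullet)}$; hence removing $\Lambda'^{(n+1)}$ cannot alter the $\{I,O\}$ probabilities. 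I would therefore appeal to the structure already established in the proof of Theorem \ref{thm:sequential_isometry_adjointation} rather than recomputing the link products from scratch, so that the only new content is the elementary contraction of the final tooth.
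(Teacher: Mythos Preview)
Your proposal is correct and follows essentially the same approach as the paper's proof in Appendix~\ref{appendix_subsec:reduction_universal_error_detection}: trace out the output of the adjointation instrument~\eqref{eq:isometry_adjointation_detection_decode_error} so that only the detection error $\alpha$ survives, then rewrite $\alpha_{C'}=\Tr[\Tr_{\mcF}(C')\,\Sigma]$ in terms of the reduced comb $C''$ and $\Sigma'=\Tr_{\mcO'_n}\Sigma$. The paper phrases the sequential reduction as ``$C'$ corresponds to $C''\otimes\tfrac{\1_{\mcO'_n\mcF}}{d}$,'' whereas you obtain the equivalent identity $\Tr_{\mcF}C'=\1_{\mcO'_n}\otimes C''$ from the TP property of $\Lambda'^{(n+1)}$; both yield $\alpha^{(\mathrm{SEQ})}=\Tr(C''\Sigma')$ by the same link-product contraction, and your treatment of the parallel case via $\phi'=\Tr_{\mcA}\phi$ and the branching identity~\eqref{eq:hook_equality} matches the paper as well.
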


\subsection{Relationship to programmable projective measurement}
Reference \cite{chabaud2018optimal} considers a task to construct a projective measurement $\{\ketbra{\psi}{\psi}, \1-\ketbra{\psi}{\psi}\}$ from $n$ copies of an unknown quantum state $\ket{\psi}\in \CC^D$. The task is to construct a measurement $\{\Pi_I, \Pi_O\}$ such that
\begin{align}
    \Tr(\Pi_I \ketbra{\psi}{\psi})&=1,\\
    \Tr(\Pi_I \ketbra{\psi^{\perp}}{\psi^{\perp}})&=\alpha\;\;\;(\forall \ket{\psi^{\perp}}\perp \ket{\psi}).
\end{align}
This task can be considered as a special case ($d=1$) of universal error detection. Reference \cite{chabaud2018optimal} shows that the optimal failure probability is given by
\begin{align}
    \alpha=\frac{1}{n+1}.
\end{align}
The optimal success probability is achieved by a protocol shown in Fig.~\ref{fig:projective_measurement}, where $\mcM = \{M_I, M_O\}$ is a POVM defined by $M_I = \Pi_{\mathrm{sym}}$ and $M_O = \1-\Pi_{\mathrm{sym}}$, where $\Pi_{\mathrm{sym}}$ is an orthonormal projector onto the totally symmetric subspace of $(\CC^D)^{\otimes n+1}$.  This protocol corresponds to the universal error detection protocol for $d=1$.
The universal error detection protocol is a generalization of programmable projective measurement for rank-$d$ (destructive) projective measurement given by $\{\Pi_{\Im V_\mathrm{in}}, \1-\Pi_{\Im V_\mathrm{in}}\}$ using an isometry operator $V_\mathrm{in}:\CC^d\to \CC^D$.  In particular, the parallel protocol shown in Fig.~\ref{fig:deterministic_isometry_adjointation_parallel} (d) can be regarded as an implementation of the rank-$d$ projective measurement using a program state $\map{V}_\mathrm{in}^{\otimes n}(\phi')$ (see also Section~\ref{subsec:universal_programming}).

\begin{figure}[th]
    \centering
    \includegraphics{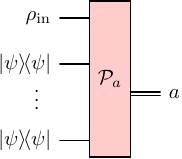}
    \caption{Optimal protocol for programmable projective measurement shown in Ref.~\cite{chabaud2018optimal}.}
    \label{fig:projective_measurement}
\end{figure}

\section{Analysis of the optimal protocols}
\label{sec:analysis}

In Section \ref{sec:unitary_group_symmetry}, we reintroduce the Choi operator of the quantum supermap to describe the general superinstrument that is not implementable by the quantum circuit.
We show that the optimal protocols for isometry adjointation, isometry inversion, and universal error detection can be found in the Choi operators satisfying the unitary group symmetry.
We utilize the unitary group symmetry to investigate the optimal performances analytically in Section \ref{subsec:optimal_construction} and numerically in Section \ref{subsec:sdp}.

\subsection{Choi representation of general superinstruments and $\U(d)\times \U(D)$ symmetry of the tasks}
\label{sec:unitary_group_symmetry}
Similarly to Eq.~(\ref{eq:superchannel_choi}) for a sequential protocol, a general superchannel $\supermap{C}: \bigotimes_{i=1}^{n} [\mcL(\mcI_i) \to \mcL(\mcO_i)] \to [\mcL(\mcP) \to \mcL(\mcF)]$ can be represented in the Choi operator $C$ satisfying
\begin{align}
    J_{\supermap{C}[\Phi_\mathrm{in}^{(1)} \otimes \cdots \otimes \Phi_\mathrm{in}^{(n)}]} = C \star \bigotimes_{i=1}^{n} J_{\Phi_\mathrm{in}^{(i)}},
\end{align}
where $J_{\Phi}$ is the Choi operator of a quantum channel $\Phi$, and $C\in\mcL(\mcI^n \otimes \mcO^n \otimes \mcP \otimes \mcF)$ is the Choi operator of $\supermap{C}$.  The set of superchannels implemented by parallel ($x=\mathrm{PAR}$) and sequential ($x=\mathrm{SEQ}$) protocols, and the set of general superchannel ($x=\mathrm{GEN}$) can be characterized by the positivity and linear conditions on $C$ as
\begin{align}
    C&\geq 0,\\
    C&\in\mcW^{(x)},
\end{align}
where $\mcW^{(x)}$ is a linear subspace of $\mcL(\mcI^n \otimes \mcO^n \otimes \mcP \otimes \mcF)$ [see Appendix \ref{appendix_sec:general_superinstrument_choi} for the definition of $\mcW^{(x)}$].  The case $x=\mathrm{SEQ}$ corresponds to Theorem \ref{thm:comb_characterization}.
A general superinstrument $\supermap{C}_a: \bigotimes_{i=1}^{n} [\mcL(\mcI_i) \to \mcL(\mcO_i)] \to [\mcL(\mcP) \to \mcL(\mcF)]$ can also be represented in the Choi operator $C_a$ satisfying
\begin{align}
    J_{\supermap{C}_a[\Phi_\mathrm{in}^{(1)} \otimes \cdots \otimes \Phi_\mathrm{in}^{(n)}]} = C_a \star \bigotimes_{i=1}^{n} J_{\Phi_\mathrm{in}^{(i)}}.
\end{align}
The set of superinstruments implemented by parallel ($x=\mathrm{PAR}$) and sequential ($x=\mathrm{SEQ}$) protocols, and the set of general superinstrument ($x=\mathrm{GEN}$) can be characterized by the positivity and linear conditions on $C\coloneqq \sum_a C_a$ as
\begin{align}
    C_a&\geq 0,\\
    C\coloneqq \sum_a C_a &\in\mcW^{(x)}.
\end{align}
The case $x=\mathrm{SEQ}$ corresponds to Theorem \ref{thm:probabilistic_comb_characterization}.

The protocols for isometry inversion, universal error detection, and isometry adjointation can be represented by the Choi operators of the corresponding superchannel or superinstrument given by
\begin{align}
    \begin{cases}
        C\in\mcL(\mcI^n \otimes \mcO^n \otimes \mcP \otimes \mcF) & (\mathrm{deterministic \; isometry \; inversion})\\
        \{C_S, C_F\}\subset\mcL(\mcI^n \otimes \mcO^n \otimes \mcP \otimes \mcF)& (\mathrm{probabilistic \; exact \; isometry \; inversion})\\
        \{C_I, C_O\} \subset\mcL(\mcI^n \otimes \mcO^n \otimes \mcP) & (\mathrm{universal \; error \; detection})\\
        \{C_I, C_O\} \subset\mcL(\mcI^n \otimes \mcO^n \otimes \mcP \otimes \mcF) & (\mathrm{isometry \; adjointation})
    \end{cases}.\label{eq:choi_each_task}
\end{align}
The optimization of the Choi operators can be done under the $\U(d)\times \U(D)$ symmetry as shown in the following Theorem.

\begin{Thm}
\label{thm:unitary_group_symmetry}
    The optimal performances of isometry inversion, universal error detection, and isometry adjointation can be searched within Choi operators satisfying
    \begin{align}
    \label{eq:unitary_group_symmetry}
    \begin{cases}
        [C, U^{\prime\otimes n+1}_{\mcI^n \mcF} \otimes U^{\prime\prime \otimes n+1}_{\mcP\mcO^n}] = 0 & (\mathrm{deterministic \; isometry \; inversion})\\
        [C_a, U^{\prime\otimes n+1}_{\mcI^n \mcF} \otimes U^{\prime\prime \otimes n+1}_{\mcP\mcO^n}] = 0   \quad \forall a\in\{S,F\}& (\mathrm{probabilistic \; exact \; isometry \; inversion})\\
        [C_a, U^{\prime\otimes n}_{\mcI^n} \otimes U^{\prime\prime \otimes n+1}_{\mcP\mcO^n}] = 0 \quad \forall a\in\{I,O\} & (\mathrm{universal \; error \; detection})\\
        [C_a, U^{\prime\otimes n+1}_{\mcI^n \mcF} \otimes U^{\prime\prime \otimes n+1}_{\mcP\mcO^n}] = 0 \quad \forall a\in\{I,O\} & (\mathrm{isometry \; adjointation})
    \end{cases}
    \end{align}
    for all $U'\in\U(d)$ and $U''\in\U(D)$.
\end{Thm}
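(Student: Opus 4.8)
The plan is to run a group-averaging (twirling) argument: I would exhibit a $\U(d)\times\U(D)$ action on Choi operators under which each task is covariant, verify that this action preserves the three protocol classes, and combine convexity of the figures of merit with their invariance to conclude that averaging any protocol over the group yields a protocol in the same class satisfying \eqref{eq:unitary_group_symmetry} with equal or better performance. First I would record the covariance of the tasks. Since $V_\mathrm{in}\mapsto U''V_\mathrm{in}U'^\dagger$ is a bijection of $\isometry{d}{D}$ for $U'\in\U(d)$ and $U''\in\U(D)$, and since $\dket{U''V_\mathrm{in}U'^\dagger}=(U'^*\otimes U'')\dket{V_\mathrm{in}}$, $\dket{U'V_\mathrm{in}^\dagger U''^\dagger}=(U''^*\otimes U')\dket{V_\mathrm{in}^\dagger}$, and $\Pi_{\Im(U''V_\mathrm{in})}=U''\Pi_{\Im V_\mathrm{in}}U''^\dagger$, the input $\dketbra{V_\mathrm{in}}^{\otimes n}$ and every ideal target transform by conjugation with a single $U'$ on the $d$-dimensional registers $\mcI^n\mcF$ and a single $U''$ on the $D$-dimensional registers $\mcP\mcO^n$ (the complex conjugates are immaterial after averaging over the full groups, since $\{U'\}=\{U'^*\}$). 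I then define the twirl
\begin{align}
\bar C \coloneqq \int_{\U(d)}\!\dd U'\int_{\U(D)}\!\dd U''\; W_{U',U''}\, C\, W_{U',U''}^\dagger, \quad W_{U',U''}\coloneqq U'^{\otimes n+1}_{\mcI^n\mcF}\otimes U''^{\otimes n+1}_{\mcP\mcO^n},
\end{align}
and analogously $\bar C_a$ for each branch of an instrument (for universal error detection there is no global-future register $\mcF$, so $U'$ acts only on $\mcI^n$ and $W_{U',U''}$ is modified accordingly); by construction these satisfy \eqref{eq:unitary_group_symmetry}.

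Next I would show the figure of merit is unchanged by each conjugation $C\mapsto W_{U',U''}CW_{U',U''}^\dagger$. Contracting the conjugated operator with $\dketbra{V_\mathrm{in}}^{\otimes n}$ via the link product realizes the transformation $\supermap{C}$ applied to a unitarily substituted isometry, pre- and post-composed by fixed unitaries on $\mcP$ and $\mcF$ (the transpose bookkeeping in the link product is routine). Because the channel fidelity $F_\mathrm{ch}$, the operational distance $D_\mathrm{op}$, and the diamond norm are invariant under pre- and post-composition with unitaries [cf.\ \eqref{eq:POVM_distance_unitary_invariance}], and because the worst case over $V_\mathrm{in}$ is invariant under the bijection $V_\mathrm{in}\mapsto U''V_\mathrm{in}U'^\dagger$, each figure of merit is constant along the group orbit. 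Each figure is moreover affine in the Choi operator for fixed $V_\mathrm{in}$ ($F_\mathrm{ch}$ is linear, while $\delta_\mathrm{worst}$ and $\epsilon$ are norms composed with affine maps), so $F_\mathrm{worst}$ is concave and $\delta_\mathrm{worst},\epsilon$ are convex in the Choi operator. Jensen's inequality applied to $\bar C$ then gives $F_\mathrm{worst}(\bar C)\ge F_\mathrm{worst}(C)$, $\delta_\mathrm{worst}(\bar C)\le\delta_\mathrm{worst}(C)$, and $\epsilon(\bar C)\le\epsilon(C)$: the symmetric protocol is never worse.

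The step I expect to be the crux is closedness of the protocol class, $C\in\mcW^{(x)}\Rightarrow\bar C\in\mcW^{(x)}$ for $x\in\{\mathrm{PAR},\mathrm{SEQ},\mathrm{GEN}\}$, together with preservation of the (super)instrument normalization on $\sum_a C_a$. Positivity is immediate, since unitary conjugation and convex integration preserve $C\ge0$. For the linear constraints I would exploit that $W_{U',U''}$ acts as one and the same $U'$ on every $d$-dimensional factor and one and the same $U''$ on every $D$-dimensional factor, while the comb conditions of Theorems \ref{thm:comb_characterization} and \ref{thm:probabilistic_comb_characterization} (and the defining conditions of $\mcW^{(\mathrm{GEN})}$) trace out exactly one such factor at a time. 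Hence conjugating and then tracing out $\mcI_i$ eliminates the $U'$ on $\mcI_i$ by cyclicity, and the identity $\1_{\mcO_{i-1}}$ is fixed by the surviving $U''$, so the hierarchy $\Tr_{\mcI_i}C^{(i)}=C^{(i-1)}\otimes\1_{\mcO_{i-1}}$ is mapped to the same hierarchy for the conjugated operator with $C^{(i-1)}$ replaced by its own conjugate; integrating then preserves every such identity. Combining the three ingredients---covariance of the targets, invariance of the figures of merit together with Jensen, and closedness of $\mcW^{(x)}$---shows that for any admissible protocol there is a protocol in the same class obeying \eqref{eq:unitary_group_symmetry} with equal or better performance, which is precisely the assertion.
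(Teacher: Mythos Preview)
Your proposal is correct and follows essentially the same twirling argument as the paper's Appendix~\ref{appendix_sec:proof_unitary_group_symmetry}: average the Choi operator over $\U(d)\times\U(D)$, check that $\mcW^{(x)}$ is closed under this action, and combine unitary invariance of the figures of merit with convexity via Jensen. The paper verifies each task separately by explicit computation, in particular checking that the side constraints (the exactness condition for probabilistic inversion and the one-sided error conditions \eqref{eq:one-sided_error_condition} and \eqref{eq:one-sided_error_condition_isometry_adjointation}) survive the twirl, which you gloss over but which indeed follow from the covariance you invoke.
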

\begin{proof}
    See Appendix \ref{appendix_sec:proof_unitary_group_symmetry} for the proof.
\end{proof}

The universal error detection having the $\U(d)\times \U(D)$ symmetry is shown to output the POVM in the form of Eq.~\eqref{eq:universal_error_detection_white_noise}.
\begin{Thm}
\label{thm:white_noise_is_enough}
    The universal error detection protocol with the Choi operator satisfying the $\U(d)\times \U(D)$ symmetry shown in Theorem~\ref{thm:unitary_group_symmetry} outputs the POVMs given by Eq.~\eqref{eq:universal_error_detection_white_noise}.
    Similarly, the isometry adjointation protocol with the Choi operator satisfying the $\U(d)\times \U(D)$ symmetry shown in Theorem~\ref{thm:unitary_group_symmetry} outputs the quantum instrument given by Eq.~\eqref{eq:isometry_adjointation_detection_decode_error}.
\end{Thm}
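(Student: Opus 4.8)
The plan is to turn the $\U(d)\times\U(D)$ symmetry established in Theorem~\ref{thm:unitary_group_symmetry} into a covariance statement for the map sending $V_\mathrm{in}$ to the protocol's output, reduce the problem to a single reference isometry by that covariance, and then read off the admissible output operators from Schur's lemma applied to the stabilizer subgroup. For universal error detection, the protocol outputs the POVM elements $\Pi_a(V_\mathrm{in}) = C_a \star \dketbra{V_\mathrm{in}}^{\otimes n}_{\mcI^n\mcO^n}\in\mcL(\mcP)$ via Eq.~\eqref{eq:superinstrument_choi}. First I would record the covariance of the Choi vector, $\dket{U''V_\mathrm{in}U'^{\dagger}} = (\bar{U'}_{\mcI}\otimes U''_{\mcO})\dket{V_\mathrm{in}}$, and combine it with the commutation relation \eqref{eq:unitary_group_symmetry} and the partial transpose built into the link product. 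The symmetry is engineered precisely so that these ingredients yield
\begin{equation}
\Pi_a(U''V_\mathrm{in}U'^{\dagger}) = U''\,\Pi_a(V_\mathrm{in})\,U''^{\dagger}\qquad\forall\,U'\in\U(d),\ U''\in\U(D),
\end{equation}
so that in particular $\Pi_a$ does not depend on $U'$.

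By this covariance it suffices to determine $\Pi_a$ at a single reference isometry, for which I would take the canonical inclusion $V_0:\CC^d\hookrightarrow\CC^D$, since every $V_\mathrm{in}\in\isometry{d}{D}$ equals $U''V_0$ for some $U''\in\U(D)$. Writing $\CC^D = \Im V_0\oplus(\Im V_0)^{\perp}$, the isometries that fix $V_0$ up to the allowed $U'$-freedom form the stabilizer $H\cong\U(d)\times\U(D-d)$ of block-diagonal unitaries, and covariance forces $\Pi_a(V_0)$ to commute with every element of $H$. Because the two blocks $\CC^d$ and $\CC^{D-d}$ carry inequivalent irreducible representations of $H$, Schur's lemma gives $\Pi_a(V_0) = \beta_a\,\Pi_{\Im V_0}+\gamma_a\,(\1_D-\Pi_{\Im V_0})$. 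The coefficients are then pinned down by normalization and the promise: $\Pi_I+\Pi_O=\1_D$ yields $\beta_I+\beta_O=1$ and $\gamma_I+\gamma_O=1$, while the one-sided error condition \eqref{eq:one-sided_error_condition} forces $\Pi_I$ to act as the identity on $\Im V_0$, i.e. $\beta_I=1$ and hence $\beta_O=0$; positivity of $\Pi_a$ gives $0\le\gamma_I\le1$. Setting $\gamma_I=\alpha$ reproduces Eq.~\eqref{eq:universal_error_detection_white_noise}, and transporting back by covariance shows that the same form, with the same $\alpha$, holds for every $V_\mathrm{in}$.

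For isometry adjointation I would run the identical scheme, now at the level of the Choi operators of the output channel $\supermap{C}_a[\map{V}_\mathrm{in}^{\otimes n}]:\mcL(\mcP)\to\mcL(\mcF)$, whose symmetry \eqref{eq:unitary_group_symmetry} additionally ties the output space $\mcF=\CC^d$ to the $\U(d)$ factor. After reducing to $V_0$, the task becomes decomposing the commutant of $H$ acting on $\mcP\otimes\mcF$: the invariant operators are spanned by the adjoint Choi $\dketbra{V_0^{\dagger}}$ (the unique $\U(d)$-intertwiner pairing $\Im V_0\subset\mcP$ with $\mcF$), together with the white-noise terms $\Pi_{\Im V_0}\otimes\1_d/d$ and $(\1_D-\Pi_{\Im V_0})\otimes\1_d/d$. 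Imposing trace preservation of $\supermap{C}_I+\supermap{C}_O$ and the one-sided error condition \eqref{eq:one-sided_error_condition_isometry_adjointation} then collapses the coefficients to the two free parameters of Eq.~\eqref{eq:isometry_adjointation_detection_decode_error}. I expect the main obstacle to be exactly this last Schur analysis on $\mcP\otimes\mcF$: one must track the complex conjugations coming from the link-product transpose carefully enough to confirm that precisely these three intertwiners survive and that no cross terms between the two blocks appear, whereas for universal error detection the commutant lives on $\mcP$ alone and the corresponding step is routine.
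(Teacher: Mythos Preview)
Your proposal is correct and follows essentially the same route as the paper: the paper defers the structural analysis to Lemma~\ref{lem:deconposition_of_fV}, whose proof (Appendix~\ref{appendix_sec:unitary_group_symmetry_conditions}) derives the same covariance relation from the $\U(d)\times\U(D)$ symmetry, restricts to the stabilizer of a fixed isometry, and uses a block decomposition equivalent to your Schur argument on $\mcP\otimes\mcF$ to obtain the three-parameter form, after which the one-sided error condition and the CPTP constraints fix the coefficients exactly as you outline. The only cosmetic difference is that the paper carries out the harder $\mcP\otimes\mcF$ analysis first (for isometry inversion/adjointation) and then obtains the error-detection case by tensoring with $\1_\mcF/d$, whereas you handle error detection directly on $\mcP$ first and then redo the commutant analysis on $\mcP\otimes\mcF$.
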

\begin{proof}
As shown in Lemma~\ref{lem:deconposition_of_fV} in Appendix~\ref{appendix_sec:unitary_group_symmetry_conditions}, if the Choi operator of the universal error detection protocol satisfies the $\U(d)\times \U(D)$ symmetry \eqref{eq:unitary_group_symmetry}, its output POVM $\{\Pi_a\}$ is given by
\begin{align}
    \Pi_a = v_a \Pi_{\Im V_\mathrm{in}} + w_a (\1_D-\Pi_{\Im V_\mathrm{in}}) \quad \forall a\in\{I,O\}, V_\mathrm{in}\in\isometry{d}{D}
\end{align}
using $v_I, v_O, w_I, w_O\in\CC$.
From the one-sided error condition \eqref{eq:one-sided_error_condition}, the output POVM is given by Eq.~\eqref{eq:universal_error_detection_white_noise}.
Similarly, if the Choi operator of the isometry adjointation protocol satisfies the $\U(d)\times \U(D)$ symmetry \eqref{eq:unitary_group_symmetry}, its output instrument is given by
\begin{align}
    \supermap{C}_a(\map{V}_\mathrm{in}^{\otimes n})(\rho_\mathrm{in}) = x_a V_\mathrm{in}^\dagger \rho_\mathrm{in} V_\mathrm{in} + {\1_\mcF \over d} \Tr[\rho_\mathrm{in} (y_a \Pi_{\Im V_\mathrm{in}} + z_a (\1_D-\Pi_{\Im V_\mathrm{in}}))] \nonumber\\
    \forall a\in\{I,O\}, V_\mathrm{in}\in\isometry{d}{D}
\end{align}
using $x_I, x_O, y_I, y_O, z_I, z_O\in\CC$.
From the one-sided error condition \eqref{eq:one-sided_error_condition_isometry_adjointation}, the output instrument is given by Eq.~\eqref{eq:isometry_adjointation_detection_decode_error}.
\end{proof}

\subsection{Optimal construction of isometry inversion, universal error detection, and isometry adjointation protocols}
\label{subsec:optimal_construction}
We show that the construction of parallel or sequential protocols of isometry adjointation, isometry inversion, and universal error detection are the optimal, as shown in the following Theorem.

\begin{Thm}
\label{thm:optimal_construction}
    The parallel or sequential protocols for probabilistic exact isometry inversion, deterministic isometry inversion, universal error detection, and isometry adjointation shown in Theorems \ref{thm:sequential_isometry_adjointation} and \ref{thm:parallel_isometry_adjointation} and Corollaries \ref{cor:isometry_inversion} and \ref{cor:universal_error_detection} achieve the optimal performances among all parallel or sequential protocols, respectively.
\end{Thm}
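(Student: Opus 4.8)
The plan is to combine the symmetry reduction of Theorem~\ref{thm:unitary_group_symmetry} with the explicit form of symmetric protocols from Theorem~\ref{thm:white_noise_is_enough}, so that for each of the four tasks the search over all parallel (respectively sequential) protocols collapses to an optimization over a few scalar parameters. By Theorem~\ref{thm:unitary_group_symmetry} the optimum in class $x\in\{\mathrm{PAR},\mathrm{SEQ}\}$ is attained by a Choi operator with the $\U(d)\times\U(D)$ symmetry of Eq.~\eqref{eq:unitary_group_symmetry}, and by Theorem~\ref{thm:white_noise_is_enough} every such protocol outputs the white-noise instrument of Eq.~\eqref{eq:isometry_adjointation_detection_decode_error}, whose entire performance is encoded in the detection error $\alpha$ and the decode error $\eta$ (and, for universal error detection, in $\alpha$ alone). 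It therefore suffices to establish matching bounds: that no symmetric class-$x$ protocol achieves a decode error below $1-F_\mathrm{UI}^{(x)}$ or a detection error below the value produced by the construction, and that both bounds are realized by the constructed protocols of Theorems~\ref{thm:sequential_isometry_adjointation} and~\ref{thm:parallel_isometry_adjointation}.

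For the decode-error converse I would argue by restriction. Fix the standard embedding $\iota:\CC^d\hookrightarrow\CC^D$ and feed input isometries of the form $V_\mathrm{in}=\iota\circ U$ with $U\in\U(d)$; since the target $\map{V}_\mathrm{in}^\dagger$ acts on the code space $\Im V_\mathrm{in}=\iota(\CC^d)$ exactly as $\map{U}^{-1}$ after identification through $\iota$, any isometry inversion (or adjointation) protocol becomes a $d$-dimensional unitary inversion protocol once the $n$ input ports are composed with the fixed channel $\map{\iota}$. This composition preserves the protocol class, so the derived object is a valid class-$x$ unitary inversion protocol using $n$ calls, whose worst-case fidelity is at most $F_\mathrm{UI}^{(x)}$ by Ref.~\cite{quintino2022deterministic}. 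As the worst case over $\isometry{d}{D}$ dominates the worst case over this one-parameter subfamily, the attainable decode error is at least $1-F_\mathrm{UI}^{(x)}$, and the identical argument bounds the success probability of probabilistic exact inversion by $p_\mathrm{UI}^{(x)}$. The construction attains $\eta=1-F_\mathrm{UI}^{(x)}$ (resp.\ $p=p_\mathrm{UI}^{(x)}$) exactly because it is built from the optimal unitary inversion comb, or the optimal covariant estimation, via Corollary~\ref{cor:isometry_inversion}.

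For the detection-error converse I would expand the symmetric Choi operator in the Schur--Weyl basis of Section~\ref{sec:sw_duality}. Under the $\U(d)\times\U(D)$ symmetry the comb normalization constraints of Theorem~\ref{thm:comb_characterization} (the parallel constraints in the parallel case) decouple across Young diagrams, and the detection error reduces to the linear functional $\alpha_{C'}=\Tr[\Tr_\mcF(C')\,\Sigma]$ of Eq.~\eqref{eq:def_alpha_C'}, respectively $\alpha_\phi$, which depends only on the weight placed on the depth-$(d{+}1)$ sectors $\mu\in\lambda+\square\setminus\young{d}{n+1}$. Minimizing this linear functional over the reduced feasible set returns the value realized by the construction, yielding the matching lower bound for universal error detection; the explicit parallel optimum is the content of Theorem~\ref{thm:optimal_parallel_error_detection}.

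The main obstacle is the \emph{joint} optimality required for isometry adjointation, whose figure of merit is $\epsilon=\max\{\alpha,\eta\}$: one must rule out a trade-off between the minimal $\alpha$ and the minimal $\eta$. The structural fact making this work is the decoupling exhibited in the proof of Theorem~\ref{thm:sequential_isometry_adjointation} through Eq.~\eqref{eq:isometry_to_unitary_approximate}: the compression comb $T_I$ routes the depth-$\le d$ sectors, which govern $\eta$, entirely through the unitary inversion comb $C'$, whereas $\alpha$ is fixed by the depth-$(d{+}1)$ sectors that $C'$ never accesses. I would make this precise by showing that the reduced Schur--Weyl variables controlling $\eta$ and those controlling $\alpha$ live in orthogonal blocks of the feasible set, so that the achievable pairs $(\alpha,\eta)$ form a product region; minimizing each coordinate independently then minimizes $\max\{\alpha,\eta\}$ and is realized by a single constructed protocol, completing the proof for all four tasks.
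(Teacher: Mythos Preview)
Your converse arguments for isometry inversion (restricting to $V_\mathrm{in}=\iota\circ U$ and absorbing the fixed embedding $\iota$ into the comb) and for universal error detection (minimizing the linear functional $\alpha$ over the symmetric feasible set) are essentially correct and parallel what the paper does in Appendix~C.1--C.3 via an explicit Schur--Weyl compression $C\mapsto C'$; both routes yield the individual bounds $\eta\geq 1-F_\mathrm{UI}^{(x)}$ and $\alpha\geq\alpha_\mathrm{opt}^{(x)}$.

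The genuine gap is in your treatment of isometry adjointation. Your claim that ``the reduced Schur--Weyl variables controlling $\eta$ and those controlling $\alpha$ live in orthogonal blocks of the feasible set, so that the achievable pairs $(\alpha,\eta)$ form a product region'' is false. In the construction of Theorems~\ref{thm:sequential_isometry_adjointation} and~\ref{thm:parallel_isometry_adjointation}, both $\alpha_{C'}=\Tr[\Tr_\mcF(C')\,\Sigma]$ and $\eta=1-\Tr(C'\Omega')$ are functions of the \emph{same} unitary-inversion comb $C'$ (in the parallel case, the same probe $\phi$ enters $\alpha_\phi$ and $F_\mathrm{est}$); the marginal $\Tr_\mcF C'$ that determines $\alpha$ is not a free variable independent of the data determining $\eta$. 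The paper's own numerics confirm the trade-off: for $d=2$, $n=2$, parallel, Tables~\ref{tab:deterministic_isometry_inversion}, \ref{tab:universal_error_detection} and~\ref{tab:isometry_adjointation} give $\alpha_\mathrm{opt}^{(\mathrm{PAR})}=2/3$ and $1-F_\mathrm{UI}^{(\mathrm{PAR})}=1-0.6545\approx 0.3455$, yet $\epsilon_\mathrm{opt}^{(\mathrm{PAR})}=0.6736>\max\{2/3,\,0.3455\}$. The achievable region is strictly smaller than the product rectangle, so independent minimization does not yield $\epsilon_\mathrm{opt}^{(x)}$.

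What the paper does instead (Appendix~C.4) is not to decouple but to \emph{compress}: starting from the optimal $\U(d)\times\U(D)$-symmetric Choi pair $\{C_I,C_O\}$ achieving $\epsilon_\mathrm{opt}^{(x)}$, it defines a single $d$-dimensional comb $C'\in\mcL(\mcI^n\otimes\mcO'^n\otimes\mcP'\otimes\mcF)$ by retaining only the Schur--Weyl blocks with $\nu\in\young{d}{n+1}$ from $C_I+C_O$ (Eq.~\eqref{eq:c'_yybasis1}), verifies that the symmetric comb constraints force $C'\in\mcW^{(x)}$, and then checks directly that this \emph{specific} $C'$ simultaneously satisfies $1-\Tr(C'\Omega')\leq\epsilon_\mathrm{opt}^{(x)}$ and $\Tr[C'(\Sigma\otimes\1_\mcF)]\leq\epsilon_\mathrm{opt}^{(x)}$. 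Plugging this $C'$ into the construction furnishes the missing witness with $\max\{\alpha_{C'},1-F_\mathrm{UI}(C')\}\leq\epsilon_\mathrm{opt}^{(x)}$. The point is that the truncation map inherits \emph{both} performance figures from the optimal $D$-dimensional protocol at once; no orthogonality is needed, and none holds.
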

\begin{proof}
    See Appendix \ref{appendix_sec:proof_optimal_construction} for the proof.
\end{proof}

Since the figure of merits shown in Theorems \ref{thm:sequential_isometry_adjointation} and \ref{thm:parallel_isometry_adjointation} and Corollaries \ref{cor:isometry_inversion} and \ref{cor:universal_error_detection} do not depend on $D$, we can show that the optimal performances do not depend on $D$.

\begin{Cor}
\label{cor:independence}
    For $D>d$ and $x\in\{\mathrm{PAR}, \mathrm{SEQ}\}$, the following relations hold:
    \begin{align}
        \epsilon_{\mathrm{opt}}^{(x)}(d,D,n) &= \epsilon_{\mathrm{opt}}^{(x)}(d,d+1,n),\\
        F_{\mathrm{opt}}^{(x)}(d,D,n) &= F_{\mathrm{opt}}^{(x)}(d,d,n),\\
        p_{\mathrm{opt}}^{(x)}(d,D,n) &= p_{\mathrm{opt}}^{(x)}(d,d,n),\\
        \alpha_\mathrm{opt}^{(x)}(d,D,n) &= \alpha_\mathrm{opt}^{(x)}(d,d+1,n).
    \end{align}
\end{Cor}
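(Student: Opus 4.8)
The plan is to reduce everything to Theorem~\ref{thm:optimal_construction}, which guarantees that the optimal parallel and sequential performances are attained by the explicit conversions of Theorems~\ref{thm:sequential_isometry_adjointation} and \ref{thm:parallel_isometry_adjointation} and Corollaries~\ref{cor:isometry_inversion} and \ref{cor:universal_error_detection}. Having restricted to these constructions, I would inspect the closed-form figures of merit and observe that every quantity appearing in them is a function of $d$ and $n$ alone. Concretely, the decode error $\eta = 1 - F_\mathrm{UI}$ (resp.\ $1 - F_\mathrm{est}$) and the success probability $p_\mathrm{UI}$ are performances of a $d$-dimensional unitary inversion (resp.\ estimation) protocol acting on $d$-dimensional wires, while the detection error $\alpha_{C'} = \Tr[\Tr_\mcF(C')\,\Sigma]$ of Eq.~\eqref{eq:def_alpha_C'} pairs the $d$-dimensional comb $C'$ with the operator $\Sigma$ of Eq.~\eqref{eq:def_Sigma}, which is built solely from hook lengths, the dimensions $m_\mu^{(d)}$, the operators $E^{\lambda,d}$, $E^{\mu,d}$, and the sets $\lambda +_d \square$, all determined by $d$. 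The same holds for $\alpha_\phi$ in Theorem~\ref{thm:parallel_isometry_adjointation} and for $\alpha^{(\mathrm{SEQ})}$, $\alpha^{(\mathrm{PAR})}$ in Corollary~\ref{cor:universal_error_detection}. Thus the entire optimization over the underlying $d$-dimensional data, whether minimizing $\max\{\alpha,\eta\}$ for isometry adjointation or maximizing $F_\mathrm{UI}$ / $p_\mathrm{UI}$ for isometry inversion, is literally $D$-independent.

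For the fidelity and probability I would take $D = d$ as the reference point. By Theorem~\ref{thm:optimal_construction} the optimal isometry-inversion performance at dimension $D$ equals the best value the construction can attain, which by Corollary~\ref{cor:isometry_inversion} is exactly the optimal $d$-dimensional unitary-inversion fidelity $F_\mathrm{UI}$ (resp.\ probability $p_\mathrm{UI}$). Since $\isometry{d}{d} = \U(d)$, isometry inversion at $D = d$ coincides with unitary inversion, so this common optimum is precisely $F_{\mathrm{opt}}^{(x)}(d,d,n)$ (resp.\ $p_{\mathrm{opt}}^{(x)}(d,d,n)$), giving the two middle identities for every $D > d$.

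For the diamond-norm error $\epsilon = \max\{\alpha,\eta\}$ and the detection error $\alpha$ I would instead use $D = d+1$. The detection terms involve the set $\lambda + \square \setminus \young{d}{n+1}$ of Young diagrams of depth exactly $d+1$, which is nonempty precisely when $D \geq d+1$, i.e.\ when $\Im V_\mathrm{in}^{\otimes(n+1)}$ has a nontrivial orthogonal complement and the detection task is genuinely present. For every such $D$ the formulas are identical, so the optima $\epsilon_{\mathrm{opt}}^{(x)}(d,D,n)$ and $\alpha_{\mathrm{opt}}^{(x)}(d,D,n)$ are constant in $D$ on $\{d+1, d+2, \ldots\}$ and in particular equal their values at $D = d+1$.

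The routine verifications above carry the argument, so what I would treat as the main obstacle is the bookkeeping of the reference dimension. One must check that the detection-type quantities ($\epsilon$, $\alpha$) degenerate at $D = d$, where $\Pi_{\Im V_\mathrm{in}} = \1_d$ makes the outcome $I$ certain and the error trivial, which is exactly why the fidelity and probability are anchored at $D = d$ whereas the error and detection figures of merit must be anchored one dimension higher at $D = d+1$. Beyond this, I would confirm that the conversion of Section~\ref{subsubsec:protocol_conversion} keeps the protocol inside its declared class (parallel or sequential) for every $D$, so that the optimizations being compared across different $D$ genuinely range over the same $d$-dimensional set of underlying protocols.
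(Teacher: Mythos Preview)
Your proposal is correct and follows essentially the same approach as the paper: invoke Theorem~\ref{thm:optimal_construction} to restrict to the explicit constructions, then observe that the figures of merit in Theorems~\ref{thm:sequential_isometry_adjointation}, \ref{thm:parallel_isometry_adjointation} and Corollaries~\ref{cor:isometry_inversion}, \ref{cor:universal_error_detection} depend only on $d$ and $n$. Your additional care in distinguishing the anchor points ($D=d$ for $F$ and $p$, $D=d+1$ for $\epsilon$ and $\alpha$) and in noting the degeneration of the detection tasks at $D=d$ makes explicit what the paper leaves implicit in the statement of the corollary.
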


Using Theorem \ref{thm:optimal_construction}, we analyze the optimal protocols for isometry inversion, universal error detection, and isometry adjointation in the following subsections.

\subsubsection{Isometry inversion}
\label{subsec:optimal_isometry_inversion}
The optimal success probability and fidelity of isometry inversion are given by those of unitary inversion when we use parallel or sequential protocols, as shown in Corollary \ref{cor:independence}.  It is already shown in Ref.~\cite{yoshida2023universal} for the parallel protocol, but the sequential protocol case is newly shown in this work, which is conjectured in Ref.~\cite{yoshida2023universal}.

To investigate the generalization of Corollary \ref{cor:independence} for general protocols including indefinite causal order, we calculate the optimal probability or worst-case channel fidelity numerically (see Section \ref{subsec:sdp} for the detail).
Numerical results show that a similar equation does not hold for general protocols including indefinite causal order, i.e., $p_\mathrm{opt}^{(\mathrm{GEN})}(d,D,n) < p_\mathrm{opt}^{(\mathrm{GEN})}(d,d,n)$ or $F_\mathrm{opt}^{(\mathrm{GEN})}(d,D,n) < F_\mathrm{opt}^{(\mathrm{GEN})}(d,d,n)$ hold for some cases (see Fig.~\ref{fig:isometry_inversion_general_optimal}).
This behavior is compatible with the fact that the composition of a general quantum supermap with a quantum comb does not yield a valid quantum supermap in general \cite{guerin2019composition}, so the construction of isometry inversion protocols shown in Fig.~\ref{fig:deterministic_isometry_adjointation_sequential}~(c) cannot be applied for general protocols.

\begin{figure}
    \centering
    \includegraphics[width=\linewidth]{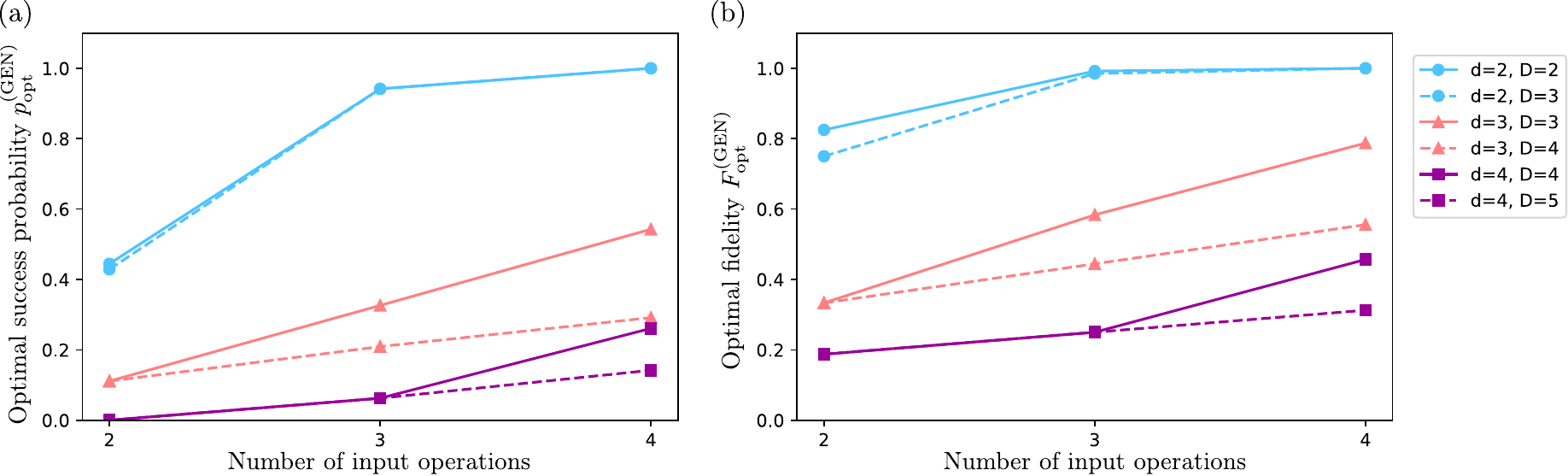}
    \caption{(a) Optimal success probability $p_\mathrm{opt}^{(\mathrm{GEN})}$ and (b) optimal worst-case channel fidelity $F_\mathrm{opt}^{(\mathrm{GEN})}$ of isometry inversion among general protocols including indefinite causal order. Solid lines represent the case $D=d$ (unitary inversion), and dashed lines represent the case $D=d+1$ for $d=2$ ({\color[rgb]{0.30196,0.76862,1} $\bullet$}), $d=3$ ({\color[rgb]{1,0.50196,0.50980} $\blacktriangle$}), and $d=4$ ({\color[rgb]{0.6,0,0.6} $\blacksquare$}).  The optimal values shown here are obtained by numerical calculations, and numerical values are shown in Appendix \ref{appendix_sec:numerical_results}.}
    \label{fig:isometry_inversion_general_optimal}
\end{figure}

\subsubsection{Universal error detection}
The optimal performance of parallel protocol $\alpha_\mathrm{opt}^{(\mathrm{PAR})}(d,D,n)$ is given as follows.

\begin{Thm}
\label{thm:optimal_parallel_error_detection}
    \begin{align}
        \alpha_\mathrm{opt}^{(\mathrm{PAR})}(d,D,n) = {1 \over d+k+1}\left(d+{d-l \over d+k-l}\right) = {d^2 \over n} + O(d^2 n^{-2}),
    \end{align}
    where $k$ and $l$ are given by $n=kd+l$ ($k\in \mathbb{Z}, 0\leq l<d$).
\end{Thm}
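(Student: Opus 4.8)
The plan is to reduce the problem to a purely combinatorial minimization over Young diagrams and then solve that minimization with a concavity/majorization argument on hook lengths.

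First I would combine Corollary~\ref{cor:universal_error_detection} with the optimality statement of Theorem~\ref{thm:optimal_construction}, so that $\alpha_\mathrm{opt}^{(\mathrm{PAR})}(d,D,n)=\min_{\phi'}\alpha^{(\mathrm{PAR})}$ with $\alpha^{(\mathrm{PAR})}=\sum_{\lambda\in\young{d}{n}}\Tr(\phi'\Pi_\lambda^{(d)})\,f(\lambda)$, where $f(\lambda)\coloneqq\sum_{\mu\in\lambda+_d\square}\mathrm{hook}(\lambda)/\mathrm{hook}(\mu)$. Since $\sum_{\lambda}\Pi_\lambda^{(d)}=\1$ on $(\CC^d)^{\otimes n}$, the coefficients $p_\lambda\coloneqq\Tr(\phi'\Pi_\lambda^{(d)})$ form a probability distribution on $\young{d}{n}$, and every such distribution is attainable by a suitable $\phi'$ (e.g.\ one supported in a single isotypic block). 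As a convex combination is minimized at an extreme point, this gives $\alpha_\mathrm{opt}^{(\mathrm{PAR})}(d,D,n)=\min_{\lambda\in\young{d}{n}}f(\lambda)$.

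Next I would evaluate $f(\lambda)$. From $d_\lambda=n!/\mathrm{hook}(\lambda)$ and $d_\mu=(n+1)!/\mathrm{hook}(\mu)$ in Eq.~\eqref{eq:dim_hook} one gets $\mathrm{hook}(\lambda)/\mathrm{hook}(\mu)=d_\mu/[(n+1)d_\lambda]$, and the $\mfS_{n+1}$ induction (branching) rule $\sum_{\mu\in\lambda+\square}d_\mu=(n+1)d_\lambda$ yields $\sum_{\mu\in\lambda+\square}\mathrm{hook}(\lambda)/\mathrm{hook}(\mu)=1$. Hence $f(\lambda)=1$ whenever $\lambda$ has depth $<d$, while if $\lambda$ has depth exactly $d$ the only diagram excluded from $\lambda+_d\square$ is the diagram $\mu^\ast$ obtained by adding a box in row $d+1$, so $f(\lambda)=1-\mathrm{hook}(\lambda)/\mathrm{hook}(\mu^\ast)$. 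Adding that box raises by one exactly the hook lengths of the $d$ boxes in the first column and contributes a new box of hook length $1$, giving $\mathrm{hook}(\lambda)/\mathrm{hook}(\mu^\ast)=\prod_{i=1}^{d}h_i/(h_i+1)$ with $h_i\coloneqq\lambda_i+d-i$ the first-column hook lengths. Thus the minimizer has depth $d$ and $f(\lambda)=1-\prod_{i=1}^{d}h_i/(h_i+1)$, so it remains to \emph{maximize} $P(\lambda)\coloneqq\prod_{i=1}^{d}h_i/(h_i+1)$.

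The main obstacle is this last optimization, because it is constrained to \emph{distinct} integers. The map $\lambda\mapsto(h_1,\dots,h_d)$ is a bijection onto $d$-element sets of distinct positive integers, and $\sum_i h_i=\sum_i(\lambda_i+d-i)=n+\tfrac{d(d-1)}{2}$ is fixed. Since $g(h)\coloneqq\log\tfrac{h}{h+1}$ is strictly concave (its increments $g(h+1)-g(h)$ are strictly decreasing), a transfer $h_i\mapsto h_i-1,\ h_j\mapsto h_j+1$ with $h_i\ge h_j+2$ strictly increases $\sum_i g(h_i)$ at fixed sum; equivalently $\sum_i g(h_i)$ is Schur-concave, so the maximizer is the unique most-balanced distinct-integer set, which I would show is the $d+1$ consecutive integers $k,\dots,k+d$ with the single value $k+d-l$ removed. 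I expect the delicate point to be keeping each exchange collision-free (preserving distinctness and positivity, hence validity as a Young diagram); the clean route is to verify directly that this ``interval minus one point'' set is the unique distinct-integer $d$-set of the required sum lying in $d+1$ consecutive values, and then argue any other set can be strictly improved. A telescoping product then gives $P=\tfrac{k}{k+d-l}\cdot\tfrac{k+d-l+1}{k+d+1}$, and simplifying $1-P$ produces $\tfrac{1}{d+k+1}\!\left(d+\tfrac{d-l}{d+k-l}\right)$. Finally, writing the closed form as $\tfrac{kd+(d-l)(d+1)}{(k+d-l)(k+d+1)}$ and substituting $k=(n-l)/d$, the numerator behaves like $kd$ and the denominator like $k^2$, so the value is $d/k+O(d^2n^{-2})=d^2/(n-l)+O(d^2n^{-2})=\tfrac{d^2}{n}+O(d^2n^{-2})$, establishing the claimed $\Theta(d^2/n)$ scaling and completing the proof.
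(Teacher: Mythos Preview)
Your proof is correct and follows essentially the same route as the paper: both reduce to maximizing $\prod_{i=1}^{d}h_i/(h_i+1)$ over depth-$d$ Young diagrams and use an exchange (equivalently, Schur-concavity) argument to identify the balanced diagram of Eq.~\eqref{eq:def_lambda} as the optimizer. Your reparametrization via first-column hook lengths $h_i$ and the telescoping evaluation of the final product are a bit slicker than the paper's direct computation in the $\lambda_i$, but the underlying argument is identical.
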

\begin{proof}
    See Appendix \ref{appendix_sec:optimal_parallel_error_detection} for the proof.
\end{proof}
From Theorem \ref{thm:optimal_parallel_error_detection} and the result in Ref.~\cite{chabaud2018optimal}, we can show the following property on the scaling of $\alpha_\mathrm{opt}^{(x)}(d,D,n)$ with respect to $n$.

\begin{Cor}
    \begin{align}
        \alpha_\mathrm{opt}^{(x)}(d,D,n)= \Theta(n^{-1}) \quad \forall D>d,  \forall x\in\{\mathrm{PAR}, \mathrm{SEQ}, \mathrm{GEN}\}
    \end{align}
    holds for an arbitrary fixed value of $d$.
\end{Cor}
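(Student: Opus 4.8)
The plan is to prove the matching bounds $\alpha_\mathrm{opt}^{(x)}(d,D,n)=O(n^{-1})$ and $\alpha_\mathrm{opt}^{(x)}(d,D,n)=\Omega(n^{-1})$ separately, for every $x\in\{\mathrm{PAR},\mathrm{SEQ},\mathrm{GEN}\}$ and every $D>d$ with $d$ fixed. The upper bound is immediate from the nesting of protocol classes: since every parallel protocol is sequential and every sequential protocol is general, minimizing the error over a larger class can only help, so $\alpha_\mathrm{opt}^{(\mathrm{GEN})}(d,D,n)\leq\alpha_\mathrm{opt}^{(\mathrm{SEQ})}(d,D,n)\leq\alpha_\mathrm{opt}^{(\mathrm{PAR})}(d,D,n)$. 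Theorem~\ref{thm:optimal_parallel_error_detection} evaluates the right-hand side as $d^2/n+O(d^2n^{-2})$, which is $O(n^{-1})$ for fixed $d$, giving the upper bound for all three classes simultaneously.

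For the lower bound the strategy is to reduce to the $d=1$ programmable projective measurement of Ref.~\cite{chabaud2018optimal}, whose optimal error $1/(n+1)$ holds even for general protocols: when $d=1$ the input operations have a trivial one-dimensional input, so parallel, sequential, and general access all reduce to processing $n$ copies of the unknown state and no causal structure can be exploited. The parallel case is already settled directly by Theorem~\ref{thm:optimal_parallel_error_detection}, since $\alpha_\mathrm{opt}^{(\mathrm{PAR})}=d^2/n+O(d^2n^{-2})=\Omega(n^{-1})$. For the other classes I would use the following reduction. Fix a unit vector $\ket{u}\in\CC^d$ and an isometry operator $W_\phi\in\isometry{1}{D'}$ preparing an unknown state $\ket{\phi}=W_\phi\ket{0}$, and consider the product isometry $V_\phi=F\circ(\1_d\otimes W_\phi)\in\isometry{d}{D}$, where $F$ is a fixed isometry embedding $\CC^d\otimes\CC^{D'}$ into $\CC^D$; this requires $D\geq dD'$, so $D'=2$ covers every $D\geq 2d$. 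A single call to $\map{V}_\phi$ acts as $\sigma\mapsto F(\sigma\otimes\ketbra{\phi}{\phi})F^\dagger$, hence is simulated exactly by the fixed channel that leaves the protocol's input to the $\CC^d$ factor untouched, appends one fresh copy of $\ket{\phi}$, and applies $F$. Crucially this uses no coherent control, only a fixed sandwich on each slot, so it preserves the protocol class. Mapping the test state $\ket{\chi}\in\CC^{D'}$ to $F(\ket{u}\otimes\ket{\chi})$ then converts any $(d,D)$ protocol into a $d=1$ protocol consuming exactly $n$ copies of $\ket{\phi}$, preserving the one-sided error condition and with false-positive error at most $\alpha_\mathrm{opt}^{(x)}(d,D,n)$. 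Chabaud's bound yields $\alpha_\mathrm{opt}^{(x)}(d,D,n)\geq 1/(n+1)=\Omega(n^{-1})$ for all $x$ whenever $D\geq 2d$. For the sequential case this already suffices for every $D>d$, because Corollary~\ref{cor:independence} makes $\alpha_\mathrm{opt}^{(\mathrm{SEQ})}(d,D,n)$ independent of $D$, so it equals $\alpha_\mathrm{opt}^{(\mathrm{SEQ})}(d,2d,n)\geq\alpha_\mathrm{opt}^{(\mathrm{GEN})}(d,2d,n)\geq 1/(n+1)$.

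The remaining and hardest case is the general-protocol lower bound for $d<D<2d$ (in particular $D=d+1$), which $D$-independence does not cover for $x=\mathrm{GEN}$. Here the clean reduction breaks down: a product isometry needs output dimension at least $2d$, and an isometry whose image genuinely mixes $d-1$ fixed directions with one unknown direction cannot be realized from copies of the unknown state by fixed pre- and post-processing, since that would amount to coherently controlling an unknown operation. A convenient reduction is that $\alpha_\mathrm{opt}^{(\mathrm{GEN})}(d,D,n)$ is nondecreasing in $D$ (restricting a $(d,D{+}1)$ protocol to isometries with image inside a fixed $\CC^{D}$ gives a valid $(d,D)$ protocol of no larger error), so it suffices to lower-bound the single worst case $D=d+1$. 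I expect this to be the main obstacle, and the natural route is to bypass simulation entirely: any general protocol's output is a fixed superinstrument linked with the Choi states $\dketbra{V_\mathrm{in}}^{\otimes n}$, so the attainable error is controlled by the distinguishability of these states. Averaging over Haar-random $V_\mathrm{in}\in\isometry{d}{d+1}$ and exploiting the Schur-Weyl decomposition of $(\CC^d)^{\otimes n}$ and $(\CC^{d+1})^{\otimes n}$ from Section~\ref{sec:sw_duality}---the $d=1$ estimate of Ref.~\cite{chabaud2018optimal} lifted to general $d$---should deliver the matching $\Omega(n^{-1})$ bound with a $d$-dependent constant, completing the proof.
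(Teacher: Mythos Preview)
Your upper bound and the lower bounds for $x\in\{\mathrm{PAR},\mathrm{SEQ}\}$ are correct. The genuine gap is the $x=\mathrm{GEN}$ lower bound when $d<D<2d$: your product-isometry reduction $V_\phi=F\circ(\1_d\otimes W_\phi)$ needs $D\geq dD'\geq 2d$, and for the residual range you only sketch a Schur--Weyl computation without carrying it through. Since you yourself call this ``the remaining and hardest case'' and leave it open, the proposal is incomplete precisely there.

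The paper's proof is much shorter and makes no case split on $D$ at all. It does not try to simulate an isometry from state copies via a product structure; it simply observes that each $\ket{\psi}\in\CC^D$ can be completed to an isometry $V_\psi=\ketbra{\psi}{0}+\sum_{i=1}^{d-1}\ketbra{\psi_i}{i}\in\isometry{d}{D}$, asserts in one line that a universal error detection protocol for $\isometry{d}{D}$ therefore simulates programmable projective measurement for $\ket{\psi}\in\CC^D$, and invokes the $1/(n+1)$ bound of Ref.~\cite{chabaud2018optimal} directly. Your product-isometry construction, the monotonicity-in-$D$ reduction, and the proposed Schur--Weyl detour are all bypassed. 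It is worth noting, however, that the coherent-control obstruction you raise applies equally to this $V_\psi$ (one cannot implement $\map{V}_\psi$ coherently from a single copy of $\ket{\psi}$ and fixed gates), so the paper's ``simulate'' step is itself stated rather tersely; your extra care is not misplaced, even if the paper treats the reduction as immediate.
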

\begin{proof}
    From Theorem \ref{thm:optimal_parallel_error_detection}, we obtain
    \begin{align}
        \alpha_\mathrm{opt}^{(x)}(d,D,n) \leq \alpha_\mathrm{opt}^{(\mathrm{PAR})}(d,D,n) = O(n^{-1}).
    \end{align}
    The Hilbert space $\CC^D$ can be embedded onto $\isometry{d}{D}$ by identifying $\ket{\psi}\in\CC^D$ with $V = \ketbra{\psi}{0} + \sum_{i=1}^{d-1} \ketbra{\psi_i}{i} \in \isometry{d}{D}$ for a set of orthonormal vectors $\{\ket{\psi_i}\}_{i=1}^{d-1}$ satisfying $\braket{\psi_i}{\psi}=0$ for all $i\in\{1, \ldots, d-1\}$.  Therefore, a universal error detection protocol for $V\in\isometry{d}{D}$ can simulate a programmable projective measurement for $\ket{\psi}\in\CC^D$, which leads to
    \begin{align}
        \alpha_\mathrm{opt}^{(x)}(d,D,n)\geq {1\over n+1} = \Omega(n^{-1}).
    \end{align}
    Thus, we obtain $\alpha_\mathrm{opt}^{(x)}(d,D,n) = \Theta(n^{-1}).$
\end{proof}

\subsubsection{Isometry adjointation}
The optimal scaling of the approximation error of parallel isometry adjointation is given as follows.
\begin{Thm}
\label{thm:optimal_parallel_isometry_adjointation}
    \begin{align}
        \epsilon_\mathrm{opt}^{(\mathrm{PAR})}(d,D,n) = \Theta(d^2 n^{-1}).
    \end{align}
\end{Thm}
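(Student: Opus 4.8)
The plan is to prove the $\Theta$ statement by matching an upper bound $\epsilon_\mathrm{opt}^{(\mathrm{PAR})}(d,D,n) = O(d^2 n^{-1})$ and a lower bound $\epsilon_\mathrm{opt}^{(\mathrm{PAR})}(d,D,n) = \Omega(d^2 n^{-1})$, taking $d$ fixed and $n\to\infty$. Both bounds can be read off from results already established: the explicit estimation-based construction of Theorem \ref{thm:parallel_isometry_adjointation} supplies the upper bound, while the tight universal-error-detection scaling of Theorem \ref{thm:optimal_parallel_error_detection} supplies the lower bound. By Corollary \ref{cor:independence} neither figure of merit depends on $D$, so it suffices to work at any $D>d$.

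For the upper bound I would invoke Theorem \ref{thm:parallel_isometry_adjointation}, which produces a parallel isometry adjointation protocol with $\epsilon = \max\{\alpha_\phi, 1-F_\mathrm{est}\}$ from any covariant unitary estimation scheme. Feeding in the asymptotically optimal unitary estimation of Ref.~\cite{yang2020optimal} (or the dedicated qubit estimation when $d=2$), the decode error obeys $1-F_\mathrm{est} = \tfrac{3\ln 2}{2}\, d^2 n^{-1} + O(d^4 n^{-2}, d n^{-1})$, and the detection error $\alpha_\phi$ is controlled by the same program state; for fixed $d$ both are $O(d^2 n^{-1})$. Taking the maximum gives $\epsilon_\mathrm{opt}^{(\mathrm{PAR})} \le \epsilon = O(d^2 n^{-1})$.

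For the lower bound, which is the crux, I would first use Theorem \ref{thm:optimal_construction} to write $\epsilon_\mathrm{opt}^{(\mathrm{PAR})} = \min \max\{\alpha_\phi, 1-F_\mathrm{est}\}$, the minimum being over admissible program states $\phi$ and estimation data, and then discard the decode term via $\max\{\alpha_\phi, 1-F_\mathrm{est}\} \ge \alpha_\phi$. The key observation is that $\alpha_\phi$ depends on $\phi$ only through the reduced state $\phi' = \Tr_\mcA(\phi)$, and through exactly the functional appearing in Corollary \ref{cor:universal_error_detection}; hence minimizing $\alpha_\phi$ over $\phi$ reproduces the optimal parallel universal error detection error. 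By optimality (Theorem \ref{thm:optimal_construction}) and Theorem \ref{thm:optimal_parallel_error_detection} this minimum equals $\alpha_\mathrm{opt}^{(\mathrm{PAR})}(d,D,n) = \tfrac{1}{d+k+1}\bigl(d+\tfrac{d-l}{d+k-l}\bigr) = d^2 n^{-1} + O(d^2 n^{-2})$, so $\epsilon_\mathrm{opt}^{(\mathrm{PAR})} \ge \alpha_\mathrm{opt}^{(\mathrm{PAR})} = \Omega(d^2 n^{-1})$. Equivalently, and without routing through the construction, one can argue the lower bound directly at the level of the diamond norm: discarding the quantum output of any parallel protocol is a post-processing channel under which the diamond distance cannot increase, so $\epsilon$ dominates the operational distance of the induced detection POVM from $\{\Pi_{\Im V_\mathrm{in}}, \1-\Pi_{\Im V_\mathrm{in}}\}$, and the one-sided error condition \eqref{eq:one-sided_error_condition_isometry_adjointation} certifies that this POVM is a valid universal error detector, whence $\epsilon \ge \alpha_\mathrm{opt}^{(\mathrm{PAR})}$.

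Combining the two bounds yields $\epsilon_\mathrm{opt}^{(\mathrm{PAR})}(d,D,n) = \Theta(d^2 n^{-1})$. The genuine difficulty is entirely packaged inside Theorem \ref{thm:optimal_parallel_error_detection}: the nontrivial fact is that the detection error cannot be pushed below order $d^2 n^{-1}$, and once this tight scaling is granted, both inequalities here are short. The one point deserving care is verifying that $\alpha_\phi$ and the universal-error-detection error are \emph{literally} the same functional of the program state, so that the lower bound comes out at the tight rate $\Omega(d^2 n^{-1})$ rather than the weaker $\Omega(n^{-1})$ available from the programmable-projective-measurement bound; this is confirmed by directly comparing the expression in Theorem \ref{thm:parallel_isometry_adjointation} with that in Corollary \ref{cor:universal_error_detection}.
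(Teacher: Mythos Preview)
Your proposal is correct and follows essentially the same approach as the paper: the lower bound comes from $\epsilon_\mathrm{opt}^{(\mathrm{PAR})} \geq \alpha_\mathrm{opt}^{(\mathrm{PAR})}$ by discarding the output state (your second route is the paper's argument verbatim, and your first route is an equivalent reformulation through Theorem~\ref{thm:optimal_construction}), followed by Theorem~\ref{thm:optimal_parallel_error_detection}; the upper bound comes from the explicit construction of Section~\ref{subsubsec:parallel_isometry_adjointation_protocol}. One minor slip: for the Yang et al.\ estimation scheme it is $\alpha_\phi$, not $1-F_\mathrm{est}$, that carries the leading $\tfrac{3\ln 2}{2}\,d^2 n^{-1}$ term (the decode error is actually $O(d^4 n^{-2})$), but since both are $O(d^2 n^{-1})$ for fixed $d$ your conclusion is unaffected.
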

\begin{proof}
    First, $\epsilon_\mathrm{opt}^{(\mathrm{PAR})}(d,D,n)\geq \alpha_\mathrm{opt}^{(\mathrm{PAR})}$ holds since any isometry adjointation protocol can be converted to an error detection protocol by discarding the output state of an isometry adjointation protocol.  Thus, we obtain $\epsilon_\mathrm{opt}^{(\mathrm{PAR})}(d,D,n) = \Omega(d^2 n^{-1})$ from Theorem \ref{thm:optimal_parallel_error_detection}.  On the other hand, from the construction shown in Section \ref{subsubsec:parallel_isometry_adjointation_protocol}, we see that $\epsilon_\mathrm{opt}^{(\mathrm{PAR})}(d,D,n) = O(d^2 n^{-1})$ (see Appendix \ref{appendix_sec:optimal_parallel_isometry_adjointation}).  Thus, we obtain $\epsilon_\mathrm{opt}^{(\mathrm{PAR})}(d,D,n) = \Theta(d^2 n^{-1})$.
\end{proof}

\subsection{Numerical results for the optimal protocols}
\label{subsec:sdp}
From numerical calculations, we investigate the optimal protocols for isometry inversion, universal error detection, and isometry adjointation. We represent the protocols implementing isometry inversion, universal error detection, and isometry adjointation by their Choi operators and formulate the optimization of the figure of merits within the possible protocols as SDP.  We utilize a $\U(d) \times \U(D)$ symmetry of the Choi operator (Theorem \ref{thm:unitary_group_symmetry}) to simplify the SDP, which is a similar technique presented in Ref.~\cite{yoshida2023reversing} (see also Ref.~\cite{grinko2024linear}).
See Appendix \ref{appendix_sec:sdp} for the details of the derivation of the SDP.
We calculate the derived SDP in \textsc{MATLAB} \cite{matlab} using the interpreter \textsc{CVX} \cite{cvx, gb08} with the solvers \textsc{SDPT3} \cite{sdpt3,toh1999sdpt3,tutuncu2003solving}, \textsc{SeDuMi} \cite{sedumi} and \textsc{MOSEK} \cite{mosek}.   Group-theoretic calculations to write down the SDP are done with \textsc{SageMath} \cite{sagemath}. 
See Appendix \ref{appendix_sec:numerical_results} for the numerical results. 
All codes are available at Ref.~\cite{github} under the MIT license \cite{mit_license}.

The optimal performances for parallel or sequential protocols are calculated for the case of $D=d+1$ since they do not depend on $D$ (see Corollary \ref{cor:independence}), which is shown by the construction of protocols shown in Figs.~\ref{fig:deterministic_isometry_adjointation_sequential} and \ref{fig:deterministic_isometry_adjointation_parallel}.
Although the same construction is impossible for general protocols including indefinite causal order, we also see that the optimal performances for general protocols do not depend on $D$ as long as $D\geq d+1$ holds by checking the values for $D=d+1, \ldots, d+10$.
However, the maximum success probability (channel fidelity) of probabilistic exact (deterministic) isometry inversion for the case of $D=d$ (unitary inversion) is different from those for the case of $D=d+1$ (isometry inversion), as shown in Tables \ref{tab:comparison_probabilistic_isometry_inversion} and \ref{tab:comparison_deterministic_isometry_inversion} (corresponding to Fig.~\ref{fig:isometry_inversion_general_optimal} in Section \ref{subsec:optimal_isometry_inversion}).
We also see the advantage of indefinite causal order over sequential protocols in isometry inversion and universal error detection, but the advantage disappears in the case of isometry adjointation.

\section{Discussions on the potential applications}
\label{sec:application}
Finally, we discuss the potential applications of this work.
\subsection{Petz recovery map}
Petz recovery map is an almost optimal decoding channel from a noise channel and is widely used in approximate quantum error correction \cite{barnum2002reversing, ng2010simple}, quantum communication \cite{beigi2016decoding}, and information scrambling \cite{hayden2007black, nakayama2023petz, utsumi2024explicit}.
For a quantum channel $\mcN: \mcL(\mcA)\to \mcL(\mcB)$ and a quantum state $\sigma\in\mcL(\mcA)$, the Petz recovery map $\mcP^{\sigma, \mcN}: \mcL(\mcB)\to \mcL(\mcA)$ is defined by \cite{hayden2004structure}
\begin{align}
    \mcP^{\sigma, \mcN}(\cdot)\coloneqq \sigma^{1/2}\mcN^\dagger(\mcN(\sigma)^{-1/2} \cdot \mcN(\sigma)^{-1/2})\sigma^{1/2}.
\end{align}
An implementation of the Petz recovery map is shown in Ref.~\cite{gilyen2022quantum}, which requires
\begin{itemize}
    \item The quantum circuit of the block-encoding unitaries $U^{\sigma}$ and $U^{\mcN(\sigma)}$ of $\sigma$ and $\mcN(\sigma)$, respectively.
    \item The quantum circuit $U^{\mcN}: \mcE'\otimes \mcA \to \mcE\otimes \mcB$ such that $U^{\mcN}\ket{0}_{\mcE'} = V^{\mcN}$ is a Stinespring dilation of $\mcN$, i.e., $\Tr_{\mcE}[V^{\mcN} \cdot V^{\mcN\dagger}] = \mcN(\cdot)$.
\end{itemize}
The implementation shown in Ref.~\cite{gilyen2022quantum} composes the following three maps:
\begin{align}
\label{eq:sigmasigma}
    (\cdot) &\mapsto \mcN(\sigma)^{-1/2} (\cdot) \mcN(\sigma)^{-1/2},\\
\label{eq:Ndagger}
    (\cdot) &\mapsto \mcN^{\dagger}(\cdot),\\
\label{eq:NsigmaNsigma}
    (\cdot) &\mapsto \sigma^{1/2} (\cdot) \sigma^{1/2}.
\end{align}
The maps \eqref{eq:sigmasigma} and \eqref{eq:NsigmaNsigma} are implemented by using the quantum singular value transformation applied on $U^{\sigma}$ and $U^{\mcN(\sigma)}$.
The map \eqref{eq:Ndagger} is implemented by using the following equation:
\begin{align}
    \mcN^\dagger(\cdot) \propto \Tr_{\overline{\mcE}}\left[\bra{0}_{\mcE'} U^{\mcN\dagger} \left(\ketbra{\phi^+}_{\mcE\overline{\mcE}} \otimes \cdot\right) U^{\mcN} \ket{0}_{\mcE'}\right],
\end{align}
or equivalently,
\begin{align}
    \mcN^\dagger(\cdot) \propto \Tr_{\overline{\mcE}}\left[ V^{\mcN\dagger} \left(\ketbra{\phi^+}_{\mcE\overline{\mcE}} \otimes \cdot\right) V^{\mcN}\right].
\end{align}
Thus, using the inverse of $U^{\mcN}$, one can implement the map \eqref{eq:Ndagger}.
However, it is a too strong assumption that the quantum circuit of $U^{\mcN}$ (or the inverse of $U^{\mcN}$) is given in several situations.
For instance, the Petz recovery map can be used for the information recovery from an information scrambling dynamics followed by an erasure error \cite{hayden2007black, nakayama2023petz, utsumi2024explicit}, and $U^{\mcN}$ corresponds to the information scrambling dynamics in this case.
To implement the inverse of $U^{\mcN}$ in this case, it is required to implement the backward time evolution, which is only possible in few-body systems in the current experimental settings \cite{garttner2017measuring, li2017measuring, joshi2020quantum, blok2021quantum}.
One can also utilize $D$-dimensional unitary inversion for $D = \dim (\mcE \otimes \mcB)$, but it requires $\Theta(D^2)$ calls of $U^{\mcN}$ \cite{yoshida2023reversing, chen2024quantum, odake2024analytical}.
If we use isometry adjointation instead, one can implement $V^{\mcN \dagger}$ with approximation error $\epsilon$ by using $O(d^2/\epsilon)$ calls of $V^{\mcN}$ for $d=\dim \mcA$ (see Theorem~\ref{thm:parallel_isometry_adjointation}).
Therefore, for $D\gg d$, our protocol for isometry adjointation works better than unitary inversion.
\subsection{Universal programming of two-outcome projective measurement}
\label{subsec:universal_programming}

A universal programmable quantum processor implements quantum operations specified by a quantum state called the program state \cite{nielsen1997programmable}.
In general, for a set of quantum operations $\mathbb{S}\coloneqq \{\Lambda\}_{\Lambda}$, we define the task to implement a set of the program states $\{\phi_\Lambda\}_{\Lambda\in\mathbb{S}}$ and the quantum channel $\mcD$ such that
\begin{align}
    \mcD(\rho_\mathrm{in} \otimes \phi_\Lambda) = \Lambda(\rho_\mathrm{in})
\end{align}
for any input state $\rho_\mathrm{in}$ and any $\Lambda\in\mathbb{S}$.
For the set of probabilistic operations $\mathbb{S} = \{\mcM = \{\Lambda_a\}\}_{\mcM}$, we replace $\mcD$ with the probabilistic operation $\{\mcD_a\}$ and the task is to implement
\begin{align}
    \mcD_a(\rho_\mathrm{in} \otimes \phi_{\mcM}) = \Lambda_a(\rho_\mathrm{in}).
\end{align}
It is shown that an exact universal programmable quantum processor is impossible for unitary operations \cite{nielsen1997programmable} and for projective measurements \cite{duvsek2002quantum, fiuravsek2002universal}, and approximate or probabilistic implementations are investigated (e.g., Refs.~\cite{nielsen1997programmable, duvsek2002quantum, fiuravsek2002universal, fiuravsek2004probabilistic, dariano2005efficient, lewandowska2022storage} for projective or POVM measurements).
In the approximate implementation, the tradeoff between the size of the program state called the program cost, and the approximation error is investigated (e.g., Refs.~\cite{dariano2005efficient, perez2006optimality} for POVM measurements).
The program cost is defined by
\begin{align}
    C_\mathbb{S}\coloneqq \log \left[\dim \bigoplus_{\Lambda\in\mathbb{S}} \mathrm{Supp} (\phi_\Lambda)\right],
\end{align}
where $\mathrm{Supp} (\phi_\Lambda)$ is the support of $\phi_\Lambda$ defined by $\mathrm{Supp} (\phi_\Lambda)\coloneqq \mathrm{span}\{\ket{\psi} \mid \phi_\Lambda \ket{\psi}\neq 0\}$.
For the universal programming of POVM measurements, we consider the worst-case operational distance as the approximation error defined by \cite{dariano2005efficient, perez2006optimality}
\begin{align}
    \delta \coloneqq \inf_{\mcM\in\mathbb{S}} D_\mathrm{op}(\mcM', \mcM),
\end{align}
where $\mcM' = \{P'_a\}$ is the POVM defined by $\Tr[P'_a \cdot]\coloneqq \mcD_a(\cdot\otimes \phi_{\mcM})$ and $D_\mathrm{op}$ is the operational distance defined in Eq.~\eqref{eq:def_operational_distance}.

The parallel universal error detection protocol shown in Fig.~\ref{fig:deterministic_isometry_adjointation_parallel}~(d) can implement universal programming of rank-$d$ projective measurement, where $\mathbb{S}$ is given by
\begin{align}
    \mathbb{S} = \mathbb{S}_\mathrm{PVM}^{(d, D)}\coloneqq \{\mcM\coloneqq \{\Pi, \1_D-\Pi\} \mid \Pi \text{ is an orthogonal projector with rank } d\},
\end{align}
by using the program state $\phi_\mcM = \map{V}_\mathrm{in}^{\otimes n}(\phi')$ for $V_\mathrm{in}\in\isometry{d}{D}$ such that $\Pi = \Pi_{\Im V_\mathrm{in}}$.
Since any two-outcome POVM measurement can be realized as a probabilistic mixture of two-outcome projective measurements \cite{davies1976quantum, masanes2005extremal}, we can similarly implement universal programming of two-outcome POVM measurement, where $\mathbb{S}$ is given by
\begin{align}
    \mathbb{S} = \mathbb{S}_\mathrm{POVM}^{(D)}\coloneqq \{\mcM\coloneqq \{\Pi, \1_D-\Pi\} \mid 0\leq \Pi\leq \1_D\}.
\end{align}
We show the following Corollary from Theorem~\ref{thm:optimal_parallel_error_detection}:
\begin{Cor}
\label{cor:universal_programming}
    There exists a protocol for universal programming of rank-$d$ projective measurement with the approximation error $\delta$ and the program cost given by
    \begin{align}
    \label{eq:universal_programming_PVM}
        C_{\mathbb{S}_\mathrm{PVM}^{(d, D)}}\leq d(D-d)\log {\Theta(Dd)\over \delta}.
    \end{align}
    Similarly, there exists a protocol for universal programming of two-outcome POVM measurement with the approximation error $\delta$ and the program cost given by
    \begin{align}
        C_{\mathbb{S}_\mathrm{POVM}^{(D)}}\leq
        \begin{cases}
            {D^2\over 4} \log {\Theta(D^2)\over \delta} & (D \mathrm{\;is\;even})\\
            {D^2-1\over 4} \log {\Theta(D^2)\over \delta} & (D \mathrm{\;is\;odd})\\
        \end{cases}.
    \end{align}
\end{Cor}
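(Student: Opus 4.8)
The plan is to read off both bounds from Theorem~\ref{thm:optimal_parallel_error_detection} combined with the reduction described around Corollary~\ref{cor:universal_error_detection}: the parallel universal error detection protocol programmed by the state $\phi_\mathcal{M} = \map{V}_\mathrm{in}^{\otimes n}(\phi')$ implements rank-$d$ projective-measurement programming, and by definition $C_{\mathbb{S}_\mathrm{PVM}^{(d,D)}} = \log[\dim \bigoplus_{\mathcal{M}}\mathrm{Supp}(\phi_\mathcal{M})]$. Thus the whole argument reduces to choosing $\phi'$ so that the detection error meets $\delta$ while keeping the span of the program supports as small as possible, and then estimating that dimension. The detection error from Corollary~\ref{cor:universal_error_detection},
\[
\alpha^{(\mathrm{PAR})} = \sum_{\lambda\in\young{d}{n}} \Tr(\phi' \Pi_\lambda^{(d)}) \sum_{\mu\in\lambda+_d\square} \frac{\mathrm{hook}(\lambda)}{\mathrm{hook}(\mu)},
\]
depends on $\phi'$ only through the sector weights $\Tr(\phi' \Pi_\lambda^{(d)})$, so the optimum is reached by concentrating $\phi'$ on the single Young diagram $\lambda^\ast$ minimizing the inner sum; I may even take $\phi' = \ketbra{u_0}{u_0}_{\mcU_{\lambda^\ast}^{(d)}} \otimes \ketbra{s_0}{s_0}_{\mcS_{\lambda^\ast}}$, a single vector in that sector, without changing the error.

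With this choice, Eq.~(\ref{eq:decomposition_isometry}) gives $V_\mathrm{in}^{\otimes n}(\ket{u_0}\otimes\ket{s_0}) = (V_{\mathrm{in},\lambda^\ast}\ket{u_0})\otimes\ket{s_0}$, so $\mathrm{Supp}(\phi_\mathcal{M}) \subseteq \mcU_{\lambda^\ast}^{(D)} \otimes \ket{s_0}$ for every $V_\mathrm{in}\in\isometry{d}{D}$, and the full span has dimension at most $m_{\lambda^\ast}^{(D)}$; hence $C_{\mathbb{S}_\mathrm{PVM}^{(d,D)}} \leq \log m_{\lambda^\ast}^{(D)}$. Theorem~\ref{thm:optimal_parallel_error_detection} forces $n = \Theta(d^2/\delta)$ to attain error $\delta$, and since $\lambda^\ast$ is the near-rectangular diagram realizing $\alpha_\mathrm{opt}^{(\mathrm{PAR})}$ its rows obey $\lambda^\ast_i = \Theta(n/d) = \Theta(d/\delta)$.

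The crux is then to bound $m_{\lambda^\ast}^{(D)}$. I would rewrite Eq.~(\ref{eq:mult_hook}) via the Weyl dimension formula, factoring it into a $D$-independent part over box pairs with both row indices at most $d$ (a constant depending only on $d$, since the near-rectangular $\lambda^\ast$ has $\lambda^\ast_i-\lambda^\ast_j\in\{0,1\}$) times a part $\prod_{i=1}^{d}\prod_{j=d+1}^{D}\bigl(1+\lambda^\ast_i/(j-i)\bigr)$ consisting of exactly $d(D-d)$ factors, each bounded by $\lambda^\ast_i+1 = \Theta(d/\delta)$. This yields $m_{\lambda^\ast}^{(D)} \leq (\Theta(d/\delta))^{d(D-d)}$ up to the $d$-dependent constant, and taking logarithms gives $C_{\mathbb{S}_\mathrm{PVM}^{(d,D)}} \leq d(D-d)\log(\Theta(Dd)/\delta)$, where the extra $D$ inside the logarithm and the constant are absorbed because the statement is only an upper bound.

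For the POVM case I would invoke that every two-outcome POVM $\{\Pi,\1_D-\Pi\}$ with $0\le\Pi\le\1_D$ is a convex mixture of two-outcome projective measurements~\cite{davies1976quantum, masanes2005extremal}. Programming the mixture by a flagged direct sum of the rank-$d$ programs over $d\in\{0,\dots,D\}$, the convexity~(\ref{eq:POVM_distance_convexity}) of $D_\mathrm{op}$ keeps the error at $\delta$, while the program space is the direct sum of the rank-$d$ spaces, so its log-dimension is governed by the largest block. Since the exponent $d(D-d)$ is maximized at $d=\lfloor D/2\rfloor$ with value $D^2/4$ for even $D$ and $(D^2-1)/4$ for odd $D$, and the logarithmic base there is $\Theta(D/\delta)$ (absorbed into $\Theta(D^2)$), the claimed bound follows once one checks that the $D+1$ blocks and the per-rank constants contribute only lower-order additive terms. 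I expect the main obstacle to be precisely the dimension estimate of $m_{\lambda^\ast}^{(D)}$: one must produce the clean $d(D-d)$ exponent via the Weyl-formula factorization rather than the naive degree-$n$ polynomial bound on $m_\mu^{(D)}$, and simultaneously control $\lambda^\ast_i=\Theta(d/\delta)$ on the optimal diagram; a secondary point, already resolved above, is that restricting $\phi'$ to a single sector leaves $\alpha^{(\mathrm{PAR})}$ unchanged.
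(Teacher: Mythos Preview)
Your proposal is correct and follows essentially the same route as the paper: concentrate the probe state on the near-rectangular sector $\lambda^\ast$ identified in Theorem~\ref{thm:optimal_parallel_error_detection}, bound the program cost by $\log m_{\lambda^\ast}^{(D)}$, and estimate that dimension via the Weyl formula by isolating the $d(D-d)$ ``mixed'' factors $1+\lambda^\ast_i/(j-i)$; the POVM case is handled identically via the convex decomposition into rank-$d$ projectors with a classical flag. The only cosmetic differences are that the paper takes $\phi'$ maximally mixed on $\mcU_{\lambda^\ast}^{(d)}$ rather than a single vector (immaterial for both the error and the span) and bounds the middle factors more crudely by $k+D$ rather than your $\lambda^\ast_i+1$, which is why the paper's logarithm carries $\Theta(Dd)$ while yours would give $\Theta(d)$ before absorbing the $d$-dependent constant.
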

\begin{proof}
    See Appendix~\ref{appendix_sec:universal_programming} for the proof.
\end{proof}
Note that Ref.~\cite{dariano2005efficient} proposes universal programming of $D$-dimensional $D$-outcome projective measurements, where $\mathbb{S}$ is given by
\begin{align}
    \mathbb{S} = \mathbb{S}_\mathrm{PVM}^{(D)}\coloneqq \{\mcM\coloneqq \{\ketbra{\psi_1}, \ldots, \ketbra{\psi_D}\} \mid \braket{\psi_i}{\psi_j} = \delta_{ij} \quad \forall i,j\},
\end{align}
with the program cost $C_{\mathbb{S}_\mathrm{PVM}^{(D)}} \leq D(D-1)\log {\Theta(D^{5\over 2})\over \delta}$.
We obtain a tighter bound of the program cost in Eq.~\eqref{eq:universal_programming_PVM} by restricting the measurement to two-outcome projective measurements.
We conjecture the program cost~\eqref{eq:universal_programming_PVM} to be optimal from the following argument based on the parameter counting.
In Ref.~\cite{yang2020optimal}, the optimal program cost of unitary operations with $\nu$ real parameters is conjecture to be
\begin{align}
    C = {\nu \over 2} \log {C_{\nu, d}\over \epsilon},
\end{align}
where $\epsilon$ is the diamond-norm approximation error, and $C_{\nu,d}$ is a parameter independent of $\epsilon$.
By assuming this conjecture holds for the projective measurements by substituting $\epsilon$ with $\delta$, we can show that the program cost~\eqref{eq:universal_programming_PVM} is optimal.
The rank-$d$ projective measurement can be uniquely specified by a $d$-dimensional subspace of $\CC^d$, which can be specified by $2d(D-d)$ real parameters\footnote{The $d$-dimensional subspace of $\CC^d$ can be represented as $\mathrm{span}\{\ket{\psi_1}, \ldots, \ket{\psi_d}\}$ by using an orthonormal basis $\{\ket{\psi_1}, \ldots, \ket{\psi_d}\}\subset \CC^D$, i.e., the set satisfying $\braket{\psi_i}{\psi_j} = 1$ for all $i,j$.
This representation is unique up to the following action of $\U(d)$: $\ket{\psi_i}\mapsto \sum_{j} u_{ij} \ket{\psi_j}$ for $(u_{ij})_{i,j=1}^{d} \in \U(d)$.
The number of real parameters to specify an orthonormal basis $\{\ket{\psi_1}, \ldots, \ket{\psi_d}\}\subset \CC^D$ is given by $2Dd-d^2$, and the number of real parameters to specify an element of $\U(d)$ is given by $d^2$.
Therefore, the number of real parameters to specify a $d$-dimensional subspace of $\CC^d$ can be given as $2Dd-d^2-d^2 = 2d(D-d)$.}.
Therefore, by assuming the conjecture, we can show that the optimal program cost of rank-$d$ projective measurement is given by
\begin{align}
    C_{\mathbb{S}_\mathrm{PVM}^{(d,D)}} = d(D-d) \log {C_{d,D} \over \delta},
\end{align}
where $C_{d,D}$ is a parameter independent of $\delta$.

\section{Conclusion}
\label{sec:conclusion}
In this work, we investigate the universal transformation of isometry operations to explore the possible transformation in higher-order quantum computation. We define the task called isometry adjointation, which is to transform the isometry operation into its adjoint operation.
The isometry adjointation protocol is constructed by converting the unitary inversion protocol using the probabilistic quantum comb $\{\supermap{T}_I, \supermap{T}_O\}$. 
Using the idea of composition of quantum combs, the problem of designing isometry adjointation protocol reduces to designing a unitary inversion protocol, which is extensively studied in previous works \cite{chiribella2016optimal, sardharwalla2016universal, quintino2019probabilistic, quintino2019reversing, dong2021success, quintino2022deterministic, yoshida2023reversing, navascues2018resetting, trillo2020translating, trillo2023universal}.
In special cases, the isometry adjointation reduces to unitary inversion (transformation of unitary operations) and programmable projective measurement (transformation of pure states).
Due to such reducibility, isometry adjointation is a useful task to understand the difference between higher-order quantum computation and ``lower-order'' quantum computation (i.e., transformation of states), which are exhibited in several examples such as distinguishability \cite{acin2001statistical} and superreplication \cite{chiribella2013quantum2, dur2015deterministic, chiribella2015universal}, under a unified formulation.
We also construct isometry inversion and universal error detection protocols by discarding measurement outcome and output quantum state, respectively.

We show that our construction gives the optimal performances among all parallel or sequential protocols, which implies that the optimal performances among parallel or sequential protocols do not depend on the output dimension $D$ of the isometry operation.
We analyze the optimal performances of isometry adjointation and show that the optimal approximation error $\epsilon$ of the parallel isometry adjointation protocol is given by $\epsilon = \Theta(d^2 n^{-1})$, where $d$ is the input dimension of the isometry operation, and $n$ is the number of calls of the input operation.

To investigate the general protocols including indefinite causal order, we also provide the numerical results for the optimal performances of isometry adjointation, isometry inversion, and universal error detection using the semidefinite programming combined with the unitary group symmetry.
The optimal performances of probabilistic exact (deterministic) unitary inversion ($D=d$) are different from those of isometry inversion ($D\geq d+1$), which is compatible with the impossibility of the composition of a general supermap with a quantum comb.
However, numerical results also show that the optimal performances of isometry adjointation and universal error detection using the general protocol do not depend on $D$.
We also see the advantage of indefinite causal order protocols over sequential protocols in isometry inversion and universal error detection, but the advantage disappears in isometry adjointation.
We show the potential application of this work for the implementation of the Petz recovery map and universal programming of two-outcome projective measurement.

This work identifies that unitary inversion protocol can be converted to isometry adjointation with $D$-independent approximation error.
However, Ref.~\cite{yoshida2023universal} shows that isometry complex conjugation is impossible while unitary complex conjugation is possible \cite{miyazaki2019complex}, and conjectures that the optimal success probability of isometry transposition depends on $D$.
From this observation, it is not trivial whether a given universal transformation of unitary operations can be converted to the corresponding transformation of isometry operations.
We leave it an open problem to investigate whether such a conversion is possible for other universal transformations of unitary operations.

\begin{acknowledgments}
    We acknowledge M.~Studzi\'{n}ski, T.~M\l{}ynik, M.~T.~Quintino, M.~Ozols, D.~Grinko, P.~Taranto, and M.~Koashi for valuable discussions.
    We also thank T.~J. Volkoff for pointing out a typographical error in the previous manuscript.
    This work was supported by MEXT Quantum Leap Flagship Program (MEXT QLEAP) JPMXS0118069605, JPMXS0120351339, Japan Society for the Promotion of Science (JSPS) KAKENHI Grants No.~18K13467, 21H03394, 23KJ0734, FoPM, WINGS Program, the University of Tokyo, DAIKIN Fellowship Program, the University of Tokyo, and IBM-UTokyo lab.
    This work was also supported by JST Moonshot R\&D Grant Number JPMJMS226C.
\end{acknowledgments}

\appendix

\section{Construction of isometry adjointation protocols}

\subsection{Evaluation of the diamond-norm error of isometry adjointation}
\label{appendix_subsec:evaluation_diamond-norm}
We evaluate the diamond-norm error of isometry adjointation protocol satisfying Eq.~\eqref{eq:isometry_adjointation_detection_decode_error}.
First, $\supermap{C}[\map{V}_\mathrm{in}^{\otimes n}] - \map{V}_\mathrm{adjoint}$ decomposes into two completely positive maps $\Phi_1$ and $\Phi_2$ as
\begin{align}
    \supermap{C}[\map{V}_\mathrm{in}^{\otimes n}] - \map{V}_\mathrm{adjoint} &= \Phi_1 + \Phi_2,\\
    \Phi_1(\rho_\mathrm{in})&\coloneqq -{d^2\eta \over d^2-1} [\map{V}_\mathrm{in}^\dagger (\rho_\mathrm{in}) -\Tr(\Pi_{\Im V_\mathrm{in}}\rho_\mathrm{in})]\otimes \ketbra{0},\\
    \Phi_2(\rho_\mathrm{in})&\coloneqq \alpha \Tr[(\1_D-\Pi_{\Im V_\mathrm{in}})\rho_\mathrm{in}] \left({\1_d\over d}\otimes \ketbra{0} - {\1_d \over d} \otimes \ketbra{1}\right),
\end{align}
whose kernels do not intersect, i.e., $\ker \Phi_1 \cap \ker \Phi_2 = \emptyset$. Since the diamond norm $\|\Phi_1 + \Phi_2\|_\diamond$ is defined by
\begin{align}
    \|\Phi_1 + \Phi_2\|_\diamond\coloneqq \sup_{\psi\in \mcL(\mcP \otimes \mcA), \|\psi\|_1\leq 1} \|(\Phi_1 + \Phi_2)\otimes \1_\mcA(\psi)\|_1
\end{align}
using an auxiliary Hilbert space $\mcA$ and the trace norm $\|\cdot\|_1$, it is given by
\begin{align}
    &\|\Phi_1 + \Phi_2\|_\diamond\nonumber\\
    &=\sup_{\psi_i\in \ker\Phi_i \otimes \mcL(\mcA), \|\psi_1+\psi_2\|_1\leq 1} \|(\Phi_1 + \Phi_2)\otimes \1_\mcA(\psi_1+\psi_2)\|_1\\
    &= \max_{0\leq x\leq 1} [x\sup_{\substack{\psi_1\in \ker \Phi_1 \otimes \mcL(\mcA)\\ \|\psi_1\|_1\leq 1}} \|\Phi_1\otimes \1_\mcA(\psi_1)\|_1 + (1-x)\sup_{\substack{\psi_2\in \ker \Phi_2 \otimes \mcL(\mcA) \\ \|\psi_2\|_1\leq 1}} \|\Phi_1\otimes \1_\mcA(\psi_2)\|_1]\\
    &= \max\{\|\Phi_1\|_\diamond, \|\Phi_2\|_\diamond\}.
\end{align}
The diamond norm $\|\Phi_1\|_\diamond$ is given by
\begin{align}
    \|\Phi_1\|_\diamond
    &= \|\Phi_1\circ \map{V}_\mathrm{in}\|_\diamond\\
    &= \|\1_d - \map{D}_q\|_\diamond\\
    &= 2\eta,
\end{align}
where $\map{D}_q$ is the depolarizing channel
\begin{align}
    \map{D}_q(\rho)\coloneqq (1-q)\rho + q {\1_d\over d} \Tr(\rho),
\end{align}
$q$ is given by $q\coloneqq {d^2\over d^2-1}\eta$, and we utilize the fact that \cite{matsumoto2012input}
\begin{align}
    \|\1_d - \map{D}_q\|_\diamond = {2(d^2-1) \over d^2}q.
\end{align}
The diamond norm $\|\Phi_2\|_\diamond$ is given by
\begin{align}
    &\|\Phi_2\|_\diamond \nonumber\\
    &= \sup_{\psi\in \mcL(\mcP \otimes \mcA), \|\psi\|_1\leq 1} \|\alpha \Tr_{\mcP}[(\1_D-\Pi_{\Im V_\mathrm{in}}) \psi] \otimes ({\1_d\over d}\otimes \ketbra{0} - {\1_d \over d} \otimes \ketbra{1})\|_1 \\
    &= 2\alpha.
\end{align}
Thus, we obtain
\begin{align}
    \epsilon &\coloneqq {1\over 2} \sup_{V_\mathrm{in}\in\isometry{d}{D}} \|\supermap{C}[\map{V}_\mathrm{in}^{\otimes n}] - \map{V}_\mathrm{adjoint}\|_\diamond\\
    &= \max \{ \alpha, \eta\}.
\end{align}

\subsection{Clebsch-Gordan transform and dual Clebsch-Gordan transform}
\label{appendix_subsec:mixed_schur_transform}
In the next subsection, we construct the probabilistic quantum comb $\{\supermap{T}_I, \supermap{T}_O\}$ used for the construction of isometry adjointation protocols.
To this end, we first review the construction of the quantum Schur transform based on the concatenation of Clebsch-Gordan (CG) transforms.
Then, we show the dual CG transform, which was introduced to construct the mixed Schur transform.
Finally, we show a lemma on the dual CG transform, which will be used for the construction of the probabilistic quantum comb $\{\supermap{T}_I, \supermap{T}_O\}$ in the next subsection.

The quantum Schur transform can be implemented by using a series of CG transforms \cite{bacon2006efficient, bacon2007quantum}.
For the irreducible representation $\mcU_\lambda^{(d)}$ for $\lambda\in\young{d}{n-1}$, the tensor product space $\mcU_\lambda^{(d)} \otimes \CC^d$ can be decomposed as follows:
\begin{align}
    \mcU_\lambda^{(d)} \otimes \CC^d = \bigoplus_{\mu\in \lambda+_d \square} \mcU_\mu^{(d)},
\end{align}
where the vector $\ket{\lambda,u}_{\mcU_\lambda^{(d)}} \otimes \ket{j}_{\CC^d} \in \mcU_\lambda^{(d)} \otimes \CC^d$ is given by
\begin{align}
    \ket{\lambda,u}_{\mcU_\lambda^{(d)}} \otimes \ket{j}_{\CC^d} = \bigoplus_{\mu\in\lambda+_d \square} \sum_{u'} C^{\lambda, \mu}_{u,j,u'} \ket{\mu,u'}_{\mcU_\mu^{(d)}}
\end{align}
using the coefficients $C^{\lambda, \mu}_{u,j,u'}\in\CC$ called the CG coefficients.
The CG transform $\mathrm{CG}^+_{n,d}: \mcT_{n-1}\otimes \mcU_{n-1}\otimes \CC^d \to \mcT_{n-1}\otimes \mcT_n \otimes \mcU_n$ is defined by
\begin{align}
\label{eq:cg_transform}
    \mathrm{CG}^+_{n,d}(\ket{\lambda}_{\mcT_{n-1}}\otimes \ket{u}_{\mcU_{n-1}} \otimes \ket{j}_{\CC^d}) \coloneqq \sum_{\mu\in\lambda+_d\square} \sum_{u'} C^{\lambda, \mu}_{u,j,u'} \ket{\lambda}_{\mcT_{n-1}} \otimes \ket{\mu}_{\mcT_n}\otimes \ket{u'}_{\mcU_n},
\end{align}
which can be represented as a controlled isometry as
\begin{align}
\label{eq:cg_transform_ctrl}
    \mathrm{CG}^+_{n,d} = \sum_{\lambda\in\young{d}{n-1}} \ketbra{\lambda}_{\mcT_{n-1}} \otimes \mathrm{CG}^+_{\lambda, d},
\end{align}
using $\mathrm{CG}^+_{\lambda, d}: \mcU_{n-1}\otimes \CC^d \to \mcT_n \otimes \mcU_n$.
By using the CG transform, we can construct the quantum Schur transform as follows (see Fig.~\ref{fig:implementation_quantum_schur_transform}):
\begin{align}
\label{eq:quantum_schur_transform}
    &\mathrm{Sch}_{n,d}(\ket{j_1}_{\CC^d}\otimes \cdots \otimes \ket{j_n}_{\CC^d})\nonumber\\
    &= (\1_{\mcT_2\cdots \mcT_{n-1}}\otimes \mathrm{CG}^+_{n,d})  [\cdots [(\1_{\mcT_2} \otimes \mathrm{CG}^+_{3,d})[\mathrm{CG}^+_{2,d}(\ket{\square}\otimes \ket{j_1}\otimes \ket{j_2})\otimes \ket{j_3}] \cdots ] \otimes \ket{j_n}]\\
\label{eq:schur_cg_coefficients}
    &= \sum_{\mu\in\young{d}{n}} \sum_{u,i} C^{\mu, u, i}_{j_1\cdots j_n} \ket{\mu_i^{(1)}\cdots \mu_i^{(n)}}_{\mcT_1\cdots \mcT_n} \otimes \ket{u}_{\mcU_n},
\end{align}
using coefficients $C^{\mu, u, i}_{j_1\cdots j_n}\in \CC$,
where the sequence of Young diagrams $\mu_i^{(1)}\cdots \mu_i^{(n)}$ satisfying $\mu_i^{(j+1)} \in \mu_i^{(j)}+_d\square$ for $j\in\{1, \ldots, n-1\}$ is related to the standard Young tableau $s_i^\mu$ such that $\mu_i^{(j)}$ is the Young diagram obtained by removing the boxes \fbox{$j+1$}, \ldots, \fbox{$n$} from the Young diagram $\mu$ and $\mu_i^{(n)} = \mu$.
For instance, the standard tableau
\begin{align}
    \ytableaushort{1 2 3, 4}
\end{align}
is related to the sequence of the Young diagrams
\begin{align}
    \mu_i^{(1)} = \ydiagram[]{1}, \quad \mu_i^{(2)} = \ydiagram[]{2}, \quad \mu_i^{(3)} = \ydiagram[]{3}, \quad \mu_i^{(4)} = \ydiagram[]{3,1}.
\end{align}
In this way, the Schur basis is represented as follows:
\begin{align}
    \mathrm{Sch}_{n,d}(\ket{\mu, u}_{\mcU_\mu^{(d)}}\otimes \ket{\mu, i}_{\mcS_\mu}) = \ket{u}_{\mcU_n} \otimes \ket{\mu_i^{(1)}\cdots \mu_i^{(n)}}_{\mcT_1\cdots \mcT_n},
\end{align}

\begin{figure}
    \centering
    \includegraphics{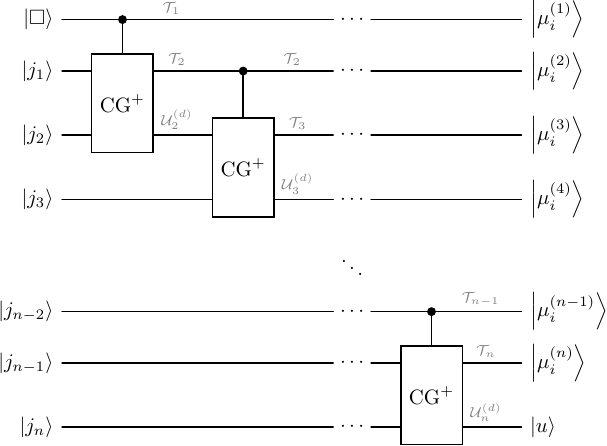}
    \caption{Implementation of the quantum Schur transform using the CG transforms [see Eq.~\eqref{eq:quantum_schur_transform}]. The subscripts of $\mathrm{CG}^+$ are omitted for simplicity.}
    \label{fig:implementation_quantum_schur_transform}
\end{figure}

We then introduce the mixed Schur transform and the dual CG transform \cite{nguyen2023mixed, grinko2023gelfand, fei2023efficient}.
We refer Ref.~\cite{grinko2025thesis} for the review of the mixed Schur-Weyl duality.
We first consider the following representation of $\U(d)$:
\begin{align}
    \U(d) \ni U \mapsto U^{\otimes n}\otimes U^{*\otimes m} \in \mcL(\CC^d)^{\otimes (n+m)}.
\end{align}
Since $[P_\sigma, U^{\otimes n} \otimes U^{\otimes m}] = 0$ for all $\sigma\in\mfS_{n+m}, U\in\U(d)$, we obtain the following commutation relation:
\begin{align}
    [P_\sigma^{T_{n+1\cdots n+m}}, U^{\otimes n} \otimes U^{*\otimes m}] = 0 \quad \forall \sigma\in\mfS_{n+m}, U\in\U(d),
\end{align}
where $T_{n+1\cdots n+m}$ represents the partial trace over $n+1, \ldots, n+m$-th systems.
More strongly, we can show the following relation called the mixed Schur-Weyl duality:
\begin{align}
    [X, U^{\otimes n}\otimes U^{*\otimes m}] = 0 \quad \forall U\in\U(d) &\Rightarrow X\in\mathrm{span}\{P_\sigma^{T_{n+1\cdots n+m}} \mid \sigma\in\mfS_{n+m}\},\\
    [X, P_\sigma^{T_{n+1\cdots n+m}}] = 0 \quad \forall \sigma\in\mfS_{n+m} &\Rightarrow X\in\mathrm{span}\{U^{\otimes n}\otimes U^{*\otimes m} \mid U\in\U(d)\}.
\end{align}
From this relationship, we can introduce the decomposition of the Hilbert space given by
\begin{align}
    (\CC^d)^{\otimes (n+m)} &= \bigoplus_{(\lambda_l, \lambda_r)\in\myoung{d}{n}{m}} \mcU_{(\lambda_l, \lambda_r)}^{(d)}\otimes \mcS_{(\lambda_l, \lambda_r)},\\
    U^{\otimes n}\otimes U^{*\otimes m} &= \bigoplus_{(\lambda_l, \lambda_r)\in\myoung{d}{n}{m}} U_{(\lambda_l, \lambda_r)} \otimes \1_{\mcS_{(\lambda_l, \lambda_r)}},\\
    P_\sigma^{T_{n+1\cdots n+m}} &= \bigoplus_{(\lambda_l, \lambda_r)\in\myoung{d}{n}{m}} \1_{\mcU_{(\lambda_l, \lambda_r)}^{(d)}} \otimes \sigma_{(\lambda_l, \lambda_r)},
\end{align}
where $\myoung{d}{n}{m}$ is the set of mixed Young diagrams given by
\begin{align}
    \myoung{d}{n}{m}\coloneqq \{(\lambda_l, \lambda_r) \mid \lambda_l\in\young{d}{k}, \lambda_r\in\young{d}{n-m+k}, k\in\{0,\ldots, m\}\},
\end{align}
and $U_{(\lambda_l, \lambda_r)}\in\mcL(\mcU_{(\lambda_l, \lambda_r)}^{(d)})$ and $\sigma_{(\lambda_l, \lambda_r)}\in\mcL(\mcS_{(\lambda_l, \lambda_r)})$.
The Hilbert space $\mcU_{(\lambda_l, \lambda_r)}^{(d)}$ for $(\lambda_l, \lambda_r)\in\myoung{d}{n}{m}$ is isomorphic to $\mcU_{\hat{\lambda}}^{(d)}$ for $\hat{\lambda}=(\hat{\lambda}^{(1)},\ldots, \hat{\lambda}^{(d)})$ defined by
\begin{align}
    \hat{\lambda}_{i} \coloneqq (\lambda_l)_{i}+(\lambda_r)_{1}-(\lambda_r)_{d+1-i},
\end{align}
where $\mu_{i}$ represents the number of boxes in the $i$-th row in the Young diagram $\mu$.
For instance, for $d=3$ and $\lambda_l, \lambda_r$ given by
\begin{align}
    \lambda_l = \ydiagram[]{3,2,1}, \quad \lambda_r = \ydiagram[]{2,1},
\end{align}
$\hat{\lambda}$ is given by
\begin{align}
    \hat{\lambda} = \ydiagram[]{5,3,1}.
\end{align}
Therefore, we can use the Gelfand-Zetlin basis $\{\ket{\hat{\lambda}, u}\}_u$ of $\mcU_{\hat{\lambda}}^{(d)}$ as the basis of $\mcU_{(\lambda_l, \lambda_r)}^{(d)}$.
We can also introduce the Gelfand-Zetlin basis $\{\ket{(\lambda_l, \lambda_r), i}\}_i$ of $\mcS_{(\lambda_l, \lambda_r)}$, which is labelled by the mixed standard tableau $s_i$, which is a sequence of mixed Young diagrams $(\lambda_l^{(1)}, \lambda_r^{(1)}) \in \myoung{d}{1}{0}, \ldots, (\lambda_l^{(n)}, \lambda_r^{(n)}) \in \myoung{d}{n}{0}, \ldots, (\lambda_l^{(n+m)}, \lambda_r^{(n+m)}) \in \myoung{d}{n}{m}$ such that
\begin{align}
    (\lambda_l^{(i+1)}, \lambda_r^{(i+1)})\in (\lambda_l^{(i)}, \lambda_r^{(i)})+_d\square \quad &\forall i\in \{1,\ldots,n-1\},\\
    (\lambda_l^{(i+1)}, \lambda_r^{(i+1)})\in (\lambda_l^{(i)}, \lambda_r^{(i)})-_d\square \quad &\forall i\in\{n, \ldots, n+m-1\},
\end{align}
where $(\lambda_l, \lambda_r)\pm_d\square$ are the sets defined by
\begin{align}
    (\lambda_l, \lambda_r)+_d\square &\coloneqq \{(\mu_l, \lambda_r) \mid \mu_l\in \lambda_l+_d\square\},\\
    (\lambda_l, \lambda_r)-_d\square &\coloneqq \{(\mu_l, \lambda_r) \mid \mu_l\in \lambda_l-\square\} \cup \{(\lambda_l, \mu_r) \mid \mu_r\in \lambda_r+_d\square\}.
\end{align}
Using the basis of $\mcU_{(\lambda_l,\lambda_r)}^{(d)}$ and $\mcS_{(\lambda_l,\lambda_r)}$, we define the mixed Schur basis of $(\CC^d)^{\otimes (n+m)}$ defined by
\begin{align}
    \ket{(\lambda_l,\lambda_r), u, i}\coloneqq \ket{\hat{\lambda}, u}_{\mcU_{(\lambda_l,\lambda_r)}^{(d)}} \otimes \ket{(\lambda_l,\lambda_r), i}_{\mcS_{(\lambda_l,\lambda_r)}}.
\end{align}
We define the mixed Schur transform $\mathrm{Sch}_{n,m}^{(d)}$ by the isometry operator from the computational basis to the mixed Schur basis as
\begin{align}
\label{eq:mixed_schur_transform}
    \mathrm{Sch}_{n,m}^{(d)}(\ket{(\lambda_l,\lambda_r), u, i}) \coloneqq \ket{u}_{\mcU_{n,m}^{(d)}} \otimes \ket{(\lambda_l^{(1)},\lambda_r^{(1)})\cdots (\lambda_l^{(n+m)},\lambda_r^{(n+m)})}_{\mcT_{1,0}\cdots\mcT_{n,0}\cdots\mcT_{n,m}},
\end{align}
where $\mcU_{n,m}^{(d)}$ is the Hilbert space storing the Gelfand-Zetlin basis and $\mcT_{i,j}$ for $(i,j)\in\{(1,0),\ldots,(n,0),\ldots,(n,m)\}$ are the Hilbert spaces stoing mixed Young diagrams.
The mixed Schur transform can be decomposed into the CG transforms and the dual CG transforms.
For the irreducible representation $\hat{\lambda}$ corresponding to $(\lambda_l, \lambda_r) \in \myoung{d}{n}{m}$ and the dual representation $\overline{\square}$ corresponding to $\U(d)\ni U \mapsto U^* \in \mcL(\CC^d)$, the tensor product of the representation $\hat{\lambda}$ and $\overline{\square}$ is decomposed as
\begin{align}
    \mcU_{\hat{\lambda}}^{(d)}\otimes \mcU_{\overline{\square}}^{(d)} = \bigoplus_{(\mu_l, \mu_r)\in(\lambda_l, \lambda_r)-_d\square} \mcU_{\hat{\mu}}^{(d)},
\end{align}
where the vector $\ket{\hat{\lambda}, u}\otimes \ket{j}_{\CC^d} \in \mcU_{\hat{\lambda}}^{(d)}\otimes \mcU_{\overline{\square}}^{(d)}$ is given by
\begin{align}
    \ket{\hat{\lambda}, u}\otimes \ket{j}_{\CC^d} = \bigoplus_{(\mu_l, \mu_r)\in(\lambda_l, \lambda_r)-_d\square} \sum_{u'} \tilde{C}^{\lambda_l, \lambda_r, \mu_l, \mu_r}_{u,j,u'} \ket{\hat{\mu},u'}
\end{align}
using coefficients $\tilde{C}^{\lambda_l, \lambda_r, \mu_l, \mu_r}_{u,j,u'}\in \CC$ called dual CG coefficients.
The dual CG transform is defined by
\begin{align}
\label{eq:dual_cg_transform}
    \mathrm{CG}^-_{n,m,d}(\ket{\lambda_l, \lambda_r} \otimes \ket{u} \otimes \ket{j}_{\CC^d}) = \sum_{(\mu_l,\mu_r)\in(\lambda_l,\lambda_r)-_d\square} \tilde{C}^{\lambda_l, \lambda_r, \mu_l, \mu_r}_{u,j,u'}  \ket{\lambda_l, \lambda_r} \otimes \ket{\mu_l, \mu_r} \otimes \ket{u'},
\end{align}
which can be represented as a controlled isometry as
\begin{align}
\label{eq:dual_cg_transform_ctrl}
    \mathrm{CG}^-_{n,m,d} = \sum_{(\lambda_l, \lambda_r)\in \myoung{d}{n}{m-1}} \ketbra{\lambda_l, \lambda_r}_{\mcT_{n,m-1}}\otimes \mathrm{CG}^-_{\lambda_l, \lambda_r, d}
\end{align}
using $\mathrm{CG}^-_{\lambda_l, \lambda_r, d}: \mcU_{n,m-1}^{(d)} \otimes \CC^d \to \mcT_{n,m}\otimes \mcU_{n,m}^{(d)}$.
By using the CG transform and the dual CG transform, we can construct the mixed Schur transform as follows:
\begin{align}
\label{eq:mixed_schur_transform}
    &\mathrm{Sch}_{n,m,d}(\ket{j_1}_{\CC^d}\otimes \cdots \otimes \ket{j_{n+m}}_{\CC^d})\nonumber\\
    &= (\1_{\mcT_{1,0}\cdots\mcT_{n,0}\cdots \mcT_{n,m-1}}\otimes \mathrm{CG}^-_{n,m,d})[\cdots[(\1_{\mcT_{1,0}\cdots\mcT_{n,0}}\otimes \mathrm{CG}^-_{n,1,d})\nonumber\\
    &\hspace{12pt} \times[\mathrm{Sch}_{n,d}(\ket{j_1}\otimes \cdots \otimes \ket{j_n})\otimes \ket{j_{n+1}}]\cdots ] \otimes \ket{j_{n+m}}],
\end{align}
where we identify the quantum Schur transform $\mathrm{Sch}_{n,d}$ as the mixed Schur transform $\mathrm{Sch}_{n,0,d}$ by identifying a Young diagram $\lambda$ as a mixed Young diagram $(\lambda, \emptyset)$.

The dual CG transform satisfies the following Lemma, which will be used in the construction of the probabilistic quantum comb $\{\supermap{T}_I, \supermap{T}_O\}$ in the next subsection.
\begin{Lem}
\label{lem:dual_schur_transform}
    For any $n> m$, $\lambda\in\young{m}{d}$ and $\mu\in\lambda+_d\square$,
    \begin{align}
        [\1_d^{\otimes m} \otimes (\mathrm{CG}^-_{n,n-m,d})^\dagger](\ket{\phi^+_{\lambda, a}} \otimes \ket{\mu,\emptyset}) = \ket{\phi^+_{\mu, a^\lambda_\mu}}
    \end{align}
    holds, where we define $\ket{\phi^+_{\lambda, a}}\in (\CC^d)^{\otimes m} \otimes \mcU_{m}^{(d)}\otimes \mcT_{n,n-m}$ and $\ket{\phi^+_{\mu, a^{\lambda}_{\mu}}}\in (\CC^d)^{\otimes m+1} \otimes \mcU_{m+1}^{(d)}\otimes \mcT_{n,n-m-1}$ by
    \begin{align}
        \ket{\phi^+_{\lambda}}&\coloneqq \sqrt{1\over m_\lambda^{(d)}} \sum_{u}  (\ket{\lambda,u}_{\mcU_\lambda^{(d)}} \otimes \ket{\lambda,a}_{\mcS_\lambda}) \otimes \ket{u}_{\mcU_{m}^{(d)}} \otimes \ket{\lambda,\emptyset}_{\mcT_{n,n-m}},\\
        \ket{\phi^+_{\mu}}&\coloneqq \sqrt{1\over m_\mu^{(d)}} \sum_{u'}  (\ket{\mu,u'}_{\mcU_\mu^{(d)}} \otimes \ket{\mu,a^{\lambda}_{\mu}}_{\mcS_\mu}) \otimes \ket{u'}_{\mcU_{m+1}^{(d)}} \otimes \ket{\mu,\emptyset}_{\mcT_{n,n-m-1}}.
    \end{align}
\end{Lem}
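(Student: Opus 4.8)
The plan is to reduce the claimed identity to an elementary transfer property of maximally entangled states along the branching $\lambda\to\mu$. The starting point is the observation that $\ket{\phi^+_{\lambda,a}}$ is, up to the frozen symmetric-group label $\ket{\lambda,a}_{\mcS_\lambda}$ and the Young-diagram register $\ket{\lambda,\emptyset}$, the maximally entangled state $m_\lambda^{(d)-1/2}\sum_u\ket{\lambda,u}\otimes\ket{u}$ linking the copy of $\mcU_\lambda^{(d)}$ carried by the $m$ physical qudits to the reference register $\mcU_m^{(d)}$. The conceptual engine I would use is a \emph{wire-bending} principle: because the reference enters only through this maximally entangled state, the action of $(\mathrm{CG}^-_{n,n-m,d})^\dagger$ on the reference (which manipulates a conjugate box, on which $\U(d)$ acts by $U^*$) is equivalent to the action of the ordinary Clebsch--Gordan transform $\mathrm{CG}^+$ on the physical side (a genuine box, on which $\U(d)$ acts by $U$). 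Since $\mathrm{CG}^+$ realizes $\mcU_\lambda^{(d)}\otimes\CC^d=\bigoplus_{\mu\in\lambda+_d\square}\mcU_\mu^{(d)}$, fixing the output Young-diagram register to $\mu$ should project onto the $\mu$-sector and produce precisely $\ket{\phi^+_{\mu,a^\lambda_\mu}}$, with the $\sqrt{m_\mu^{(d)}/m_\lambda^{(d)}}$ rescaling dictated by the normalizations of the two entangled states.

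Concretely, I would first apply the definition \eqref{eq:dual_cg_transform} and pass to the adjoint, expressing $(\mathrm{CG}^-_{n,n-m,d})^\dagger$ on the relevant basis states in terms of the complex-conjugated dual CG coefficients $\overline{\tilde C}$. Substituting $\ket{\phi^+_{\lambda,a}}$ and contracting the reference register against the $\CC^d$ emitted by the adjoint through the maximally entangled state converts the factor $U^*$ on the conjugate box into $U^T$ on the new, $(m+1)$-th physical qudit; the defining relation $(\1_d\otimes A^T)\ket{\phi^+_d}=(A\otimes\1_d)\ket{\phi^+_d}$ is exactly what effects this transfer. This step should produce an identity of the schematic form $\tilde C\propto\sqrt{m_\mu^{(d)}/m_\lambda^{(d)}}\,\overline{C^{\lambda,\mu}}$ relating the dual CG coefficients to the ordinary ones (with the dimension ratio being the square root of the factor appearing in the second relation of Lemma~\ref{lem:yy}). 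Feeding this back and summing over the reference index $u$, the completeness relation of the ordinary CG coefficients collapses the sum into $m_\mu^{(d)-1/2}\sum_{u'}\ket{\mu,u'}\otimes\ket{u'}$, i.e.\ the $\mu$-maximally entangled state with the correct normalization.

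Finally I would check that the discrete labels track correctly: the symmetric-group label must advance from $a$ to $a^\lambda_\mu$, which in the standard-tableau bookkeeping of Section~\ref{sec:sw_duality} is the statement that appending the box $\fbox{$m+1$}$ to $s^\lambda_a$ yields $s^\mu_{a^\lambda_\mu}$, while the output Young register is set to $\ket{\mu,\emptyset}$ by the projection. The main obstacle I anticipate is not this conceptual picture but the bookkeeping of the many registers attached to the mixed Schur transform — keeping straight which $\mcT_{\ast}$ and $\mcU_\ast^{(d)}$ spaces are inputs versus outputs of $(\mathrm{CG}^-_{n,n-m,d})^\dagger$, and confirming that the conjugate box emitted by the adjoint is correctly identified with the $(m+1)$-th physical qudit left untouched by $\1_d^{\otimes m}$. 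Pinning down the phase and normalization in the coefficient relation consistently with the Gelfand--Zetlin conventions is the most error-prone part; I would anchor it by verifying the overall normalization $\braket{\phi^+_\mu}{\phi^+_\mu}=1$ against the explicit dimension formula \eqref{eq:mult_hook}.
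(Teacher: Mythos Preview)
Your proposal is correct and follows essentially the same route as the paper's proof: expand $(\mathrm{CG}^-_{n,n-m,d})^\dagger$ in complex-conjugated dual CG coefficients, invoke the identity $\tilde C^{\mu,\emptyset,\lambda,\emptyset}_{u',j,u}=\sqrt{m_\lambda^{(d)}/m_\mu^{(d)}}\,(C^{\lambda,\mu}_{u,j,u'})^*$, and then recognize $\sum_{u,j}C^{\lambda,\mu}_{u,j,u'}\ket{\lambda,u}\otimes\ket{j}\otimes\ket{\lambda,a}=\ket{\mu,u'}\otimes\ket{\mu,a^\lambda_\mu}$ as the ordinary CG transform on the physical side. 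The only difference is packaging: the paper simply cites the coefficient identity from a representation-theory reference rather than motivating it via your wire-bending picture, and it carries out the four-line computation directly without the maximally-entangled-state language; in particular, the $\CC^d$ emitted by the adjoint is not contracted against anything but becomes the new $(m{+}1)$-th physical qudit, so your phrasing ``contracting the reference register against the $\CC^d$'' is slightly off even though the underlying calculation you intend is the correct one.
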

\begin{proof}
    Since the dual CG coefficient satisfies [see Eq.~(10), p.289 of Ref.~\cite{klimyk1995representations}]
    \begin{align}
        \tilde{C}^{\mu, \emptyset, \lambda, \emptyset}_{u',j,u} = \sqrt{m_\lambda^{(d)}\over m_\mu^{(d)}} \left(C^{\lambda, \mu}_{u,j,u'}\right)^*,
    \end{align}
    we can show that
    \begin{align}
        &[\1_d^{\otimes m} \otimes (\mathrm{CG}^-_{n,n-m,d})^\dagger](\ket{\phi^+_{\lambda, a}}\nonumber\\
        &= \sqrt{1\over m_\lambda^{(d)}} \sum_{u}  (\ket{\lambda,u}_{\mcU_\lambda^{(d)}} \otimes \ket{\lambda,a}_{\mcS_\lambda}) \otimes (\mathrm{CG}^-_{n,n-m,d})^\dagger(\ket{\mu, \emptyset}_{\mcT_{n,n-m-1}} \otimes \ket{\lambda,\emptyset}_{\mcT_{n,n-m}} \otimes \ket{u}_{\mcU_{m}^{(d)}})\\
        &= \sqrt{1\over m_\lambda^{(d)}} \sum_{u, u', j}  (\ket{\lambda,u}_{\mcU_\lambda^{(d)}} \otimes \ket{\lambda,a}_{\mcS_\lambda}) \otimes \left(\tilde{C}^{\mu,\emptyset,\lambda,\emptyset}_{u',j,u}\right)^* \ket{\mu, \emptyset}_{\mcT_{n,n-m-1}} \otimes \ket{u'}_{\mcU_{m+1}^{(d)}} \otimes \ket{j}_{\CC^d}\\
        &= \sqrt{1\over m_\mu^{(d)}} \sum_{u, u', j}  C^{\lambda, \mu}_{u,j,u'} (\ket{\lambda,u}_{\mcU_\lambda^{(d)}} \otimes \ket{j}_{\CC^d} \otimes \ket{\lambda,a}_{\mcS_\lambda}) \otimes \ket{\mu, \emptyset}_{\mcT_{n,n-m-1}} \otimes \ket{u'}_{\mcU_{m+1}^{(d)}}\\
        &= \sqrt{1\over m_\mu^{(d)}} \sum_{u'} (\ket{\mu,u'}_{\mcU_\mu^{(d)}} \otimes \ket{\mu,a^\lambda_\mu}_{\mcS_\mu}) \otimes \ket{\mu, \emptyset}_{\mcT_{n,n-m-1}} \otimes \ket{u'}_{\mcU_{m+1}^{(d)}}\\
        &= \ket{\phi^+_{\mu, a^\lambda_\mu}}.
    \end{align}
\end{proof}

\subsection{Construction of the probabilistic quantum comb $\{\supermap{T}_I, \supermap{T}_O\}$}
\label{appendix_subsec:construction_of_red_comb}
\begin{figure}
    \centering
    \includegraphics[width=.9\linewidth]{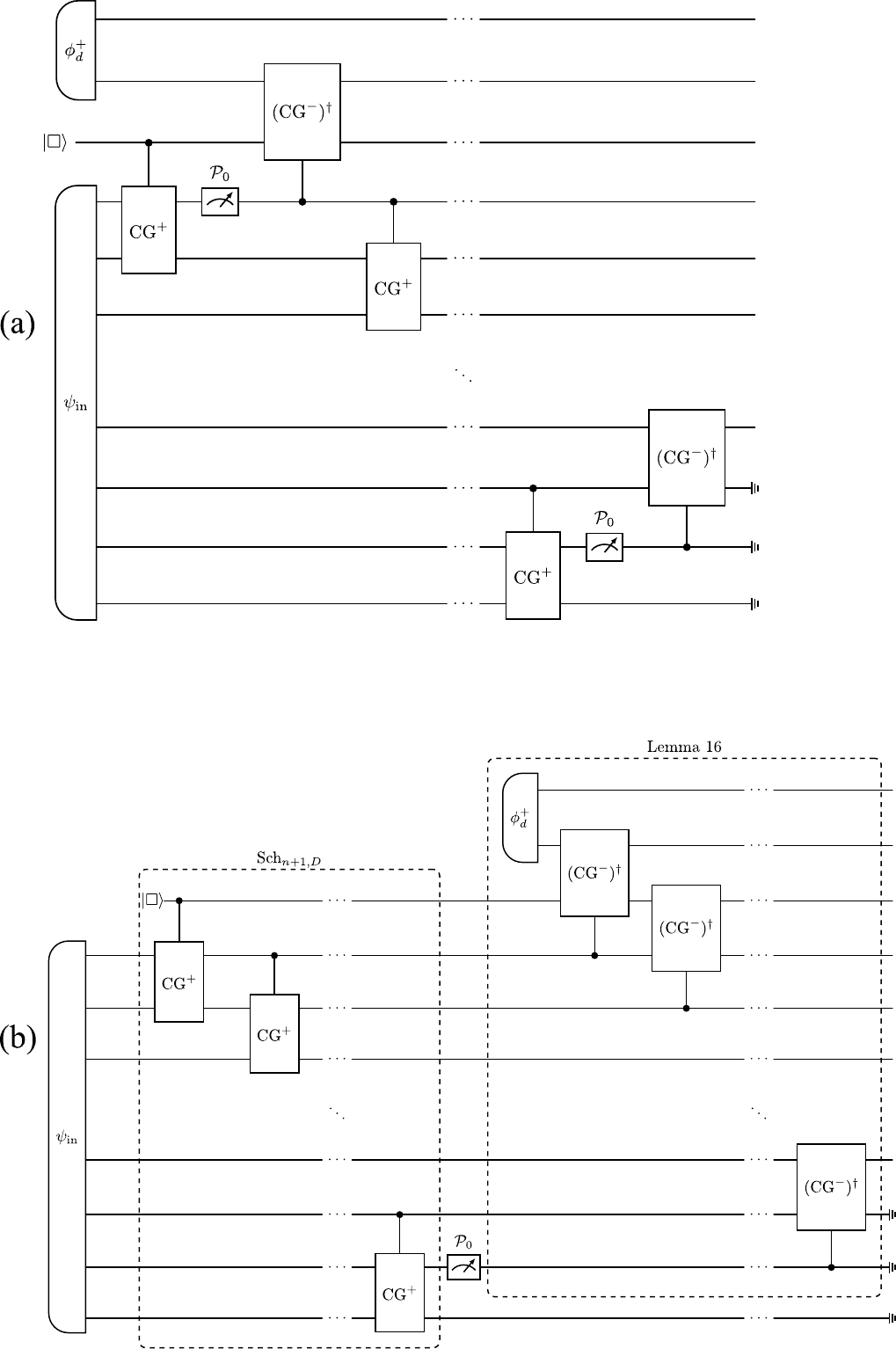}
    \caption{The supplementary figure to explain the calculation of $\psi_\mathrm{out}$ in Eqs.~\eqref{eq:calculation_psi_out_begin} -- \eqref{eq:calculation_psi_out_end}.
    The subscripts on $\mathrm{CG}^{\pm}$ are omitted for simplicity.
    The operator $\psi_\mathrm{out}$ is represented by the diagram (a).
    By commuting the CG transforms $\mathrm{CG}^+$ with the inverse of the dual CG transforms $(\mathrm{CG}^-)^\dagger$, we obtain (b), where the sequence of the CG transforms corresponds to the quantum Schur transform $\mathrm{Sch}_{n+1,D}$, and we can apply Lemma~\ref{lem:dual_schur_transform} for the sequence of the inverse of the dual CG transforms.}
    \label{fig:psi_out_calculation}
\end{figure}

We show that this construction gives the Choi operator $T_I$ given in Eq.~\eqref{eq:def_TI}.
To this end, it is sufficient to check that
\begin{align}
    \psi_\mathrm{out}\coloneqq J_{\Gamma^{(n+1)}_I} \star J_{\Gamma^{(n)}} \star \cdots \star J_{\Gamma^{(1)}} \star \psi_\mathrm{in}
\end{align}
for any pure state $\psi_\mathrm{in} = \ketbra{\psi_\mathrm{in}}$ for $\ket{\psi_\mathrm{in}}\in \mcP\otimes \mcO^n$
satisfies
\begin{align}
    \psi_\mathrm{out}= T_I \star \psi_\mathrm{in}.
\end{align}
Suppose $\ket{\psi_\mathrm{in}}$ is given in the Schur basis as
\begin{align}
    \ket{\psi_\mathrm{in}} = \bigoplus_{\mu\in\young{D}{n+1}} \sum_{u, i} a_{\mu,u,i} \ket{\mu,u}_{\mcU_\mu^{(D)}} \otimes \ket{\mu,i}_{\mcS_\mu}.
\end{align}
The operator $\psi_\mathrm{out}$ is evaluated as
\begin{align}
\label{eq:calculation_psi_out_begin}
    \psi_\mathrm{out} = \Tr_{\mcU_{n+1}^{(d)}\mcT_{n+1}\mcU_{n+1}^{(D)}}[\ketbra{\psi_\mathrm{out}'}]
\end{align}
using the vector $\ket{\psi_\mathrm{out}'}\in\mcP'\otimes \mcO^{\prime n}\otimes \mcU_{n+1}^{(d)}\otimes \mcT_n\otimes \mcU_{n+1}^{(D)}$ given by (see also Fig.~\ref{fig:psi_out_calculation}):
\begin{align}
    &\ket{\psi_\mathrm{out}'}\nonumber\\
    &= \prod_{i=1}^{n}[((\mathrm{CG}^-_{n,n-i,d})^\dagger\otimes \1_{\mcU_{i+1}^{(D)}})(P_{t_{i+1}}\otimes \1_{\mcU_{i}^{(d)}\mcT_{i}\mcU_{i+1}^{(D)}})(\1_{\mcU_{i}^{(d)}}\otimes \mathrm{CG}^+_{i+1,D})\otimes \1_{\mcP'\mcO'_1\cdots \mcO'_{i}\mcO_{i+2}\cdots \mcO_n}] \nonumber\\
    &\times (\ket{\psi_\mathrm{in}}\otimes \ket{\phi^+_d}_{\mcP'\mcU_1^{(d)}})\\
    &= \prod_{i=1}^{n} [(\mathrm{CG}^-_{n,n-i,d})^\dagger \otimes \1_{\mcP'\mcO'_1\cdots \mcO'_{i-1} \mcT_{i+2}\cdots \mcT_{n+1} \mcU_{n+1}^{(D)}}] \nonumber\\
    &\times [(P_0)_{\mcT_{n+1}} \otimes \1_{\mcT_1\cdots\mcT_{n}\mcU_{n+1}^{(D)} \mcP'\mcU_1^{(d)}\mcU_{n+1}^{(D)}}][(\mathrm{Sch}_{n+1,D})_{\mcP\mcO^n\to \mcT_1\cdots \mcT_{n+1}\mcU_{n+1}^{(D)}}\otimes \1_{\mcP'\mcU_1^{(d)}}](\ket{\psi_\mathrm{in}}\otimes \ket{\phi^+_d}_{\mcP'\mcU_1^{(d)}})\\
    &= \prod_{i=1}^{n} [(\mathrm{CG}^-_{n,n-i,d})^\dagger \otimes \1_{\mcP'\mcO'_1\cdots \mcO'_{i-1} \mcT_{i+2}\cdots \mcT_{n+1} \mcU_{n+1}^{(D)}}] \nonumber\\
    &\times \sum_{\mu\in\young{d}{n+1}}\sum_{u,i} a_{\mu,u,i} \ket{u}_{\mcU_{n+1}^{(D)}}\otimes \ket{\mu_i^{(1)}\cdots \mu_i^{(n+1)}}_{\mcT_1\cdots \mcT_{n+1}} \otimes \ket{\phi^+_d}_{\mcP'\mcU_1^{(d)}}\\
    &= \sum_{\mu\in\young{d}{n+1}}\sum_{u,i} {a_{\mu,u,i}\over \sqrt{m_\mu^{(d)}}} \sum_{u'}(\ket{\mu,u'}_{\mcU_\mu^{(d)}}\otimes \ket{\mu,i}_{\mcS_\mu})_{\mcP'\mcO^{\prime n}} \otimes \ket{u'}_{\mcU_{n+1}^{(d)}}\otimes \ket{\mu}_{\mcT_{n+1}} \otimes \ket{u}_{\mcU_{n+1}^{(D)}},
    \label{eq:calculation_psi_out_end}
\end{align}
where $\prod_{i=1}^{n} X_i$ for operators $X_1, \ldots, X_{n}$ represents $X_{n}\cdots X_1$, and we recursively use Lemma~\ref{lem:dual_schur_transform}.
Therefore, the operator $\psi_\mathrm{out}$ is evaluated as
\begin{align}
    \psi_\mathrm{out}
    &= \sum_{\mu\in\young{d}{n+1}} \sum_{u,i,i'} a_{\mu,u,i} a_{\mu,u,i'}^* {\1_{\mcU_\mu^{(d)}} \over m_\mu^{(d)}} \otimes \ketbra{\mu, i}{\mu,i'}_{\mcS_\mu}\\
    &= T_I \star \psi_\mathrm{in},
\end{align}
where we use Eq.~\eqref{eq:psi_out}.
We can also check that the Choi operator $T_O\coloneqq J_{\Gamma^{(n+1)}_O} \star J_{\Gamma^{(n)}} \star \cdots \star J_{\Gamma^{(1)}}$ satisfies the $\U(d)\times \U(D)$ symmetry:
\begin{align}
    [T_O, U^{\prime \otimes n+1}_{\mcP'\mcO^{\prime n}} \otimes U^{\prime\prime \otimes n+1}_{\mcP\mcO^n}] = 0 \quad \forall U'\in\U(d), U''\in\U(D).
\end{align}

\subsection{Proof of Theorem \ref{thm:sequential_isometry_adjointation}: Construction of a sequential isometry adjointation protocol}
\label{appendix_sec:sequential_isometry_adjointation}
First, we show that the probabilistic quantum comb $\{\supermap{T}_I, \supermap{T}_O\}$ satisfies
\begin{align}
    T_I \star \dketbra{V}^{\otimes n}_{\mcI^n \mcO^n} = \int_{\U(d)} \dd U \dketbra{UV^\dagger}_{\mcP\mcP'} \otimes \dketbra{U}^{\otimes n}_{\mcI^n\mcO'^n} + (\1_D-\Pi_{\Im V})^T_{\mcP} \otimes \Sigma \label{eq:TI_V_composition}
\end{align}
for all $V\in\isometry{d}{D}$, where $\dd U$ is the Haar measure on $\U(d)$ and $\Sigma$ is defined in Eq.~(\ref{eq:def_Sigma}).
By definition (\ref{eq:def_TI}) of $T_I$, it satisfies
\begin{align}
    [T_I, \1_{\mcP'\mcO'^n} \otimes U'^{\otimes n+1}_{\mcP\mcO^n}] = 0 \quad \forall U'\in\U(D).
\end{align}
Therefore,
\begin{align}
     T_I \star \dketbra{V}^{\otimes n}_{\mcI^n \mcO^n}
     &= (\1_{\mcP'\mcO'^n} \otimes \map{U}'^{\otimes n+1}_{\mcP\mcO^n})(T_I) \star \dketbra{V}^{\otimes n}_{\mcI^n \mcO^n}\\
     &= (\1_{\mcP'\mcO'^n\mcI^n}\otimes \map{U}'_{\mcP})(T_I\star \dketbra{U'^T V}^{\otimes n}_{\mcI^n \mcO^n})
     \label{eq:T_IstarV_decomposition}
\end{align}
Since $U'^T V = V$ holds for $U'^T = \1_{\Im V} \oplus U''_{(\Im V)^\perp}$ using $U''\in\U(D-d)$, we obtain
\begin{align}
    [T_I \star \dketbra{V}^{\otimes n}_{\mcI^n \mcO^n}, \1_{\mcP'\mcO'^n\mcI^n} \otimes (\1_{\Im V} \oplus U''_{(\Im V)^{\perp}})^T_{\mcP}] = 0 \quad \forall U''\in\U(D-d).
\end{align}
Therefore, $T_I \star \dketbra{V}^{\otimes n}_{\mcI^n \mcO^n}$ satisfies
\begin{align}
    T_I \star \dketbra{V}^{\otimes n}_{\mcI^n \mcO^n}
    =& [\1_{\mcP'\mcO'^n\mcI^n} \otimes (\Pi_{\Im V}^T)_{\mcP}] (T_I \star \dketbra{V}^{\otimes n}_{\mcI^n \mcO^n}) [\1_{\mcP'\mcO'^n\mcI^n} \otimes (\Pi_{\Im V}^T)_{\mcP}]\nonumber\\
    &+ (\1_D-\Pi_{\Im V})^T_{\mcP} \otimes \Sigma,
\end{align}
where $\Sigma \in\mcL(\mcP'\mcO'^n\mcI^n)$ is given by
\begin{align}
    \Sigma = {1\over D-d} T_I \star \dketbra{V}^{\otimes n}_{\mcI^n \mcO^n} \star (\1_D-\Pi_{\Im V})_{\mcP}.\label{eq:def_Sigma_another}
\end{align}
Since $\Pi_{\Im V} = V V^\dagger$ holds, the first term in Eq.~(\ref{eq:T_IstarV_decomposition}) can be evaluated as
\begin{align}
    &[\1_{\mcP'\mcO'^n\mcI^n} \otimes (\Pi_{\Im V}^T)_{\mcP}] (T_I \star \dketbra{V}^{\otimes n}_{\mcI^n \mcO^n}) [\1_{\mcP'\mcO'^n\mcI^n} \otimes (\Pi_{\Im V}^T)_{\mcP}]\nonumber\\
    &= \sum_{\mu\in \young{d}{n+1}} \sum_{i,j=1}^{d_{\mu}} {(E^{\mu, d}_{ij})_{\mcP'\mcO'^{n}} \over {m_{\mu}^{(d)}}}\otimes (\1_{\mcO^n} \otimes \map{V}^*_{\mcP''\to \mcP} \circ \map{V}^T_{\mcP\to \mcP''})(E^{\mu, D}_{ij})_{\mcP\mcO^{n}}\star \dketbra{V}^{\otimes n}_{\mcI^n \mcO^n}\\
    &= \sum_{\mu\in \young{d}{n+1}} \sum_{i,j=1}^{d_{\mu}} {(E^{\mu, d}_{ij})_{\mcP'\mcO'^{n}} \over {m_{\mu}^{(d)}}}\otimes (\1_{\mcI^n} \otimes \map{V}^*_{\mcP''\to \mcP})\circ \map{V}^{T\otimes n+1}_{\mcP\mcO^n\to \mcP''\mcI^n}(E^{\mu, D}_{ij})_{\mcP\mcO^{n}}\\
    &= \sum_{\mu\in \young{d}{n+1}} \sum_{i,j=1}^{d_{\mu}} {(E^{\mu, d}_{ij})_{\mcP'\mcO'^{n}} \over {m_{\mu}^{(d)}}}\otimes (\1_{\mcI^n} \otimes \map{V}^*_{\mcP''\to \mcP})(E^{\mu, d}_{ij})_{\mcP''\mcI^{n}}\\
    &= \int_{\U(d)}\dd U (\1_{\mcP'}\otimes \map{V}^*_{\mcP''\to\mcP})(\dketbra{U}_{\mcP''\mcP'}) \otimes \dketbra{U}_{\mcI^n \mcO'^n}^{\otimes n}\\
    &= \int_{\U(d)}\dd U \dketbra{UV^\dagger}_{\mcP\mcP'} \otimes \dketbra{U}_{\mcI^n \mcO'^n}^{\otimes n},
\end{align}
where $\mcP''$ is a Hilbert spaces given by $\mcP'' = \CC^d$, $\dd U$ is the Haar measure on $\U(d)$, and we have utilized the following relations:
\begin{align}
    V^{\dagger \otimes n+1} E^{\mu,D}_{ij} V^{\otimes n+1}&=
    \begin{cases}
        E^{\mu, d}_{ij} & (\mu\in\young{d}{n+1})\\
        0 & (\mu\notin\young{d}{n+1})
    \end{cases},\label{eq:action_of_isometry_on_E}\\
    \int_{\U(d)}\dd U \dketbra{U}_{\mcP''\mcI^n, \mcP'\mcO'^n}^{\otimes n+1} &= \sum_{\mu\in \young{d}{n+1}} \sum_{i,j=1}^{d_{\mu}} {(E^{\mu, d}_{ij})_{\mcP'\mcO'^{n}} \otimes (E^{\mu, d}_{ij})_{\mcP''\mcI^{n}} \over {m_{\mu}^{(d)}}},
\end{align}
the former of which follows from the decomposition of $V^{\otimes n+1}$ shown in Eq.~(\ref{eq:decomposition_isometry}) and the definition (\ref{eq:def_E}) of $E^{\mu,d}_{ij}$, and the latter of which is shown in Ref.~\cite{quintino2022deterministic}.
The operator $\Sigma$ defined in Eq.~(\ref{eq:def_Sigma_another}) is shown to be the same as Eq.~(\ref{eq:def_Sigma}) by the following calculation:
\begin{align}
    \Sigma
    &={1\over D-d} T_I \star \dketbra{V}^{\otimes n}_{\mcI^n \mcO^n} \star (\1_D-\Pi_{\Im V})_{\mcP}\nonumber\\
    &= {1\over D-d} \left[(\1_{\mcP'\mcO'^n\mcP} \otimes \map{V}^{T\otimes n}_{\mcO^n\to\mcI^n})(\Tr_\mcP T_I) - \Tr_{\mcP''} \circ (\1_{\mcP'\mcO'^n} \otimes \map{V}^{T\otimes n+1}_{\mcP\mcO^n\to \mcP''\mcI^n})(T_I) \right]\\
    &= {1\over D-d}\sum_{\lambda\in \young{d}{n}} \sum_{\mu\in\lambda+_d\square} \sum_{a,b=1}^{d_{\lambda}} \left[{m_\mu^{(D)}\over m_\lambda^{(D)}} - {m_\mu^{(d)}\over m_\lambda^{(d)}}\right] (E^{\lambda,d}_{ab})_{\mcI^n} \otimes {(E^{\mu,d}_{a^\lambda_\mu b^\lambda_\mu})_{\mcO'^n \mcP'} \over m_\mu^{(d)}},
\end{align}
where we have utilized Lemma \ref{lem:yy} and Eq.~(\ref{eq:action_of_isometry_on_E}).
To calculate it further, we employ the dimension formula of the irreducible representation $\mcU_\mu^{(D)}$ of $\U(D)$ given by Eq.~(\ref{eq:mult_hook}). 
In particular, for $\mu\in\lambda+\square$, the ratio of $m_\mu^{(d)}$ and $m_\lambda^{(d)}$ is given by
\begin{align}
    {m_\mu^{(d)}\over m_\lambda^{(d)}} = (d+j-i) {\mathrm{hook}(\lambda) \over \mathrm{hook}(\mu)},
\end{align}
where $(i,j)$ is a coordinate of the box added to obtain $\mu$ from $\lambda$, i.e.,
\begin{align}
    {m_\mu^{(D)}\over m_\lambda^{(D)}} - {m_\mu^{(d)}\over m_\lambda^{(d)}}
    &= (D-d) {\mathrm{hook}(\lambda) \over \mathrm{hook}(\mu)}
\end{align}
holds.
Thus, we obtain Eq.~(\ref{eq:TI_V_composition}).

We evaluate the Choi operator of $\supermap{C}_a[\map{V}_\mathrm{in}^{\otimes n}]$ given by
\begin{align}
    J_{\supermap{C}_a[\map{V}_\mathrm{in}^{\otimes n}]}
    &= C' \star T_a \star \dketbra{V_\mathrm{in}}^{\otimes n}_{\mcI^n\mcO^n}
\end{align}
to show Theorem \ref{thm:sequential_isometry_adjointation}.  From Eq.~(\ref{eq:TI_V_composition}), it is given by
\begin{align}
    J_{\supermap{C}_I[\map{V}_\mathrm{in}^{\otimes n}]}
    &= C' \star \int_{\U(d)} \dd U \dketbra{UV_\mathrm{in}^\dagger}_{\mcP\mcP'} \otimes \dketbra{U}^{\otimes n}_{\mcI^n \mcO^n} + (\1_D-\Pi_{\Im V_\mathrm{in}})^T_{\mcP} \otimes C'\star \Sigma.\label{eq:evaluation_sequential_isometry_adjointation}
\end{align}
Since $C'$ is the Choi operator of the covariant unitary inversion protocol, we obtain \cite{quintino2022deterministic}
\begin{align}
    C'\star \dketbra{U}^{\otimes n}_{\mcI^n\mcO^n} = {d^2F_\mathrm{UI}-1 \over d^2-1} \dketbra{U^{-1}}_{\mcP'\mcF} + {d^2(1-F_\mathrm{UI}) \over d^2-1} {\1_{\mcF} \over d} \otimes \1_{\mcP'}.
\end{align}
Thus, the first term in Eq.~(\ref{eq:evaluation_sequential_isometry_adjointation}) is given by
\begin{align}
    &C' \star \int_{\U(d)} \dd U \dketbra{UV_\mathrm{in}^\dagger}_{\mcP\mcP'} \otimes \dketbra{U}^{\otimes n}_{\mcI^n \mcO^n}\nonumber\\
    &= \int_{\U(d)} \dd U \dketbra{UV_\mathrm{in}^\dagger}_{\mcP\mcP'} \star \left[{d^2F_\mathrm{UI}-1 \over d^2-1} \dketbra{U^{-1}}_{\mcP'\mcF} + {d^2(1-F_\mathrm{UI}) \over d^2-1} {\1_{\mcF} \over d} \otimes \1_{\mcP'}\right]\\
    &={d^2F_\mathrm{UI}-1 \over d^2-1} \dketbra{V_\mathrm{in}^\dagger}_{\mcP\mcF} + {d^2(1-F_\mathrm{UI}) \over d^2-1} {\1_{\mcF} \over d} \otimes (\Pi_{\Im V_\mathrm{in}}^T)_\mcP.\label{eq:evaluation_sequential_isometry_adjointation1}
\end{align}
Since $C'$ and $\Sigma$ satisfy the $\U(d)$ symmetry, namely,
\begin{align}
    [C',U'^{\otimes n+1}_{\mcI^n\mcF} \otimes \1_{\mcP'\mcO'^n}]&=0,\\
    [\Sigma, U'^{\otimes n+1}_{\mcI^n} \otimes \1_{\mcP'\mcO'^n}]&=0
\end{align}
for all $U'\in\U(d)$, $C'\star \Sigma$ satisfies
\begin{align}
    [C'\star \Sigma, U'_{\mcF}]=0
\end{align}
for all $U'\in\U(d)$, i.e., it is proportional to the identity operator.  Thus, the second term in Eq.~(\ref{eq:evaluation_sequential_isometry_adjointation}) is given by
\begin{align}
    (\1_D-\Pi_{\Im V_\mathrm{in}})^T_{\mcP} \otimes C'\star \Sigma
    &= (\1_D-\Pi_{\Im V_\mathrm{in}})^T_{\mcP} \otimes \Tr[C'\star \Sigma] {\1_\mcF \over d}\\
    &= (\1_D-\Pi_{\Im V_\mathrm{in}})^T_{\mcP} \otimes \Tr[\Tr_\mcF(C') \Sigma] {\1_\mcF \over d},\label{eq:evaluation_sequential_isometry_adjointation2}
\end{align}
where we utilize the fact that $\Sigma=\Sigma^T$. Combining Eqs.~(\ref{eq:evaluation_sequential_isometry_adjointation}), (\ref{eq:evaluation_sequential_isometry_adjointation1}), and (\ref{eq:evaluation_sequential_isometry_adjointation2}), we obtain
\begin{align}
    J_{\supermap{C}_I[\map{V}_\mathrm{in}^{\otimes n}]}
    &= {d^2F_\mathrm{UI}-1 \over d^2-1} \dketbra{V_\mathrm{in}^\dagger}_{\mcP\mcF} + {\1_{\mcF} \over d}\otimes \left[{d^2(1-F_\mathrm{UI}) \over d^2-1}  (\Pi_{\Im V_\mathrm{in}}^T)_\mcP + \alpha_{C'} (\1_D-\Pi_{\Im V_\mathrm{in}})^T_{\mcP}\right],
\end{align}
where $\alpha_{C'}$ is defined by
\begin{align}
    \alpha_{C'}\coloneqq \Tr[\Tr_\mcF(C') \Sigma].
\end{align}
Therefore, we obtain
\begin{align}
    \supermap{C}_I[V_\mathrm{in}^{\otimes n}](\rho_\mathrm{in}) &= {d^2F_\mathrm{UI}-1 \over d^2-1} \map{V}_\mathrm{in}^\dagger (\rho_\mathrm{in}) + {\1_d \over d}\Tr\left\{\left[{d^2(1-F_\mathrm{UI}) \over d^2-1}\Pi_{\Im V_\mathrm{in}} + \alpha_{C'} (\1_D-\Pi_{\Im V_\mathrm{in}})\right]\rho_\mathrm{in}\right\}.\label{eq:CI_V_composition}
\end{align}
Since $\supermap{C}_I[V_\mathrm{in}^{\otimes n}] + \supermap{C}_O[V_\mathrm{in}^{\otimes n}]$ is a CPTP map and
\begin{align}
    \Tr \supermap{C}_I[V_\mathrm{in}^{\otimes n}](\rho_\mathrm{in}) = \Tr\left\{\left[\Pi_{\Im V_\mathrm{in}} + \alpha_{C'} (\1_D - \Pi_{\Im V_\mathrm{in}})\right] \rho_\mathrm{in} \right\}
\end{align}
holds, $\Tr \supermap{C}_O[V_\mathrm{in}^{\otimes n}](\rho_\mathrm{in}) = (1-\alpha_{C'})\Tr[(\1_D-\Pi_{\Im V_\mathrm{in}}) \rho_\mathrm{in}]$ holds.  Since $\supermap{C}_O[V_\mathrm{in}^{\otimes n}]$ is a CP map, we obtain
\begin{align}
    \supermap{C}_O[V_\mathrm{in}^{\otimes n}](\rho_\mathrm{in}) = \supermap{C}_O[V_\mathrm{in}^{\otimes n}](\Pi_{(\Im V_\mathrm{in})^\perp}\rho_\mathrm{in}\Pi_{(\Im V_\mathrm{in})^\perp}).\label{eq:CO_independency}
\end{align}
Due to the $\U(d)\times \U(D)$ symmetries of $C'$ and $T_O$,
\begin{align}
    [C'\star T_O, U_{\mcI^n\mcF}'^{\otimes n+1} \otimes U_{\mcP\mcO^n}''^{\otimes n+1}] = 0 \quad \forall U'\in\U(d), U''\in\U(D)
\end{align}
holds.  Thus, we obtain
\begin{align}
    \supermap{C}_O[V_\mathrm{in}^{\otimes n}](\rho_\mathrm{in}) = \map{U}'\circ \supermap{C}_O[(U'' V_\mathrm{in} U')^{\otimes n}]\circ \map{U}''(\rho_\mathrm{in}) \quad \forall U'\in\U(d), U''\in\U(D).\label{eq:CO_symmetry}
\end{align}
In particular, for all $U'\in \U(d)$, we take $U'' = U'''_{\Im V_\mathrm{in}} \oplus \1_{(\Im V_\mathrm{in})^\perp}$ for $U'''\in\U(d)$ such that $U'' V_\mathrm{in} U' = V_\mathrm{in}$. From Eqs.~(\ref{eq:CO_independency}) and (\ref{eq:CO_symmetry}), we obtain
\begin{align}
    \supermap{C}_O[V_\mathrm{in}^{\otimes n}](\rho_\mathrm{in}) = \map{U}'\circ \supermap{C}_O[V_\mathrm{in}^{\otimes n}](\rho_\mathrm{in}) \quad \forall U'\in\U(d),\label{eq:CO_V_composition}
\end{align}
thus, $\supermap{C}_O[V_\mathrm{in}^{\otimes n}](\rho_\mathrm{in}) \propto {\1_d \over d}$.  From Eq.~(\ref{eq:CO_independency}), we obtain
\begin{align}
    \supermap{C}_O[V_\mathrm{in}^{\otimes n}](\rho_\mathrm{in}) = {\1_d \over d} (1-\alpha_{C'}) \Tr[(\1_D-\Pi_{\Im V_\mathrm{in}})\rho_\mathrm{in}].
\end{align}
Thus, we obtain
\begin{align}
    \epsilon = \max \{ \alpha_{C'}, 1-F_\mathrm{UI}\}.
\end{align}

\subsection{Proof of Theorem \ref{thm:parallel_isometry_adjointation}: Construction of a parallel isometry adjointation protocol}
\label{appendix_sec:parallel_isometry_adjointation}

The parallel protocol shown in Theorem \ref{thm:parallel_isometry_adjointation} corresponds to the special case of the sequential protocol such that the Choi operator of the original untiary inversion protocol is given by
\begin{align}
    C' = \phi_{\mcI^n\mcA} \star \sum_i (M_i^T)_{\mcO'^n\mcA} \otimes (J_{\mcR_i})_{\mcP'\mcF}.
\end{align}
Since the Choi operator of a CPTP map $\mcR_i$ satisfies $\Tr_\mcF (J_{\mcR_i})_{\mcP'\mcF} = \1_{\mcP'}$, we obtain
\begin{align}
    \Tr_\mcF(C')
    &= \phi_{\mcI^n\mcA} \star \sum_i (M_i^T)_{\mcO'^n\mcA} \otimes \1_{\mcP'}\\
    &= \phi_{\mcI^n\mcA} \star \1_{\mcO'^n\mcA} \otimes \1_{\mcP'}\\
    &= \Tr_\mcA(\phi) \otimes \1_{\mcO'^n\mcP'}.
\end{align}
Then, $\alpha_{C'}$ shown in Eq.~(\ref{eq:def_alpha_C'}) reduces to $\alpha_\phi$ given by
\begin{align}
    \alpha_{\phi}
    &\coloneqq \Tr[\Tr_\mcA(\phi) \Tr_{\mcO'^n\mcP'}(\Sigma)]\\
    &= \sum_{\lambda\in\young{d}{n}} \Tr[\Tr_{\mcA}(\phi) \Pi_\lambda^{(d)}] \sum_{\mu\in\lambda+_d\square} {\mathrm{hook}(\lambda)\over \mathrm{hook}(\mu)}\\
    &= \sum_{\lambda\in\young{d}{n}} \Tr[\Tr_{\mcA}(\phi) \Pi_\lambda^{(d)}] \left[1- \sum_{\mu\in\lambda+\square \setminus \young{d}{n+1}} {\mathrm{hook}(\lambda) \over \mathrm{hook}(\mu)}\right],
\end{align}
where we have utilized Lemma \ref{lem:yy}, Eq.~(\ref{eq:def_Sigma}), and the equality
\begin{align}
    \sum_{\nu\in\lambda+\square} {\mathrm{hook}(\lambda)\over \mathrm{hook}(\nu)} =1,\label{eq:hook_equality}
\end{align}
shown as follows. From Lemma \ref{lem:yy},
\begin{align}
    E_{aa}^{\lambda, D} \otimes \1_D = \sum_{\mu\in\lambda +_D\square} E_{a^\lambda_\mu a^\lambda_\mu}^{\mu,D}
\end{align}
holds for $a\in \{1, \ldots, d_\lambda\}$. By taking the trace of the both-hand side, we obtain
\begin{align}
    D m_\lambda^{(D)} = \sum_{\mu\in\lambda+_D\square} m_\mu^{(D)}.\label{eq:mult_bunki}
\end{align}
By using Eqs.~(\ref{eq:mult_hook}) and (\ref{eq:mult_bunki}), we show Eq.~(\ref{eq:hook_equality}) as
\begin{align}
    \sum_{\nu\in\lambda+\square} {\mathrm{hook}(\lambda)\over \mathrm{hook}(\nu)}
    &= \lim_{D\to\infty} {\sum_{\nu\in\lambda+_D\square} m_\nu^{(D)} \over Dm_\lambda^{(D)}}\\
    &=1.
\end{align}
Thus, the approximation error of isometry adjointation is given by
\begin{align}
    \epsilon = \max\{\alpha_\phi, 1-F_\mathrm{est}\}.
\end{align}

\subsection{Reduction to isometry inversion and universal error detection}
\subsubsection{Deterministic isometry inversion}
\label{appendix_subsec:reduction_deterministic_isometry_inversion}
By discarding the measurement outcome of the isometry adjointation protocol shown in Eqs.~(\ref{eq:CI_V_composition}) and (\ref{eq:CO_V_composition}), we obtain the following protocol:
\begin{align}
    \supermap{C}[\map{V}_\mathrm{in}^{\otimes n}] = \supermap{C}_I[\map{V}_\mathrm{in}^{\otimes n}] + \supermap{C}_O[\map{V}_\mathrm{in}^{\otimes n}]
\end{align}
This protocol satisfies
\begin{align}
    \supermap{C}[\map{V}_\mathrm{in}^{\otimes n}]\circ \map{V}_\mathrm{in}(\rho_\mathrm{in}) = {d^2F_\mathrm{UI}-1 \over d^2-1} \rho_\mathrm{in} + {\1_d \over d} \left(1-{d^2F_\mathrm{UI}-1 \over d^2-1}\right) \Tr(\rho_\mathrm{in}).
\end{align}
Thus, the worst-case channel fidelity of isometry inversion is given by
\begin{align}
    F_\mathrm{ch}(\supermap{C}[\map{V}_\mathrm{in}^{\otimes n}]\circ \map{V}_\mathrm{in}, \1_d) = F_\mathrm{UI}.
\end{align}

\subsubsection{Probabilistic exact isometry inversion}
\label{appendix_subsec:reduction_probabilistic_isometry_inversion}
The construction of deterministic exact unitary inversion protocol can be rewritten in terms of the Choi operator as
\begin{align}
    &C'\star \dketbra{U_\mathrm{in}}^{\otimes n}_{\mcI^n\mcO'^n} \approx \dketbra{U_\mathrm{in}^{-1}}_{\mcP'\mcF} \quad \forall U_\mathrm{in}\in\U(d)\nonumber\\
    &\Longrightarrow T\star C' \star \dketbra{V_\mathrm{in}}^{\otimes n}_{\mcI^n\mcO^n} \star \dketbra{V_\mathrm{in}}_{\mcP''\mcP} \nonumber\\
    &\hspace{24pt}= {d^2F_\mathrm{UI}-1 \over d^2-1} \dketbra{\1_d}_{\mcP''\mcF} + {\1_{\mcP''\mcP} \over d} \left(1-{d^2F_\mathrm{UI}-1 \over d^2-1}\right) \quad \forall V_\mathrm{in}\in\isometry{d}{D},
\end{align}
where $\mcP''$ is a Hilbert space given by $\mcP''=\CC^d$.
By replacing $C'$ with the probabilistic exact unitary inversion comb $\{C'_S, C'_F\}$ with the success probability $p_\mathrm{UI}$ satisfying
\begin{align}
    C'_S\star \dketbra{U_\mathrm{in}}^{\otimes n}_{\mcI^n\mcO'^n}  = p_\mathrm{UI} \dketbra{U_\mathrm{in}^{-1}}_{\mcP'\mcF} \quad \forall U_\mathrm{in}\in\U(d),
\end{align}
we obtain the probabilistic exact isometry inversion protocol given by
\begin{align}
    T\star C'_S \star \dketbra{V_\mathrm{in}}^{\otimes n}_{\mcI^n\mcO^n} \star \dketbra{V_\mathrm{in}}_{\mcP''\mcP}= p_\mathrm{UI} \dketbra{\1_d}_{\mcP''\mcF} \nonumber\\
    \forall V_\mathrm{in}\in\isometry{d}{D}.
\end{align}

\subsubsection{Universal error detection}
\label{appendix_subsec:reduction_universal_error_detection}
By discarding the output state of the isometry adjointation protocol shown in Eqs.~(\ref{eq:CI_V_composition}) and (\ref{eq:CO_V_composition}), we obtain the following protocol:
\begin{align}
    \Tr(\Pi_a \rho_\mathrm{in}) = \Tr\circ \supermap{C}_a[\map{V}_\mathrm{in}^{\otimes n}](\rho_\mathrm{in}),
\end{align}
This protocol satisfies
\begin{align}
    \Pi_I &= \Pi_{\Im V_\mathrm{in}} + \alpha_{C'} (\1_D-\Pi_{\Im V_\mathrm{in}}),\\
    \Pi_O &= (1- \alpha_{C'}) (\1_D-\Pi_{\Im V_\mathrm{in}}).
\end{align}
Since $C'$ corresponds to $C' = C''\otimes {\1_{\mcO'_n \mcF} \over d}$ using $C''$ shown in Corollary \ref{cor:universal_error_detection}, $\alpha_{C'}$ is given by
\begin{align}
    \alpha_C' = \Tr(C''\Sigma').
\end{align}

\section{Analysis on the optimal protocols}
\subsection{Proof of Theorem \ref{thm:unitary_group_symmetry}: $\U(d)\times \U(D)$ symmetry of the tasks}
\label{appendix_sec:proof_unitary_group_symmetry}
We show Theorem \ref{thm:unitary_group_symmetry} using the idea of a twirling map \cite{horodecki1999reduction} similarly to Refs.~\cite{quintino2019probabilistic, quintino2022deterministic}.  Assume that the Choi operators (\ref{eq:choi_each_task})
achieve the optimal performances in parallel, sequential or general protocols including indefinite causal order.  We define the $\U(d)\times\U(D)$-twirled
Choi operators by
\begin{align}
\begin{cases}
    C'\coloneqq \int_{\U(d)} \dd U' \int_{\U(D)} \dd U'' \map{U}'^{\otimes n+1}_{\mcI^n\mcF} \otimes \map{U}''^{\otimes n+1}_{\mcP\mcO^n} (C) & (\mathrm{deterministic \; isometry \; inversion})\\
    C'_a\coloneqq \int_{\U(d)} \dd U' \int_{\U(D)} \dd U'' \map{U}'^{\otimes n+1}_{\mcI^n\mcF} \otimes \map{U}''^{\otimes n+1}_{\mcP\mcO^n} (C_a)& (\mathrm{probabilistic \; exact \; isometry \; inversion})\\
    C'_a\coloneqq \int_{\U(d)} \dd U' \int_{\U(D)} \dd U'' \map{U}'^{\otimes n}_{\mcI^n} \otimes \map{U}''^{\otimes n+1}_{\mcP\mcO^n} (C_a) & (\mathrm{universal \; error \; detection})\\
    C'_a\coloneqq \int_{\U(d)} \dd U' \int_{\U(D)} \dd U'' \map{U}'^{\otimes n+1}_{\mcI^n\mcF} \otimes \map{U}''^{\otimes n+1}_{\mcP\mcO^n} (C_a)& (\mathrm{isometry \; adjointation})
\end{cases},
\end{align}
where $\dd U'$ and $\dd U''$ are the Haar measures of $\U(d)$ and $\U(D)$, respectively.
Then, these operators satisfy $C'_a\geq 0$ and $C'=\sum_a C'_a\in\mcW^{(x)}$, and the $\U(d)\times \U(D)$ symmetry given by
\begin{align}
\begin{cases}
    [C', U^{\prime\otimes n+1}_{\mcI^n \mcF} \otimes U^{\prime\prime \otimes n+1}_{\mcP\mcO^n}] = 0 & (\mathrm{deterministic \; isometry \; inversion})\\
    [C'_a, U^{\prime\otimes n+1}_{\mcI^n \mcF} \otimes U^{\prime\prime \otimes n+1}_{\mcP\mcO^n}] = 0   \quad \forall a\in\{S,F\}& (\mathrm{probabilistic \; exact \; isometry \; inversion})\\
    [C'_a, U^{\prime\otimes n}_{\mcI^n} \otimes U^{\prime\prime \otimes n+1}_{\mcP\mcO^n}] = 0 \quad \forall a\in\{I,O\} & (\mathrm{universal \; error \; detection})\\
    [C'_a, U^{\prime\otimes n+1}_{\mcI^n \mcF} \otimes U^{\prime\prime \otimes n+1}_{\mcP\mcO^n}] = 0 \quad \forall a\in\{I,O\} & (\mathrm{isometry \; adjointation})
\end{cases}
\end{align}
for all $U'\in\U(d)$ and $U''\in\U(D)$.  Then, we can show Theorem \ref{thm:unitary_group_symmetry} by showing that the $\U(d)\times\U(D)$-twirling does not decrease performances of each task.  We show this in the following subsections.

\subsubsection{Deterministic isometry inversion}
The worst-case fidelity $F_\mathrm{worst}$ is given using the Choi operator $C$ of the quantum superchannel by
\begin{align}
    F_\mathrm{worst}
    &= \inf_{V_\mathrm{in}\in\isometry{d}{D}} F_\mathrm{ch}[\supermap{C}(\map{V}_\mathrm{in}^{\otimes n})\circ \map{V}_\mathrm{in}, \1_d]\\
    &= {1\over d^2} \inf_{V_\mathrm{in}\in\isometry{d}{D}} \Tr[C\star (\dketbra{V_\mathrm{in}}_{\mcP''\mcP} \otimes \dketbra{V_\mathrm{in}}^{\otimes n}_{\mcI^n\mcO^n}) \dketbra{\1_d}_{\mcP''\mcF}]\\
    &= {1\over d^2} \inf_{V_\mathrm{in}\in\isometry{d}{D}} \Tr[C (\dketbra{V_\mathrm{in}^*}_{\mcF\mcP} \otimes \dketbra{V_\mathrm{in}^*}^{\otimes n}_{\mcI^n\mcO^n})].
\end{align}
Then, the worst-case channel fidelity of the $\U(d)\times\U(D)$-twirled Choi operator is given by
\begin{align}
    F'_\mathrm{worst}
    &={1\over d^2} \inf_{V_\mathrm{in}\in\isometry{d}{D}} \int_{\U(d)} \dd U' \int_{\U(D)} \dd U'' \nonumber\\
    &\hspace{30pt}\Tr[\map{U}'^{\otimes n+1}_{\mcI^n\mcF} \otimes \map{U}''^{\otimes n+1}_{\mcP\mcO^n}(C) (\dketbra{V_\mathrm{in}^*}_{\mcF\mcP} \otimes \dketbra{V_\mathrm{in}^*}^{\otimes n}_{\mcI^n\mcO^n})]\\
    &\geq {1\over d^2} \int_{\U(d)} \dd U' \int_{\U(D)} \dd U''\inf_{V_\mathrm{in}\in\isometry{d}{D}} \nonumber\\
    &\hspace{30pt} \Tr[C \map{U}'^{\dagger\otimes n+1}_{\mcI^n\mcF} \otimes \map{U}''^{\dagger\otimes n+1}_{\mcP\mcO^n}(\dketbra{V_\mathrm{in}^*}_{\mcF\mcP} \otimes \dketbra{V_\mathrm{in}^*}^{\otimes n}_{\mcI^n\mcO^n})]\\
    &={1\over d^2} \int_{\U(d)} \dd U' \int_{\U(D)} \dd U''\inf_{V_\mathrm{in}\in\isometry{d}{D}} \nonumber\\
    &\hspace{30pt} \Tr[C (\dketbra{U''^\dagger V_\mathrm{in}^* U'^*}_{\mcF\mcP} \otimes \dketbra{U''^\dagger V_\mathrm{in}^* U'^*}^{\otimes n}_{\mcI^n\mcO^n})]\\
    &= \int_{\U(d)} \dd U' \int_{\U(D)} \dd U'' F_\mathrm{worst}\\
    &= F_\mathrm{worst}.
\end{align}

\subsubsection{Probabilistic exact isometry inversion}
Suppose a Choi operator $C_S$ satisfies
\begin{align}
    C_S \star (\dketbra{V_\mathrm{in}}_{\mcP''\mcP} \otimes \dketbra{V_\mathrm{in}}^{\otimes n}_{\mcI^n\mcO^n}) = p \dketbra{\1_d}_{\mcP''\mcF} \quad \forall V_\mathrm{in}\in\isometry{d}{D}.
\end{align}
Then, the $\U(d)\times\U(D)$-twirled Choi operator also satisfies
\begin{align}
    &C'_S \star (\dketbra{V_\mathrm{in}}_{\mcP''\mcP} \otimes \dketbra{V_\mathrm{in}}^{\otimes n}_{\mcI^n\mcO^n})\nonumber\\
    &= \int_{\U(d)} \dd U' \int_{\U(D)} \dd U'' \map{U}'^{\otimes n+1}_{\mcI^n\mcF} \otimes \map{U}''^{\otimes n+1}_{\mcP\mcO^n} (C_S) \star (\dketbra{V_\mathrm{in}}_{\mcP''\mcP} \otimes \dketbra{V_\mathrm{in}}^{\otimes n}_{\mcI^n\mcO^n})\\
    &= \int_{\U(d)} \dd U' \int_{\U(D)} \dd U'' C_S \star \map{U}'^{T\otimes n+1}_{\mcI^n\mcF} \otimes \map{U}''^{T\otimes n+1}_{\mcP\mcO^n}(\dketbra{V_\mathrm{in}}_{\mcP''\mcP} \otimes \dketbra{V_\mathrm{in}}^{\otimes n}_{\mcI^n\mcO^n})\\
    &= \int_{\U(d)} \dd U' \int_{\U(D)} \dd U'' C_S \star (\dketbra{U''^T V_\mathrm{in} U'}_{\mcP''\mcP} \otimes \dketbra{U''^T V_\mathrm{in} U'}^{\otimes n}_{\mcI^n\mcO^n})\\
    &= \int_{\U(d)} \dd U' \int_{\U(D)} \dd U'' p \dketbra{\1_d}_{\mcP''\mcF}\\
    &= p \dketbra{\1_d}_{\mcP''\mcF},
\end{align}
i.e., it implements isometry inversion with success probability $p$.

\subsubsection{Universal error detection}
Suppose a supermap $\supermap{C}_I$ satisfies the one-sided error condition given by
\begin{align}
    \Tr[\supermap{C}_I(\map{V}_\mathrm{in}^{\otimes n}) \rho_\mathrm{in}] = 1 \quad \forall \rho\in \mcL(\Im V_\mathrm{in}), V_\mathrm{in}\in\isometry{d}{D}.
\end{align}
In terms of the Choi operator, this relation is given by
\begin{align}
    C_I \star \dketbra{V_\mathrm{in}}^{\otimes n}_{\mcI^n\mcO^n} \star (\rho_\mathrm{in})_\mcP = 1 \quad \forall \rho\in \mcL(\Im V_\mathrm{in}), V_\mathrm{in}\in\isometry{d}{D}.
\end{align}
Then, the $\U(d)\times\U(D)$-twirlied Choi operator also satisfies the one-sided error condition, i.e.,
\begin{align}
    &C'_I \star \dketbra{V_\mathrm{in}}^{\otimes n}_{\mcI^n\mcO^n}\star (\rho_\mathrm{in})_\mcP\nonumber\\
    &= \int_{\U(d)} \dd U' \int_{\U(D)} \dd U'' \map{U}'^{\otimes n}_{\mcI^n} \otimes \map{U}''^{\otimes n+1}_{\mcP\mcO^n} (C_I) \star \dketbra{V_\mathrm{in}}^{\otimes n}_{\mcI^n\mcO^n}\star (\rho_\mathrm{in})_\mcP\\
    &= \int_{\U(d)} \dd U' \int_{\U(D)} \dd U'' C_I \star \map{U}'^{T\otimes n}_{\mcI^n}\otimes \map{U}''^{T\otimes n}_{\mcO^n}(\dketbra{V_\mathrm{in}}^{\otimes n}_{\mcI^n\mcO^n}) \star \map{U}^{\prime\prime T}(\rho_\mathrm{in})_{\mcP}\\
    &= \int_{\U(d)} \dd U' \int_{\U(D)} \dd U'' C_I \star (\dketbra{U''^T V_\mathrm{in} U'}^{\otimes n}_{\mcI^n\mcO^n}) \star \map{U}^{\prime\prime T}(\rho_\mathrm{in})_{\mcP}\\
    &= 1
\end{align}
holds for all $\rho_\mathrm{in}\in\mcL(\Im V_\mathrm{in})$ and $V_\mathrm{in}\isometry{d}{D}$.
The worst-case operational distance of the original protocol is given by
\begin{align}
    \delta_\mathrm{worst} = \inf_{V_\mathrm{in}\in\isometry{d}{D}} D_\mathrm{op}(\{(C_a\star \dketbra{V_\mathrm{in}})^T\}_{a}, \{\Pi_{\Im V_\mathrm{in}}, \1_D-\Pi_{\Im V_\mathrm{in}}\}).
\end{align}
Then, the worst-case operation distance of the $\U(d)\times\U(D)$-twirlied Choi operator is given by
\begin{align}
    &D'_\mathrm{worst}\nonumber\\
    &= \inf_{V_\mathrm{in}\in\isometry{d}{D}} D_\mathrm{op}\Bigg(\left\{\int_{\U(d)} \dd U' \int_{\U(D)} \dd U'' \mcU''[C_a \star (\dketbra{U''^T V_\mathrm{in} U'}^{\otimes n}]^T\right\}_{a}, \nonumber\\
    &\hspace{120pt}\{\Pi_{\Im V_\mathrm{in}}, \1_D-\Pi_{\Im V_\mathrm{in}}\}\Bigg)\\
    &\leq \int_{\U(d)} \dd U' \int_{\U(D)} \dd U'' \inf_{V_\mathrm{in}\in\isometry{d}{D}}  D_\mathrm{op} \Big([C_a \star (\dketbra{U''^T V_\mathrm{in} U'}^{\otimes n}]^T, \nonumber\\
    &\hspace{120pt} \{\mcU^{\prime\prime *}(\Pi_{\Im V_\mathrm{in}}), \mcU^{\prime\prime *}(\1_D-\Pi_{\Im V_\mathrm{in}})\}\Big)\\
    & =  \int_{\U(d)} \dd U' \int_{\U(D)} \dd U'' \inf_{V_\mathrm{in}\in\isometry{d}{D}}  D_\mathrm{op} \Big([C_a \star (\dketbra{U''^T V_\mathrm{in} U'}^{\otimes n}]^T, \nonumber\\
    &\hspace{120pt} \{\Pi_{\Im U''^T V_\mathrm{in} U'}), \1_D-\Pi_{\Im U''^T V_\mathrm{in} U'}\}\Big)\\
    &\leq \int_{\U(d)} \dd U' \int_{\U(D)} \dd U'' \delta_\mathrm{worst}\\
    &= \delta_\mathrm{worst}
\end{align}
holds, where we use the convexity \eqref{eq:POVM_distance_convexity} and the unitary group invariance \eqref{eq:POVM_distance_unitary_invariance} of the operational distance.

\subsubsection{Isometry adjointation}
We introduce the notation to represent the diamond norm of a quantum channel $\Phi$ in terms of its Choi operator $J_{\Phi}$, i.e., we define $\mfD[J_{\Phi}]$ by
\begin{align}
    \mfD[J_{\Phi}] \coloneqq \|\Phi\|_{\diamond}.
\end{align}
Then, $\mfD[J_{\Phi}]$ satisfies the following properties:
\begin{align}
    \mfD[(U_\mcI \otimes U'_\mcO)J_{\Phi} (U_\mcI \otimes U'_\mcO)^\dagger] = \mfD[J_{\Phi}] \quad &\forall U\in\U(\dim \mcI), U'\in\U(\dim \mcO),\\
    \mfD[a J_{\Lambda^{(1)}} + b J_{\Lambda^{(2)}}] \leq a \mfD[J_{\Lambda^{(1)}}]+ b \mfD[J_{\Lambda^{(2)}}] \quad &\forall a,b\geq 0,
\end{align}
which corresponds to the following properties of the diamond norm:
\begin{align}
    \|\map{U}' \circ \Phi \circ \map{U}\|_{\diamond} = \|\Phi\|_{\diamond} \quad &\forall U\in\U(\dim \mcI), U'\in\U(\dim \mcO),\\
    \|a\Lambda^{(1)} + b \Lambda^{(2)}\|_{\diamond} \leq a \|\Lambda^{(1)}\|_{\diamond} + b \|\Lambda^{(2)}\|_{\diamond} \quad &\forall a,b\geq 0,
\end{align}
The worst-case diamond-norm error $\epsilon$ is given using the Choi operator $C_I$ by
\begin{align}
    \epsilon
    = \inf_{V_\mathrm{in}\in\isometry{d}{D}}  \mfD[C\star \dketbra{V_\mathrm{in}}_{\mcI^n \mcO^n}^{\otimes n} \otimes \ketbra{0}- \dketbra{V_\mathrm{in}^\dagger}_{\mcP\mcF} \otimes \ketbra{0}_{\mcA} - {\1_\mcP\otimes \1_\mcF \over d} \otimes \ketbra{1}_{\mcA}],
\end{align}
where $C'\coloneqq C_I\otimes \ketbra{0}_\mcA + C_O\otimes \ketbra{1}_\mcA$ and $\mcA = \CC^2$.
Thus, the worst-case diamond-norm error of the $\U(d)\times\U(D)$-twirled Choi operator is given by
\begin{align}
    \epsilon'
    &= \inf_{V_\mathrm{in}\in\isometry{d}{D}} \mfD[\int_{\U(d)} \dd U' \int_{\U(D)} \dd U'' \map{U}'^{\otimes n+1}_{\mcI^n\mcF} \otimes \map{U}''^{\otimes n+1}_{\mcP\mcO^n} \otimes \1_\mcA (C)\star \dketbra{V_\mathrm{in}}_{\mcI^n \mcO^n}^{\otimes n} \nonumber\\
    &\hspace{90pt}- \dketbra{V_\mathrm{in}^\dagger}_{\mcP\mcF} \otimes \ketbra{0}_{\mcA} - {\1_\mcP\otimes \1_\mcF \over d} \otimes \ketbra{1}_{\mcA}]\\
    &\leq \int_{\U(d)} \dd U' \int_{\U(D)} \dd U'' \inf_{V_\mathrm{in}\in\isometry{d}{D}} \mfD[\map{U}'^{\otimes n+1}_{\mcI^n\mcF} \otimes \map{U}''^{\otimes n+1}_{\mcP\mcO^n} \otimes \1_\mcA (C)\star \dketbra{V_\mathrm{in}}_{\mcI^n \mcO^n}^{\otimes n} \nonumber\\
    &\hspace{150pt} - \dketbra{V_\mathrm{in}^\dagger}_{\mcP\mcF} \otimes \ketbra{0}_{\mcA} - {\1_\mcP\otimes \1_\mcF \over d} \otimes \ketbra{1}_{\mcA}]\\
    &= \int_{\U(d)} \dd U' \int_{\U(D)} \dd U'' \inf_{V_\mathrm{in}\in\isometry{d}{D}} \mfD[C\star \map{U}'^{T\otimes n}_{\mcI^n} \otimes \map{U}''^{T\otimes n}_{\mcO^n}(\dketbra{V_\mathrm{in}}_{\mcI^n \mcO^n}^{\otimes n}) \nonumber\\
    &\hspace{60pt}- \map{U}'^\dagger_{\mcF} \otimes \map{U}''^\dagger_{\mcP}\otimes \1_\mcA(\dketbra{V_\mathrm{in}^\dagger}_{\mcP\mcF}\otimes \ketbra{0}_\mcA + {\1_\mcP\otimes \1_\mcF \over d} \otimes \ketbra{1}_{\mcA})]\\
    &= \int_{\U(d)} \dd U' \int_{\U(D)} \dd U'' \inf_{V_\mathrm{in}\in\isometry{d}{D}}\mfD[C\star (\dketbra{U''^T V_\mathrm{in}U'}_{\mcI^n \mcO^n}^{\otimes n}) \nonumber\\
    &\hspace{90pt}- \dketbra{U'^\dagger V_\mathrm{in}^\dagger U''^*}_{\mcP\mcF}\otimes \ketbra{0}_\mcA - {\1_\mcP\otimes \1_\mcF \over d} \otimes \ketbra{1}_{\mcA}]\\
    &= \int_{\U(d)} \dd U' \int_{\U(D)} \dd U'' \epsilon\\
    &= \epsilon,
\end{align}
i.e., it implements isometry adjointation with approximation error $\epsilon'\leq \epsilon$.

\subsection{Proof of Theorem \ref{thm:optimal_parallel_error_detection}: Optimal parallel protocol for universal error detection}
\label{appendix_sec:optimal_parallel_error_detection}
Due to Theorem \ref{thm:optimal_construction}, the optimal performance $\alpha$ of parallel error detection is given by minimizing $\alpha_\phi$ shown in Theorem \ref{cor:universal_error_detection}:
\begin{align}
    \alpha_\mathrm{opt}^{(\mathrm{PAR})} = \min_{\Tr(\phi) = 1, \phi \geq 0} \sum_{\lambda\in\young{d}{n}} \Tr(\phi \Pi_\lambda) \left[1-\sum_{\mu\in\lambda+\square \setminus \young{d}{n+1}} {\mathrm{hook}(\lambda)\over \mathrm{hook}(\mu)}\right].
\end{align}
The right-hand side has the minimum value
\begin{align}
    \alpha_\mathrm{opt}^{(\mathrm{PAR})} = 1-\max_{\lambda\in\young{d}{n}} \sum_{\mu\in\lambda+\square \setminus \young{d}{n+1}} {\mathrm{hook}(\lambda)\over \mathrm{hook}(\mu)}
\end{align}
at $\phi = {\1_{\mcU_\lambda^{(d)}} \over m_\lambda^{(d)}} \otimes \mathrm{arb}_{\mcS_\lambda}$ for an arbitrary state $\mathrm{arb}\in\mcL(\mcS_\lambda)$, where $\lambda$ is given by
\begin{align}
    \lambda = \arg\max_{\lambda\in\young{d}{n}} \sum_{\mu\in\lambda+\square \setminus \young{d}{n+1}} {\mathrm{hook}(\lambda)\over \mathrm{hook}(\mu)}.
\end{align}
We define $f(\lambda)$ for $\lambda\in\young{d}{n}$ by
\begin{align}
    f(\lambda) \coloneqq \sum_{\mu\in\lambda+\square \setminus \young{d}{n+1}} {\mathrm{hook}(\lambda)\over \mathrm{hook}(\mu)},
\end{align}
and derive the maximum value of $f(\lambda)$. We denote the number of boxes in the $i$-th row of $\lambda$ and $\mu$ by $\lambda_i$ and $\mu_i$, respectively. By definition of Young diagrams, $\lambda_i$ satisfies $\lambda_1\geq \cdots \lambda_d \geq 0$ and $\sum_i \lambda_i = n$. If $\lambda_d = 0$, any Young diagram in the set $\lambda+\square$ has depth smaller than or equal to $d$, i.e., $f(\lambda) = 0$. If $\lambda_1\geq \cdots \lambda_d \geq 1$ holds, the set $\lambda+\square\setminus \young{d}{n+1}$ is a one-point set whose element $\mu$ is given by $\mu_i = \lambda_i$ for $i\in\{1, \ldots, d\}$ and $\mu_{d+1} = 1$.
Therefore, $f(\lambda)$ is given by
\begin{align}
    f(\lambda) = {\text{hook}(\lambda) \over \text{hook}(\mu)} = \prod_{i=1}^{d} {\lambda_i+d-i \over \lambda_i+d+1-i}.
\end{align}
We derive the maximum value of $f(\lambda)$ for $\lambda_1, \ldots, \lambda_d$ such that $\lambda_1\geq \cdots \geq \lambda_d\geq 1$ and $\sum_i \lambda_i=n$. We show that $\lambda$ giving the maximum value of $f(\lambda)$ should satisfy $|\lambda_{i_1}-\lambda_{i_2}|\leq 1$ for any $1\leq i_1 < i_2\leq d$ by contradiction. If there exists $1\leq i_1 < i_2 \leq d$ such that $\lambda_{i_1}\geq \lambda_{i_2} +2$, at least one of the following conditions holds:
\begin{enumerate}
    \item There exists $i'_1 \in \{1, \ldots, d\}$ such that $\lambda_{i'_1}\geq \lambda_{i'_1+1}+2$\label{eq:young_diagram_cond1}
    \item There exists $i'_1, i'_2\in \{1, \ldots, d\}$ such that $i'_1+1\leq i'_2-1$, $\lambda_{i'_1}\geq \lambda_{i'_1+1}+1$ and $\lambda_{i'_2-1}\geq \lambda_{i'_2}+1$
\end{enumerate}
If the first condition holds, we define $i'_2\coloneqq i'_1+1$. Then, $\kappa_1, \ldots, \kappa_d$ defined by
\begin{align}
    \kappa_i=
    \begin{cases}
        \kappa_i & (i\neq i'_1, i'_2)\\
        \kappa_{i'_1}-1 & (i=i'_1)\\
        \kappa_{i'_2}+1 & (i=i'_2)
    \end{cases}
\end{align}
satisfies $\kappa_1\geq \cdots \geq \kappa_d$, $\sum_i \kappa_i = n$, and
\begin{align}
    \frac{f(\kappa)}{f(\lambda)}=\frac{\lambda_{i'_1}+d-i'_i-1}{\lambda_{i'_1}+d-i'_1+1}\frac{\lambda_{i'_2}+d-i'_2+2}{\lambda_{i'_2}+d-i'_2}>1,
\end{align}
i.e., $\lambda$ cannot give the maximum value of $f(\lambda)$.
Therefore, the Young diagram $\lambda$ giving the maximal value of $f(\lambda)$ should satisfy $|\lambda_{i_1}-\lambda_{i_2}|\leq 1$ for any $1\leq i_1 < i_2\leq d$. Such a Young diagram is uniquely determined as
\begin{align}
\label{eq:def_lambda}
    \lambda_i=
    \begin{cases}
        k+1 & (i\in\{1, \ldots, l\})\\
        k & (i\in \{l+1, \ldots, d\})
    \end{cases},
\end{align}
where $k\in \mathbb{Z}$ and $l\in \{0, \ldots, d-1\}$ are defined by $n=kd+l$. Then, we obtain
\begin{align}
    \alpha_\mathrm{opt}^{(\mathrm{PAR})} = 1-\max_{\lambda\in\young{d}{n}} f(\lambda) = {1 \over d+k+1}\left(d+{d-l \over d+k-l}\right).
\end{align}

\subsection{Proof of Theorem \ref{thm:optimal_parallel_isometry_adjointation}: Asymptotically optimal parallel protocol for isometry adjointation}
\label{appendix_sec:optimal_parallel_isometry_adjointation}
For $d=2$, Refs.~\cite{chiribella2004efficient,chiribella2005optimal} present the maximum-likelihood qubit-unitary estimation protocol achieving the entanglement fidelity $F_\mathrm{est} = 1- O(n^{-2})$ using $n$ calls of input unitary operation.  Although Refs.~\cite{chiribella2004efficient,chiribella2005optimal} do not present the explicit form of the probe state, one can utilize the resource state for entanglement-assisted alignment of the reference frames presented in Ref.~\cite{bagan2004entanglement} to achieve the same asymptotic scaling of the entanglement fidelity $F_\mathrm{est} = 1- O(n^{-2})$. The resource state presented in Ref.~\cite{bagan2004entanglement} is given by
\begin{align}
    \ket{\phi} = {2\over \sqrt{n+3}} \sum_{j=0(1/2)}^{n/2} {1\over \sqrt{2j+1}}\sin{(2j+1)\pi \over n+3} \sum_{m=-j}^{j}\ket{jm}_{\mcI^n} \ket{jm}_{\mcA},
\end{align}
where $j$ and $m$ represents total angular momentum and $z$-component of total angular momentum of a $n$-qubit system. The summation of $j$ starts from $0$ if $n$ is even and $1/2$ if $n$ is odd.  These values correspond to the Schur basis, where $j$ corresponds to the Young diagram $\lambda$ whose number of boxes in $i$-th row, denoted by $\lambda_i$ ($i=1,2$), is determined by $j=(\lambda_1-\lambda_2)/2$, and $m$ corresponds to an element in $\mcU_\lambda^{(2)}$.
Then, the value $\alpha_\phi$ shown in Theorem \ref{thm:parallel_isometry_adjointation} is calculated as
\begin{align}
    \alpha_\phi
    &= {4 \over n+3}\sum_{j=0(1/2)}^{n/2} \sin^2{(2j+1)\pi \over n+3} \left[1-{({n\over 2} + j +1)({n\over 2}-j) \over ({n\over 2}+j+2)({n\over 2}-j+1)}\right].\label{eq:alpha_qubit}
\end{align}
To evaluate $\alpha_\phi$ in the asymptotic limit $n\to \infty$, we introduce the variable $x={2j\over n}$ and approximate the sum in Eq.~(\ref{eq:alpha_qubit}) by the integral over $x$ as
\begin{align}
    \alpha_\phi
    &= {4\over n+3} \int_{0}^{1} \dd x {n\over 2} \sin^2 {(x+{1\over n})\pi \over 1+{3\over n}} \left[1-{(1+x+{2\over n})(1-x) \over (1+x+{4\over n})(1-x+{2\over n})}\right] +O(n^{-2})\\
    &= {8\over n} \int_{0}^{1} \dd x {\sin^2(x\pi) \over 1-x^2} + O(n^{-2})\\
    &\approx \frac{6.2287}{n}+O(n^{-2}),
\end{align}
where the integral is evaluated by \textsc{Mathematica} \cite{mathematica}. Thus, utilizing this estimation method to construct the parallel isometry adjointation protocol, one can achieve
\begin{align}
    \epsilon = \max\{\alpha_\phi, 1-F_\mathrm{est}\} = {6.2287 \over n} + O(n^{-2}).
\end{align}

For a higher-dimension $d>2$, Ref.~\cite{yang2020optimal} presents an asymptotically optimal unitary estimation protocol achieving the entanglement fidelity $F_\mathrm{est} = 1-O(d^4 n^{-2})$.  This protocol utilizes the probe state given by
\begin{align}
    \ket{\phi} = \sum_{\lambda\in \mathrm{S}_\mathrm{young}} \sqrt{{q_\lambda \over d_\lambda m_\lambda^{(d)}}} \sum_{i=1}^{d_\lambda} \sum_{u=1}^{m_\lambda^{(d)}} (\ket{\lambda,i}_{\mcS_\lambda} \otimes \ket{\lambda,u}_{\mcU_\lambda^{(d)}})_{\mcI^n} \otimes (\ket{\lambda,i}_{\mcS_\lambda} \otimes \ket{\lambda,u}_{\mcU_\lambda^{(d)}})_{\mcA},
\end{align}
where $\{q_\lambda\}$ is a probability distribution over the set $\mathrm{S}_{\mathrm{young}} \subset \young{d}{n}$ defined by
\begin{multline}
    \mathrm{S}_{\mathrm{young}} \coloneqq \{\lambda\in\young{d}{n}| \lambda_i = \mu_{0,i} + N(2d-3)+1-(N+1)(i-1)+\tilde{\lambda}_i \\
    \forall i\leq d-1, \exists \tilde{\lambda}\in\{0,\cdots,N-1\}^{d-1}\}.
\end{multline}
Here, $\lambda_i$ represents the number of boxes in the $i$-th row of $\lambda$, $N$ is defined by $N\coloneqq \lfloor {1\over 3d-2}({2n\over d-1}+d-2)\rfloor$, $\mu_{0,i}$ is defined by $\mu_{0,i}\coloneqq \lfloor {n_0 \over d}\rfloor + 1$ for $i\in \{1, \ldots, n_0-\lfloor {n_0 \over d}\rfloor d\}$ and $\mu_{0,i}\coloneqq \lfloor {n_0 \over d}\rfloor$ for $i\in \{n_0-\lfloor {n_0 \over d}\rfloor d+1, \ldots, d\}$, and $n_0$ is defined by $n_0\coloneqq n-{((3d-2)N-d+2)(d-1) \over 2}$. Then, the value $\alpha_\phi$ shown in Theorem \ref{thm:parallel_isometry_adjointation} is evaluated as
\begin{align}
    \alpha_\phi
    &= \sum_{\lambda\in\mathrm{S}_\mathrm{young}} q_\lambda \left[1-\sum_{\mu\in\lambda+\square\setminus\young{d}{n+1}} {\mathrm{hook}(\lambda)\over\mathrm{hook}(\mu)}\right]\\
    &= \sum_{\lambda\in\mathrm{S}_\mathrm{young}} q_\lambda \left[1-\prod_{i=1}^{d} {\lambda_i+d-i \over \lambda_i+d-i+1}\right].
\end{align}
We evaluate $\alpha_\phi$ in the asymptotic limit $n\to\infty$.  In this region, $\lambda_i$ is given by
\begin{align}
    \lambda_i = {4n\over 3d} - {2n\over 3d^2}(i-1) + O(d, d^{-2}n).
\end{align}
Thus, $\alpha_\phi$ is evaluated as follows.
\begin{align}
    \alpha_\phi
    &= 1- \prod_{i=1}^{d}\left[1-{3d\over 2n}{1+O(d^2n^{-1}, d^{-1})\over 2 - {i-1\over d}}\right]\\
    &= {3d\over 2n}\sum_{i=1}^{d} {1\over 2-{i-1\over d}} +O(d^4n^{-2}, dn^{-1}).\\
    &= {3d^2\over 2n}\int_{0}^{1}{\dd x \over 2-x} + O(d^4n^{-2}, dn^{-1})\\
    &= {3\ln 2\over 2} {d^2\over n} + O(d^4n^{-2}, dn^{-1}).
\end{align}
Thus, utilizing this estimation method to construct the parallel isometry adjointation protocol, one can achieve
\begin{align}
    \epsilon = \max\{\alpha_\phi, 1-F_\mathrm{est}\} = {3\ln 2 \over 2}{d^2\over n} + O(d^4n^{-2}, dn^{-1}).
\end{align}

\section{Proof of Theorem \ref{thm:optimal_construction}: Optimal construction of isometry inversion, universal error detection, and isometry adjointation protocols}
\label{appendix_sec:proof_optimal_construction}
We show Theorem \ref{thm:optimal_construction} by constructing the parallel or sequential protocol for isometry inversion, universal error detection, and isometry adjointation in the forms shown in Theorems \ref{thm:sequential_isometry_adjointation} and \ref{thm:parallel_isometry_adjointation} and Corollaries \ref{cor:isometry_inversion} and \ref{cor:universal_error_detection} achieving the optimal performances among parallel or sequential protocols.  To this end, we utilize Theorem \ref{thm:unitary_group_symmetry} to write down the $\U(d)\times \U(D)$ symmetric Choi operators of parallel or sequential protocols achieving the optimal performances.  Then we construct the protocols in the forms shown in Theorems \ref{thm:sequential_isometry_adjointation} and \ref{thm:parallel_isometry_adjointation} and Corollaries \ref{cor:isometry_inversion} and \ref{cor:universal_error_detection} by constructing the corresponding Choi operators.

Suppose $\mcI_1 = \cdots = \mcI_n = \mcF = \CC^d$, $\mcO_1 = \cdots = \mcO_n = \mcP = \CC^D$, $\mcO'_1 = \cdots = \mcO'_n = \mcP'= \CC^d$, and we define the joint Hilbert spaces $\mcI^n\coloneqq \bigotimes_{i=1}^{n} \mcI_i$, $\mcO^n\coloneqq \bigotimes_{i=1}^{n} \mcO_i$, and $\mcO'^n\coloneqq \bigotimes_{i=1}^{n} \mcO'_i$.  We assume that 
\begin{align}
    C\in
    \begin{cases}
        \mcL(\mcI^n\otimes \mcO^n\otimes \mcP\otimes\mcF) & (\mathrm{isometry\;inversion}, \mathrm{isometry\;adjointation})\\
        \mcL(\mcI^n\otimes \mcO^n\otimes \mcP) & (\mathrm{universal\;error\;detection})
    \end{cases}
\end{align}
is the Choi operator satisfying the $\U(d)\times\U(D)$ symmetry (\ref{eq:C_sudsuDsymmetry}), which can be written as Eqs.~(\ref{eq:c_yy_basis1}) or (\ref{eq:c_yy_basis2}) using $\{C_{\mu\nu}\}$ or $\{C_{\lambda\nu}\}$.  If $C$ is the Choi operator of parallel or sequential protocol, i.e., $C\in\mcW^{(x)}$ for $x\in\{\mathrm{PAR}, \mathrm{SEQ}\}$, the Choi operator
\begin{align}
    C'\in
    \begin{cases}
        \mcL(\mcI^n\otimes \mcO'^n\otimes \mcP'\otimes\mcF) & (\mathrm{isometry\;inversion}, \mathrm{isometry\;adjointation})\\
        \mcL(\mcI^n\otimes \mcO'^n\otimes \mcP') & (\mathrm{universal\;error\;detection})
    \end{cases}
\end{align}
defined by
\begin{align}
    C' \coloneqq \sum_{\mu\in \young{d}{n+1}} \sum_{\nu\in \young{d}{n+1}}\sum_{i,j=1}^{d_\mu} \sum_{k,l=1}^{d_\nu} \frac{[C_{\mu\nu}]_{ik, jl}}{m_\mu^{(d)} m_\nu^{(D)}} (E_{ij}^{\mu, d})_{\mcI^n\mcF} \otimes (E_{kl}^{\nu, d})_{\mcP'\mcO'^n}\label{eq:c'_yybasis1}
\end{align}
for isometry inversion or isometry adjointation, and
\begin{align}
    C' \coloneqq \sum_{\lambda\in \young{d}{n}} \sum_{\nu\in \young{d}{n+1}}\sum_{a,b=1}^{d_\lambda} \sum_{k,l=1}^{d_\nu} \frac{[C_{\lambda\nu}]_{ak,bl}}{m_\lambda^{(d)} m_\nu^{(D)}} (E_{ab}^{\lambda, d})_{\mcI^n} \otimes (E_{kl}^{\nu, d})_{\mcP'\mcO'^n}\label{eq:c'_yybasis2}
\end{align}
for universal error detection, satisfies $C'\in \mcW^{(x)}$ [see Eqs.~(\ref{eq:symmetric_parallel_condition}) and (\ref{eq:symmetric_sequential_condition})].  We utilize this fact to show the construction of the protocols in the forms shown in Theorems \ref{thm:sequential_isometry_adjointation} and \ref{thm:parallel_isometry_adjointation} and Corollaries \ref{cor:isometry_inversion} and \ref{cor:universal_error_detection}.

\subsection{Probabilistic exact isometry inversion}
Assume that $\{C_S, C_F\}\subset \mcL(\mcI^n\otimes\mcO^n\otimes\mcP\otimes\mcF)$ is the $\U(d)\times\U(D)$ symmetric Choi operator of parallel ($x=\mathrm{PAR}$) or sequential ($x=\mathrm{SEQ}$) protocol achieving the optimal success probability $p_\mathrm{opt}^{(x)}(d,D,n)$ of isometry inversion, which can be written as Eqs.~(\ref{eq:cs_yybasis}) and (\ref{eq:cf_yybasis}).  Defining $\{C'_S, C'_F\} \subset \mcL(\mcI^n\otimes\mcO'^n\otimes\mcP'\otimes\mcF)$ by
\begin{align}
    C'_S &= \sum_{\mu\in \young{d}{n+1}} \sum_{\nu\in\young{d}{n+1}}\sum_{i,j=1}^{d_\mu} \sum_{k,l=1}^{d_\nu} \frac{[S_{\mu\nu}]_{ik, jl}}{m_\mu^{(d)} m_\nu^{(D)}} (E_{ij}^{\mu, d})_{\mcI^n\mcF} \otimes (E_{kl}^{\nu, d})_{\mcP'\mcO'^n},\\
    C'_F &= \sum_{\mu\in \young{d}{n+1}} \sum_{\nu\in\young{d}{n+1}}\sum_{i,j=1}^{d_\mu} \sum_{k,l=1}^{d_\nu} \frac{[F_{\mu\nu}]_{ik, jl}}{m_\mu^{(d)} m_\nu^{(D)}} (E_{ij}^{\mu, d})_{\mcI^n\mcF} \otimes (E_{kl}^{\nu, d})_{\mcP'\mcO'^n},
\end{align}
$C'_S, C'_F\geq 0$ and $C'\coloneqq C'_S+C'_F\in\mcW^{(x)}$ holds. Thus, the corresponding supermap $\{\supermap{C}_S, \supermap{C}_F\}$ can be implemented in a parallel ($x=\mathrm{PAR}$) or sequential ($x=\mathrm{SEQ}$) protocol. In particular for the case $x=\mathrm{PAR}$, since its Choi operator satisfies the $\U(d)\times\U(d)$ symmetry, it can be implemented using a delayed input-state protocol \cite{quintino2019reversing,quintino2019probabilistic}. Since $\{C_S, C_F\}$ achieves the optimal success probability $p_\mathrm{opt}^{(x)}(d,D,n)$, it satisfies [see Eq.~(\ref{eq:sdp_probabilistic_isometry_inversion_comp_basis})]
\begin{align}
    \Tr(C_S \Omega) &= p_\mathrm{opt}^{(x)}(d,D,n),\\
    \Tr(C_S \Omega) &= \Tr[C_S(\Xi\otimes \1_\mcF)].
\end{align}
Defining $\Omega'$ and $\Xi'$ by replacing $\mcP\mcO^n$ and $D$ in Eqs.~(\ref{eq:Omega_yybasis}) and (\ref{eq:Xi_yybasis}) with $\mcP'\mcO'^n$ and $d$, respectively, $\{C'_S, C'_F\}$ satisfies
\begin{align}
    \Tr(C'_S \Omega') &= p_\mathrm{opt}^{(x)}(d,D,n),\\
    \Tr(C'_S \Omega') &= \Tr[C'_S(\Xi'\otimes \1_\mcF)],
\end{align}
which implies $\{C'_S, C'_F\}$ implements probabilistic exact $d$-dimensional unitary inversion with success probability $p_\mathrm{opt}^{(x)}(d,D,n)$.  Therefore, we can construct a parallel or sequential protocol achieving the optimal success probability $p_\mathrm{opt}^{(x)}(d,D,n)$ of isometry inversion using the construction shown in Theorem \ref{cor:isometry_inversion}.

\subsection{Deterministic isometry inversion}
\label{appendix_sec:proof_optimal_construction_deterministic_isometry_inversion}
Assume that $C\in \mcL(\mcI^n\otimes\mcO^n\otimes\mcP\otimes\mcF)$ is the $\U(d)\times\U(D)$ symmetric Choi operator of parallel ($x=\mathrm{PAR}$) or sequential ($x=\mathrm{SEQ}$) protocol achieving the optimal worst-case channel fidelity $F_\mathrm{opt}^{(x)}(d,D,n)$, which can be written as Eq.~(\ref{eq:c_yy_basis1}).  Defining $C'\in\mcL(\mcI^n\otimes\mcO'^n\otimes\mcP'\otimes\mcF)$ by Eq.~(\ref{eq:c'_yybasis1}), $C'\geq 0$ and $C'\in\mcW^{(x)}$ holds. Thus, the corresponding supermap $\supermap{C}$ can be implemented in a parallel or sequential protocol. In particular for the case $x=\mathrm{PAR}$, since its Choi operator satisfies the $\U(d)\times\U(d)$ symmetry, it can be implemented using an estimation-based protocol \cite{quintino2022deterministic}. Since $C$ achieves the optimal worst-case channel fidelity $F_\mathrm{opt}^{(x)}(d,D,n)$, it satisfies [see Eq.~(\ref{eq:sdp_deterministic_isometry_inversion_comp_basis})]
\begin{align}
    \Tr(C \Omega) &= F_\mathrm{opt}^{(x)}(d,D,n).
\end{align}
Defining $\Omega'$ by replacing $\mcP\mcO^n$ and $D$ in Eq.~(\ref{eq:Omega_yybasis}) with $\mcP'\mcO'^n$ and $d$, respectively, $C'$ satisfies
\begin{align}
    \Tr(C' \Omega') &= F_\mathrm{opt}^{(x)}(d,D,n),
\end{align}
which implies $C'$ implements deterministic $d$-dimensional unitary inversion with worst-case channel fidelity $p_\mathrm{opt}^{(x)}(d,D,n)$.  Therefore, we can construct a parallel or sequential protocol achieving the optimal worst-case channel fidelity $F_\mathrm{opt}^{(x)}(d,D,n)$ of isometry inversion using the construction shown in Theorem \ref{cor:isometry_inversion}.

\subsection{Universal error detection}
\label{appendix_sec:proof_optimal_construction_universal_error_detection}
Assume that $\{C_I, C_O\} \subset\mcL(\mcI^n \otimes \mcO^n \otimes \mcP)$ is the $\U(d)\times\U(D)$ symmetric Choi operator of a parallel ($x=\mathrm{PAR}$) or sequential ($x=\mathrm{SEQ}$) protocol achieving the optimal error $\alpha_\mathrm{opt}^{(x)}(d,D,n)$, which can be written as Eqs.~(\ref{eq:ci_yybasis}) and (\ref{eq:co_yybasis}). Defining $C'' \in\mcL(\mcI^n \otimes \mcO'^n \otimes \mcP')$ by Eq.~(\ref{eq:c'_yybasis2}) for $C_{\lambda\nu}\coloneqq I_{\lambda\nu}+O_{\lambda\nu}$ for $\lambda\in\young{d}{n}$ and $\nu\in\young{d}{n+1}$, $C''\geq 0$ and $C''\in\mcW^{(x)}$ hold.  Characterization of $\mcW^{(x)}$ shown in Eqs.~(\ref{eq:parallel_condition}) and (\ref{eq:sequential_condition}) for the case $\mcF=\CC$ (no global future) and the $\U(d)\times\U(d)$ symmetry of $C''$ implies that $C''$ can be written as
\begin{align}
    C'' =
    \begin{cases}
        \phi_{\mcI^n}\otimes \1_{\mcP'\mcO'^n} & (x=\mathrm{PAR}),\\
        C' \otimes \1_{\mcO_n} & (x=\mathrm{SEQ}),
    \end{cases}\label{eq:c''_decomp}
\end{align}
where $\phi\in\mcL(\mcI^n)$ is a quantum state and $C'\in\mcL(\mcI^n \otimes \mcO'^{n-1} \otimes \mcP')$ is the Choi operator of a $(n-1)$-slot sequential protocol.  Since $\{C_I,C_O\}$ achieves the optimal error $\alpha_\mathrm{opt}^{(x)}(d,D,n)$, it satisfies [see Eq.(\ref{eq:sdp_universal_error_detection_comp_basis})]
\begin{align}
    \Tr(C_I\Sigma) &= \alpha_\mathrm{opt}^{(x)}(d,D,n),\\
    \Tr(C_I\Xi) &= 1.\label{eq:cond_cixi}
\end{align}
Since Eq.~(\ref{eq:cond_cixi}) corresponds to the condition 
\begin{align}
    \Tr\;\supermap{C}_I (\map{V}_\mathrm{in}^{\otimes n})(\rho_\mathrm{in}) = 1 \quad \forall \rho_\mathrm{in}\in\mcL(\Im V_\mathrm{in}),
\end{align}
it can be replaced with the equivalent condition
\begin{align}
    \Tr\;\supermap{C}_O (\map{V}_\mathrm{in}^{\otimes n})(\rho_\mathrm{in}) = 0 \quad \forall \rho_\mathrm{in}\in\mcL(\Im V_\mathrm{in}),
\end{align}
which can be represented as
\begin{align}
    \Tr(C_O \Xi) = 0.
\end{align}
Therefore, we obtain [see Eq.~(\ref{eq:Xi_yybasis})]
\begin{align}
    0 &= \sum_{\lambda\in\young{d}{n}}\sum_{\nu\in\young{d}{n+1}} \Tr(O_{\lambda\nu} \Xi_{\lambda\nu})\\
    &=\sum_{\nu\in\young{d}{n+1}}\sum_{\lambda\in\nu-\square}\sum_{a,b=1}^{d_\lambda} \sum_{k,l=1}^{d_\nu} {m_\nu^{(d)}\over d m_\nu^{(D)} m_\lambda^{(d)}}  [\pi_\nu]_{a_\nu^{\lambda} k}^* [\pi_\nu]_{lb_\nu^\lambda} [O_{\lambda\nu}]_{ba, lk},
\end{align}
which leads to
\begin{align}
    \sum_{a,b=1}^{d_\lambda} \sum_{k,l=1}^{d_\nu} [\pi_\nu]_{a_\nu^{\lambda} k}^* [\pi_\nu]_{lb_\nu^\lambda} [O_{\lambda\nu}]_{ba, lk} &=0 \quad \forall \nu\in\young{d}{n+1}, \lambda\in\nu-\square.
\end{align}
Thus, we obtain [see Eq.~(\ref{eq:Sigma_yybasis})]
\begin{align}
    \Tr(O_{\lambda\nu} \Sigma_{\lambda\nu}) = 0 \quad \forall \nu\in\young{d}{n+1}, \lambda\in\nu-\square.
\end{align}
Since Eq.~(\ref{eq:Sigma_yybasis}) reduces to $\Sigma$ defined in Eq.~(\ref{eq:def_Sigma}) by replacing $\mcP\mcO^n$ and $D$ with $\mcP'\mcO'^n$ and $d$,
\begin{align}
    \Tr(C''\Sigma)
    &= \sum_{\nu\in\young{d}{n+1}}\sum_{\lambda\in\nu-\square} \Tr(C_{\lambda\nu}\Sigma_{\lambda\nu})\\
    &= \sum_{\nu\in\young{d}{n+1}}\sum_{\lambda\in\nu-\square} \Tr(I_{\lambda\nu}\Sigma_{\lambda\nu})\\
    &\leq \Tr(C_I \Sigma)\\
    &= \alpha_\mathrm{opt}^{(x)}\label{eq:c''sigma''}
\end{align}
holds.
Substituting Eq.~(\ref{eq:c''_decomp}) for $x=\mathrm{PAR}, \mathrm{SEQ}$ to Eq.~(\ref{eq:c''sigma''}), we obtain
\begin{align}
    \alpha_\mathrm{opt}^{(\mathrm{PAR})}
    &\geq \Tr(\phi \Tr_{\mcP'\mcO'^n}(\Sigma))\\
    &= \sum_{\lambda\in\young{d}{n}}\sum_{\nu\in\lambda+_d\square}\Tr(\phi \Pi_\lambda^{(d)}) {\mathrm{hook}(\lambda) \over \mathrm{hook}(\nu)}\\
    &=\alpha_\phi,\\
    \alpha_\mathrm{opt}^{(\mathrm{SEQ})}
    &\geq \Tr(C' \Tr_{\mcO'_n}(\Sigma))\\
    &= \Tr(C'\Sigma')\\
    &= \alpha_{C'},
\end{align}
where $\alpha_\phi$ and $\alpha_{C'}$ are defined in Theorem \ref{cor:universal_error_detection}, respectively.
Thus, the parallel and sequential protocols constructed in Theorem \ref{cor:universal_error_detection} achieves the optimal error $\alpha_\mathrm{opt}^{(x)}(d,D,n)$.

\subsection{Isometry adjointation}
Assume that $\{C_I, C_O\}\subset \mcL(\mcI^n\otimes\mcO^n\otimes\mcP\otimes\mcF)$ is the $\U(d)\times\U(D)$ symmetric Choi operator of parallel ($x=\mathrm{PAR}$) or sequential ($x=\mathrm{SEQ}$) protocol achieving the optimal worst-case diamond-norm error $\epsilon_\mathrm{opt}^{(x)}(d,D,n)$ of isometry adjointation, which can be written as Eqs.~(\ref{eq:ci_yybasis2}) and (\ref{eq:co_yybasis2}).  Defining $C' \in \mcL(\mcI^n\otimes\mcO'^n\otimes\mcP'\otimes\mcF)$ by Eq.~(\ref{eq:c'_yybasis1}) for $C_{\mu\nu}\coloneqq I_{\mu\nu}+O_{\mu\nu}$ for $\mu, \nu\in\young{d}{n+1}$, $C'\geq 0$ and $C'\in\mcW^{(x)}$ hold.  Thus, the corresponding supermap $\supermap{C}''$ can be implemented by a parallel ($x=\mathrm{PAR}$) or sequential ($x=\mathrm{SEQ}$) protocol. In particular, for the case $x=\mathrm{PAR}$, since its Choi operator satisfies the $\U(d)\times\U(d)$ symmetry, it can be implemented using a covariant-estimation-based protocol as shown in Fig.~\ref{fig:deterministic_isometry_adjointation_parallel} (a-1) \cite{yang2020optimal,quintino2022deterministic}. Since $\{C_I, C_O\}$ achieves the optimal worst-case diamond-norm error $\epsilon_\mathrm{opt}^{(x)}(d,D,n)$ of isometry adjointation, it satisfies [see Eq.~(\ref{eq:sdp_isometry_adjointation_comp_basis})]
\begin{align}
    1-\Tr(C_I\Omega)&\leq \epsilon_\mathrm{opt}^{(x)}(d,D,n),\\
    \Tr[C_I(\Sigma\otimes \1_{\mcF})] &\leq \epsilon_\mathrm{opt}^{(x)}(d,D,n),\\
    \Tr[C_I(\Xi\otimes\1_{\mcF})]&=1.
\end{align}
Similarly to Sections \ref{appendix_sec:proof_optimal_construction_deterministic_isometry_inversion} and \ref{appendix_sec:proof_optimal_construction_universal_error_detection}, defining $\Omega'$ by replacing $\mcP\mcO^n$ and $D$ in Eq.~(\ref{eq:Omega_yybasis}) with $\mcP'\mcO'^n$ and $d$, we obtain
\begin{align}
    1-\Tr(C'\Omega') &\leq  \epsilon_\mathrm{opt}^{(x)}(d,D,n),\\
    \Tr[C'(\Sigma\otimes\1_{\mcF})] &\leq \epsilon_\mathrm{opt}^{(x)}(d,D,n).
\end{align}

For the case $x=\mathrm{PAR}$, $C'$ can be implemented by a covariant unitary-estimation protocol achieving the average fidelity $F_\mathrm{est} = \Tr(C'\Omega')$ \cite{yang2020optimal,quintino2022deterministic}.  $\Tr[C'(\Sigma\otimes\1_{\mcF})]$ can be evaluated by the probe state $\phi\in\mcL(\mcI^n\otimes \mcA)$ of the unitary estimation protocol as
\begin{align}
    \Tr[C'(\Sigma\otimes\1_{\mcF})]
    &= \Tr[\Tr_\mcF(C) \Sigma]\\
    &= \Tr(\Tr_\mcA(\phi) \Tr_{\mcP'\mcO'^n}(\Sigma))\\
    &= \sum_{\lambda\in\young{d}{n}} \Tr[\Tr_\mcA(\phi) \Pi_\lambda^{(d)}]\left[1-\sum_{\nu\in\lambda+\square\setminus\young{d}{n+1}}{\mathrm{hook}(\lambda) \over \mathrm{hook}(\nu)}\right]\\
    &=\alpha_\phi,
\end{align}
where $\alpha_\phi$ is defined in Theorem \ref{thm:parallel_isometry_adjointation}.
Thus, the parallel protocol constructed in Theorem \ref{thm:parallel_isometry_adjointation} achieves the optimal worst-case diamond-norm error $\epsilon_\mathrm{opt}^{(\mathrm{PAR})}(d,D,n)$.

For the case $x=\mathrm{SEQ}$, $\Tr(C'\Omega')$ represents the worst-case channel fidelity of $d$-dimensional unitary inversion, and $\Tr[C'(\Sigma\otimes\1_{\mcF})]$ can be evaluated as [see Eq.~(\ref{eq:sequential_condition})]
\begin{align}
    \Tr[C'(\Sigma\otimes\1_{\mcF})]
    &= \Tr[\Tr_\mcF(C')\Sigma]\\
    &= \alpha_{C'},
\end{align}
where $\alpha_{C'}$ is defined in Theorem \ref{thm:sequential_isometry_adjointation}.  Thus, the sequential protocol constructed in Theorem \ref{thm:sequential_isometry_adjointation} achieves the optimal worst-case diamond-norm error $\epsilon_\mathrm{opt}^{(\mathrm{SEQ})}(d,D,n)$.

\section{Numerical results}
\label{appendix_sec:numerical_results}
We show the numerical results on the optimal performances of probabilistic exact isometry inversion, deterministic isometry inversion, universal error detection, and isometry adjointation using $n$ calls of an input isometry operation $V_\mathrm{in}\in\isometry{d}{D}$ with parallel, sequential, or general protocols including indefinite causal order in Tables \ref{tab:probabilistic_isometry_inversion}, \ref{tab:deterministic_isometry_inversion}, \ref{tab:universal_error_detection} and \ref{tab:isometry_adjointation}.  
The obtained values are compatible with the previous works \cite{quintino2019probabilistic, quintino2022deterministic, yoshida2023universal, yoshida2023reversing}, where Ref.~\cite{quintino2019probabilistic} shows the maximum success probability of unitary inversion for the cases of $d=2, n\leq 3$ and $d=3, n\leq 2$, Ref.~\cite{quintino2022deterministic} shows the maximum channel fidelity of unitary inversion for the cases of $d=2, n\leq 3$ and $d=3, n\leq 2$, Ref.~\cite{quintino2019probabilistic} shows the maximum success probability of isometry inversion for the cases of $d=2, n\leq 3$ and $d=3, n\leq 2$, and Ref.~\cite{yoshida2023reversing} shows the maximum channel fidelity of unitary inversion for parallel and sequential protocols for the cases of $d\leq 6$ and $n\leq 5$.

\begin{table}[htb]
\centering
    \begin{tabular}{c|ccc|ccc|ccc}\hline\hline
        \multirow{2}{*}{$p_\mathrm{opt}^{(x)}$} & \multicolumn{3}{c|}{Parallel ($x=\mathrm{PAR}$)} & \multicolumn{3}{c|}{Sequential ($x = \mathrm{SEQ}$)} & \multicolumn{3}{c}{General ($x = \mathrm{GEN}$)}\\
         & $d=2$ & $d=3$ & $d=4$ & $d=2$ & $d=3$ & $d=4$ & $d=2$ & $d=3$ & $d=4$\\\hline
       $n=1$ & $0.2500$ & $0.0000$ & $0.0000$ & $0.2500$ & $0.0000$ & $0.0000$ & $0.2500$ & $0.0000$ & $0.0000$ \\
       $n=2$ & $0.4000$ & $0.1111$ & $0.0000$ & $0.4286$ & $0.1111$ & $0.0000$ & $0.4286$ & $0.1111$ & $0.0000$\\
       $n=3$ & $0.5000$ & $0.1385$ & $0.0625$ & $0.7500$  & $0.1861$ & $0.0625$ & $0.9415$ & $0.2093$ & $0.0625$ \\
       $n=4$ & $0.5715$ & $0.2000$ & $0.0708$ & $1.0000$ & $0.2674$ & $0.1064$ & $1.0000$ & $0.2915$ & $0.1419$\\
       $n=5$ & $0.6250$ & $0.2408$ & $0.0865$ & $1.0000$ & $0.4662$ & $0.1447$ & - & - & -\\\hline\hline
    \end{tabular}
\caption{The maximum success probability of isometry inversion using $n$ calls of an input isometry operation $V_\mathrm{in}\in\isometry{d}{D}$ ($D\geq d+1$) in parallel, sequential, and general protocols.}
\label{tab:probabilistic_isometry_inversion}
\end{table}

\begin{table}[htb]
\centering
    \begin{tabular}{c|ccc|ccc|ccc}\hline\hline
        \multirow{2}{*}{$F_\mathrm{opt}^{(x)}$} & \multicolumn{3}{c|}{Parallel ($x=\mathrm{PAR}$)} & \multicolumn{3}{c|}{Sequential ($x = \mathrm{SEQ}$)} & \multicolumn{3}{c}{General ($x = \mathrm{GEN}$)}\\
         & $d=2$ & $d=3$ & $d=4$ & $d=2$ & $d=3$ & $d=4$ & $d=2$ & $d=3$ & $d=4$\\\hline
       $n=1$ & $0.5000$ & $0.2222$ & $0.1250$ & $0.5000$ & $0.2222$ & $0.1250$ & $0.5000$ & $0.2222$ & $0.1250$ \\
       $n=2$ & $0.6545$ & $0.3333$ & $0.1875$ & $0.7500$ & $0.3333$ & $0.1875$ & $0.7500$ & $0.3333$ & $0.1875$\\
       $n=3$ & $0.7500$ & $0.4310$ & $0.2500$ & $0.9330$  & $0.4444$ & $0.2500$ & $0.9851$ & $0.4444$ & $0.2500$ \\
       $n=4$ & $0.8117$ & $0.5131$ & $0.3105$ & $1.0000$ & $0.5556$ & $0.3125$ & $1.0000$ & $0.5556$ & $0.3125$\\
       $n=5$ & $0.8536$ & $0.5810$ & $0.3675$ & $1.0000$ & $0.6667$ & $0.3750$ & - & - & -\\\hline\hline
    \end{tabular}
\caption{The maximum worst-case channel fidelity of isometry inversion using $n$ calls of an input isometry operation $V_\mathrm{in}\in\isometry{d}{D}$ ($D\geq d+1$) in parallel, sequential, and general protocols.}
\label{tab:deterministic_isometry_inversion}
\end{table}

\begin{table}[htb]
\centering
    \begin{tabular}{c|ccc|ccc|ccc}\hline\hline
        \multirow{2}{*}{$\alpha_\mathrm{opt}^{(x)}$} & \multicolumn{3}{c|}{Parallel ($x=\mathrm{PAR}$)} & \multicolumn{3}{c|}{Sequential ($x = \mathrm{SEQ}$)} & \multicolumn{3}{c}{General ($x = \mathrm{GEN}$)}\\
         & $d=2$ & $d=3$ & $d=4$ & $d=2$ & $d=3$ & $d=4$ & $d=2$ & $d=3$ & $d=4$\\\hline
       $n=1$ & \bm{$1$} & \bm{$1$} & \bm{$1$} & $1.0000$ & $1.0000$ & $1.0000$ & $1.0000$ & $1.0000$ & $1.0000$\\
       $n=2$ & \bm{$2/3$} & \bm{$1$} & \bm{$1$} & $0.6667$ & $1.0000$ & $1.0000$ & $0.6667$ & $1.0000$ & $1.0000$\\
       $n=3$ & \bm{$5/8$} & \bm{$3/4$} & \bm{$1$} & $0.5000$  & $0.7500$ & $1.0000$ & $0.4375$ & $0.7500$ & $1.0000$\\
       $n=4$ & \bm{$1/2$} & \bm{$11/15$} & \bm{$4/5$} & $0.4000$ & $0.6000$ & $0.8000$ & $0.3600$  & $0.4667$ & $0.8000$\\
       $n=5$ & \bm{$7/15$} & \bm{$7/10$} & \bm{$19/24$} & $0.3333$ & $0.5000$ & $0.6667$ & - & - & - \\\hline\hline
    \end{tabular}
\caption{The minimum approximation error of universal error detection using $n$ calls of an input isometry operation $V_\mathrm{in}\in\isometry{d}{D}$ ($D\geq d+1$) in parallel, sequential, and general protocols. Bold values are obtained analytically.}
\label{tab:universal_error_detection}
\end{table}

\begin{table}[htb]
\centering
    \begin{tabular}{c|ccc|ccc|ccc}\hline\hline
        \multirow{2}{*}{$\epsilon_\mathrm{opt}^{(x)}$} & \multicolumn{3}{c|}{Parallel ($x=\mathrm{PAR}$)} & \multicolumn{3}{c|}{Sequential ($x = \mathrm{SEQ}$)} & \multicolumn{3}{c}{General ($x = \mathrm{GEN}$)}\\
         & $d=2$ & $d=3$ & $d=4$ & $d=2$ & $d=3$ & $d=4$ & $d=2$ & $d=3$ & $d=4$\\\hline
       $n=1$ & $1.0000$ & $1.0000$ & $1.0000$ & $1.0000$ & $1.0000$ & $1.0000$ & $1.0000$ & $1.0000$ & $1.0000$ \\
       $n=2$ & $0.6736$ & $1.0000$ & $1.0000$ & $0.6667$ & $1.0000$ & $1.0000$ & $0.6667$ & $1.0000$ & $1.0000$\\
       $n=3$ & $0.6250$ & $0.7822$ & $1.0000$ & $0.5000$  & $0.7500$ & $1.0000$ & $0.5000$ & $0.7500$ & $1.0000$ \\
       $n=4$ & $0.5169$ & $0.7373$ & $0.8448$ & $0.4444$ & $0.6429$ & $0.8000$ & $0.4444$ & $0.6429$ & $0.8000$\\\hline\hline
    \end{tabular}
\caption{The minimum approximation error of isometry adjointation using $n$ calls of an input isometry operation $V_\mathrm{in}\in\isometry{d}{D}$ ($D\geq d+1$) in parallel, sequential, and general protocols.}
\label{tab:isometry_adjointation}
\end{table}

\begin{table}[htb]
\centering
    \begin{tabular}{c|cc|cc|cc}\hline\hline
        \multirow{2}{*}{$p_\mathrm{opt}^{(\mathrm{GEN})}$} & \multicolumn{2}{c|}{$d=2$} & \multicolumn{2}{c|}{$d=3$} & \multicolumn{2}{c}{$d=4$}\\
         & $D=2$ & $D=3$& $D=3$ & $D=4$ & $D=4$ & $D=5$\\\hline
         $n=2$ & $0.4444$ & $0.4286$ & $0.1111$ & $0.1111$ & $0.0000$ & $0.0000$\\
         $n=3$ & $0.9415$ & $0.9415$ & $0.3262$ & $0.2093$ & $0.0625$ & $0.0625$\\
         $n=4$ & $1.0000$ & $1.0000$ & $0.5427$ & $0.2915$ & $0.2609$ & $0.1419$ \\\hline\hline
    \end{tabular}
\caption{Comparison with the maximum success probabilities of unitary inversion ($D=d$) and isometry inversion ($D=d+1$) in general protocols.}
\label{tab:comparison_probabilistic_isometry_inversion}
\end{table}

\begin{table}[htb]
\centering
    \begin{tabular}{c|cc|cc|cc}\hline\hline
        \multirow{2}{*}{$F_\mathrm{opt}^{(\mathrm{GEN})}$} & \multicolumn{2}{c|}{$d=2$} & \multicolumn{2}{c|}{$d=3$} & \multicolumn{2}{c}{$d=4$}\\
         & $D=2$ & $D=3$& $D=3$ & $D=4$ & $D=4$ & $D=5$\\\hline
         $n=2$ & $0.8249$ & $0.7500$ & $0.3333$ & $0.3333$ & $0.1875$ & $0.1875$\\
         $n=3$ & $0.9921$ & $0.9851$ & $0.5835$ & $0.4444$ & $0.2500$ & $0.2500$\\
         $n=4$ & $1.0000$ & $1.0000$ & $0.7874$ & $0.5556$ & $0.4567$ & $0.3125$\\\hline\hline
    \end{tabular}
\caption{Comparison with the maximum worst-case channel fidelities of unitary inversion ($D=d$) and isometry inversion ($D=d+1$) in general protocols.}
\label{tab:comparison_deterministic_isometry_inversion}
\end{table}

\clearpage

\section{Semidefinite programming to obtain the optimal transformations of isometry operations with the $\U(d)\times \U(D)$ symmetry}
\label{appendix_sec:sdp}
As shown in Theorem \ref{thm:unitary_group_symmetry}, the optimal protocols for isometry inversion, universal error detection and isometry adjointation can be searched within the Choi operators having the $\U(d)\times \U(D)$ symmetry.  To utilize this symmetry for a numerical search of the optimal protocols, we derive the characterization of the quantum superchannels and the conditions for isometry inversion, universal error detection, and isometry adjointation under the $\U(d)\times \U(D)$ symmetry.  Then, we derive the SDPs giving the optimal performances of these tasks, which are shown below.

\subsection{Choi representation of general superchannels and $\U(d)\times \U(D)$ symmetry}
\label{appendix_sec:general_superinstrument_choi}
As shown in Section \ref{sec:unitary_group_symmetry}, the Choi operator $C$ of a quantum superchannel $\supermap{C}: \bigotimes_{i=1}^{n} [\mcL(\mcI_i) \to \mcL(\mcO_i)] \to [\mcL(\mcP) \to \mcL(\mcF)]$ implemented by parallel ($x=\mathrm{PAR}$) and sequential ($x=\mathrm{SEQ}$) protocols, and general superchannels ($x=\mathrm{GEN}$) can be represented by a Choi operator $C\in\mcL(\mcI^n \otimes \mcO^n \otimes \mcP \otimes \mcF)$ satisfying
\begin{align}
    C&\geq 0,\\
    C&\in\mcW^{(x)}.
\end{align}
The set $\mcW^{(x)}$ for $x\in\{\mathrm{PAR}, \mathrm{SEQ}\}$ are given by \cite{chiribella2009theoretical, quintino2019probabilistic}
\begin{align}
    C\in \mcW^{(\mathrm{PAR})} &\Longleftrightarrow
    \begin{cases}
        \Tr_{\mcF} C = \Tr_{\mcO^n} C \otimes \1_{\mcO^n} / \dim \mcO^n\\
        \Tr_{\mcI^n \mcO^n} C = \dim \mcO^n \1_{\mcP} 
    \end{cases},\label{eq:parallel_condition}\\
    C\in \mcW^{(\mathrm{SEQ})} &\Longleftrightarrow
    \Tr_{\mcI_i} C^{(i)} = C^{(i-1)} \otimes \1_{\mcO_{i-1}} \quad \forall i\in\{1, \ldots, n+1\},\label{eq:sequential_condition}
\end{align}
where $\mcO_0$ and $\mcI_{n+1}$ are defined by $\mcO_0\coloneqq \mcP$ and $\mcI_{n+1}\coloneqq \mcF$, and $C^{(i)}$ for $i\in \{0, \ldots, n+1\}$ are defined by $C^{(n+1)}\coloneqq C$, $C^{(i-1)} \coloneqq \Tr_{\mcO_{i-1}\mcI_i} C^{(i)}/\dim \mcO_{i-1}$ and $C^{(0)}\coloneqq 1$.  The characterization of the set $\mcW^{(\mathrm{GEN})}$ is shown in Ref.~\cite{araujo2015witnessing}.

We consider the case $\mcI_1 = \cdots = \mcI_n = \mcF = \CC^d, \mcP = \mcO_1 = \cdots = \mcO_n = \CC^D$ (isometry inversion, isometry adjointation) and $\mcI_1 = \cdots = \mcI_n = \CC^d, \mcP = \mcO_1 = \cdots = \mcO_n = \CC^D, \mcF = \CC$ (universal error detection), and characterize the Choi operator $C\in\mcL(\mcI^n \otimes \mcO^n \otimes \mcP \otimes \mcF)$ under the $\U(d)\times \U(D)$ symmetry given by
\begin{align}
    \begin{cases}
        [C, U^{\otimes n+1}_{\mcI^n\mcF} \otimes U^{\prime\otimes n+1}_{\mcP\mcO^n}] = 0 & (\mathrm{isometry\;inversion}, \mathrm{isometry\;adjointation})\\
        [C, U^{\otimes n}_{\mcI^n} \otimes U^{\prime\otimes n+1}_{\mcP\mcO^n}] = 0 & (\mathrm{universal\;error\;detection})
    \end{cases}\label{eq:C_sudsuDsymmetry}
\end{align}
for all $U\in\U(d)$ and $U'\in\U(D)$ (see Theorem \ref{thm:unitary_group_symmetry}). Due to this symmetry, the Choi operator $C$ can be written using the operator $E^{\mu,d}_{ij}$ introduced in Eq.~(\ref{eq:def_E}) as
\begin{align}
    C = \sum_{\mu\in \young{d}{n+1}} \sum_{\nu\in \young{D}{n+1}}\sum_{i,j=1}^{d_\mu} \sum_{k,l=1}^{d_\nu} \frac{[C_{\mu\nu}]_{ik, jl}}{m_\mu^{(d)} m_\nu^{(D)}} (E_{ij}^{\mu, d})_{\mcI^n\mcF} \otimes (E_{kl}^{\nu, D})_{\mcP\mcO^n}\label{eq:c_yy_basis1}
\end{align}
for isometry inversion or isometry adjointation, and
\begin{align}
    C = \sum_{\lambda\in \young{d}{n}} \sum_{\nu\in \young{D}{n+1}}\sum_{a,b=1}^{d_\lambda} \sum_{k,l=1}^{d_\nu} \frac{[C_{\lambda\nu}]_{ak,bl}}{m_\lambda^{(d)} m_\nu^{(D)}} (E_{ab}^{\lambda, d})_{\mcI^n} \otimes (E_{kl}^{\nu, D})_{\mcP\mcO^n}\label{eq:c_yy_basis2}
\end{align}
for universal error detection using a $d_\mu d_\nu$ ($d_\lambda d_\nu$)-dimensional square matrix $C_{\mu\nu}$ ($C_{\lambda\nu}$), where $ik$ ($ak$) and $jl$ ($bl$) are the indices for row and column numbers, respectively. The characterization of quantum superchannels is rewritten in terms of $C_{\mu\nu}$ or $C_{\lambda\nu}$ as follows.  By definition of $E^{\mu,d}_{ij}$ in Eq.~(\ref{eq:def_E}), the positivity of $C$ is written as
\begin{align}
\begin{cases}
    C_{\mu\nu}\geq 0 \quad \forall \mu\in\young{d}{n+1}, \nu\in\young{D}{n+1} & (\mathrm{isometry\;inversion}, \mathrm{isometry\;adjointation})\\
    C_{\lambda\nu}\geq 0 \quad \forall \lambda\in\young{d}{n}, \nu\in\young{D}{n+1} & (\mathrm{universal\;error\;detection})
\end{cases}.
\end{align}
Using Lemma \ref{lem:yy}, the condition $C\in\mcW^{(x)}$ for $x\in\{\mathrm{PAR}, \mathrm{SEQ}\}$ is given by
\begin{align}
    C\in \mcW^{(x)} \Longleftrightarrow
    \begin{cases}
        \{C_{\mu\nu}\} \in \mcW_{\mathrm{sym}}^{(x)} & (\mathrm{isometry\;inversion}, \mathrm{isometry\;adjointation})\\
        \{C_{\lambda\nu}\} \in \mcW_{\mathrm{sym}}^{(x)} & (\mathrm{universal\;error\;detection})
    \end{cases},
\end{align}
where $\mcW_{\mathrm{sym}}^{(x)}$ is given by
\begin{align}
    &\{C_{\mu\nu}\} \in \mcW_{\mathrm{sym}}^{(\mathrm{PAR})} \Longleftrightarrow \nonumber\\
    &
    \begin{cases}
        \begin{aligned}
            &\sum_{\mu\in\lambda+_d\square} (X^\lambda_\mu \otimes \1_{d_\nu}){C_{\mu\nu} \over m_\nu^{(D)}}(X^\lambda_\mu \otimes \1_{d_\nu})^\dagger = D_\lambda \otimes {\1_{d_\nu} \over D^{n+1}} \quad \forall \lambda\in\young{d}{n}, \nu\in\young{D}{n+1}\\
            &\sum_{\mu\in\young{d}{n+1}}\sum_{\nu\in\young{D}{n+1}}\Tr(C_{\mu\nu}) = D^{n+1}
        \end{aligned}
    \end{cases},\\
    &\{C_{\mu\nu}\} \in \mcW_{\mathrm{sym}}^{(\mathrm{SEQ})} \Longleftrightarrow \nonumber\\
    &
    \begin{cases}
        \begin{aligned}
            &\begin{aligned}
                \sum_{\lambda\in\gamma+_d\square}(X^\gamma_\lambda\otimes \1_{d_\kappa}){C_{\lambda\kappa}^{(i)} \over m_\kappa^{(D)}} (X^\gamma_\lambda\otimes \1_{d_\kappa})^\dagger = \sum_{\delta\in\kappa-\square} (\1_{d_\gamma} \otimes X^\delta_\kappa)^\dagger {C_{\gamma\delta}^{(i-1)} \over m_\delta^{(D)}}(\1_{d_\gamma} \otimes X^\delta_\kappa)\\
                \forall i\in\{1,\cdots,n+1\},\gamma\in\young{d}{i-1},\kappa\in\young{D}{i}
            \end{aligned}\\
            &C_{\emptyset \emptyset}^{(0)} = 1
        \end{aligned}
    \end{cases}
\end{align}
for isometry inversion and isometry adjointation, where $D_\lambda$ for $\lambda\in\young{d}{n}$ are defined by
\begin{align}
    D_\lambda\coloneqq \sum_{\mu\in\lambda+_d\square}\sum_{\nu\in\young{D}{n+1}} \Tr_{\nu}[(X^\lambda_\mu \otimes \1_{d_\nu}) C_{\mu\nu} (X^\lambda_\mu \otimes \1_{d_\nu})^\dagger]
\end{align}
$C^{(i)}_{\lambda\kappa}$ for $\lambda\in\young{d}{i}, \kappa\in\young{D}{i}$ are defined by
\begin{align}
    C^{(i)}_{\lambda\kappa}\coloneqq
    \begin{cases}
        C_{\lambda\kappa} & (i=n+1)\\
        {1\over D} \sum_{\mu\in\lambda+_d\square, \nu\in\kappa+_D\square} (X^\lambda_\mu\otimes X^\kappa_\nu) C_{\mu\nu}^{(i+1)} (X^\lambda_\mu\otimes X^\kappa_\nu)^\dagger & (0\leq i\leq n)
    \end{cases},
\end{align}
$X^{\gamma}_\lambda$ for $\lambda\in\gamma+\square, \gamma\in\young{d}{i-1}$ are $d_\gamma \times d_\lambda$ matrices defined by
\begin{align}
    [X^{\gamma}_{\lambda}]_{c,a} \coloneqq \delta_{c^\gamma_\lambda, a},\label{eq:def_X}
\end{align}
$c^\gamma_\lambda$ is the index of the standard tableau $s^\lambda_{c^\gamma_\lambda}$ obtained by adding a box \fbox{$i$} to the standard tableau $s^\gamma_c$, and $\emptyset$ represents the Young tableau with zero boxes.
The set $\mcW_{\mathrm{sym}}^{(x)}$ is given by
\begin{align}
    &\{C_{\lambda\nu}\} \in \mcW_{\mathrm{sym}}^{(\mathrm{PAR})} \Longleftrightarrow \nonumber\\
    &
    \begin{cases}
        \begin{aligned}
            &{C_{\lambda\nu} \over m_\nu^{(D)}} = D_\lambda \otimes {\1_{d_\nu} \over D^{n+1}} \quad \forall \lambda\in\young{d}{n}, \nu\in\young{D}{n+1}\\
            &\sum_{\lambda\in\young{d}{n+1}}\sum_{\nu\in\young{D}{n+1}}\Tr(C_{\lambda\nu}) = D^{n}
        \end{aligned}
    \end{cases},\label{eq:symmetric_parallel_condition}\\
    &\{C_{\lambda\nu}\} \in \mcW_{\mathrm{sym}}^{(\mathrm{SEQ})} \Longleftrightarrow \nonumber\\
    &
    \begin{cases}
        \begin{aligned}
            &{C_{\lambda\nu} \over m_\nu^{(D)}}= \sum_{\kappa\in\nu-\square} (\1_{d_\lambda} \otimes X^\kappa_\nu)^\dagger {C_{\lambda\kappa}^{(n)} \over m_\kappa^{(D)}}(\1_{d_\lambda} \otimes X^\kappa_\nu) \quad \forall \lambda\in\young{d}{n}, \nu\in\young{D}{n+1}\\
            &\begin{aligned}
                \sum_{\lambda\in\gamma+_d\square}(X^\gamma_\lambda\otimes \1_{d_\kappa}){C_{\lambda\kappa}^{(i)} \over m_\kappa^{(D)}} (X^\gamma_\lambda\otimes \1_{d_\kappa})^\dagger = \sum_{\delta\in\kappa-\square} (\1_{d_\nu} \otimes X^\delta_\kappa)^\dagger {C_{\gamma\delta}^{(i-1)} \over m_\delta^{(D)}}(\1_{d_\nu} \otimes X^\delta_\kappa)\\
                \forall i\in\{1,\cdots,n\},\gamma\in\young{d}{i-1},\kappa\in\young{D}{i}
            \end{aligned}\\
            &C_{\emptyset \emptyset}^{(0)} = 1
        \end{aligned}
    \end{cases}\label{eq:symmetric_sequential_condition}
\end{align}
for universal error detection, where $D_\lambda$ for $\lambda\in\young{d}{n}$ are defined by
\begin{align}
    D_\lambda\coloneqq \sum_{\nu\in\young{D}{n+1}} \Tr_{\nu}(C_{\lambda\nu}),
\end{align}
$C^{(i)}_{\lambda\kappa}$ for $\lambda\in\young{d}{i}, \kappa\in\young{D}{i}$ are defined by
\begin{align}
    C^{(i)}_{\lambda\kappa}\coloneqq
    \begin{cases}
        {1\over D} \sum_{\nu\in\kappa+_D\square} (\1_{d_\lambda} \otimes X^\kappa_\nu) C_{\lambda\nu} (\1_{d_\lambda} \otimes X^\kappa_\nu)^\dagger & (i=n)\\
        {1\over D} \sum_{\mu\in\lambda+_d\square, \nu\in\kappa+_D\square} (X^\lambda_\mu\otimes X^\kappa_\nu) C_{\mu\nu}^{(i+1)} (X^\lambda_\mu\otimes X^\kappa_\nu)^\dagger & (0\leq i\leq n-1)
    \end{cases},
\end{align}
and $X^{\gamma}_\lambda$ are defined in Eq.~(\ref{eq:def_X}).  

\subsection{Conditions for isometry inversion, universal error detection, and isometry adjointation under the $\U(d) \times \U(D)$ symmetry}
\label{appendix_sec:unitary_group_symmetry_conditions}
We consider the action of a supermap $\supermap{C}$ on $n$ calls of an isometry operation $V_\mathrm{in}\in\isometry{d}{D}$ when its Choi operator $C$ satisfies the $\U(d)\times \U(D)$ symmetry (\ref{eq:C_sudsuDsymmetry}). Then, $\supermap{C}(\map{V}_\mathrm{in}^{\otimes n})$ is given in the following form.
\begin{Lem}
\label{lem:deconposition_of_fV}
    If the Choi operator of a quantum supermap $\supermap{C}$, denoted by $C$, satisfies the $\U(d)\times \U(D)$ symmetry (\ref{eq:C_sudsuDsymmetry}), then $\supermap{C}(\map{V}_\mathrm{in}^{\otimes n})$ for $V_\mathrm{in}\in\isometry{d}{D}$ is given by
    \begin{align}
        \supermap{C}(\map{V}_\mathrm{in}^{\otimes n})(\rho_\mathrm{in}) =
        \begin{cases}
            \begin{aligned}
                &x V_\mathrm{in}^\dagger \rho_\mathrm{in} V_\mathrm{in} + {\1_\mcF \over d} \Tr[\rho_\mathrm{in} (y \Pi_{\Im V_\mathrm{in}} + z (\1_D-\Pi_{\Im V_\mathrm{in}}))]\\
                &\hspace{90pt}(\mathrm{isometry\;inversion}, \mathrm{isometry\;adjointation})
            \end{aligned}\\
            \Tr[\rho_\mathrm{in} (v \Pi_{\Im V_\mathrm{in}} + w (\1_D-\Pi_{\Im V_\mathrm{in}}))] \hspace{24pt} (\mathrm{universal\;error\;detection})
        \end{cases}
    \end{align}
    for all $V_\mathrm{in}\in \isometry{d}{D}$ and $\rho_\mathrm{in} \in \mcL(\CC^d)$, where $\Pi_{\Im V_\mathrm{in}}$ and $(\1_D-\Pi_{\Im V_\mathrm{in}})$ are orthogonal projectors onto the image $\Im V_\mathrm{in}$ of $V_\mathrm{in}$ and its complement $(\Im V_\mathrm{in})^\perp$, and $x,y,z,v,w \in\CC$ are constant numbers given by
    \begin{align}
        x &\coloneqq {1\over d^2-1}\Tr[C(d^2 \Omega-\Xi\otimes \1_\mcF)],\\
        y &\coloneqq {d^2 \over d^2-1} \Tr[C(\Xi\otimes \1_\mcF-\Omega)],\\
        z &\coloneqq \Tr[C(\Sigma\otimes \1_\mcF)],
    \end{align}
    for isometry inversion and isometry adjointation, and
    \begin{align}
        v &\coloneqq \Tr(C\Xi),\\
        w &\coloneqq \Tr(C\Sigma),
    \end{align}
    for universal error detection, $\Omega, \Xi, \Sigma$ are defined by
    \begin{align}
        \Omega &\coloneqq \sum_{\mu\in\young{d}{n+1}} \sum_{i,j,k,l=1}^{d_\mu}[\Omega_\mu]_{ik,jl} (E^{\mu,d}_{ij})_{\mcI^n \mcF} \otimes (E^{\mu,D}_{kl})_{\mcP\mcO^n},\\
        \Xi &\coloneqq \sum_{\lambda\in\young{d}{n}}\sum_{\nu\in\lambda+_d\square} \sum_{a,b=1}^{d_\lambda} \sum_{k,l=1}^{d_\nu} [\Xi_{\lambda\nu}]_{ak,bl} (E^{\lambda,d}_{ab})_{\mcI^n} \otimes (E^{\nu,D}_{kl})_{\mcP\mcO^n}\\
        \Sigma &\coloneqq \sum_{\lambda\in\young{d}{n}}\sum_{\nu\in\lambda+_D\square} \sum_{a,b=1}^{d_\lambda} \sum_{k,l=1}^{d_\nu} [\Sigma_{\lambda\nu}]_{ak,bl} (E^{\lambda,d}_{ab})_{\mcI^n} \otimes (E^{\nu,D}_{kl})_{\mcP\mcO^n},
    \end{align}
    and $\Omega_\mu \in \mcL(\CC^{d_\mu} \otimes \CC^{d_\mu})$ and $\Xi_{\lambda\nu}, \Sigma_{\lambda\nu} \in\mcL(\CC^{d_\lambda} \otimes \CC^{d_\nu})$ are defined by
    \begin{align}
        [\Omega_\mu]_{ik,jl} &\coloneqq {[\pi_\mu]_{ik}^*[\pi_\mu]_{lj} \over d^2 m_\mu^{(D)}},\\
        [\Xi_{\lambda\nu}]_{ak,bl} &\coloneqq {m_\nu^{(d)}\over d m_\nu^{(D)} m_\lambda^{(d)}} [\pi_\nu]^*_{a^\lambda_\nu k}[\pi_\nu]_{l b_\nu^\lambda},\\
        [\Sigma_{\lambda\nu}]_{ak,bl} &\coloneqq {1\over D-d}\left[ {1\over m_\lambda^{(D)}} - \delta_{\nu\in\young{D}{n+1}}{m_\nu^{(d)} \over m_\nu^{(D)} m_\lambda^{(d)}} \right] [\pi_\nu]^*_{a^\lambda_\nu k}[\pi_\nu]_{l b_\nu^\lambda},
    \end{align}
    where $[\pi_\mu]_{ij}$ are matrix elements of the irreducible representation $\pi_\mu$ for $\pi\coloneqq (12\cdots n+1)\in\mfS_{n+1}$ shown in Eq.~(\ref{eq:def_pi_mu}) defined by $[\pi_\mu]_{ij}\coloneqq \bra{\mu,i}\pi_\mu\ket{\mu,j}$, $\delta_{\nu\in\young{D}{n+1}}$ is defined by $\delta_{\nu\in\young{D}{n+1}} =1$ for $\nu\in\young{D}{n+1}$ and $\delta_{\nu\in\young{D}{n+1}} =0$ for $\nu\notin\young{d}{n+1}$, and $a^\lambda_\nu$ is the index of the standard tableau $s_{a^\lambda_\nu}^\nu$ obtained by adding a box \fbox{$n+1$} to the standard tableau $s^\lambda_a$.
\end{Lem}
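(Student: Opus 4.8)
The plan is to separate the argument into a \emph{structural} part, which fixes the functional form of $\supermap{C}(\map{V}_\mathrm{in}^{\otimes n})$ from the symmetry alone, and a \emph{computational} part, which evaluates the constants. First I would transport the $\U(d)\times\U(D)$ symmetry (\ref{eq:C_sudsuDsymmetry}) of $C$ through the link product, exactly as in the derivation of Eq.~(\ref{eq:CO_symmetry}), using the identity $(A_\mcI\otimes B_\mcO)\dket{V}=\dket{BVA^T}$. This gives the covariance
\begin{align}
\supermap{C}(\map{V}_\mathrm{in}^{\otimes n}) = \map{U}'\circ\supermap{C}\big((\map{U}''\circ\map{V}_\mathrm{in}\circ\map{U}')^{\otimes n}\big)\circ\map{U}'' \quad \forall U'\in\U(d),\,U''\in\U(D).
\end{align}
Because $\U(D)$ acts transitively on $\isometry{d}{D}$, it suffices to determine the form at the reference embedding $V_0=\sum_{i=1}^{d}\ket{i}\!\bra{i}$ and then transport it to an arbitrary $V_\mathrm{in}$, which automatically shows that the constants are independent of $V_\mathrm{in}$.

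For the structural part, I would restrict the covariance to the stabilizer of $V_0$. The condition $U''V_0U'=V_0$ forces $U''=U'^\dagger\oplus E$ with $E\in\U(D-d)$ acting on $(\Im V_0)^\perp$, so $\Phi_0:=\supermap{C}(\map{V}_0^{\otimes n})$ obeys $\Phi_0=\map{U}'\circ\Phi_0\circ\map{W}$ with $\map{W}(\cdot)=(U'^\dagger\oplus E)(\cdot)(U'^\dagger\oplus E)^\dagger$. Averaging over the Haar measure of $\U(D-d)$ kills the off-diagonal blocks (they carry the fundamental representation of $\U(D-d)$) and collapses the complement block to its trace, so $\Phi_0(\rho)$ depends on the input only through $V_0^\dagger\rho V_0$ and $\Tr[(\1_D-\Pi_{\Im V_0})\rho]$. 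The residual $\U(d)$-covariance then forces the image part into the commutant of $\{\bar U\otimes U\}$, i.e.\ a combination of the identity and trace channels, and forces the replacement state to be $\1_d/d$. This yields precisely the stated form with undetermined $x,y,z$; the universal-error-detection case is the degeneration $\mcF=\CC$, where the output is a scalar and the same reasoning leaves a POVM element $v\Pi_{\Im V_0}+w(\1_D-\Pi_{\Im V_0})$.

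It remains to identify the constants, and here I would exploit that $\Phi_0$ is \emph{linear} in $C$, so each constant is a linear functional of $C$. Evaluating $\Phi_0\circ\map{V}_0$ on test states, the channel fidelity with $\1_d$ returns $x+y/d^2$, the total output weight for inputs supported on $\Im V_0$ returns $x+y$, and the weight produced by inputs in $(\Im V_0)^\perp$ returns $z$. Each of these is then rewritten as $\Tr(CM)$ by recognizing the relevant performance operator; since $C$ is already symmetric, one may replace each operator by its Haar average over $\isometry{d}{D}$, and computing those averages in the Schur basis (Section~\ref{sec:sw_duality} and Lemma~\ref{lem:yy}) produces exactly $\Omega$, $\Xi\otimes\1_\mcF$ and $\Sigma\otimes\1_\mcF$ with the matrix elements stated in the lemma. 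Inverting the resulting $3\times3$ linear system — e.g.\ $\Tr(C\Omega)=x+y/d^2$, $\Tr[C(\Xi\otimes\1_\mcF)]=x+y$, $\Tr[C(\Sigma\otimes\1_\mcF)]=z$ — gives the claimed formulas for $x,y,z$, and its $\mcF=\CC$ degeneration gives $v=\Tr(C\Xi)$, $w=\Tr(C\Sigma)$.

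The structural part is routine covariance-plus-Schur's-lemma, so I expect the main obstacle to be the explicit evaluation of the three Haar averages in the Schur basis. In particular, matching the normalizations $m_\mu^{(d)}/m_\nu^{(D)}$, controlling the distinction between $\lambda+_d\square$ and $\lambda+_D\square$ (the image sees a $d$-dimensional, the complement a $D$-dimensional branching), and — hardest of all — correctly producing the matrix elements $[\pi_\mu]_{ij}$ of the cyclic permutation $\pi=(12\cdots n{+}1)\in\mfS_{n+1}$, which enter because composing $\supermap{C}(\map{V}_0^{\otimes n})$ with $\map{V}_0$ links the global past/future slot with the $n$ input slots cyclically, are where the genuine work lies.
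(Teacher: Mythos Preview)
Your proposal is correct and follows essentially the same route as the paper: the paper derives the covariance relation (\ref{eq:choi_fV_symmetry}), specializes to the stabilizer $U''=V_\mathrm{in}UV_\mathrm{in}^\dagger+U'''$ to force the block structure via Schur's lemma, extracts $d^2x+y$, $d(x+y)$, $(D-d)z$ by pairing the Choi operator against $\dketbra{V_\mathrm{in}^*}_{\mcF\mcP}$, $\Pi_{\Im V_\mathrm{in}}^*\otimes\1_\mcF$, $(\1_D-\Pi_{\Im V_\mathrm{in}})^*\otimes\1_\mcF$, and then Haar-averages over $\isometry{d}{D}$ to obtain $\Omega,\Xi,\Sigma$, computing them in the Schur basis with the cyclic permutation $P_\pi$ exactly as you anticipated. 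The only cosmetic difference is that the paper works directly at an arbitrary $V_\mathrm{in}$ rather than transporting from a fixed $V_0$, and handles the universal-error-detection case by tensoring $C$ with $\1_\mcF/d$ to reduce to the previous case rather than treating it as the $\mcF=\CC$ degeneration.
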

\begin{proof}
First, we consider the Choi operator $C$ satisfying
\begin{align}
    [C, U^{\otimes n+1}_{\mcI^n \mcF} \otimes U^{\prime\otimes n+1}_{\mcP\mcO^n}] = 0
\end{align}
for all $U\in \U(d)$ and $U'\in\U(D)$. Then, $C\star \dketbra{V_\mathrm{in}}^{\otimes n}_{\mcI^n\mcO^n}$ for $V_\mathrm{in}\in\isometry{d}{D}$ satisfies
\begin{align}
    &C\star \dketbra{U'V_\mathrm{in}U}^{\otimes n}_{\mcI^n\mcO^n}\nonumber\\
    &= C\star (U^{T\otimes n}_{\mcI^n} \otimes U^{\prime \otimes n}_{\mcO^n})\dketbra{V_\mathrm{in}}^{\otimes n}_{\mcI^n\mcO^n} (U^{T\otimes n}_{\mcI^n} \otimes U^{\prime \otimes n}_{\mcO^n})^\dagger\\
    &= (U^{T\otimes n}_{\mcI^n} \otimes \1_{\mcF} \otimes \1_{\mcP} \otimes U^{\prime \otimes n}_{\mcO^n})^T C (U^{T\otimes n}_{\mcI^n} \otimes \1_{\mcF} \otimes \1_{\mcP} \otimes U^{\prime \otimes n}_{\mcO^n})^* \star \dketbra{V_\mathrm{in}}^{\otimes n}_{\mcI^n\mcO^n}\\
    &= (\1_{\mcI^n} \otimes U^{\dagger}_{\mcF} \otimes U^{\prime *}_{\mcP} \otimes \1_{\mcO^n}) C (\1_{\mcI^n} \otimes U^{\dagger}_{\mcF} \otimes U^{\prime *}_{\mcP} \otimes \1_{\mcO^n})^\dagger \star \dketbra{V_\mathrm{in}}^{\otimes n}_{\mcI^n\mcO^n}\\
    &=(U^{\prime *}_{\mcP} \otimes U^{\dagger}_{\mcF}) [C \star \dketbra{V_\mathrm{in}}^{\otimes n}_{\mcI^n\mcO^n}] (U^{\prime *}_{\mcP} \otimes U^{\dagger}_{\mcF})^\dagger\label{eq:choi_fV_symmetry}
\end{align}
for all $U\in\U(d)$ and $U'\in\U(D)$. For $U\in\U(d)$ and $U''\in\U[(\Im V_\mathrm{in})^\perp]$, $U' \coloneqq V_\mathrm{in} U V_\mathrm{in}^\dagger + U''$ is a unitary operator and $U'V_\mathrm{in}U = V_\mathrm{in}$ holds.  By substituting $U$ and $U' = V_\mathrm{in} U V_\mathrm{in}^\dagger + U''$ to Eq.~(\ref{eq:choi_fV_symmetry}), we obtain
\begin{align}
    [(V_\mathrm{in} U V_\mathrm{in}^\dagger + U'')^*_{\mcP} \otimes U^\dagger_{\mcF}] [C \star \dketbra{V_\mathrm{in}}^{\otimes n}_{\mcI^n\mcO^n}] [(V_\mathrm{in} U V_\mathrm{in}^\dagger + U'')^*_{\mcP} \otimes U^\dagger_{\mcF}]^\dagger = C \star \dketbra{V_\mathrm{in}}^{\otimes n}_{\mcI^n\mcO^n}.\label{eq:choi_fV_commutator}
\end{align}
Decomposing $C \star \dketbra{V_\mathrm{in}}^{\otimes n}_{\mcI^n\mcO^n}$ as
\begin{align}
    C \star \dketbra{V_\mathrm{in}}^{\otimes n}_{\mcI^n\mcO^n} = (V_\mathrm{in}^*\otimes \1_\mcF) P (V_\mathrm{in}^*\otimes \1_\mcF)^\dagger + (V_\mathrm{in}^*\otimes \1_\mcF) Q + R (V_\mathrm{in}^*\otimes \1_\mcF)^\dagger + S
\end{align}
using linear operators $P\in \mcL(\CC^d \otimes \mcF)$, $Q:(\Im V_\mathrm{in})^\perp \otimes \mcF \to \CC^d \otimes \mcF$, $R: \CC^d \otimes \mcF \to (\Im V_\mathrm{in})^\perp \otimes \mcF$, and $S\in\mcL((\Im V_\mathrm{in})^\perp \otimes \mcF)$, Eq.~(\ref{eq:choi_fV_commutator}) is written as
\begin{align}
    (U^*\otimes U)A(U^*\otimes U)^\dagger &= A,\\
    (U^*\otimes U)B(U''^T\otimes U)^\dagger &= B,\\
    (U''^T \otimes U)C(U^*\otimes U)^\dagger &= C,\\
    (U''^T \otimes U)D(U''^T \otimes U)^\dagger &= D.
\end{align}
Therefore, $P$ is written as a linear combination of $\dketbra{\1_d}$ and $\1_d\otimes \1_d$, $Q=0$, $R=0$, and $S$ is proportional to $\1_{(\Im V_\mathrm{in})^\perp}\otimes \1_\mcF$. Therefore, $C \star \dketbra{V_\mathrm{in}}^{\otimes n}_{\mcI^n\mcO^n}$ can be expressed by three parameters $x_{V_\mathrm{in}}$, $y_{V_\mathrm{in}}$, and $z_{V_\mathrm{in}}$ as
\begin{align}
    C \star \dketbra{V_\mathrm{in}}^{\otimes n}_{\mcI^n\mcO^n}
    &= x_{V_\mathrm{in}} \dketbra{V_\mathrm{in}^\dagger}_{\mcP\mcF} + y_{V_\mathrm{in}} (\Pi^T_{\Im V_\mathrm{in}})_{\mcP} \otimes {\1_\mcF \over d} + z_{V_\mathrm{in}}  (\Pi^T_{(\Im V_\mathrm{in})^\perp})_{\mcP} \otimes {\1_\mcF \over d}.
\end{align}
From Eq.~(\ref{eq:choi_fV_symmetry}), we obtain
\begin{align}
    &x_{U'V_\mathrm{in}U} \dketbra{(U'V_\mathrm{in}U)^\dagger}_{\mcP\mcF} + y_{U'V_\mathrm{in}U} (\Pi^T_{\Im (U'V_\mathrm{in}U)})_{\mcP} \otimes {\1_\mcF \over d} + z_{U'V_\mathrm{in}U} (\Pi^T_{(\Im (U'V_\mathrm{in}U))^\perp})_{\mcP} \otimes {\1_\mcF \over d}\nonumber\\
    &= (U^{\prime *}_{\mcP} \otimes U^{\dagger}_{\mcF})[x_{V_\mathrm{in}} \dketbra{V_\mathrm{in}^\dagger}_{\mcP\mcF} + y_{V_\mathrm{in}} (\Pi^T_{\Im V_\mathrm{in}})_{\mcP} \otimes {\1_\mcF \over d} + z_{V_\mathrm{in}}  (\Pi^T_{(\Im V_\mathrm{in})^\perp})_{\mcP} \otimes {\1_\mcF \over d}](U^{\prime *}_{\mcP} \otimes U^{\dagger}_{\mcF})^\dagger\\
    &= x_{V_\mathrm{in}} \dketbra{(U'V_\mathrm{in}U)^\dagger}_{\mcP\mcF} + y_{V_\mathrm{in}} (\Pi^T_{\Im (U'V_\mathrm{in}U)})_{\mcP} \otimes {\1_\mcF \over d} + z_{V_\mathrm{in}} (\Pi^T_{(\Im (U'V_\mathrm{in}U))^\perp})_{\mcP} \otimes {\1_\mcF \over d},
\end{align}
thus, $x_{V_\mathrm{in}}$, $y_{V_\mathrm{in}}$ and $z_{V_\mathrm{in}}$ does not depend on $V_\mathrm{in}$. By rewriting $x_{V_\mathrm{in}} = x$, $y_{V_\mathrm{in}} = y$ and $z_{V_\mathrm{in}} = z$, we obtain
\begin{align}
    C \star \dketbra{V_\mathrm{in}}^{\otimes n}_{\mcI^n\mcO^n}
    &= x \dketbra{V_\mathrm{in}^\dagger}_{\mcP\mcF} + y (\Pi^T_{\Im V_\mathrm{in}})_{\mcP} \otimes {\1_\mcF \over d} + z  (\Pi^T_{(\Im V_\mathrm{in})^\perp})_{\mcP} \otimes {\1_\mcF \over d}.\label{eq:choi_CV}
\end{align}
This Choi operator corresponds to the map
\begin{align}
    \supermap{C}(\map{V}_\mathrm{in}^{\otimes n})(\rho_\mathrm{in}) = x V_\mathrm{in}^\dagger \rho_\mathrm{in} V_\mathrm{in} + {\1_\mcF \over d} \Tr[\rho_\mathrm{in} (y \Pi_{\Im V_\mathrm{in}} + z (\1_D-\Pi_{\Im V_\mathrm{in}}))].
\end{align}
We calculate $x$, $y$ and $z$ as follows. Equation (\ref{eq:choi_CV}) is written as
\begin{align}
    &\Tr_{\mcI^n\mcO^n}[C (\1_{\mcP\mcF}\otimes \dketbra{V_\mathrm{in}^*}^{\otimes n}_{\mcI^n\mcO^n})]\nonumber\\
    &= x \dketbra{V_\mathrm{in}^\dagger}_{\mcP\mcF} + y (\Pi^T_{\Im V_\mathrm{in}})_{\mcP} \otimes {\1_\mcF \over d} + z  (\Pi^T_{(\Im V_\mathrm{in})^\perp})_{\mcP} \otimes {\1_\mcF \over d}.
\end{align}
Taking the Hilbert-Schmidt inner product with $\dketbra{V_\mathrm{in}^*}_{\mcF\mcP}$, $(\Pi_{\Im V_\mathrm{in}})^*_{\mcP} \otimes \1_{\mcF}$ and $((\1_D-\Pi_{\Im V_\mathrm{in}}))^*_{\mcP} \otimes \1_\mcF$, we obtain
\begin{align}
    \Tr[C \dketbra{V_\mathrm{in}^*}^{\otimes n}_{\mcI^n\mcO^n} \otimes \dketbra{V_\mathrm{in}^*}_{\mcF\mcP}] &= d^2 x + y,\\
    \Tr[C \dketbra{V_\mathrm{in}^*}^{\otimes n}_{\mcI^n\mcO^n} \otimes (\Pi_{\Im V_\mathrm{in}})^*_{\mcP} \otimes \1_{\mcF}] &= d(x+y),\\
    \Tr[C \dketbra{V_\mathrm{in}^*}^{\otimes n}_{\mcI^n\mcO^n} \otimes ((\1_D-\Pi_{\Im V_\mathrm{in}}))^*_{\mcP} \otimes \1_{\mcF}] &= (D-d)z.
\end{align}
Taking the Haar integral $\dd V_\mathrm{in}$ on $\isometry{d}{D}$, we obtain
\begin{align}
    \Tr(C\Omega) &= x + {y\over d^2},\\
    \Tr[C (\Xi\otimes \1_\mcF)] &= x+y,\\
    \Tr[C (\Sigma\otimes \1_\mcF)] &= z,
\end{align}
where $\Omega$, $\Xi$ and $\Sigma$ are defined by
\begin{align}
    \Omega &\coloneqq {1\over d^2} \int_{\isometry{d}{D}} \dd V \dketbra{V}^{\otimes n}_{\mcI^n \mcO^n} \otimes \dketbra{V}_{\mcF\mcP},\\
    \Xi &\coloneqq {1 \over d} \int_{\isometry{d}{D}} \dd V \dketbra{V}^{\otimes n}_{\mcI^n \mcO^n} \otimes (\Pi_{\Im V})_{\mcP},\\
    \Sigma &\coloneqq {1\over D-d} \int_{\isometry{d}{D}} \dd V \dketbra{V}^{\otimes n}_{\mcI^n \mcO^n} \otimes ((\1_D-\Pi_{\Im V}))_{\mcP}.
\end{align}
Therefore, $x$, $y$ and $z$ are given by
\begin{align}
    x &= {1\over d^2-1}\Tr[C(d^2 \Omega-\Xi\otimes \1_\mcF)],\\
    y &= {d^2 \over d^2-1} \Tr[C(\Xi\otimes \1_\mcF-\Omega)],\\
    z &= \Tr[C(\Sigma\otimes \1_\mcF)].
\end{align}

Next, we consider the Choi operator $C$ satisfying
\begin{align}
    [C, U^{\otimes n}_{\mcI^n} \otimes U^{\prime\otimes n+1}_{\mcP\mcO^n}] = 0
\end{align}
for all $U\in\U(d)$ and $U'\in\U(D)$.  Defining $C' \coloneqq C\otimes {\1_{\mcF} \over d}$, $C'$ satisfies
\begin{align}
    [C', U^{\otimes n+1}_{\mcI^n\mcF} \otimes U^{\prime\otimes n+1}_{\mcP\mcO^n}] = 0
\end{align}
for all $U\in\U(d)$ and $U'\in\U(D)$.  Therefore, we can show that there exist constant numbers $u$, $v$, and $w$ such that
\begin{align}
    &C\star \dketbra{V_\mathrm{in}}^{\otimes n}_{\mcI^n \mcO^n} \otimes {\1_{\mcF} \over d}\nonumber\\
    &= C'\star \dketbra{V_\mathrm{in}}^{\otimes n}_{\mcI^n \mcO^n}\\
    &=u \dketbra{V_\mathrm{in}^\dagger}_{\mcP\mcF} + v (\Pi^T_{\Im V_\mathrm{in}})_{\mcP} \otimes {\1_\mcF \over d} + w (\Pi^T_{(\Im V_\mathrm{in})^\perp})_{\mcP} \otimes {\1_\mcF \over d},
\end{align}
i.e., $u=0$ and
\begin{align}
    C\star \dketbra{V_\mathrm{in}}^{\otimes n}_{\mcI^n \mcO^n} = v (\Pi^T_{\Im V_\mathrm{in}})_{\mcP} + w (\Pi^T_{(\Im V_\mathrm{in})^\perp})_{\mcP}\label{eq:choi_CV'}
\end{align}
holds. We calculate $v$ and $w$ as follows.  Equation (\ref{eq:choi_CV'}) is written as
\begin{align}
    \Tr_{\mcI^n\mcO^n}[C (\1_{\mcP}\otimes \dketbra{V_\mathrm{in}^*}^{\otimes n}_{\mcI^n\mcO^n})]= v (\Pi^T_{\Im V_\mathrm{in}})_{\mcP} +w (\Pi^T_{(\Im V_\mathrm{in})^\perp})_{\mcP}.
\end{align}
Taking the Hilbert-Schmidt inner product with $\Pi_{\Im V_\mathrm{in}}^*$ and $(\1_D-\Pi_{\Im V_\mathrm{in}})^*$, we obtain
\begin{align}
    \Tr[C \dketbra{V_\mathrm{in}^*}^{\otimes n}_{\mcI^n\mcO^n} \otimes (\Pi_{\Im V_\mathrm{in}})^*_{\mcP}] &= dv,\\
    \Tr[C \dketbra{V_\mathrm{in}^*}^{\otimes n}_{\mcI^n\mcO^n} \otimes ((\1_D-\Pi_{\Im V_\mathrm{in}}))^*_{\mcP}] &= (D-d)w.
\end{align}
Taking the Haar integral $\dd V_\mathrm{in}$ on $\isometry{d}{D}$, we obtain
\begin{align}
    \Tr(C \Xi) &= v,\\
    \Tr(C \Sigma) &= w.
\end{align}

We calculate $\Omega$, $\Xi$ and $\Sigma$ as follows.  First, due to the left- and right-invariance of the Haar measure $\dd V$ given by $\dd V = \dd (U'VU)$ for all $U\in\U(d)$ and $U'\in\U(D)$, $\Omega$ and $\Xi$ satisfies the $\U(d)\times \U(D)$ symmetry given by
\begin{align}
    [\Omega, U^{\otimes n+1}_{\mcI^n \mcF} \otimes U^{\prime\otimes n+1}_{\mcP\mcO^n}] &= 0
\end{align}
for all $U\in\U(d)$ and $U'\in\U(D)$.
Thus, they can be written as
\begin{align}
    \Omega &= \sum_{\mu\in\young{d}{n+1}}\sum_{\nu\in\young{D}{n+1}} \sum_{i,j=1}^{d_\mu}\sum_{k,l=1}^{d_\nu}\Omega^{\mu\nu}_{ijkl} (E^{\mu,d}_{ij})_{\mcI^n \mcF} \otimes (E^{\nu,D}_{kl})_{\mcP\mcO^n},
\end{align}
using complex coefficients $\Omega^{\mu\nu}_{ijkl}\in\CC$. The coefficients can be calculated as
\begin{align}
    &\Omega^{\mu\nu}_{ijkl}\nonumber\\
    &= {1\over m_\mu^{(d)}m_\nu^{(D)}}\Tr[\Omega (E^{\mu,d}_{ji})_{\mcI^n\mcF} \otimes (E^{\nu,D}_{lk})_{\mcP\mcO^n}]\\
    &= {1\over m_\mu^{(d)}m_\nu^{(D)}}\Tr[\Omega (E^{\mu,d}_{ji})_{\mcI^n\mcF} \otimes (P_\pi^\dagger E^{\nu,D}_{lk} P_\pi)_{\mcO^n\mcP}]\\
    &= {1\over m_\mu^{(d)}m_\nu^{(D)}}\sum_{k',l'=1}^{d_\nu}{1\over d^2} \int_{\isometry{d}{D}} \dd V \Tr[\dketbra{V}^{\otimes n+1}_{\mcI^n \mcF, \mcO^n \mcP} (E^{\mu,d}_{ji})_{\mcI^n\mcF} \otimes [\pi_\nu]_{ll'}(E^{\nu,D}_{l'k'})_{\mcO^n\mcP} [\pi_\nu]_{k'k}^*]\\
    &= {1\over m_\mu^{(d)}m_\nu^{(D)}}\sum_{k',l'=1}^{d_\nu}{1\over d^2} \int_{\isometry{d}{D}} \dd V \Tr[\dketbra{\1_d}^{\otimes n+1} (E^{\mu,d}_{ji}) \otimes [\pi_\nu]_{ll'}V^{\dagger \otimes n+1}(E^{\nu,D}_{l'k'}) V^{\otimes n+1}[\pi_\nu]_{k'k}^*],
\end{align}
where $P_\pi$ is the permutation of Hilbert spaces defined in Eq.~(\ref{eq:def_pi_mu}) for $\pi=(1 2 \cdots n+1)\in\mfS_{n+1}$, $[\pi_\mu]_{ij}$ are matrix elements of the irreducible representation $\pi_\mu$ defined by $[\pi_\mu]_{ij}\coloneqq \bra{\mu,i}\pi_\mu\ket{\mu,j}$.  As shown in Eq.~(\ref{eq:action_of_isometry_on_E}) in Appendix \ref{appendix_sec:sequential_isometry_adjointation}, $V^{\dagger \otimes n+1}(E^{\nu,D}_{l'k'}) V^{\otimes n+1}$ is given by $(E^{\nu,d}_{l'k'}) \delta_{\nu\in\young{D}{n+1}}$.  We consider the Schur basis introduced in Section \ref{sec:sw_duality}.  Since the quantum Schur transform is a real matrix, the maximally entangled state in the Schur basis is the same as that in the Schur basis, i.e.,
\begin{align}
    \dket{\1_d}^{\otimes n+1} = \sum_{\mu\in\young{d}{n+1}}\sum_{u=1}^{m_\mu^{(d)}}\sum_{i=1}^{d_\mu} \ket{\mu,u}_{\mcU_\mu^{(d)}} \otimes \ket{\mu,i}_{\mcS_\mu} \otimes \ket{\mu,u}_{\mcU_\mu^{(d)}} \otimes \ket{\mu,i}_{\mcS_\mu}.
\end{align}
Thus, $\Omega^{\mu\nu}_{ijkl}$ is further calculated as
\begin{align}
    \Omega^{\mu\nu}_{ijkl}&= {\delta_{\nu\in\young{D}{n+1}}\over m_\mu^{(d)}m_\nu^{(D)}}\sum_{k',l'=1}^{d_\nu}{1\over d^2} \Tr[\dketbra{\1_d}^{\otimes n+1} (E^{\mu,d}_{ji}) \otimes [\pi_\nu]_{ll'}(E^{\nu,d}_{l'k'}) [\pi_\nu]_{k'k}^*]\\
    &= {\delta_{\mu\nu} [\pi_\mu]_{ik}^* [\pi_\mu]_{lj} \over d^2 m_\mu^{(D)}},
\end{align}
Therefore, $\Omega$ is given by
\begin{align}
    \Omega &= \sum_{\mu\in\young{d}{n+1}} \sum_{i,j,k,l=1}^{d_\mu}{ [\pi_\mu]_{ik}^* [\pi_\mu]_{lj} \over d^2 m_\mu^{(D)}} (E^{\mu,d}_{ij})_{\mcI^n \mcF} \otimes (E^{\mu,D}_{kl})_{\mcP\mcO^n}.\label{eq:Omega_yybasis}
\end{align}
Since $(\Pi_{\Im V})_{\mcP} = \Tr_{\mcF}\dketbra{V}_{\mcF\mcP}$ holds, $\Xi$ is calculated using Lemma \ref{lem:yy} as
\begin{align}
    \Xi &= d\Tr_{\mcF} \Omega\\
    &= \sum_{\nu\in\young{d}{n+1}} \sum_{\lambda\in\nu-\square} \sum_{a,b=1}^{d_\lambda} \sum_{k,l=1}^{d_\nu} {m_\nu^{(d)}\over d m_\nu^{(D)} m_\lambda^{(d)}}  [\pi_\nu]_{a_\nu^{\lambda} k}^* [\pi_\nu]_{lb_\nu^\lambda}(E^{\lambda,d}_{ab})_{\mcI^n} \otimes (E^{\nu,D}_{kl})_{\mcP\mcO^n}.\label{eq:Xi_yybasis}
\end{align}
Since $((\1_D-\Pi_{\Im V}))_{\mcP} = \1_{\mcP} - (\Pi_{\Im V})_{\mcP}$ holds, $\Sigma$ is calculated using Lemma \ref{lem:yy} as
\begin{align}
    \Sigma
    &= {1\over D-d} \Bigg[\int \dd V \dketbra{V}_{\mcI^n \mcO^n}^{\otimes n} \otimes \1_{\mcP} - d\Xi\Bigg]\\
    &= {1\over D-d} \sum_{\lambda\in\young{d}{n}}\sum_{a,b=1}^{d_\lambda} \Bigg[{(E_{ab}^{\lambda, d})_{\mcI^n} \otimes (E_{ab}^{\lambda, D})_{\mcO^n} \over m_\lambda^{(D)}} \otimes \1_{\mcP} \nonumber\\
    &\hspace{48pt}- \sum_{\nu\in\lambda+_d\square} \sum_{k,l=1}^{d_\nu} {m_\nu^{(d)}\over m_\nu^{(D)} m_\lambda^{(d)}}  [\pi_\mu]_{a_\nu^{\lambda} k}^* [\pi_\mu]_{lb_\nu^\lambda}(E^{\lambda,d}_{ab})_{\mcI^n} \otimes (E^{\nu,D}_{kl})_{\mcP\mcO^n}\Bigg]\\
    &= {1\over D-d} \sum_{\lambda\in\young{d}{n}}\sum_{a,b=1}^{d_\lambda}\sum_{\nu\in\lambda+_D\square}\Bigg[{(E_{a b}^{\lambda,d})_{\mcI^n} \otimes (P_\pi^\dagger E_{a^\lambda_\nu b^\lambda_\nu }^{\nu, D}P_\pi)_{\mcP\mcO^n} \over m_\lambda^{(D)}} \nonumber\\
    &\hspace{48pt}- \delta_{\nu\in\young{d}{n+1}} \sum_{k,l=1}^{d_\nu} {m_\nu^{(d)}\over m_\nu^{(D)} m_\lambda^{(d)}}  [\pi_\nu]_{a_\nu^{\lambda} k}^* [\pi_\nu]_{lb_\nu^\lambda}(E^{\lambda,d}_{ab})_{\mcI^n} \otimes (E^{\nu,D}_{kl})_{\mcP\mcO^n}\Bigg]\\
    &= {1\over D-d} \sum_{\lambda\in\young{d}{n}}\sum_{\nu\in\lambda+_D\square}\sum_{a,b=1}^{d_\lambda}\sum_{k,l=1}^{d_\nu}\Bigg[ {1\over m_\lambda^{(D)}} - \delta_{\nu\in\young{d}{n+1}}  {m_\nu^{(d)}\over m_\nu^{(D)} m_\lambda^{(d)}}\Bigg][\pi_\nu]_{a_\nu^{\lambda} k}^* [\pi_\nu]_{lb_\nu^\lambda}(E^{\lambda,d}_{ab})_{\mcI^n} \otimes (E^{\nu,D}_{kl})_{\mcP\mcO^n}\\
    &= \sum_{\lambda\in\young{d}{n}}\sum_{\nu\in\lambda+_D\square}\sum_{a,b=1}^{d_\lambda}\sum_{k,l=1}^{d_\nu} {1\over m_\nu^{(D)}} {\mathrm{hook}(\lambda) \over \mathrm{hook}(\nu)} [\pi_\nu]_{a_\nu^{\lambda} k}^* [\pi_\nu]_{lb_\nu^\lambda}(E^{\lambda,d}_{ab})_{\mcI^n} \otimes (E^{\nu,D}_{kl})_{\mcP\mcO^n}.\label{eq:Sigma_yybasis}
\end{align}
\end{proof}

The parameters $x,y,z,v,w$ in Lemma \ref{lem:deconposition_of_fV} are related to the constraints and the figure of merit of each task considered in this work as follows:
\begin{itemize}
    \item Probabilistic exact isometry inversion: $x = p$ and $y = 0$.
    \item Deterministic isometry inversion: $x + y/d^2 = F_\mathrm{worst}$.
    \item Universal error detection: $v = 1$, $w = \alpha$.
    \item Isometry adjointation: $x+y = 1$, $\max\{1-x-y/d^2, z\} = \epsilon$.
\end{itemize}
Using this property, we derive the SDP to obtain the optimal transformations of isometry operations in the next section.

\subsection{Derivation of the SDP to obtain optimal transformation of isometry operations}
\label{appendix_sec:sdp_derivation}
\subsubsection{Probabilistic exact isometry inversion}
From Theorem \ref{thm:unitary_group_symmetry} and Appendix \ref{appendix_sec:unitary_group_symmetry_conditions}, the optimization problem of the success probability for isometry inversion is formulated as follows:
\begin{align}
\begin{split}
    &\max \Tr(C_S\Omega)\\
    \mathrm{s.t.}\;&0\leq C_S, C_F\in\mcL(\mcI^n \otimes \mcO^n \otimes \mcP \otimes \mcF),\\
    &C \coloneqq C_S+C_F\in\mcW^{(x)},\\
    &\Tr(C_S\Omega) = \Tr[C_S(\Xi\otimes \1_\mcF)],\\
    &[C_a, U^{\otimes n+1}_{\mcI^n\mcF} \otimes U^{\prime\otimes n+1}_{\mcP\mcO^n}] = 0 \quad \forall U\in\U(d), U'\in\U(D), a\in\{S,F\}.
\end{split}\label{eq:sdp_probabilistic_isometry_inversion_comp_basis}
\end{align}
Using the $\U(d)\times \U(D)$ symmetry of $\{C_S, C_F\}$, we write $\{C_S, C_F\}$ similarly to Eq.~(\ref{eq:c_yy_basis1}) as
\begin{align}
    C_S &= \sum_{\mu\in \young{d}{n+1}} \sum_{\nu\in\young{D}{n+1}}\sum_{i,j=1}^{d_\mu} \sum_{k,l=1}^{d_\nu} \frac{[S_{\mu\nu}]_{ik, jl}}{m_\mu^{(d)} m_\nu^{(D)}} (E_{ij}^{\mu, d})_{\mcI^n\mcF} \otimes (E_{kl}^{\nu, D})_{\mcP\mcO^n},\label{eq:cs_yybasis}\\
    C_F &= \sum_{\mu\in \young{d}{n+1}} \sum_{\nu\in\young{D}{n+1}}\sum_{i,j=1}^{d_\mu} \sum_{k,l=1}^{d_\nu} \frac{[F_{\mu\nu}]_{ik, jl}}{m_\mu^{(d)} m_\nu^{(D)}} (E_{ij}^{\mu, d})_{\mcI^n\mcF} \otimes (E_{kl}^{\nu, D})_{\mcP\mcO^n}.\label{eq:cf_yybasis}
\end{align}
Then, $\Tr(C_S\Omega)$ and $\Tr[C_S(\Xi\otimes \1_{\mcF})]$ are given by
\begin{align}
    \Tr(C_S \Omega)
    &= \sum_{\mu\in\young{d}{n+1}}\Tr(S_{\mu\mu}\Omega_\mu),\\
    \Tr[C_S(\Xi\otimes \1_{\mcF})]
    &= \Tr[\Tr_{\mcF}(C_S)\Xi]\\
    &= \sum_{\lambda\in \young{d}{n}} \sum_{\mu\in\lambda+_d\square} \sum_{\nu\in\young{D}{n+1}}\sum_{a,b=1}^{d_\lambda} \sum_{k,l=1}^{d_\nu} \frac{[S_{\mu\nu}]_{a^\lambda_\mu k, b^\lambda_\mu l}}{m_\mu^{(d)} m_\nu^{(D)}} \Tr[(E_{ab}^{\lambda, d})_{\mcI^n} \otimes (E_{kl}^{\nu, D})_{\mcP\mcO^n} \Xi]\\
    &= \sum_{\lambda\in \young{d}{n}} \sum_{\mu\in\lambda+_d\square} \sum_{\nu\in\young{d}{n+1}} \Tr[(X^\lambda_\mu \otimes \1_{d_\nu}) S_{\mu\nu} (X^\lambda_\mu \otimes \1_{d_\nu})^\dagger \Xi_{\lambda\nu}],
\end{align}
where $X^\lambda_\mu$ is defined in Eq.~(\ref{eq:def_X}) and $a^\lambda_\mu$ is the index of the standard tableau $s^\mu_{a^\lambda_\mu}$ obtained by adding a box \fbox{$n+1$} to the standard tableau $s^\lambda_a$.  Therefore, the SDP (\ref{eq:sdp_probabilistic_isometry_inversion_comp_basis}) is written as
\begin{align}
\begin{split}
    &\max \sum_{\mu\in\young{d}{n+1}}\Tr (S_{\mu\mu}\Omega_\mu)\\
    \text{s.t. } & 0\leq S_{\mu\nu}, F_{\mu\nu} \in \mcL(\CC^{d_\mu} \otimes \CC^{d_\nu}) \quad \forall \mu\in\young{d}{n+1}, \nu\in\young{D}{n+1},\\
    & \{C_{\mu\nu}\} \coloneqq \{S_{\mu\nu}+F_{\mu\nu}\} \in \mcW^{(x)}_\mathrm{sym},\\
    & \sum_{\mu\in\young{d}{n+1}}\Tr(S_{\mu\mu}\Omega_\mu) = \sum_{\lambda\in \young{d}{n}} \sum_{\mu\in\lambda+_d\square} \sum_{\nu\in\young{d}{n+1}} \Tr[(X^\lambda_\mu \otimes \1_{d_\nu}) S_{\mu\nu} (X^\lambda_\mu \otimes \1_{d_\nu})^\dagger \Xi_{\lambda\nu}].
\end{split}\label{eq:sdp_probabilistic_isometry_inversion}
\end{align}

\subsubsection{Deterministic isometry inversion}
From Theorem \ref{thm:unitary_group_symmetry} and Appendix \ref{appendix_sec:unitary_group_symmetry_conditions}, the optimization problem of the fidelity of deterministic isometry inversion is formulated as follows:
\begin{align}
\begin{split}
    &\max \Tr(C\Omega)\\
    \mathrm{s.t.}\;&0\leq C\in\mcL(\mcI^n \otimes \mcO^n \otimes \mcP \otimes \mcF),\\
    &C \in\mcW^{(x)},\\
    &[C, U^{\otimes n+1}_{\mcI^n\mcF} \otimes U^{\prime\otimes n+1}_{\mcP\mcO^n}] = 0 \quad \forall U\in\U(d), U'\in\U(D).
\end{split}\label{eq:sdp_deterministic_isometry_inversion_comp_basis}
\end{align}
Similarly to probabilistic isometry inversion, this SDP can be rewritten as follows:

\begin{align}
\begin{split}
    &\max \sum_{\mu\in\young{d}{n+1}}\Tr (C_{\mu\mu}\Omega_\mu)\\
    \text{s.t. } & 0\leq C_{\mu\nu} \in \mcL(\CC^{d_\mu} \otimes \CC^{d_\nu}) \quad \forall \mu\in\young{d}{n+1}, \nu\in\young{D}{n+1},\\
    & \{C_{\mu\nu}\} \in \mcW^{(x)}_\mathrm{sym}.
\end{split}\label{eq:sdp_deterministic_isometry_inversion}
\end{align}

\subsubsection{Universal error detection}
From Theorem \ref{thm:unitary_group_symmetry} and Appendix \ref{appendix_sec:unitary_group_symmetry_conditions}, the optimization problem of $\lambda$ of universal error detection is formulated as follows:

\begin{align}
\begin{split}
    &\min \Tr(C_I\Sigma)\\
    \mathrm{s.t.}\;&0\leq C_I, C_O\in\mcL(\mcI^n \otimes \mcO^n \otimes \mcP),\\
    &C \coloneqq C_I+C_O\in\mcW^{(x)},\\
    &\Tr(C_I\Xi) = 1,\\
    &[C_a, U^{\otimes n}_{\mcI^n} \otimes U^{\prime\otimes n+1}_{\mcP\mcO^n}] = 0 \quad \forall U\in\U(d), U'\in\U(D), a\in\{I,O\}.
\end{split}\label{eq:sdp_universal_error_detection_comp_basis}
\end{align}

Using the $\U(d)\times \U(D)$ symmetry of $\{C_I, C_O\}$, we write $\{C_I, C_O\}$ similarly to Eq.~(\ref{eq:c_yy_basis2}) as
\begin{align}
    C_I &= \sum_{\lambda\in \young{d}{n}} \sum_{\nu\in\young{D}{n+1}}\sum_{a,b=1}^{d_\mu} \sum_{k,l=1}^{d_\nu} \frac{[I_{\lambda\nu}]_{ak, bl}}{m_\lambda^{(d)} m_\nu^{(D)}} (E_{ab}^{\lambda, d})_{\mcI^n} \otimes (E_{kl}^{\nu, D})_{\mcP\mcO^n},\label{eq:ci_yybasis}\\
    C_O &= \sum_{\lambda\in \young{d}{n}} \sum_{\nu\in\young{D}{n+1}}\sum_{a,b=1}^{d_\mu} \sum_{k,l=1}^{d_\nu} \frac{[O_{\lambda\nu}]_{ak, bl}}{m_\lambda^{(d)} m_\nu^{(D)}} (E_{ab}^{\lambda, d})_{\mcI^n} \otimes (E_{kl}^{\nu, D})_{\mcP\mcO^n}.\label{eq:co_yybasis}
\end{align}
Then, $\Tr(C_I\Sigma)$ and $\Tr(C_I\Xi)$ are given by
\begin{align}
    \Tr(C_I\Sigma) &= \sum_{\lambda\in\young{d}{n}}\sum_{\nu\in\young{D}{n+1}}\Tr(I_{\lambda\nu}\Sigma_{\lambda\nu}),\\
    \Tr(C_I\Xi) &= \sum_{\lambda\in\young{d}{n}}\sum_{\nu\in\young{d}{n+1}}\Tr(I_{\lambda\nu}\Xi_{\lambda\nu}).
\end{align}
Therefore, the SDP (\ref{eq:sdp_universal_error_detection_comp_basis}) is written as

\begin{align}
\begin{split}
    &\min \sum_{\lambda\in\young{d}{n}}\sum_{\nu\in\young{D}{n+1}}\Tr(I_{\lambda\nu}\Sigma_{\lambda\nu})\\
    \text{s.t. } & 0\leq I_{\lambda\nu}, O_{\lambda\nu} \in \mcL(\CC^{d_\lambda} \otimes \CC^{d_\nu}) \quad \forall \lambda\in\young{d}{n}, \nu\in\young{D}{n+1},\\
    &\{C_{\lambda\nu}\}\coloneqq \{I_{\lambda\nu}+O_{\lambda\nu}\} \in {\mcW}^{(x)}_\mathrm{sym},\\
    &\sum_{\lambda\in\young{d}{n}}\sum_{\nu\in\young{d}{n+1}}\Tr(I_{\lambda\nu}\Xi_{\lambda\nu}) = 1.
\end{split}\label{eq:sdp_universal_error_detection}
\end{align}

\subsubsection{Isometry adjointation}
From Theorem \ref{thm:unitary_group_symmetry} and Appendix \ref{appendix_sec:unitary_group_symmetry_conditions}, the optimization problem of $\epsilon$ of isometry adjointation is formulated as follows:

\begin{align}
\begin{split}
    &\min \max\{1-\Tr(C_I\Omega), \Tr[C_I(\Sigma\otimes \1_\mcF)]\}\\
    \mathrm{s.t.}\;&0\leq C_I, C_O\in\mcL(\mcI^n \otimes \mcO^n \otimes \mcP \otimes \mcF),\\
    &C \coloneqq C_I+C_O\in\mcW^{(x)},\\
    &\Tr[C_I(\Xi \otimes \1_\mcF)] = 1,\\
    &[C_a, U^{\otimes n+1}_{\mcI^n\mcF} \otimes U^{\prime\otimes n+1}_{\mcP\mcO^n}] = 0 \quad \forall U\in\U(d), U'\in\U(D), a\in\{I,O\}.
\end{split}\label{eq:sdp_isometry_adjointation_comp_basis}
\end{align}

Using the $\U(d)\times \U(D)$ symmetry of $\{C_I, C_O\}$, we write $\{C_I, C_O\}$ similarly to Eq.~(\ref{eq:c_yy_basis1}) as
\begin{align}
    C_I &= \sum_{\mu\in \young{d}{n+1}} \sum_{\nu\in\young{D}{n+1}}\sum_{i,j=1}^{d_\mu} \sum_{k,l=1}^{d_\nu} \frac{[I_{\mu\nu}]_{ik, jl}}{m_\mu^{(d)} m_\nu^{(D)}} (E_{ij}^{\mu, d})_{\mcI^n\mcF} \otimes (E_{kl}^{\nu, D})_{\mcP\mcO^n},\label{eq:ci_yybasis2}\\
    C_O &= \sum_{\mu\in \young{d}{n+1}} \sum_{\nu\in\young{D}{n+1}}\sum_{i,j=1}^{d_\mu} \sum_{k,l=1}^{d_\nu} \frac{[O_{\mu\nu}]_{ik, jl}}{m_\mu^{(d)} m_\nu^{(D)}} (E_{ij}^{\mu, d})_{\mcI^n\mcF} \otimes (E_{kl}^{\nu, D})_{\mcP\mcO^n}.\label{eq:co_yybasis2}
\end{align}
Then, similarly for the case of probabilistic exact isometry inversion, the SDP (\ref{eq:sdp_isometry_adjointation_comp_basis}) as follows:

\begin{align}
\begin{split}
    &\min \max\Big\{1-\sum_{\mu\in\young{d}{n+1}}\Tr (I_{\mu\mu}\Omega_\mu), \sum_{\lambda\in \young{d}{n}} \sum_{\mu\in\lambda+_d\square} \sum_{\nu\in\young{D}{n+1}} \Tr[(X^\lambda_\mu \otimes \1_{d_\nu}) I_{\mu\nu} (X^\lambda_\mu \otimes \1_{d_\nu})^\dagger \Sigma_{\lambda\nu}] \Big\}\\
    \text{s.t. } & 0\leq I_{\mu\nu}, O_{\mu\nu} \in \mcL(\CC^{d_\mu} \otimes \CC^{d_\nu}) \quad \forall \mu\in\young{d}{n+1}, \nu\in\young{D}{n+1},\\
    & \{C_{\mu\nu}\} \coloneqq \{I_{\mu\nu}+O_{\mu\nu}\} \in \mcW^{(x)}_\mathrm{sym},\\
    & \sum_{\lambda\in \young{d}{n}} \sum_{\mu\in\lambda+_d\square} \sum_{\nu\in\young{d}{n+1}} \Tr[(X^\lambda_\mu \otimes \1_{d_\nu}) I_{\mu\nu} (X^\lambda_\mu \otimes \1_{d_\nu})^\dagger \Xi_{\lambda\nu}] = 1.
\end{split}\label{eq:sdp_isometry_adjointation}
\end{align}

\subsection{Derivation of the dual problems}
We derive the dual problems of the SDPs to obtain the optimal transformations of isometry operations.  Due to the strong duality, the dual problems gives the same optimal value as the corresponding primal problems.  To this end, we first introduce the dual set of the Choi operators of quantum superchannels.  Then, we derive the dual problems using the dual set. Finally, we simplify the derived dual problems using the $\U(d)\times \U(D)$ symmetry.

\subsubsection{Characterization of the dual processes}
We define the dual set $\overline{\mcW}^{(x)}$ of $\mcW^{(x)}$ for $x\in\{\mathrm{PAR},\mathrm{SEQ}, \mathrm{GEN}\}$ by
\begin{align}
    \overline{C}\in \overline{\mcW}^{(x)} \Longleftrightarrow \Tr(C\overline{C}) = 1 \quad \forall C\in\mcW^{(x)}.
\end{align}
Introducing the basis $\{\overline{C}_j\}$ of $\overline{\mcW}^{(x)}$, this relation leads to
\begin{align}
    \overline{C}\in \overline{\mcW}^{(x)} \Longleftrightarrow \Tr(C\overline{C}_j) = 1 \quad \forall j.
\end{align}
As shown in Ref.~\cite{bavaresco2021strict}, $\overline{\mcW}^{(x)}$ are given by
\begin{align}
    \overline{C}\in \overline{\mcW}^{(\mathrm{PAR})} &\Longleftrightarrow
    \begin{cases}
        \overline{C} = W \otimes \1_{\mcF}\\
    \Tr_{\mcO^n} W = \Tr_{\mcI^n \mcO^n} W \otimes \1_{\mcI^n}/\dim {\mcI^n}\\
    \Tr W = \dim \mcI^n
    \end{cases},\\
    \overline{C}\in \overline{\mcW}^{(\mathrm{SEQ})} &\Longleftrightarrow
    \begin{cases}
        \overline{C} = W \otimes \1_{\mcF}\\
        \Tr_{\mcO_i} W^{(i)} = \1_{\mcI_i} \otimes W^{(i-1)}, \quad \forall i \in \{1, \ldots, n\}\\
        \Tr W = \dim \mcI^n
    \end{cases},\\
    \overline{C}\in \overline{\mcW}^{(\mathrm{GEN})} &\Longleftrightarrow
    \begin{cases}
        \overline{C} = W \otimes \1_\mcF\\
        \Tr_{\mcO_i} W = \Tr_{\mcI_i \mcO_i} W \otimes {\1_{\mcI_i}}/{\dim{\mcI_i}} \quad \forall i\in \{1, \ldots, n\}\\
        \Tr W = \dim{\mcI^n}
    \end{cases},
\end{align}
where $W^{(i)}$ are defined by
\begin{align}
    W^{(i)}
    \coloneqq
    \begin{cases}
        W & (i=n)\\
        \Tr_{\mcI_{i+1} \mcO_{i+1}} W^{(i+1)}/\dim {\mcI_{i+1}} \quad \forall i \in \{0, \ldots, n\} & (i\in\{0,\cdots,n-1\})
    \end{cases}.
\end{align}
We also introduce the set $\cone[\overline{\mcW}^{(x)}]$ of the dual sets $\overline{\mcW}^{(x)}$ defined by
\begin{align}
    \cone[\overline{\mcW}^{(x)}] = \{\lambda \overline{C} | \lambda \in \CC, \overline{C}\in\overline{\mcW}^{(x)}\}.
\end{align}

\subsubsection{Probabilistic exact isometry inversion}
We write down the dual problem of the SDP (\ref{eq:sdp_probabilistic_isometry_inversion_comp_basis}).  To this end, we note that the $\U(d)\times \U(D)$ symmetry does not change the optimal value of the SDP (\ref{eq:sdp_probabilistic_isometry_inversion_comp_basis}), so we can remove it when we consider the dual problem.  Then, the SDP (\ref{eq:sdp_probabilistic_isometry_inversion_comp_basis}) corresponds to the following optimization problem:
\begin{align}
\begin{split}
    &\max \Tr(C_S\Omega)\\
    \text{s.t. } & 0\leq C_S, C_F \in \mcL(\mcI^n \otimes \mcO^n \otimes \mcP \otimes \mcF),\\
    & \Tr(C_S\Omega) = \Tr[C_S(\Xi\otimes \1_\mcF)],\\
    & \Tr[(C_S+C_F)\overline{C}_j] = 1, \quad \forall j.
\end{split}\label{eq:sdp_probabilistic_isometry_inversion_comp_basis_dualization}
\end{align}
By introducing the Lagrange multipliers $\omega, \lambda_j \in \RR$ and $0\leq \Gamma_S, \Gamma_F \in \mcL(\mcI^n \otimes \mcO^n \otimes \mcP \otimes \mcF)$, we can write down the corresponding Lagrangian
\begin{align}
    L =& \Tr(C_S\Omega) + [\Tr[C_S(\Xi\otimes \1_\mcF)] - \Tr(C_S \Omega)] \omega + \Tr(C_S\Gamma_S) + \Tr(C_F\Gamma_F) \nonumber\\
    &+ \sum_j [1-\Tr[(C_S+C_F)\overline{C}_j]]\lambda_j\\
    =& \sum_{j}\lambda_j + \Tr[C_S(\Omega - \omega\Omega + \omega\Xi\otimes \1_\mcF + \Gamma_S - \sum_j \lambda_j \overline{C}_j)] + \Tr[F(\Gamma_F - \sum_j \lambda_j \overline{C}_j)],
\end{align}
which gives the SDP (\ref{eq:sdp_probabilistic_isometry_inversion_comp_basis_dualization}) as the following optimization:
\begin{align}
    \max_{C_S, C_F\geq 0} \min_{\omega, \lambda_j\in \RR, \Gamma_S, \Gamma_F \geq 0} L.
\end{align}
This optimization problem corresponds to the following dual problem:
\begin{align}
\begin{split}
    &\min \sum_j \lambda_j\\
    \text{s.t. } & \omega, \lambda_j \in \RR, 0\leq \Gamma_S, \Gamma_F \in \mcL(\mcI^n \otimes \mcO^n \otimes \mcP \otimes \mcF),\\
    & (1-\omega) \Omega +\omega \Xi\otimes \1_\mcF + \Gamma_S - \sum_j \lambda_j \overline{C}_j = 0,\\
    & \Gamma_F - \sum_j \lambda_j \overline{C}_j = 0.
\end{split}
\end{align}
The variables $\Gamma_S, \Gamma_F$ can be removed, and the variables $\lambda_j$ and $\overline{C}_j$ can be replaced with $\overline{C}\coloneqq \sum_j \lambda_j \overline{C}_j$.  Since $\Tr \overline{C}_j = d^{n+1}$ holds for $\overline{C}_j\in\overline{\mcW}^{(x)}$, $\sum_j \lambda_j$ can be replaced with $\Tr \overline{C}/d^{n+1}$, which gives the following dual problem:
\begin{align}
\begin{split}
    &\min \Tr \overline{C}/d^{n+1}\\
    \text{s.t. } & \omega \in \RR, 0\leq \overline{C} \in \mcL(\mcI^n \otimes \mcO^n \otimes \mcP \otimes \mcF),\\
    &\overline{C} \in \cone[\overline{\mcW}^{(x)}],\\
    & \overline{C} \geq (1-\omega)\Omega +\omega \Xi\otimes \1_\mcF.
\end{split}\label{eq:sdp_probabilistic_isometry_inversion_comp_basis_dual}
\end{align}

\subsubsection{Deterministic isometry inversion}
Similarly to the case of probabilistic isometry inversion, the SDP (\ref{eq:sdp_deterministic_isometry_inversion_comp_basis}) can be rewritten as the following optimization problem:
\begin{align}
\begin{split}
    &\max \Tr(C\Omega)\\
    \text{s.t. } & 0\leq C \in \mcL(\mcI^n \otimes \mcO^n \otimes \mcP \otimes \mcF),\\
    & \Tr[C\overline{C}_j] = 1, \quad \forall j.
\end{split}\label{eq:sdp_deterministic_isometry_inversion_comp_basis_dualization}
\end{align}
By introducing the Lagrange multipliers $\lambda_j \in \RR$ and $0\leq \Gamma \in \mcL(\mcI^n \otimes \mcO^n \otimes \mcP \otimes \mcF)$, we can write down the corresponding Lagrangian
\begin{align}
    L &= \Tr(C\Omega) +\Tr(C\Gamma) + \sum_j [1-\Tr[C\overline{C}_j]]\lambda_j\\
    &= \sum_{j}\lambda_j + \Tr[C(\Omega +\Gamma - \sum_j \lambda_j \overline{C}_j)],
\end{align}
which gives the SDP (\ref{eq:sdp_deterministic_isometry_inversion_comp_basis}) as the following optimization:
\begin{align}
    \max_{C\geq 0} \min_{\lambda_j\in \RR, \Gamma \geq 0} L.
\end{align}
This optimization problem corresponds to the following dual problem:
\begin{align}
\begin{split}
    &\min \sum_j \lambda_j\\
    \text{s.t. } & \lambda_j \in \RR, 0\leq \Gamma \in \mcL(\mcI^n \otimes \mcO^n \otimes \mcP \otimes \mcF),\\
    & \Omega + \Gamma - \sum_j \lambda_j \overline{C}_j = 0.
\end{split}
\end{align}
The variable $\Gamma$ can be removed, and the variables $\lambda_j$ and $\overline{C}_j$ can be replaced with $\overline{C}\coloneqq \sum_j \lambda_j \overline{C}_j$ as
\begin{align}
\begin{split}
    &\min \Tr \overline{C}/d^{n+1}\\
    \text{s.t. } & \overline{C} \in \mcL(\mcI^n \otimes \mcO^n \otimes \mcP \otimes \mcF),\\
    &\overline{C} \in \cone[\overline{\mcW}^{(x)}],\\
    & \overline{C} \geq \Omega.
\end{split}\label{eq:sdp_deterministic_isometry_inversion_comp_basis_dual}
\end{align}

\subsubsection{Universal error detection}
Similarly to the case of probabilistic isometry inversion, the SDP (\ref{eq:sdp_universal_error_detection_comp_basis}) can be rewritten as the following optimization problem:
\begin{align}
\begin{split}
    &\min \Tr(C_I \Sigma)\\
    \text{s.t. } & 0\leq C_I, C_O \in \mcL(\mcI^n \otimes \mcO^n \otimes \mcP),\\
    & \Tr(C_I \Xi) = 1,\\
    & \Tr[(C_I+C_O)\overline{C}_j] = 1 \quad \forall j.\\
\end{split}
\end{align}
By introducing the Lagrange multipliers $\xi, \lambda_j \in \RR$ and $0\leq \Gamma_I, \Gamma_O \in \mcL(\mcI^n \otimes \mcO^n \otimes \mcP)$,  the corresponding Lagrangian is given by
\begin{align}
    L &= \Tr(C_I \Sigma) - \Tr(C_I\Gamma_I) - \Tr(C_O\Gamma_O) + [1-\Tr(C_I\Xi)]\xi + \sum_j [1-\Tr[(C_I+C_O)\overline{C}_j]] \lambda_j\\
    &= \sum_j \lambda_j + \xi + \Tr\Bigg[C_I\Big(\Sigma - \Gamma_I - \xi \Xi - \sum_j \lambda_j \overline{C}_j\Big)\Bigg] + \Tr[C_O(-\Gamma_O - \sum_j \lambda_j \overline{C}_j)],
\end{align}
which gives the SDP (\ref{eq:sdp_universal_error_detection_comp_basis}) as the following optimization:
\begin{align}
    \min_{C_I, C_O\geq 0} \max_{\xi, \lambda_j \in \RR, \Gamma_I, \Gamma_O\geq 0} L.
\end{align}
The corresponding dual problem is given by
\begin{align}
\begin{split}
    &\max \sum_j \lambda_j + \xi\\
    \text{s.t. }&\Sigma - \Gamma_I - \xi \Xi - \sum_j \lambda_j \overline{C}_j = 0,\\
    &-\Gamma_O - \sum_j \lambda_j \overline{C}_j = 0.
\end{split}
\end{align}
The variable $\Gamma$ can be removed, and the variables $\lambda_j$ and $\overline{C}_j$ can be replaced with $\overline{C}\coloneqq -\sum_j \lambda_j \overline{C}_j$ as
\begin{align}
\begin{split}
    &\max \xi - \Tr \overline{C}/d^n\\
    \text{s.t. }& \xi\in\RR, 0\leq \overline{C} \in \mcL(\mcI^n \otimes \mcO^n \otimes \mcP),\\
    &\overline{C} \in \cone[\overline{\mcW}^{(x)}],\\
    &\overline{C} \geq \xi \Xi-\Sigma.
\end{split}\label{eq:sdp_universal_error_detection_comp_basis_dual}
\end{align}

\subsubsection{Isometry adjointation}
Similarly to the case of probabilistic isometry inversion, the SDP (\ref{eq:sdp_isometry_adjointation_comp_basis}) can be rewritten as the following optimization problem:
\begin{align}
\begin{split}
    &\min p\\
    \text{s.t. } & 0\leq C_I, C_O \in \mcL(\mcI^n \otimes \mcO^n \otimes \mcP \otimes \mcF),\\
    &1-\Tr (C_I\Omega)\leq p,\\
    &\Tr[C_I(\Sigma\otimes\1_\mcF)]\leq p\\
    & \Tr[C_I (\Xi \otimes \1_\mcF)] = 1,\\
    & \Tr[(C_I+C_O) \overline{C}_j] = 1 \quad \forall j.
\end{split}
\end{align}
By introducing the Lagrange multipliers $\omega, \sigma\in\RR_{\geq 0}$, $\xi, \lambda_j \in \RR$ and $0\leq \Gamma_I, \Gamma_O \in \mcL(\mcI^n \otimes \mcO^n \otimes \mcP \otimes \mcF)$,  the corresponding Lagrangian is given by
\begin{align}
    L =& p -\Tr(C_I\Gamma_I)-\Tr(C_O\Gamma_O) + [1-\Tr (C_I\Omega)- p]\omega + [\Tr[C_I(\Sigma\otimes \1_\mcF)] - p]\sigma \nonumber\\
    &+ [1-\Tr[C_I(\Xi\otimes \1_\mcF)]]\xi + \sum_j[1-\Tr[(C_I+C_O) \overline{C}_j]]\lambda_j\\
    =& \sum_j \lambda_j +\omega+\xi + p(1-\omega-\sigma)+ \Tr[C_I(-\Gamma_I-\omega\Omega+\sigma \Sigma\otimes \1_\mcF -\xi\Xi\otimes \1_\mcF - \sum_j \lambda_j \overline{C}_j)]\nonumber\\
    &+\Tr[F(-\Gamma_O-\sum_j \lambda_j \overline{C}_j)].
\end{align}
The corresponding dual problem is given by
\begin{align}
\begin{split}
    &\max \sum_j \lambda_j +\omega+\xi\\
    \text{s.t. } & \omega + \sigma = 1,\\
    & -\Gamma_I-\omega\Omega+\sigma \Sigma\otimes \1_\mcF -\xi\Xi\otimes \1_\mcF-\sum_j \lambda_j \overline{C}_j = 0,\\
    & -\Gamma_O-\sum_j \lambda_j \overline{C}_j = 0.
\end{split}
\end{align}
The variables $\Gamma_I, \Gamma_O, \sigma$ can be removed, and the variables $\lambda_j$ and $\overline{C}_j$ can be replaced with $\overline{C}\coloneqq -\sum_j \lambda_j \overline{C}_j$ as
\begin{align}
\begin{split}
    &\max \omega+\xi-\Tr\overline{C}/d^{n+1}\\
    \text{s.t. } & \xi \in \RR, 0\leq \omega\leq 1, 0\leq \overline{C}\in\mcL(\mcI^n\otimes\mcO^n\otimes \mcP\otimes \mcF)\\
    &\overline{C}\in\cone[\overline{\mcW}^{(x)}],\\
    &\overline{C} \geq \omega\Omega+\xi\Xi\otimes \1_\mcF-(1-\omega)\Sigma\otimes \1_\mcF.
\end{split}\label{eq:sdp_isometry_adjointation_comp_basis_dual}
\end{align}

\subsubsection{Simplification of the dual problems using $\U(d)\times \U(D)$ and permutation symmetry}
In the dual SDPs (\ref{eq:sdp_probabilistic_isometry_inversion_comp_basis_dual}), (\ref{eq:sdp_deterministic_isometry_inversion_comp_basis_dual}), (\ref{eq:sdp_universal_error_detection_comp_basis_dual}) and (\ref{eq:sdp_isometry_adjointation_comp_basis_dual}), we can impose the $\U(d)\times\U(D)$ symmetry given by
\begin{align}
\begin{cases}
    [\overline{C}, U^{\otimes n+1}_{\mcI^n\mcF} \otimes U^{\prime\otimes n+1}_{\mcP\mcO^n}] = 0 & (\mathrm{isometry\;inversion}, \mathrm{universal\;error\;detection})\\
    [\overline{C}, U^{\otimes n}_{\mcI^n} \otimes U^{\prime\otimes n+1}_{\mcP\mcO^n}] = 0 & (\mathrm{isometry\;adjointation})
\end{cases}\label{eq:Cbar_unitary_group_symmetry}
\end{align}
for all $U\in\U(d)$ and $U'\in\U(D)$,
since for the optimal $\overline{C}_\mathrm{opt}$, the $\U(d)\times \U(D)$-twirled operator $\overline{C}'_\mathrm{opt}$ given by
\begin{align}
    \overline{C}'_\mathrm{opt}\coloneqq 
    \begin{cases}
    \int_{\U(d)} \dd U \map{U}^{\otimes n+1}_{\mcI^n\mcF} \otimes \map{U}^{\prime\otimes n+1}_{\mcP\mcO^n}(\overline{C}) & (\mathrm{isometry\;inversion}, \mathrm{universal\;error\;detection})\\
    \int_{\U(d)} \dd U \map{U}^{\otimes n}_{\mcI^n} \otimes \map{U}^{\prime\otimes n+1}_{\mcP\mcO^n}(\overline{C}) & (\mathrm{isometry\;adjointation})
\end{cases}
\end{align}
also gives the optimal values of the dual SDPs.  For the case of $x=\mathrm{GEN}$, we can also impose the permutation symmetry given by
\begin{align}
\begin{cases}
    [\overline{C}, (P_\pi)_{\mcI^n} \otimes (P_\pi)_{\mcO^n} \otimes \1_\mcP \otimes \1_\mcF] = 0 & (\mathrm{isometry\;inversion}, \mathrm{universal\;error\;detection})\\
    [\overline{C}, (P_\pi)_{\mcI^n} \otimes (P_\pi)_{\mcO^n} \otimes \1_\mcP] = 0 & (\mathrm{isometry\;adjointation})
\end{cases}\label{eq:Cbar_permutation_symmetry}
\end{align}
for all $\pi\in\mfS_n$ and $P_\pi$ is given in Eq.~(\ref{eq:def_pi_mu}) since the $\mfS_n$-twirled operator $\overline{C}''_\mathrm{opt}$ given by
\begin{align}
    \overline{C}''_\mathrm{opt}\coloneqq 
    \begin{cases}
    \sum_{\pi\in\mfS_n} (\map{P}_\pi)_{\mcI^n} \otimes (\map{P}_\pi)_{\mcO^n} \otimes \1_\mcP \otimes \1_\mcF (\overline{C}) & (\mathrm{isometry\;inversion}, \mathrm{universal\;error\;detection})\\
    \sum_{\pi\in\mfS_n} (\map{P}_\pi)_{\mcI^n} \otimes (\map{P}_\pi)_{\mcO^n} \otimes \1_\mcP (\overline{C})  & (\mathrm{isometry\;adjointation})
\end{cases}
\end{align}
also gives the optimal values of the dual SDPs.

We characterize the set $\overline{\mcW}^{(x)}$ under the $\U(d)\times \U(D)$ symmetry (\ref{eq:Cbar_unitary_group_symmetry}) [and the permutation symmetry (\ref{eq:Cbar_permutation_symmetry}) for $x=\mathrm{GEN}$].  Using the $\U(d)\times \U(D)$ symmetry (\ref{eq:Cbar_unitary_group_symmetry}), we write $\overline{C}$ using the operator $E^{\mu,d}_{ij}$ introduced in Eq.~(\ref{eq:def_E}) as
\begin{align}
    \overline{C} = \sum_{\mu\in \young{d}{n+1}} \sum_{\nu\in \young{D}{n+1}}\sum_{i,j=1}^{d_\mu} \sum_{k,l=1}^{d_\nu} [\overline{C}_{\mu\nu}]_{ik, jl}(E_{ij}^{\mu, d})_{\mcI^n\mcF} \otimes (E_{kl}^{\nu, D})_{\mcP\mcO^n}
\end{align}
for isometry inversion or isometry adjointation, and
\begin{align}
    \overline{C} = \sum_{\lambda\in \young{d}{n}} \sum_{\nu\in \young{D}{n+1}}\sum_{a,b=1}^{d_\lambda} \sum_{k,l=1}^{d_\nu} [\overline{C}_{\lambda\nu}]_{ak,bl}(E_{ab}^{\lambda, d})_{\mcI^n} \otimes (E_{kl}^{\nu, D})_{\mcP\mcO^n}
\end{align}
for universal error detection using a $d_\mu d_\nu$ ($d_\lambda d_\nu$)-dimensional square matrix $\overline{C}_{\mu\nu}$ ($\overline{C}_{\lambda\nu}$), where $ik$ ($ak$) and $jl$ ($bl$) are the indices for row and column numbers, respectively.  We also write $W$ and $W^{(i)}$ appearing in the characterization of $\overline{\mcW}^{(x)}$ as
\begin{align}
    W &= \sum_{\lambda\in\young{d}{n}} \sum_{\nu\in\young{D}{n+1}} \sum_{a,b=1}^{d_\lambda} \sum_{k,l=1}^{d_\nu} [W_{\lambda\nu}]_{ak,bl} (E^\lambda_{ab})_{\mcI^n} \otimes (E^{\nu}_{kl})_{\mcP\mcO^n},\\
    W^{(i)} &= \sum_{\lambda\in\young{d}{i}} \sum_{\nu\in\young{D}{i+1}} \sum_{a,b=1}^{d_\lambda} \sum_{k,l=1}^{d_\nu} [W^{(i)}_{\lambda\nu}]_{ak,bl} (E^\lambda_{ab})_{\mcI^i} \otimes (E^{\nu}_{kl})_{\mcP\mcO^i}.
\end{align}
Using Lemma \ref{lem:yy}, the condition $\overline{C}\in\overline{\mcW}^{(x)}$ for $x\in\{\mathrm{PAR}, \mathrm{SEQ}, \mathrm{GEN}\}$ are given by
\begin{align}
    \overline{C}\in \overline{\mcW}^{(x)} \Longleftrightarrow
    \begin{cases}
        \{\overline{C}_{\mu\nu}\} \in \overline{\mcW}_{\mathrm{sym}}^{(x)} & (\mathrm{isometry\;inversion}, \mathrm{isometry\;adjointation})\\
        \{\overline{C}_{\lambda\nu}\} \in \overline{\mcW}_{\mathrm{sym}}^{(x)} & (\mathrm{universal\;error\;detection})
    \end{cases},
\end{align}
where $\overline{\mcW}_{\mathrm{sym}}^{(x)}$ is given by
\begin{align}
    &\{\overline{C}_{\mu\nu}\} \in \overline{\mcW}_{\mathrm{sym}}^{(\mathrm{PAR})} \Longleftrightarrow \nonumber\\
    &
    \begin{cases}
        \begin{aligned}
            &\overline{C}_{\mu \nu} = \sum_{\lambda\in\mu-\square} (X^{\mu}_{\lambda} \otimes \1_{d_\nu}) W_{\lambda \nu}(X^{\mu}_{\lambda} \otimes \1_{d_\nu})^\dagger \quad \forall \mu\in\young{d}{n+1}, \nu\in\young{D}{n+1}\\
            &\sum_{\nu\in\young{D}{n+1}} m_\nu^{(D)}\Tr_{\nu}(W_{\lambda \nu}) = \1_{d_\lambda} \quad \forall \lambda\in\young{d}{n}
        \end{aligned}
    \end{cases},\\
    &\{\overline{C}_{\mu\nu}\} \in \overline{\mcW}_{\mathrm{sym}}^{(\mathrm{SEQ})} \Longleftrightarrow \nonumber\\
    &
    \begin{cases}
        \begin{aligned}
            &\overline{C}_{\mu \nu} = \sum_{\lambda\in\mu-\square} (X^{\mu}_{\lambda} \otimes \1_{d_\nu}) W_{\lambda \nu}(X^{\mu}_{\lambda} \otimes \1_{d_\nu})^\dagger \quad \forall \mu\in\young{d}{n+1}, \nu\in\young{D}{n+1}\\
            &\begin{aligned}
                \sum_{\nu\in\kappa+_D\square}m_\nu^{(D)}(\1_{d_\lambda} \otimes X^{\kappa}_{\nu})W^{(i)}_{\lambda \nu}(\1_{d_\lambda} \otimes X^{\kappa}_{\nu})^{\dagger} = \sum_{\gamma\in\lambda-\square} m_\kappa^{(D)}(X^{\lambda}_{\gamma} \otimes \1_{d_\kappa})W^{(i-1)}_{\gamma\kappa}(X^{\lambda}_{\gamma} \otimes \1_{d_\kappa})^\dagger\\
                \forall i\in\{1,\cdots,n\}, \lambda\in\young{d}{i}, \kappa\in\young{D}{i}
            \end{aligned}\\
            &W^{(0)}_{\emptyset \square} = {1\over D}
        \end{aligned}
    \end{cases},\\
    &\{\overline{C}_{\mu\nu}\} \in \overline{\mcW}_{\mathrm{sym}}^{(\mathrm{GEN})} \Longleftrightarrow \nonumber\\
    &
    \begin{cases}
        \begin{aligned}
            &\overline{C}_{\mu \nu} = \sum_{\lambda\in\mu-\square} (X^{\mu}_{\lambda} \otimes \1_{d_\nu}) W_{\lambda \nu}(X^{\mu}_{\lambda} \otimes \1_{d_\nu})^\dagger \quad \forall \mu\in\young{d}{n+1}, \nu\in\young{D}{n+1}\\
            &[W_{\lambda \nu}, \pi_\lambda\otimes \pi'_\nu] = 0 \quad \forall \pi\in\mfS_n,\\
            &\begin{aligned}
                \sum_{\nu\in\kappa+_D\square}m_\nu^{(D)}(\1_{d_\lambda} \otimes X^{\kappa}_{\nu})W_{\lambda \nu}(\1_{d_\lambda} \otimes X^{\kappa}_{\nu})^{\dagger} = \sum_{\gamma\in\lambda-\square} m_\kappa^{(D)}(X^{\lambda}_{\gamma} \otimes \1_{d_\kappa})W^{(n-1)}_{\gamma\kappa}(X^{\lambda}_{\gamma} \otimes \1_{d_\kappa})^\dagger \\
                \forall \lambda\in\young{d}{n}, \kappa\in\young{D}{n}
            \end{aligned}\\
            &\sum_{\lambda\in\young{d}{n}}\sum_{\nu\in\young{D}{n+1}} m_\lambda^{(d)}m_\nu^{(D)}\Tr(W_{\lambda\nu})=d^{n}
        \end{aligned}
    \end{cases}
\end{align}
for isometry inversion and isometry adjointation, where $W^{(i)}_{\gamma\kappa}$ for $\gamma\in\young{d}{i}$ and $\kappa\in\young{D}{i+1}$ are defined by
\begin{align}
     &W^{(i)}_{\gamma\kappa}\coloneqq
     \begin{cases}
         W_{\gamma\kappa} & (i=n)\\
         {1\over d}\sum_{\lambda\in\gamma+_d\square}\sum_{\nu\in\kappa+_D\square}{m_\lambda^{(d)}m_\nu^{(D)} \over m_\gamma^{(d)}m_\kappa^{(D)}}(X^{\gamma}_{\lambda} \otimes X^{\kappa}_{\nu})W_{\lambda\nu}^{(i+1)}(X^{\gamma}_{\lambda} \otimes X^{\kappa}_{\nu})^\dagger & (i\in\{0,\cdots,n-1\})
     \end{cases},
\end{align}
$\pi_\lambda$ is the irreducible representation of $\pi$ given in Eq.~(\ref{eq:def_pi_mu}), and $\pi'_\nu$ is the irreducible representation of $\pi'$ defined by $\pi'(1) = 1$ and $\pi'(i+1) = \pi(i)+1$ for $i\in\{1, \ldots, n\}$.  The set $\overline{\mcW}_{\mathrm{sym}}^{(x)}$ is given by
\begin{align}
    &\{\overline{C}_{\lambda\nu}\} \in \overline{\mcW}_{\mathrm{sym}}^{(\mathrm{PAR})} \Longleftrightarrow \nonumber\\
    &
    \begin{cases}
        \begin{aligned}
            &\overline{C}_{\lambda \nu} = W_{\lambda \nu} \quad \forall \lambda\in\young{d}{n}, \nu\in\young{D}{n+1}\\
            &\sum_{\nu\in\young{D}{n+1}} m_\nu^{(D)}\Tr_{\nu}(W_{\lambda \nu}) = \1_{d_\lambda} \quad \forall \lambda\in\young{d}{n}
        \end{aligned}
    \end{cases},\\
    &\{\overline{C}_{\lambda\nu}\} \in \overline{\mcW}_{\mathrm{sym}}^{(\mathrm{SEQ})} \Longleftrightarrow \nonumber\\
    &
    \begin{cases}
        \begin{aligned}
            &\overline{C}_{\lambda \nu} = W_{\lambda \nu} \quad \forall \lambda\in\young{d}{n}, \nu\in\young{D}{n+1}\\
            &\begin{aligned}
                \sum_{\nu\in\kappa+_D\square}m_\nu^{(D)}(\1_{d_\lambda} \otimes X^{\kappa}_{\nu})W^{(i)}_{\lambda \nu}(\1_{d_\lambda} \otimes X^{\kappa}_{\nu})^{\dagger} = \sum_{\gamma\in\lambda-\square} m_\kappa^{(D)}(X^{\lambda}_{\gamma} \otimes \1_{d_\kappa})W^{(i-1)}_{\gamma\kappa}(X^{\lambda}_{\gamma} \otimes \1_{d_\kappa})^\dagger\\
                \forall i\in\{1,\cdots,n\}, \lambda\in\young{d}{i}, \kappa\in\young{D}{i}
            \end{aligned}\\
            &W^{(0)}_{\emptyset \square} = {1\over D}
        \end{aligned}
    \end{cases},\\
    &\{\overline{C}_{\lambda\nu}\} \in \overline{\mcW}_{\mathrm{sym}}^{(\mathrm{GEN})} \Longleftrightarrow \nonumber\\
    &
    \begin{cases}
        \begin{aligned}
            &\overline{C}_{\lambda \nu} = W_{\lambda \nu} \quad \forall \lambda\in\young{d}{n}, \nu\in\young{D}{n+1}\\
            &[W_{\lambda \nu}, \pi_\lambda\otimes \pi'_\nu] = 0 \quad \forall \pi\in\mfS_n,\\
            &\begin{aligned}
                \sum_{\nu\in\kappa+_D\square}m_\nu^{(D)}(\1_{d_\lambda}\otimes X^{\kappa}_{\nu})W_{\lambda \nu}(\1_{d_\lambda} \otimes X^{\kappa}_{\nu})^{\dagger} = \sum_{\gamma\in\lambda-\square} m_\kappa^{(D)}(X^{\lambda}_{\gamma} \otimes \1_{d_\kappa})W^{(n-1)}_{\gamma\kappa}(X^{\lambda}_{\gamma} \otimes \1_{d_\kappa})^\dagger \\
                \forall \lambda\in\young{d}{n}, \kappa\in\young{D}{n}
            \end{aligned}\\
            &\sum_{\lambda\in\young{d}{n}}\sum_{\nu\in\young{D}{n+1}} m_\lambda^{(d)}m_\nu^{(D)}\Tr(W_{\lambda\nu})=d^{n}
        \end{aligned}
    \end{cases}
\end{align}
for universal error detection.  Using this characterization of $\overline{\mcW}^{(x)}$ with the $\U(d)\times \U(D)$ and permutation symmetry, the dual SDPs (\ref{eq:sdp_probabilistic_isometry_inversion_comp_basis_dual}), (\ref{eq:sdp_deterministic_isometry_inversion_comp_basis_dual}), (\ref{eq:sdp_universal_error_detection_comp_basis_dual}) and (\ref{eq:sdp_isometry_adjointation_comp_basis_dual}) can be simplified as follows:

\begin{itemize}
    \item Probabilistic exact isometry inversion
    \begin{align}
    \begin{split}
        &\min \sum_{\mu\in\young{d}{n+1}}\sum_{\nu\in\young{D}{n+1}} \Tr \overline{C}_{\mu\nu}/d^{n+1}\\
        \text{s.t. } & \omega \in \RR, 0\leq \overline{C}_{\mu\nu} \in \mcL(\CC^{d_\mu} \otimes \CC^{d_\nu}) \quad \forall \mu\in\young{d}{n+1}, \nu\in\young{D}{n+1},\\
        &\{\overline{C}_{\mu\nu}\} \in \cone[\overline{\mcW}^{(x)}_\mathrm{sym}],\\
        & \overline{C}_{\mu\nu} \geq
        \delta_{\mu\nu}(1-\omega)\Omega_\mu  +\omega \sum_{\lambda\in\mu-\square} (X^{\lambda}_\mu \otimes \1_{d_\mu})^\dagger \Xi_{\lambda\nu} (X^{\lambda}_\mu \otimes \1_{d_\mu}) \quad \forall \mu\in\young{d}{n+1}, \nu\in\young{D}{n+1}.
    \end{split}\label{eq:sdp_probabilistic_isometry_inversion_dual}
    \end{align}
    \item Deterministic isometry inversion
    \begin{align}
    \begin{split}
        &\min \sum_{\mu\in\young{d}{n+1}}\sum_{\nu\in\young{D}{n+1}} \Tr \overline{C}_{\mu\nu}/d^{n+1}\\
        \text{s.t. } & \omega \in \RR, 0\leq \overline{C}_{\mu\nu} \in \mcL(\CC^{d_\mu} \otimes \CC^{d_\nu}) \quad \forall \mu\in\young{d}{n+1}, \nu\in\young{D}{n+1},\\
        &\{\overline{C}_{\mu\nu}\} \in \cone[\overline{\mcW}^{(x)}_\mathrm{sym}],\\
        & \overline{C}_{\mu\mu} \geq \Omega_{\mu} \quad \forall \mu\in\young{d}{n+1}.
    \end{split}\label{eq:sdp_deterministic_isometry_inversion_dual}
    \end{align}
    \item Universal error detection
    \begin{align}
    \begin{split}
        &\max \xi - \sum_{\lambda\in\young{d}{n}} \sum_{\nu\in\young{D}{n+1}}\Tr \overline{C}_{\lambda\nu}/d^n\\
        \text{s.t. }& \xi\in\RR, 0\leq \overline{C}_{\lambda\nu} \in \mcL(\CC^{d_\lambda} \otimes \CC^{d_\nu}) \quad \forall \lambda\in\young{d}{n}, \nu\in\young{D}{n+1},\\
        &\{\overline{C}_{\lambda\nu}\} \in \cone[\overline{\mcW}^{(x)}_\mathrm{sym}],\\
        &\overline{C}_{\lambda\nu} \geq \xi \Xi_{\lambda\nu}-\Sigma_{\lambda\nu} \quad \forall \lambda\in\young{d}{n}, \nu\in\young{D}{n+1}.
    \end{split}\label{eq:sdp_universal_error_detection_dual}
    \end{align}
    \item Isometry adjointation
    \begin{align}
    \begin{split}
        &\max \omega+\xi-\sum_{\mu\in\young{d}{n+1}}\sum_{\nu\in\young{D}{n+1}}\Tr\overline{C}_{\mu\nu}/d^{n+1}\\
        \text{s.t. } & \xi \in \RR, 0\leq \omega\leq 1, 0\leq \overline{C}_{\mu\nu}\in\mcL(\CC^{d_\mu}\otimes \CC^{d_\nu}) \quad \forall \mu\in\young{d}{n+1}, \nu\in\young{D}{n+1},\\
        &\{\overline{C}_{\mu\nu}\}\in\cone[\overline{\mcW}^{(x)}_\mathrm{sym}],\\
        &\overline{C}_{\mu\nu} \geq \delta_{\mu\nu}\omega\Omega+\sum_{\lambda\in\mu-\square}(X^\lambda_\mu\otimes \1_{d_\nu})^\dagger[\xi\Xi_{\lambda\nu}-(1-\omega)\Sigma_{\lambda\nu}](X^\lambda_\mu\otimes \1_{d_\nu}) \quad \forall \mu\in\young{d}{n+1}, \nu\in\young{D}{n+1}.
    \end{split}\label{eq:sdp_isometry_adjointation_dual}
    \end{align}
\end{itemize}

\section{Proof of Corollary~\ref{cor:universal_programming}: Program cost of universal programming of two-outcome projective measurement}
\label{appendix_sec:universal_programming}
The proof of Theorem~\ref{thm:optimal_parallel_error_detection} shown in Appendix~\ref{appendix_sec:optimal_parallel_error_detection} constructs the parallel universal error detection protocol with the input state $\phi = {\1_{\mcU_\lambda^{(d)}} \over m_\lambda^{(d)}}\otimes \ketbra{\mathrm{arb}}_{\mcS_\lambda}$ for an arbitrary state $\ket{\mathrm{arb}}\in \mcS_\lambda$ and the POVM measurement $\mcP = \{\Pi^{(d)}, \1_D-\Pi^{(d)}\}$, where $\lambda$ is defined in Eq.~\eqref{eq:def_lambda}, and $\Pi^{(d)}$ is given by
\begin{align}
    \Pi^{(d)}\coloneqq \sum_{\lambda\in\young{d}{n}} \Pi_\lambda^{(D)}
\end{align}
using the Young projector $\Pi_\lambda^{(d)}$ defined in Eq.~\eqref{eq:young_projector}.
In the discussion below, we rewrite $\lambda$ as $\lambda^{(d)}$ to explicitly denote the dependency on $d$.
The parallel protocol achieves the approximation error $\alpha = O(d^2 n^{-1})$.
By using this protocol, we can implement the universal programming of rank-$d$ projective measurement $\mcM_\mathrm{PVM}^{(d)} = \{\Pi, \1_D-\Pi\}$ with the program state $\phi_{\mcM_\mathrm{PVM}^{(d)}} $ given by
\begin{align}
\label{eq:program_state_PVM}
    \phi_{\mcM_\mathrm{PVM}^{(d)}} \coloneqq \map{V}_\mathrm{in}^{\otimes n}(\phi)
\end{align}
using $V_\mathrm{in}\in\isometry{d}{D}$ such that $\Pi$ is the orthogonal projector onto $\Im V_\mathrm{in}$.
The approximation error is given by $\delta = \alpha = O(d^2 n^{-1})$, and the program cost is given by
\begin{align}
    C_{\mathbb{S}_\mathrm{PVM}^{(d,D)}}
    &= \log \dim \mcU_{\lambda^{(d)}}^{(D)}.
\end{align}
We evaluate the dimension of $\mcU_{\lambda^{(d)}}^{(D)}$ by using Eq.~(III.10) of Ref.~\cite{itzykson1966unitary} as follows:
\begin{align}
    &\dim \mcU_{\lambda^{(d)}}^{(D)}\nonumber\\
    &= {\prod_{1\leq i<j\leq D} (\lambda^{(d)}_i-\lambda^{(d)}_j-i+j) \over \prod_{i=1}^{D-1}i!}\\
    &= {\prod_{1\leq i<j\leq d} (\lambda^{(d)}_i-\lambda^{(d)}_j-i+j) \prod_{1\leq i \leq d, d<j\leq D} (\lambda^{(d)}_i-i+j) \prod_{d<i<j\leq D}(-i+j) \over \prod_{i=1}^{D-1}i!}\\
    &\leq {\prod_{1\leq i<j\leq d} (1-i+j) \prod_{1\leq i \leq d, d<j\leq D} (k+1-i+j) \prod_{d<i<j\leq D}(-i+j) \over \prod_{i=1}^{D-1}i!}\\
    &= {\prod_{1\leq i \leq d, d<j\leq D} (k+1-i+j) \prod_{i=1}^{D-d-1} i! \over \prod_{i=1}^{D-d-1}(i+d)!}\\
    &\leq (kD)^{d(D-d)}\\
    &\leq \left({Dn\over d}\right)^{d(D-d)}
\end{align}
Thus, we obtain
\begin{align}
    C_{\mathbb{S}_\mathrm{PVM}^{(d,D)}}
    &\leq d(D-d)\log \left({Dn\over d}\right)\\
    &\leq d(D-d)\log \left(\Theta(Dd)\over \delta\right).
\end{align}
We can also implement universal programming of two-outcome POVM measurement $\mcM = \{\Pi, \1_D-\Pi\}$ as follows.
Since $0\leq \Pi\leq \1_D$ holds, $\Pi$ has an eigendecomposition given by
\begin{align}
    \Pi = \sum_{j=1}^{D} a_j \ketbra{\psi_j},
\end{align}
where $0\leq a_1 \leq \cdots \leq a_D \leq 1$ and $\{\ket{\psi_j}\}_j$ is an orthonormal basis of $\CC^D$.
Defining rank-$d$ projective measurement $\mcM_\mathrm{PVM}^{(d)}\coloneqq \{\Pi^{(d)}, \1_D-\Pi^{(d)}$ by $\Pi^{(d)} \coloneqq \sum_{j=1}^{d} \ketbra{\psi_j}$ for $d\in\{1,\ldots, D\}$ and $\mcM_\mathrm{PVM}^{(0)}\coloneqq \{0, \1_D\}$, the POVM $\mcM$ is given by a probabilistic mixture of $\{\mcM^{(d)}\}_{d=0}^{D}$ as \cite{masanes2005extremal, davies1976quantum}
\begin{align}
    \mcM = \sum_{d=0}^{D} p_d \mcM_\mathrm{PVM}^{(d)},
\end{align}
where $p_d$ is given by $p_0 = 1-a_D$, $p_1=a_1$ and $p_d = a_{d}-a_{d-1}$ for $d\in\{2,\ldots,D\}$.
Then, we can implement the program state of $\mcM$ by
\begin{align}
    \phi_{\mcM}\coloneqq \sum_{d=0}^{D} p_j \ketbra{d} \otimes \phi_{\mcM_{\mathrm{PVM}}^{(d)}},
\end{align}
where $\phi_{\mcM_{\mathrm{PVM}}^{(d)}}$ is given in Eq.~\eqref{eq:program_state_PVM} for $d\in\{1,\ldots,D\}$ and $\phi_{\mcM_{\mathrm{PVM}}^{(0)}}$ can be taken as an arbitrary state.
The approximation error is given by
\begin{align}
    \delta \leq \sum_{d=0}^{D} p_d O(d^2 n^{-1}) = O(D^2 n^{-1})
\end{align}
and the program cost is evaluated as
\begin{align}
    C_{\mathbb{S}_\mathrm{PVM}^{(D)}}
    &= \log \left(1+\sum_{d=1}^{D} \dim \mcU_{\lambda^{(d)}}^{(D)}\right)\\
    &\leq \log (D+1) + \max_{1\leq d\leq D} C_{\mathbb{S}_\mathrm{PVM}^{(d,D)}}\\
    &\leq \log (D+1) + \lfloor D/2\rfloor (D-\lfloor D/2\rfloor) \log{\Theta(D^2)\over \delta}\\
    &\leq
    \begin{cases}
        {D^2\over 4} \log {\Theta(D^2)\over \delta} & (D \mathrm{\;is\;even})\\
        {D^2-1\over 4} \log {\Theta(D^2)\over \delta} & (D \mathrm{\;is\;odd})\\
    \end{cases}.
\end{align}

\bibliographystyle{apsrev4-2}
\bibliography{main}

\end{document}